\documentclass[a4paper,11pt]{extarticle}
\usepackage[left=1.5cm,right=1.5cm,top=1.5cm,bottom=2.5cm]{geometry}

\usepackage{amsmath,mathtools,amssymb,amsthm,xcolor,url,esvect,paralist}
\usepackage{parskip}
\usepackage{tikz}
\usetikzlibrary{arrows,arrows.meta,shapes,calc,decorations}
\usetikzlibrary{decorations.pathmorphing}

\usepackage{mathrsfs}

\usepackage{subcaption}

\usepackage{circledsteps}
\usepackage{bm}

\usepackage{accents}
\DeclareMathAccent{\wtilde}{\mathord}{largesymbols}{"65}

\usepackage{verbatim}

\usepackage{hyperref}

\usepackage{relsize}

\usepackage[most]{tcolorbox}

\swapnumbers
\numberwithin{equation}{section}

\newtheorem{theorem/}[equation]{Theorem}
\newenvironment{theorem}
  {%
   \pushQED{\qed}\begin{theorem/}}
  {\popQED\end{theorem/}}

\newtheorem{lemma/}[equation]{Lemma}
\newenvironment{lemma}
  {%
   \pushQED{\qed}\begin{lemma/}}
  {\popQED\end{lemma/}}

\newtheorem{claim/}[equation]{Claim}
\newenvironment{claim}
  {%
   \pushQED{\qed}\begin{claim/}}
  {\popQED\end{claim/}}

\newtheorem{corollary/}[equation]{Corollary}
\newenvironment{corollary}
  {%
   \pushQED{\qed}\begin{corollary/}}
  {\popQED\end{corollary/}}

\newtheorem{proposition/}[equation]{Proposition}
\newenvironment{proposition}
  {%
   \pushQED{\qed}\begin{proposition/}}
  {\popQED\end{proposition/}}

\theoremstyle{definition}
\newtheorem{definition/}[equation]{Definition}
\newenvironment{definition}
  {%
   \pushQED{\qed}\begin{definition/}}
  {\popQED\end{definition/}}
\newtheorem{example/}[equation]{Example}
\newenvironment{example}
  {%
   \pushQED{\qed}\begin{example/}}
  {\popQED\end{example/}}
\newtheorem{remark/}[equation]{Remark}
\newenvironment{remark}
  {%
   \pushQED{\qed}\begin{remark/}}
  {\popQED\end{remark/}}
\newtheorem*{example*}{Example}
\newtheorem{caveat/}[equation]{Caveat}

\newtheorem*{caveat*}{Caveat}
\newtheorem{conjecture/}[equation]{Conjecture}
\newenvironment{conjecture}
  {%
   \pushQED{\qed}\begin{conjecture/}}
  {\popQED\end{conjecture/}}
\newtheorem*{conjecture*}{Conjecture}
\newtheorem{openproblem/}[equation]{Open Problem}

\newtheorem*{openproblem*}{Open Problem}

\newcommand{\ckt}{\textup{\textsf{ckt}}}
\newcommand{\supp}{\textup{\textsf{supp}}}
\newcommand{\suppdash}{\textup{\textsf{supp-}}}
\newcommand{\poly}{\textup{\textsf{poly}}}

\newcommand{\IC}{\mathbb{C}}
\newcommand{\IR}{\mathbb{R}}

\newcommand{\IF}{\mathbb{F}}
\newcommand{\la}{\lambda}
\newcommand{\IN}{\mathbb{N}}
\newcommand{\aS}{\mathfrak{S}}
\newcommand{\cE}{\End}                  

\newcommand{\GL}{\textup{GL}}
\newcommand{\End}{\textup{End}}

\newcommand{\diag}{\textup{diag}}
\newcommand{\tensor}{{\textstyle\bigotimes}}

\newcommand{\Clique}{\mathrm{Clique}}
\newcommand{\Cliqued}{\mathrm{Cliq}}
\newcommand{\Wedge}{{\mathchoice{\textstyle\bigwedge}{\textstyle\bigwedge}{\bigwedge}{\bigwedge}}}
\newcommand{\sS}{\mathscr S}
\newcommand{\sW}{\mathscr W}
\newcommand{\sT}{\mathscr T}
\newcommand{\id}{\mathrm{id}}
\newcommand{\per}{\mathrm{per}}
\newcommand{\pol}{\mathrm{polar}}
\newcommand{\res}{\mathrm{resti}}
\newcommand{\mon}{\mathrm{mon}}

\newcommand{\HC}{\mathrm{HC}}
\renewcommand{\det}{\mathrm{det}}
\newcommand{\gl}{\mathfrak{gl}}

\newcommand\restr[2]{{
  \left.\kern-\nulldelimiterspace 
  #1 
  \vphantom{\big|} 
  \right|_{#2} 
  }}
\newcommand{\Hom}{\mathrm{Hom}}
\newcommand{\rk}{\mathrm{rk}}

\newcommand{\hGamma}{\textup{\textsf{h}}\ensuremath{\Gamma}}
\newcommand{\hXi}{\textup{\textsf{h}}\ensuremath{\Xi}}
\newcommand{\trXi}{\ensuremath{\textrm{\textup{tr-{\ensuremath{\Xi}}}}}}
\newcommand{\transpose}{\textup{\textsf{T}}}

\newcommand{\w}{\ensuremath{w}}
\newcommand{\asize}{\textup{\textsf{ac}}} 
\newcommand{\csize}{\textup{\textsf{cc}}} 
\newcommand{\fsize}{\textup{\textsf{fc}}} 
\newcommand{\bsize}{\textup{\textsf{bc}}} 

\newcommand{\Sym}{\ensuremath{S}}
\newcommand{\Symt}{\textup{Sym}}

\DeclareMathOperator{\tr}{tr}

\newcommand{\cpc}{\chi} 

\newcommand{\Mat}{\mathrm{Mat}}
\newcommand{\Var}{\mathrm{Var}}

\newcommand{\pow}{\mathrm{pow}}

\newcommand{\sym}{\textup{sym}}
\newcommand{\stab}{\textup{stab}}

\newcommand{\AND}{\mathbin{\texttt{AND}}}
\newcommand{\OR}{\mathbin{\texttt{OR}}}
\newcommand{\NOT}{\mathop{\texttt{NOT}}}

\newcommand{\qu}[1]{\ensuremath{\overset{\textup{\tiny ?}}{#1}}}

\usepackage{xparse}
\ExplSyntaxOn
\NewDocumentCommand{\nsp}{m}
 {
   \clist_map_inline:nn { #1 }
     {
       \mbox{$##1$}
     }
 }
\ExplSyntaxOff

\newcommand{\VFPT}{\mathrm{VFPT}}
\newcommand{\VBFPT}{\mathrm{VBFPT}}
\newcommand{\VBP}{\mathrm{VBP}}
\newcommand{\VNP}{\mathrm{VNP}}
\newcommand{\VW}{\mathrm{VW}}
\newcommand{\W}{\mathrm{W}}

\makeatletter
\newcommand{\multimapinv}{%
  \mathrel{\mathpalette\multimapinv@{}}%
}

\newcommand{\multimapinv@}[2]{%
  \raisebox{0pt}{$#1\reflectbox{$#1\multimap$}$}%
}
\makeatother
\newcommand{\lefthalf}{\ensuremath{\multimap}}
\newcommand{\righthalf}{\ensuremath{\multimapinv}}

\title{Kronecker products and iterated matrix multiplication}

\author{Christian Ikenmeyer}
\date{June 6, 2026}

\begin{document}
\raggedbottom
\maketitle

\begingroup
  \renewcommand\thefootnote{}
  \footnotetext{For the purpose of open access, the author has applied a Creative Commons Attribution (CC-BY) license to any Author Accepted Manuscript version arising from this submission.}%
\endgroup

\begin{abstract}
We observe that the Kronecker product of tensors is the operation that converts the determinant polynomial into Cayley's first hyperdeterminant.
We apply the Kronecker product to iterated matrix multiplication,
which results in the hypercomputant, a $\VNP$-complete and $\VW[1]$-complete polynomial
whose hardness we prove via the equivariance of the Kronecker product.
The construction works over arbitrary commutative semirings
and also for the tensor algebra and the exterior algebra.
For the tensor algebra this gives a version of ``noncommutative $\VNP$'',
and for polynomials over the nonnegative real numbers this gives a version of ``monotone $\VNP$'',
each with the hypercomputant as the complete object.
We take a parameterized complexity viewpoint and
compare the noncommutative setting and the monotone setting.
Using standard techniques we obtain optimal algebraic branching program width lower bounds in both settings,
and these are notably not always the same.
We also prove the polystability of the hypercomputant and that its isotypic components are characterized by their stabilizer.
\end{abstract}

{
\footnotesize
\begin{tabular}{@{}p{2cm}p{0.85\textwidth}@{}}
\textbf{Keywords:} &
Kronecker product, iterated matrix multiplication, symmetric tensors, Valiant's conjecture, parameterized complexity, noncommutative computation, monotone computation, geometric complexity theory
\end{tabular}
}

\setcounter{tocdepth}{1}
\tableofcontents

\section{Introduction}
Valiant's determinant versus permanent conjecture is posed only over fields of characteristic different from 2,
because otherwise the determinant equals the permanent.
Hence, no characteristic free lower bounds method can separate the determinant from the permanent.
This can be avoided by replacing the permanent with the Hamiltonian cycle polynomials,
but in this document we propose a more algebraic approach.
We work over arbitrary commutative semirings for introducing the main objects of interest,
the iterated matrix product (as a matrix of polynomials) and its Kronecker square, the hypercomputant, see \S\ref{sec:multilin}.
Many of our constructions are general and also work for the tensor algebra and the exterior algebra.
For the tensor algebra this gives a version of ``noncommutative $\VNP$'' (\S\ref{subsec:noncomm}),
and for polynomials over the nonnegative real numbers this gives a version of ``monotone $\VNP$'' (\S\ref{subsec:monotone}),
each with the hypercomputant as the complete object.
Since the commutative semiring does not affect any arguments with Boolean circuits,
our monotone $\VNP$
is different from the version studied in \cite{Yeh19}, for which there is no known complete polynomial.
\S\ref{sec:boolean} contains an algebraic $\VNP$-hardness proof of the hypercomputant via the equivariance of the Kronecker product.
We embed the theory into the parameterized complexity framework in \S\ref{sec:PNP},
where we also summarize the relations between the algebraic hardness conjectures and the Boolean hardness conjectures.

We show in \S\ref{sec:geomrepr} that iterated matrix multiplication, if interpreted as a matrix of polynomials,
has a connected reductive stabilizer and is characterized by its stabilizer,
which are desirable properties in geometric complexity theory.
We show that the isotypic components of the hypercomputant
are characterized by their continuous symmetries.
We also prove their polystability.

The first sections are meant to be introductory, and new material starts at \S\ref{sec:multilin}.
Occasionally we included short proofs of classical results to make the paper more self-contained.

\section{Preliminaries}\label{sec:prelim}

Define the set of polynomially bounded sequences in $n$ as $\poly(n)=\bigcup_{d\in\IN} O(n^d)$.
Define the set of bivariately polynomially bounded families as
$\poly(n,d) = \{(r_{n,d})_{\substack{r\in\IN\\d\in\IN}}\mid \exists \phi\mathord{\in}\IR[n,d] \, \forall n,d : r_{n,d}\leq \phi(n,d)\}$.

\subsection{Variables}\label{subsec:variables}
We work over countably infinitely many variables, where the exact names of the variables do not matter, and neither does their order matter. Everything that we encounter will use only a finite number of variables.
Depending on the application, we use different variables names.
Most variables are of the form $x_{i_1,\ldots,i_j}$ for some $j\geq 0$ and some list $(i_1,\ldots,i_j)\in\IN^j$.
The tensor product operation on variables is defined by concatenating index lists, for example $x_{1,2}\boxtimes x_{3} = x_{1,2,3}$.
To improve the readability, sometimes indices are written as superscripts in parentheses, which are to be interpreted as additional indices, i.e., $x_{i,j}^{(k)} = x_{i,j,k}$.
We also use other variable names besides $x$, for example $y$, or more exotic variable names in \S\ref{sec:boolean}.
These are distinct variables from the variables in~$x$.
We use the tensor product for $y$ in the same way as for $x$, e.g., $y_{1,2}\boxtimes y_{3} = y_{1,2,3}$.
We also allow mixed tensor products between variables that have different symbols, but we do not introduce any special notation for those, i.e., $x_{1,2}\boxtimes y_{3}$ is a variable with no other name.
We denote the whole set of variables by~$\Var$.
Since the number of variable symbols we use is finite, it is clear that $\Var$ contains countably infinitely many elements.

\subsection{Semialgebras}
A commutative semiring $R$ is a tuple $(R,+,\cdot,0_R,1_R)$ with $0_R\neq 1_R$ and where $(R,+,0_R)$ is a commutative monoid and $(R,\cdot,1_R)$ is a commutative monoid,
and where distributivity holds, i.e., $\alpha\cdot(\beta+\gamma)=\alpha\cdot \beta + \alpha\cdot \gamma$ and $(\beta+\gamma)\cdot \alpha = \beta\cdot \alpha + \gamma\cdot \alpha$, and it holds that $\alpha\cdot 0_R = 0_R = 0_R \cdot \alpha$.
The main interesting examples in our context are the set of nonnegative real numbers $\IR_{\geq 0}$ and the set of complex numbers~$\IC$.

An $R$-semimodule $(M,+,\cdot,0_M)$ is a tuple such that $(M,+,0_M)$ is a commutative monoid and the scalar multiplication map $\cdot:R\times M\to M$ satisfies $\alpha\cdot (m+n)=\alpha\cdot m+\alpha\cdot n$ and $(\alpha+\beta)\cdot m = \alpha\cdot m + \beta \cdot m$, and $(\alpha\cdot \beta)\cdot m = \alpha\cdot(\beta\cdot m)$, and $1_R \cdot m = m$, and $0_R \cdot m = 0_M = \alpha \cdot 0_M$, for $\alpha,\beta\in R$, $m,n\in M$. Multiplication symbols are often omitted.
This is sometimes called a \emph{left} $R$-semimodule, but in this paper all $R$-seminodules are left $R$-seminodules.

An $R$-semialgebra $A$ is an $R$-semimodule with an additional multiplication operation $A\times A \to A$
that satisfies $(a+b)c=ac+bc$ and $c(a+b)=ca+cb$ and 
$(\alpha a)b = \alpha(ab) = a(\alpha b)$,
and there exists $1_A\in A$ with $1_A a = a = a 1_A$ for all $a\in A$.

For an $R$-semialgebra $A$, let $A\times A$ denote the product semialgebra with addition $(a_1,b_1)+(a_2,b_2)=(a_1+a_2,b_1+b_2)$ and scalar multiplication $\alpha(a,b)=(\alpha a,\alpha b)$ and multiplication $(a_1,b_1)\cdot(a_2,b_2)=(a_1\cdot a_2,b_1\cdot b_2)$. A congruence on an $R$-semialgebra $A$ is defined as an $R$-subsemialgebra of $A\times A$ that is reflexive ($(a,a)\in\sim$), symmetric ($(a,b)\in\sim$ implies $(b,a)\in\sim$), and transitive ($(a,b)\in\sim$ and $(b,c)\in\sim$ implies $(a,c)\in\sim$).
For an $R$-semialgebra $A$ and a congruence $\sim$ we can define the quotient $R$-semialgebra $A/\!_{\sim}$ on the equivalence classes of $\sim$ via
$[a]+[b]=[a+b]$, $\alpha[a]=[\alpha a]$, $[a]\cdot[b]=[a\cdot b]$ where the well-definedness can be seen as follows.
If $[a]=[a']$ and $[b]=[b']$, then $(a,a')\in\sim$ and $(b,b')\in\sim$,
and hence $(a+b,a'+b')\in\sim$, i.e., $[a+b] = [a'+b']$
Moreover, if $[a]=[a']$, then $(a,a')\in\sim$ and hence $(\alpha a,\alpha a')=\alpha(a,a')\in\sim$, i.e., $[\alpha a]=[\alpha a']$.
Also, if $[a]=[a']$ and $[b]=[b']$, then $(a,a')\in\sim$ and $(b,b')\in\sim$,
and hence $(a\cdot b,a'\cdot b')\in\sim$, i.e., $[a\cdot b] = [a'\cdot b']$.
The semialgebra axioms are satisfied for $A/\!_{\sim}$ by inheritance from $A$.

A map $\varphi:M\to N$ between two $R$-semimodules is called \emph{$R$-linear} if $\forall \alpha\in R, m,n\in M: \varphi(\alpha m+n)=\alpha\varphi(m)+\varphi(n)$.
Let $M$ denote the free $R$-semimodule generated by~$\Var$,
i.e., $M$ is the set of finite $R$-linear combinations of elements from $\Var$.
The \emph{endomorphism monoid} $\End$ is the set of those $R$-linear maps from $M$ to~$M$ that send every variable to a \emph{finite} linear combination of variables.

For $M$ the free $R$-semimodule generated by~$\Var$,
let $\tensor^\bullet(M)$ denote the tensor algebra, i.e., set of finite $R$-linear combinations of sequences of variables, where the product of sequences is given by the sequence concatenation.
The homogeneous degree $d$ component of $\tensor^\bullet(M)$ is defined as being the $R$-linear span of the set of sequences of exactly $d$ variables, and is denoted by $\tensor^d(M)$.
These sequences of variables are written with a tensor symbol as the divider symbol, e.g., $x_1 \otimes x_2$.
Consider the following two quotients of $\tensor^\bullet(M)$,
which we define carefully, because in general $R$ might not have additive inverses.
\begin{itemize}
\item Define the congruence $\sim$ as the $R$-linear span of the $(f,f')$ with $f,f'\in\sT_d$ such that $\exists\pi\in\aS_d$ with $f=\pi f'$.
All $(f,f)$ are generators, and for each generator $(f,f')\in\sim$ we have $(f',f)\in\sim$.
Note that $f\sim f'$ iff $f$ can be transformed into $f'$ by applying a permutation to each monomial, not necessarily the same for each. This implies the $\sim$ is transitive.
Note also that $\sim$ is closed under the semialgebra product, because if $f\sim f'$ and $f''\sim f'''$, then from the permutations that transform $f$ into $f'$ and those that transform $f''$ into $f'''$ we can readily obtain the permutations that transform $f f''$ into $f' f'''$.
Let $S^\bullet(M) := \tensor^\bullet(M)/\!_{\sim}$.
The notation for the equivalence class of $v_1\otimes\cdots\otimes v_d$ is $v_1\cdots v_d$, or alternatively $[v_1\otimes\cdots\otimes v_d]_{S}$.
\item Define the congruence $\backsim$ as generated by the elements $(\ell\otimes\ell,0)$ and $(0,\ell\otimes\ell)$, $\ell \in M$,
and $(1_M,1_M)$.
Since $(1_M,1_M)\in\backsim$, we have $(f,f)\in\backsim$, thus $\backsim$ is reflexive.
we have $f\backsim f'$ if and only if $f+h+\sum_i h_i \otimes\ell_i\otimes\ell_i \otimes m_i = f'+h+\sum_j h'_j \otimes\ell'_j\otimes\ell'_j \otimes m'_j$,
hence $f'\backsim f$. Hence, $\backsim$ is symmetric.
Transitivity is seen as follows. If $f\backsim f'$ and $f'\backsim f''$, then
\[
f + h+ n \otimes\ell\otimes\ell \otimes m = f'+h+  n' \otimes\ell'\otimes\ell' \otimes m'
\]
and
\[
  f' +\bar h+  \bar n' \otimes\bar\ell'\otimes\bar\ell' \otimes \bar m'
= f''+\bar h+  n'' \otimes \ell''\otimes\ell'' \otimes m'',
\]
which implies that
\[
  f' +\bar h+  \bar n' \otimes\bar\ell'\otimes\bar\ell' \otimes \bar m'  +h+  n' \otimes\ell'\otimes\ell' \otimes m'
= f''+\bar h+  n'' \otimes \ell''\otimes\ell'' \otimes m''  +h+\ n' \otimes\ell'\otimes\ell' \otimes m'
\]
hence
\[
f + h+ n \otimes\ell\otimes\ell \otimes m +  \bar h+  \bar n' \otimes\bar\ell'\otimes\bar\ell' \otimes \bar m'
= f''+\bar h+  n'' \otimes \ell''\otimes\ell'' \otimes m''  +h+\ n' \otimes\ell'\otimes\ell' \otimes m'
\]
and thus $f\backsim f''$.
Let $\Wedge^\bullet(M) := \tensor^\bullet(M)/\!_{\backsim}$.
The notation for the equivalence class of $v_1\otimes\cdots\otimes v_d$ is $v_1\wedge\cdots \wedge v_d$,
or alternatively $[v_1\otimes\cdots\otimes v_d]_{\Wedge}$.
Note that $\ell\wedge\ell = 0$ by definition.
This implies $\ell\wedge\ell'+\ell\wedge\ell'=0$,
because
$\ell\wedge\ell'+\ell\wedge\ell'=
\ell\wedge\ell'+\ell\wedge\ell' + (\ell\wedge\ell) + (\ell'\wedge\ell')
=\ell\wedge(\ell+\ell') + \ell'\wedge(\ell+\ell')
=(\ell+\ell')\wedge(\ell+\ell')=0.
$
Moreover, this implies $\ell\wedge\ell'\wedge\ell''=\ell'\wedge\ell''\wedge\ell$, because
$\ell\wedge\ell'\wedge\ell''
=
\ell\wedge\ell'\wedge\ell''
+
(\ell+\ell'')\wedge(\ell+\ell'')\wedge\ell'
=
\ell\wedge\ell'\wedge\ell''
+
\ell\wedge\ell''\wedge\ell'
+
\ell''\wedge\ell\wedge\ell'
=
\ell''\wedge\ell\wedge\ell'
$.
\end{itemize}

Let $\End$ be the monoid of $R$-linear maps from $M_1$ to $M_1$.
We will not use a notation involving $M$ in the future, but we will write $\sT := \tensor^\bullet(M)$, $\sS := S^\bullet(M)$, $\sW := \Wedge^\bullet(M)$. The homogeneous degree $d$ components are written as $\sT_d$, $\sS_d$, and $\sW_d$.

\begin{remark}
Arbitrary semirings instead of commutative semirings can be used for most parts when applying the following changes.
\begin{itemize}
\item
The semialgebra axiom is changed from $\alpha (ab) = (\alpha a)b = a(\alpha b)$
to $(\alpha\beta)(ab)=(\alpha a)(\beta b)$ for standard basis vectors $a$ and $b$, see \cite[eq.~(1.4)]{Wei06}.
Here, a basis means that every element can be expressed uniquely as an $R$-linear combination of basis vectors.
The standard basis of $\sW_d$ is different over semirings compared to the standard basis over rings,
because over semirings $y\wedge x$ and $x\wedge y$ can be $R$-linearly independent, while over rings we have
$y\wedge x = y\wedge x + (x-y)\wedge(x-y) - x\wedge x - y \wedge y = -x \wedge y$.
\item Even over non-commutative rings, the multiplication operation in $\sS$ is not necessarily commutative.
Similarly, taking the product of two elements in $\sW_{2d}$, $d\geq 1$, is not necessarily commutative.
\item The definition of $\boxtimes$ in \S\ref{sec:multilin} works via $R$-bilinear continuation, which introduces a similar subtlety as for semialgebras, which can be resolved in the same way by choosing standard bases. As a consequence, \eqref{eq:boxtimescommutative} can fail if $R$ is not commutative.
\end{itemize}
It is unclear if the lower bound arguments involving Lemma~\ref{lem:rankofid} would still hold.
We avoid these complications and thus work over commutative semirings.
This is also the generality in which mathlib \cite{mathlib2020} defines the exterior algebra in \texttt{Mathlib/LinearAlgebra/ExteriorAlgebra/Basic.lean}.
\end{remark}

\subsection{Matrices}

$R^n$ is an $R$-semimodule with the standard basis $\{e_1,\ldots,e_n\}$,
i.e., every element in $R^n$ has a unique expression as an $R$-linear combination of $e_1,\ldots,e_n$.
The dual $R$-semimodule $(R^n)^*$ is defined as the set of $R$-linear maps $R^n\to R$.
Every $\phi\in(R^n)^*$ is uniquely determined by its values $\phi(e_i)$,
and a basis of $(R^n)^*$ is given by the $e_i^*$, which are defined via $e_i^*(e_j)=\delta_{i,j}$,
where $\delta_{i,j}=1$ if $i=j$, and $\delta_{i,j}=0$ otherwise.

For an $R$-semimodule $M$, define
$\Mat_n(M) := R^n \otimes M \otimes (R^n)^\ast$.
The multiplication $\Mat_n(M)\times \Mat_n(M)\to \Mat_n(M)$ is defined via
\[
(e_i \otimes m \otimes e_j^*) \cdot (e_{i'} \otimes m' \otimes e_{j'}^*)
= e_j^*(e_i) \, e_i \otimes (m m') \otimes e_{j'}^*
\]
extended bilinearly.
We embed $A_d$ into $\Mat_{1}(A_d)$ via $f\mapsto e_1 \otimes f \otimes e_1^*$.
We embed $\Mat_{n-1}$ into $\Mat_n$ as the notation suggests:
$(e_i \otimes m \otimes e_j^*) \mapsto (e_i \otimes m \otimes e_j^*)$.
 
For $F = \sum_{i,j} e_i \otimes m_{i,j} \otimes e_j^*$ we write $[F]_{i,j} = m_{i,j}$,
i.e., we use brackets $[\,]$ to extract matrix entries.
We use this notation, because our matrices will come from families of matrices,
and hence possibly already have subscripts and/or superscripts.

\subsection{Restrictions}\label{subsec:restrictions}

The monoid $\End$ acts linearly on $\sT_d$ via
\[
T(x_{i_1}\otimes\cdots\otimes x_{i_d}) = T(x_{i_1})\otimes\cdots\otimes T(x_{i_d}),
\]
extended linearly.
The action commutes with the map to the quotient $\sS_d$ or $\sW_d$, in other words
\[
T(x_{i_1}\cdots x_{i_d}) = T(x_{i_1}) \cdots T(x_{i_d}),\qquad
T(x_{i_1}\wedge\cdots \wedge x_{i_d}) = T(x_{i_1}) \wedge\cdots\wedge T(x_{i_d}).
\]
This action lifts to $\Mat(A_d)$ entry-wise, i.e., $[T(F)]_{a,b}:=T([F]_{a,b})$.
Since $A_d$ is an $R$-semimodule, for every $T_{\textup{left}}\in\Mat(R)$ we define
$T_{\textup{left}} \cdot F$ via matrix multiplication, i.e., $[T_{\textup{left}} F]_{a,b}:=\sum_{i}[T_{\textup{left}}]_{a,i}[F]_{i,b}$.
Moreover, for $T_{\textup{right}}\in\Mat(R)$ we define $F \cdot T_{\textup{right}}$ analogously.
All three actions on $\Mat(A_d)$ commute.
We make this a left action of $\Mat(R)\times\End\times\Mat(R)$ on $\Mat(A_d)$ via
\[
(T_{\textup{left}},T,T_{\textup{right}}) F := T_{\textup{left}} \cdot T(F) \cdot T_{\textup{right}}^\transpose
\]
For $F,F'\in\Mat(A_d)$ we define $F\leq F'$ if there exists $T\in\End$, $T_{\textup{left}}\in \Mat(R)$, $T_{\textup{right}}\in \Mat(R)$
such that $(T_{\textup{left}},T,T_{\textup{right}}) F' = F$.
We say that $F'$ \emph{restricts to} $F$.
Clearly, $\leq$ is transitive.

\subsection{Number of essential variables}

For $x_j\in\Var$, let $\partial_i(x_j)=1$ if $i=j$, and $0$ otherwise.
We define the linear map $\partial_i:\sT_d \to \sT_{d-1}$ via
\begin{equation}
\label{eq:partiali}
\partial_i (x_{j_1}\otimes\cdots\otimes x_{j_d})
\ := \
(\partial_i x_{j_1})\otimes x_{j_2}\otimes\cdots\otimes x_{j_d}
 \ + \ 
x_{j_1} \otimes (\partial_i x_{j_2})\otimes\cdots\otimes x_{j_d}
 \ + \cdots + \ 
x_{j_1} \otimes \cdots\otimes (\partial_i x_{j_d})
\end{equation}
The linear map $J:\sT_d \to \sT_{d-1}\otimes \sT_1$ is defined via
\[
J(t) = \sum_{i\in\IN} \partial_i(t) \otimes x_i.
\]
The monoid $\End$ acts on $\sT_{d-1}\otimes \sT_1$ via
\[
T(N\otimes \ell) := T(N) \otimes T(\ell)
\]
for $T\in\End$.
The map $J$ is clearly equivariant, i.e., $T(J(f))=J(T(f))$.

Analogously, we define the linear map $\partial_i:\sS_d\to \sS_{d-1}$ via
\begin{equation}
\label{eq:partialis}
\partial_i (x_{j_1} \cdots x_{j_d})
\ := \
(\partial_i x_{j_1}) x_{j_2} \cdots x_{j_d}
 \ + \ 
x_{j_1} (\partial_i x_{j_2}) \cdots x_{j_d}
 \ + \cdots + \ 
x_{j_1} \cdots (\partial_i x_{j_d}).
\end{equation}
This is well-defined, because permuting the variable positions in $x_{j_1} \cdots x_{j_d}$
in the left-hand side of \eqref{eq:partialis} just permutes the summands on the right-hand side.
We define the map $J:\sS_d \to \sS_{d-1}\otimes \sS_1$ via
$
J(f) = \sum_{i\in\IN} \partial_i(f) \otimes x_i.
$
The monoid $\End$ acts on $\sS_{d-1}\otimes \sS_1$ via
$
T(N\otimes \ell) := T(N) \otimes T(\ell)
$
for $T\in\End$.
The map $J$ is clearly equivariant, i.e., $T(J(f))=J(T(f))$.

For $d\geq 3$, the linear map
$\partial_i:\sW_d\to \sW_{d-1}$ is defined via
\begin{equation}
\label{eq:partialiw}
\partial_i (x_{j_1} \wedge\cdots\wedge x_{j_d})
\ := \
(\partial_i x_{j_1}) x_{j_2} \wedge\cdots\wedge x_{j_d}
 \ + \ 
(1\ 2) x_{j_1} \wedge(\partial_i x_{j_2}) \wedge\cdots\wedge x_{j_d}
 \ + \cdots + \ 
(1\ 2)^d x_{j_1} \wedge\cdots\wedge (\partial_i x_{j_d}),
\end{equation}
where 
$(1\ 2)(v_1\wedge \cdots \wedge v_d) = v_2\wedge v_1 \wedge v_3 \wedge \cdots \wedge v_d$.
This is well-defined, which we see as follows:
For $\partial_i(\ell)=\alpha$ we have
$[\partial_i(\ell\otimes\ell\otimes x_{j_3}\otimes\cdots\otimes x_{j_d})]_\Wedge
=
\alpha \ell \wedge x_{j_3}\wedge\cdots\wedge x_{j_d}
+
(1\ 2) \alpha \ell \wedge x_{j_3}\wedge\cdots\wedge x_{j_d}
+ \ell \wedge\ell \wedge \ldots
= 0$.
Moreover, we have
$[\partial_i(\ell\otimes x_{j_2}\otimes\ell\otimes x_{j_4}\otimes\cdots\otimes x_{j_d})]_\Wedge
=
\alpha x_{j_2}\wedge\ell\wedge x_{j_4}\wedge\cdots\wedge x_{j_d}
+ \alpha \ell \wedge x_{j_2}\wedge x_{j_4}\wedge\cdots\wedge x_{j_d}
+ 0 = 0.
$
All other cases are analogous.
The linear map $J:\sW_d \to \sW_{d-1}\otimes \sW_1$ is defined via
$
J(t) = \sum_{i\in\IN} \partial_i(t) \otimes x_i,
$
and the monoid $\End$ acts on $\sW_{d-1}\otimes \sW_1$ via
$
T(N\otimes \ell) := T(N) \otimes T(\ell)
$
for $T\in\End$.
The map $J$ is clearly equivariant, i.e., $T(J(f))=J(T(f))$.
For $d=2$, if $-1\in R$, then the same construction works by replacing the action of $(1 \ 2)$ by the rescaling with $-1$. We will ignore the case $A_2=\sW_2$ in this paper.

In all three cases (with the exception of $\sW_2$ with $-1\notin R$), we obtain an equivariant map $J : A_d\to A_{d-1}\otimes A_1$.

The \emph{rank} of an element of $N \in A_{d-1}\otimes A_1$ is defined as the smallest $r$ such that $f$ can be written as
\[
N = \sum_{i=1}^r N_i \otimes \ell_i
\]
for some $N_i\in A_{d-1}$ and $\ell_i \in A_1$.

We know that
\begin{equation}\label{eq:Jmatrix}
\text{if $h\leq f$, then $\rk(J(h))\leq\rk(J(f))$,}
\end{equation}
because if $T\in\End$ and $f\in A_d$, $h\in A_d$ with $T(f)=h$
and
$J(f)=\sum_{i=1}^r N_i \otimes \ell_i$,
then
$J(h)=J(T(f))=\sum_{i=1}^r T(N_i) \otimes T(\ell_i)$ due to the equivariance.

Clearly, $\rk(J(f))$ is at most the number of variables in which $f$ is specified.
If $R$ is a field, then $\rk(J(f))$ is sometimes called the \emph{number of essential variables} of~$f$.

This concept generalizes to matrices as follows.
For $F\in\Mat(A_d)$ define $J:\Mat(A_d)\to\Mat(A_{d-1}\otimes A_1)$ via
$[J(F)]_{a,b} := J([F]_{a,b})$.
The rank of an element $M \in \Mat_{m}(A_{d-1}\otimes A_1)$
is defined as the smallest $r$ such that $M$ can be written as
\[
M = \sum_{i=1}^r c_i \otimes N_i \otimes \ell_i \otimes j_i.
\]
for some $c_i\in R^m$, $j_i\in (R^m)^\ast$,
$N_i\in A_{d-1}$, and $\ell_i \in A_1$.

We know that
\begin{equation}\label{eq:Jmatrixformatrices}
\text{if $H\leq F$, then $\rk(J(H))\leq\rk(J(F))$,}
\end{equation}
because if $(T_\textup{left},T,T_\textup{right})\in\Mat(R)\times\End\times\Mat(R)$ and $F\in \Mat(A_d)$, $H\in \Mat(A_d)$ with $(T_\textup{left},T,T_\textup{right})(F)=H$
and
$J(F)=\sum_{i=1}^r c_i \otimes N_i \otimes \ell_i \otimes j_i$,
then
$J(H)=J((T_\textup{left},T,T_\textup{right})(F))=\sum_{i=1}^r T_\textup{left}(c_i) \otimes T(N_i) \otimes T(\ell_i) \otimes T_{\textup{right}}^\transpose(j_i)$.

We also have that $\rk(J(F))$ is at most the number of variables in which $F$ is specified.
And we also say that $\rk(J(F))$ is the number of essential variables.

\subsection{Rank}

\begin{lemma}\label{lem:rankofid}
Over every commutative semiring $R$, the $n\times n$ identity matrix has rank $n$.
\end{lemma}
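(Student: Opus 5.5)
The plan is to reduce to two classical cases by means of a semiring homomorphism. Suppose $I_n = \sum_{i=1}^r c_i \otimes j_i$ with $c_i\in R^n$ and $j_i\in(R^n)^*$; we want $r\geq n$. Let $\varphi:R\to K$ be any homomorphism of commutative semirings, meaning it respects $+$, $\cdot$, $0$ and $1$. Applying $\varphi$ entrywise gives $I_n=\sum_{i=1}^r \varphi(c_i)\otimes\varphi(j_i)$ over $K$. Hence it suffices to find, for every commutative semiring $R$ with $0_R\neq 1_R$, a homomorphism $\varphi$ into a semiring $K$ where the identity is already known to have rank $n$. I will use two targets: a field $K$, or the Boolean semiring $\mathbb B=(\{0,1\},\OR,\AND)$.

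\textbf{Both target cases are easy.} Over a field, $r$ rank-one matrices sum to a matrix of linear-algebraic rank at most $r$, and $I_n$ has rank $n$. Over $\mathbb B$ there is no cancellation, so I use a fooling-set argument. For each $k$, the entry $[I_n]_{k,k}=1$ forces some $i$ with $c_i(k)=j_i(k)=1$. If the same $i$ served two indices $k\neq l$, then $c_i(k)j_i(l)=1$, and hence $[I_n]_{k,l}=1$, which is a contradiction. So the map $k\mapsto i$ is injective and $r\geq n$.

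\textbf{Constructing $\varphi$.} Let $Z=\{a\in R:\exists b,\ a+b=0\}$ be the set of additively invertible elements. This $Z$ is an ideal, and it is \emph{strong}: $a+b\in Z$ implies $a,b\in Z$, since $a+b+c=0$ gives $a+(b+c)=0$. There are two cases.
\begin{itemize}
\item If $1\in Z$, then $Z=R$ and $R$ is a commutative ring. Quotienting by a maximal ideal gives a field $K$, and $1\mapsto 1\neq 0$ in $K$.
\item If $1\notin Z$, apply Zorn's lemma to the family of strong ideals containing $Z$ and not containing $1$, ordered by inclusion. A union of a chain is again such an ideal, so there is a maximal element $P$. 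Define $\varphi(a)=0$ if $a\in P$ and $\varphi(a)=1$ otherwise. Because $P$ is strong, $\varphi(a+b)=0$ holds iff $a,b\in P$, so $\varphi$ respects addition.
\end{itemize}

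\textbf{The main obstacle} is showing that $\varphi$ is multiplicative, which amounts to showing that $P$ is prime: $ab\in P$ forces $a\in P$ or $b\in P$. My first attempt: suppose $ab\in P$ with $a\notin P$. Form the smallest strong ideal containing $P$ and $b$, namely $Q=\{x:\ x+y=p+sb \text{ for some } y\in R,\ p\in P,\ s\in R\}$, which is the strong closure of $P+Rb$. Then show that $1\in Q$ would lead, after multiplying the relation $1+y=p+sb$ by $a$, to $a+ay\in P$. Strongness of $P$ then gives $a\in P$, a contradiction. So $1\notin Q$, and maximality of $P$ gives $Q=P$, hence $b\in P$. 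I expect the delicate points to be checking that $Q$ is indeed a strong ideal and handling the $y$-terms carefully. If this route fails, the fallback is to use a maximal strong ideal avoiding a multiplicatively closed set, for example $\{1\}$ enlarged to its multiplicative closure, mimicking the standard proof that ideals maximal with respect to avoiding a multiplicative set are prime.
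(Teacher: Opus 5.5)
Your proof is correct, and it takes a genuinely different route from the paper's.

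\textbf{The paper's route.} It writes a hypothetical short decomposition as $NM=I_n$ with $N\in R^{n\times r}$, $M\in R^{r\times n}$, and pads both with zeros to square matrices $N',M'$. It then invokes the theorem of Reutenauer and Straubing that $N'M'=I_n$ forces $M'N'=I_n$ over any commutative semiring. This contradicts the zero bottom-right entry of $M'N'$.

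\textbf{Your route.} You show that every commutative semiring with $0\neq 1$ has a unital homomorphism to a field or to $\mathbb{B}$. Decompositions push forward along such homomorphisms, so you then settle the two target cases directly.

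\textbf{The step you flagged closes as written.} $Q$ is closed under sums and $R$-multiples. It is strong, because $x+x'+y=p+sb$ can be read as $x+(x'+y)=p+sb$. It contains $P$ and $b$. Multiplying $1+y=p+sb$ by $a$ puts $a+ay$ into $P$. So the fallback is not needed.

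\textbf{Two small remarks.} Every strong ideal contains $Z$ automatically. You should also record the trivial bound $\rk(I_n)\le n$, which comes from $I_n=\sum_i e_i\otimes e_i^\ast$.

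\textbf{Trade-off.} The paper's argument is two lines but rests entirely on a nontrivial cited theorem. Yours is self-contained and proves more: any rank lower bound for a $0/1$-matrix that holds over every field and over $\mathbb{B}$ transfers to every commutative semiring. The price is Zorn's lemma. In the ring case you can avoid it by padding and using $\det(N')\det(M')=1$ together with $\det(N')=0$.
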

\begin{proof}
Clearly $\rk(I_n)\leq n$, because $I_n=\sum_{i=1}^n e_i\otimes e_i^\ast$.
For the other direction, assume that $\rk(I_n)\leq r < n$.
Then there exist $N\in R^{n\times r}$ and $M\in R^{r\times n}$ with $NM=I_n$.
We pad $N$ and $M$ with zeros to obtain $n\times n$ matrices $N'$ and $M'$,
where the last column of $N'$ is zero and the last row of $M'$ is zero.
We have $N'M'=I_n$.
Now apply \cite[Theorem]{RS84} to see that $M'N'=I_n$.
This is a contradiction, because the bottom-right entry of $M'N'$ is zero.
\end{proof}

\section{Width}\label{sec:linalg}
In this section we introduce the width $\w(F)$ of an element $F\in \Mat(A_d)$.
The pair $(\w(F),d)$ measures the computational complexity of~$F$.

Let $R$ be a commutative semiring and let $A \in \{\sT,\sS,\sW\}$.
Let $X^{(k)}_{n} := \big( x_{i,j}^{(k)} \big)_{1 \leq i,j \leq n}$ denote the $n \times n$ matrix whose entry at position $(i,j)$ is the variable $x_{i,j}^{(k)}\in\Var$.

\begin{definition}[The computant matrix]
The \emph{computant matrix} $\Xi_{n,d}\in\Mat_n(A_d)$ is defined as
\[
X^{(1)}_{n} \cdots X^{(d)}_{n}.
\]
We denote the bottom-right entry with $\Gamma\!_{n,d} := [\Xi_{n,d}]_{n,n}$.
For $A=\sS$, $\Gamma\!_{n,d}$ is called the iterated matrix multiplication polynomial.
\end{definition}
More explicitly, we have
\begin{equation}\label{eq:Gammaij}
[\Xi_{n,d}]_{a,b} \ = \ \sum_{1 \leq i_1,\ldots,i_{d-1} \leq n} x^{(1)}_{a,i_1} \, x^{(2)}_{i_1,i_2} \, x^{(3)}_{i_2,i_3} \, \cdots \, x^{(d-1)}_{i_{d-2},i_{d-1}} \, x^{(d)}_{i_{d-1},b}.
\end{equation}
Note that
\begin{equation}\label{eq:Gammaindeppos}
\textup{$[\Xi_{n,d}]_{a,b}$ is the same as $[\Xi_{n,d}]_{a',b'}$ up to renaming variables.}
\end{equation}

It will turn out the $\Xi_{n,d}$ has the better properties than $\Gamma\!_{n,d}$ from a geometric and representation theoretic perspective, see \S\ref{subsec:differentsettings}. We use $\Gamma\!_{n,d}$ to connect more easily to computational complexity theory.

\begin{claim}
\label{cla:GammvslargerGamma}
$\Xi_{n,d} \leq \Xi_{n+1,d}$ and $\Gamma\!_{n,d} \leq \Gamma\!_{n+1,d}$.
\end{claim}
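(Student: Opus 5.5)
We prove both restrictions by exhibiting an explicit element of $\Mat(R)\times\End\times\Mat(R)$ and checking it against the explicit formula \eqref{eq:Gammaij}.

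For $\Xi_{n,d}\leq\Xi_{n+1,d}$, take $T\in\End$ defined on variables by
\[
T\big(x^{(k)}_{i,j}\big)=\begin{cases} x^{(k)}_{i,j} & \text{if } i,j\leq n,\\ 0 & \text{otherwise,}\end{cases}
\]
for $1\leq k\leq d$ and $1\leq i,j\leq n+1$. All other variables are sent to themselves. Each variable goes to a finite linear combination (a single variable or $0$), so $T\in\End$.

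Apply $T$ entrywise to $\Xi_{n+1,d}$ using \eqref{eq:Gammaij} with indices in $\{1,\ldots,n+1\}$. Since $T$ is multiplicative on monomials (it acts factorwise on $\sT_d$, compatibly with the quotients $\sS_d$ and $\sW_d$), every monomial containing some index equal to $n+1$ is sent to $0$. What remains for $a,b\leq n$ is exactly $[\Xi_{n,d}]_{a,b}$, and entries with $a=n+1$ or $b=n+1$ vanish. To discard the last row and column, we take $T_{\textup{left}}=T_{\textup{right}}$ to be the $(n+1)\times(n+1)$ matrix $\sum_{i=1}^n e_i\otimes e_i^\ast$. Through the embedding $\Mat_n\subseteq\Mat_{n+1}$, this yields $\Xi_{n,d}$. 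This step is routine.

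For $\Gamma_{n,d}\leq\Gamma_{n+1,d}$, truncation alone fails, because $\Gamma_{n,d}$ sits at position $(n,n)$ while $\Gamma_{n+1,d}$ sits at position $(n+1,n+1)$. The index shift is the only point needing care, and we handle it by relabeling. Take $T$ with
\[
T\big(x^{(k)}_{i,j}\big)=x^{(k)}_{\sigma(i),\sigma(j)} \text{ if } i,j\neq 1, \qquad T\big(x^{(k)}_{i,j}\big)=0 \text{ if } 1\in\{i,j\},
\]
where $\sigma(i)=i-1$. Then every monomial of $[\Xi_{n+1,d}]_{n+1,n+1}$ in which some intermediate index $i_\ell$ equals $1$ is killed. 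The remaining monomials, with all $i_\ell\in\{2,\ldots,n+1\}$, are mapped bijectively onto the monomials of $[\Xi_{n,d}]_{n,n}$, since $\sigma(n+1)=n$. Both polynomials are viewed in $\Mat_1(A_d)$, so $T_{\textup{left}}=T_{\textup{right}}=(1)$. Alternatively, we could invoke \eqref{eq:Gammaindeppos} together with the first part, extracting the $(n,n)$ entry via $T_{\textup{left}}=e_n^\ast$-type row selection.
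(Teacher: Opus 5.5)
Your proof is correct and takes essentially the paper's approach: for $\Xi$ you zero out every variable carrying the index $n+1$, just as the paper does (your projections $T_{\textup{left}},T_{\textup{right}}$ are harmless but unnecessary, since that row and column already vanish). For $\Gamma$, the paper applies this same truncation to the $(1,1)$ entries and then invokes \eqref{eq:Gammaindeppos}; your map that zeroes index $1$ and shifts $i\mapsto i-1$ is the same argument with that renaming built explicitly into the endomorphism.
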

\begin{proof}
For $\Xi_{n,d} \leq \Xi_{n+1,d}$, just map every variable $x^{(k)}_{n+1,i}$ and $x^{(k)}_{i,n+1}$ to zero, $1\leq i \leq n+1$.
The same map proves $[\Xi_{n,d}]_{1,1} \leq [\Xi_{n+1,d}]_{1,1}$, which by \eqref{eq:Gammaindeppos} implies $\Gamma\!_{n,d}\leq\Gamma\!_{n+1,d}$.
\end{proof}
\begin{definition}[Width]
For $F\in \Mat(A_d)$, the \emph{width} $\w(F)$ is defined as the smallest $n$ such that $F \leq \Xi_{n,d}$.
\end{definition}

\begin{example}\label{exa:Xiw}
$\w(\Xi_{n,d})=n$.
\end{example}
\begin{proof}
By definition, $\w(\Xi_{n,d})\leq n$.
We claim that $\rk(J(\Xi_{n,d})) = n^2 d$,
which then finishes the proof by~\eqref{eq:Jmatrixformatrices}.
Since $\Xi_{n,d}$ is defined in $n^2 d$ many variables, we have $\rk(J(\Xi_{n,d})) \leq n^2 d$.
For every one of the $n^2 d$ variables $x_{\mathbf{i}}$ in $\Xi_{n,d}$
we can reconstruct $\mathbf i$
from any monomial in $\partial_{\mathbf i} \Xi_{n,d}$,
hence $J(\Xi_{n,d})$ contains an $(n^2 d)\times(n^2 d)$ permutation matrix as a submatrix,
so the claim follows with Lemma~\ref{lem:rankofid}.
\end{proof}

Recall that restriction is defined via a product of three monoids in \S\ref{subsec:restrictions}.
The following lemma shows that restricting $\Xi_{n,d}$ can be done by only one of those monoids.
\begin{lemma}\label{lem:Xirestrict}
Let $F\in\Mat_n(A_d)$.
We have $F\leq\Xi_{n,d}$ if and only if there exists $T\in\End$ with $T(\Xi_{n,d})=F$.
\end{lemma}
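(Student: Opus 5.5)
The ``if'' direction is immediate: given $T\in\End$ with $T(\Xi_{n,d})=F$, the triple $(I_n,T,I_n)$ witnesses $F\leq\Xi_{n,d}$. For the ``only if'' direction, we start from $F=T_{\textup{left}}\cdot T(\Xi_{n,d})\cdot T_{\textup{right}}^\transpose$ with $T\in\End$ and $T_{\textup{left}},T_{\textup{right}}\in\Mat(R)$. The plan is to absorb the left and right matrix multiplications into the variable substitution acting on the first and last matrix factors, $X^{(1)}_n$ and $X^{(d)}_n$.

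Since $F\in\Mat_n(A_d)$, we may first cut $T_{\textup{left}}$ and $T_{\textup{right}}$ down to $n\times n$ matrices, keeping only the first $n$ rows (resp.\ columns) that produce the entries of $F$. The computant matrix is a product $\Xi_{n,d}=X^{(1)}_n\cdots X^{(d)}_n$, and $T$ acts factorwise, so
\[
T(\Xi_{n,d}) = T(X^{(1)}_n)\cdots T(X^{(d)}_n).
\]
This uses that the action of $\End$ on $\sT_d$ is defined factorwise and descends to $\sS_d,\sW_d$, and that matrix multiplication is bilinear. Matrix multiplication with scalar matrices is associative, and scalars commute with everything in the semialgebra, since $R$ is commutative and $(\alpha a)b=a(\alpha b)$. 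Hence
\[
F=\big(T_{\textup{left}}\,T(X^{(1)}_n)\big)\,T(X^{(2)}_n)\cdots T(X^{(d-1)}_n)\,\big(T(X^{(d)}_n)\,T_{\textup{right}}^\transpose\big).
\]
(For $d=1$ both absorptions apply to the single factor, giving $T_{\textup{left}}T(X^{(1)}_n)T_{\textup{right}}^\transpose$.)

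Now define $T'\in\End$ on variables as follows. For the first factor, set
\[
T'(x^{(1)}_{a,b}):=\sum_i [T_{\textup{left}}]_{a,i}\,T(x^{(1)}_{i,b}).
\]
For the last factor, set
\[
T'(x^{(d)}_{a,b}):=\sum_j T(x^{(d)}_{a,j})\,[T_{\textup{right}}]_{b,j}.
\]
Let $T'$ agree with $T$ on all other variables. (If $d=1$, combine both formulas into $T'(x^{(1)}_{a,b})=\sum_{i,j}[T_{\textup{left}}]_{a,i}T(x^{(1)}_{i,j})[T_{\textup{right}}]_{b,j}$.) Each image is a finite linear combination of variables because $T$ has this property, so $T'\in\End$. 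By construction $T'(X^{(1)}_n)=T_{\textup{left}}T(X^{(1)}_n)$ and $T'(X^{(d)}_n)=T(X^{(d)}_n)T_{\textup{right}}^\transpose$, and therefore $T'(\Xi_{n,d})=F$.

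The main point requiring care is the factorwise identity $T(\Xi_{n,d})=\prod_k T(X^{(k)}_n)$ together with pulling the scalar matrices inside. In the exterior and tensor algebra settings the order of factors matters, but it is preserved here: $T_{\textup{left}}$ only combines entries of the first factor and $T_{\textup{right}}$ only combines entries of the last, so no reordering of tensor factors ever occurs. We will verify this directly on the explicit formula~\eqref{eq:Gammaij} by expanding linearly in each position, which works equally in $\sT$, $\sS$ and $\sW$.
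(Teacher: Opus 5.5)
Your proposal is correct and is essentially the paper's argument: the paper absorbs $T_{\textup{left}}$ into the substitution $x^{(1)}_{i,j}\mapsto\sum_a [T_{\textup{left}}]_{i,a}x^{(1)}_{a,j}$ with $T=\id$, handles $T_{\textup{right}}$ ``analogously'' on the last layer, and implicitly uses that the three actions commute to reach a general $T$. Your single substitution $T'$ is just the composite of these layer substitutions with $T$, and you also make explicit the cutting of $T_{\textup{left}}$ and $T_{\textup{right}}$ to $n\times n$ and the $d=1$ case, both of which the paper leaves unstated.
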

\begin{proof}
\begin{eqnarray*}
(T_\textup{left},\id,\id)\Xi_{n,d}
&=&
T_\textup{left} (X_n^{(1)}\cdots X_n^{(d)})
\\
&=&
(T_\textup{left} X_n^{(1)})\cdots X_n^{(d)})
\end{eqnarray*}
Note that
\[
[T_\textup{left} X_n^{(1)}]_{i,j} = \sum_{a}[T_\textup{left}]_{i,a}[X_n^{(1)}]_{a,j}
= \sum_{a}[T_\textup{left}]_{i,a} x_{a,j}^{(1)}
\]
Define $T\in\End$ via $T(x_{i,j}^{(1)}) = \sum_{a}[T_\textup{left}]_{i,a} x_{a,j}^{(1)}$
and $T(x_{i,j}^{(k)})=x_{i,j}^{(k)}$ for $k>1$,
and observe that
$T_\textup{left} X_n^{(1)} = T(X_n^{(1)})$ and $X_n^{(k)} = T(X_n^{(k)})$ for all $k>1$.
Therefore,
\[
(T_\textup{left} X_n^{(1)})\cdot X_n^{(d)} = T(X_n^{(1)}\cdot X_n^{(d)}).
\]
and hence
\[
(T_\textup{left},\id,\id)\Xi_{n,d} = (\id,T,\id)\Xi_{n,d}.
\]
Analogously for $T_\textup{right}$.
\end{proof}

For elements of $A_d$, the width can be defined via restrictions of $\Gamma\!_{n,d}$, as the following claim shows.
\begin{claim}\label{cla:widthfviaGamma}
For $f\in A_d$, we have that $\w(f)$ is the smallest $n$ such that $f \leq \Gamma\!_{n,d}$.
\end{claim}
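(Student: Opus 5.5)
For $f\in A_d$ we regard $f$ as the $1\times 1$ matrix $e_1\otimes f\otimes e_1^\ast\in\Mat_1(A_d)$, which by the embedding $\Mat_{n-1}\subseteq\Mat_n$ sits in $\Mat_n(A_d)$ as the matrix with $f$ in position $(1,1)$ and zeros elsewhere. The plan is to prove both inequalities between $\w(f)$ and $\min\{n : f\leq\Gamma\!_{n,d}\}$ by converting a restriction of the matrix $\Xi_{n,d}$ into a restriction of the single polynomial $\Gamma\!_{n,d}$ and back.

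\emph{Step 1: $f\leq\Gamma\!_{n,d}$ implies $f\leq\Xi_{n,d}$.} Suppose $T\in\End$ and $T_\textup{left},T_\textup{right}\in\Mat(R)$ witness $f\leq\Gamma\!_{n,d}$ for $1\times1$ matrices. Since $R$ is commutative, the scalars $T_\textup{left},T_\textup{right}\in R$ can be absorbed into $T$ by scaling its value on, say, the variables $x^{(1)}_{\cdot,\cdot}$. So we may assume $T(\Gamma\!_{n,d})=f$.

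Next, extract the corner of $\Xi_{n,d}$. Let $E_{1,n}\in\Mat_n(R)$ be the matrix unit $e_1\otimes e_n^\ast$. Then $E_{1,n}\,\Xi_{n,d}\,E_{1,n}^\transpose$ has $[\Xi_{n,d}]_{n,n}=\Gamma\!_{n,d}$ in position $(1,1)$ and zeros elsewhere. Applying $(E_{1,n},T,E_{1,n})$ to $\Xi_{n,d}$ therefore yields $f$ in position $(1,1)$ and zeros elsewhere, which is $f\leq\Xi_{n,d}$. Note that $T$ commutes with the left and right multiplications, since these actions commute.

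\emph{Step 2: $f\leq\Xi_{n,d}$ implies $f\leq\Gamma\!_{n,d}$.} Here "$f$ as an element of $\Mat_n$" is the padded matrix, and Lemma~\ref{lem:Xirestrict} lets us assume $T(\Xi_{n,d})=f$ entrywise. Taking the $(1,1)$ entry gives $T([\Xi_{n,d}]_{1,1})=f$. By \eqref{eq:Gammaindeppos}, $[\Xi_{n,d}]_{1,1}$ equals $\Gamma\!_{n,d}$ up to a renaming of variables. A renaming is an element of $\End$, so composing it with $T$ gives $f\leq\Gamma\!_{n,d}$.

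If $\w(f)$ is instead read with $f$ as a $1\times1$ matrix restricting to $\Xi_{n,d}$ through rectangular $T_\textup{left},T_\textup{right}$, then $(T_\textup{left}\,T(\Xi_{n,d})\,T_\textup{right}^\transpose)_{1,1}=\sum_{a,b}\alpha_a\beta_b\,T([\Xi_{n,d}]_{a,b})$ is a linear combination of entries. This is the step I expect to be the main obstacle. To handle it, I would use the explicit formula \eqref{eq:Gammaij}: the sum $\sum_{a,b}\alpha_a\beta_b[\Xi_{n,d}]_{a,b}$ equals $T'(\Gamma\!_{n,d})$ for the $T'\in\End$ given by
\[
x^{(1)}_{n,j}\mapsto\sum_a\alpha_a x^{(1)}_{a,j},\qquad x^{(d)}_{i,n}\mapsto\sum_b\beta_b x^{(d)}_{i,b},
\]
with the other variables fixed, exactly as in the proof of Lemma~\ref{lem:Xirestrict}. (For $d=1$, combine both substitutions on the single layer.) Composing $T\circ T'$ then gives $f\leq\Gamma\!_{n,d}$.

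With both directions established, and using the monotonicity in $n$ from Claim~\ref{cla:GammvslargerGamma}, the two minima coincide.
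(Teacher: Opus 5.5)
Your proof is correct and follows the paper's argument: the paper calls $f\le\Gamma\!_{n,d}\Rightarrow f\le\Xi_{n,d}$ clear, and proves the converse exactly as in your Step~2, via Lemma~\ref{lem:Xirestrict}, the $(1,1)$ entry and the renaming \eqref{eq:Gammaindeppos}; the paper's only extra is an auxiliary zeroing map $Z$, which your direct entry extraction shows is unnecessary. Your explicit corner extraction in Step~1 and your treatment of rectangular restrictions are valid additions, and the final appeal to Claim~\ref{cla:GammvslargerGamma} is not needed, because both implications already hold for each fixed $n$.
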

\begin{proof}
We write $\Gamma'\!_{n,d}=[\Xi_{n,d}]_{1,1}$ and note that $\Gamma'\!_{n,d}\leq\Gamma\!_{n,d}$
and $\Gamma\!_{n,d}\leq\Gamma'\!_{n,d}$ by renaming variables, see \eqref{eq:Gammaindeppos}.
If $f\leq \Gamma\!_{n,d}$, then clearly $f\leq\Xi_{n,d}$.
For the other direction, let $f\leq \Xi_{n,d}$.
According to Lemma~\ref{lem:Xirestrict} there exists $T\in\End$ with
$T(\Xi_{n,d})=f$, i.e., $[T(\Xi_{n,d})]_{1,1}=f$ and $[T(\Xi_{n,d})]_{a,b}=0$ for all $(a,b)\neq(1,1)$.
Since $[\Xi_{n,d}]_{1,1}$ does not involve any $x_{i,j}^{(1)}$ for $i\neq 1$,
the element $[T(\Xi_{n,d})]_{1,1}$ is independent of $T(x_{i,j}^{(1)})$ for $i\neq 1$.
Analogously, the element $[T(\Xi_{n,d})]_{1,1}$ is independent of $T(x_{i,j}^{(d)})$ for $j\neq 1$.
Let $Z\in\End$ map all elements $x_{i,j}^{(1)}$ to $0$ for $i\neq 1$,
and map all elements $x_{i,j}^{(d)}$ to $0$ for $j\neq 1$,
and act as the identity on all other variables.
Then $[TZ\Xi_{n,d}]_{1,1}=[T\Xi_{n,d}]_{1,1} = f$
and $[TZ\Xi_{n,d}]_{a,b}=[T\Xi_{n,d}]_{a,b} = 0$ for all $(a,b)\neq(1,1)$.
But $Z\Xi_{n,d}=\Gamma'\!_{n,d}$, which implies $T\Gamma'\!_{n,d}=f$, hence $f\leq \Gamma'\!_{n,d} \leq \Gamma\!_{n,d}$.
\end{proof}

\subsection{Algebraic Branching Programs}
Several different variants of algebraic branching programs (ABPs) exist,
and we use a variant that allows the ABP to reflect the structure of our recurrence relation.
An ABP is a directed acyclic graph
in which each vertex has a layer from $0,\ldots,d$,
and each edge is either going from a vertex in layer $i$ to another vertex in layer $i$ and is labeled by an element of $R$ (these are called constant edges),
or the edge is going from a vertex in layer $i$ to a vertex in layer $i+1$ and is labeled by an element of $A_1$.
A subset of the vertices in layer 0 are called \emph{sources} $s_1,\ldots,s_a$
and a subset of the vertices in layer $d$ is called \emph{sinks} $t_1,\ldots,t_b$.
To an ABP $G$ we assign its computation result~$F$, which is the matrix whose $(a,b)$ entry is the 
weighted sum over all $s_a$-$t_b$-paths,
where each summand is weighted by the product of its edge labels.
We say that $G$ computes~$F$.
The width of an ABP is defined as the maximum number of vertices in any layer, but in layer $0$ and $d$ we only count sources and sinks.

\begin{claim}
For $F\in\Mat(A_d)$, the width of a smallest ABP computing $F$ is $\w(F)$.
\end{claim}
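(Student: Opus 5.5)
We prove the two inequalities separately. First we show that an ABP of width $w$ exists for $\w(F)$, and then we show that any ABP of width $w$ computing $F$ forces $\w(F)\le w$.

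\emph{An ABP of width $\w(F)$ exists.} Let $n=\w(F)$. By Lemma~\ref{lem:Xirestrict} there is $T\in\End$ with $T(\Xi_{n,d})=F$, so
\[
F = T(X_n^{(1)})\cdots T(X_n^{(d)}).
\]
We build a layered graph with $n$ vertices $v^{(k)}_1,\ldots,v^{(k)}_n$ in each layer $k=0,\ldots,d$. For each pair $i,j$ there is an edge $v^{(k-1)}_i\to v^{(k)}_j$ labeled $T(x^{(k)}_{i,j})\in A_1$. The sources are all of layer $0$ and the sinks are all of layer $d$, and no constant edges are used. Expanding the product as in \eqref{eq:Gammaij}, the weighted sum over $s_a$-$t_b$ paths is exactly $[F]_{a,b}$. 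The width is $n$.

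\emph{Any ABP of width $w$ gives $\w(F)\le w$.} Fix an ABP $G$ of width $w$ computing $F$; we must show $F\le\Xi_{w,d}$. The plan is to write $F$ as a product of $d$ matrices with entries in $A_1$, interleaved with constant matrices from $\Mat(R)$, and then absorb the constants into an endomorphism.

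Order the vertices of each layer $k$ as $u^{(k)}_1,\ldots,u^{(k)}_{w}$, padding with isolated dummy vertices if the layer has fewer than $w$ vertices. In layers $0$ and $d$, index only the sources and sinks in this way.

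For $1\le k\le d-1$, let $C^{(k)}\in\Mat_w(R)$ have $(i,j)$ entry equal to the weighted sum of constant-edge paths from $u^{(k)}_i$ to $u^{(k)}_j$ within layer $k$, with the empty path contributing $1$ on the diagonal. This sum is finite by acyclicity. For layer $0$, paths start at a source but may pass through non-source vertices of layer $0$ before leaving via a crossing edge. We handle this by letting the crossing matrix $L^{(1)}$ have rows indexed by sources, with entry $(a,j)$ equal to the sum over layer-$0$ constant paths from $s_a$ to some vertex $v$, times the label of a crossing edge from $v$ to $u^{(1)}_j$. Layer $d$ is treated symmetrically. The other crossing matrices $L^{(k)}\in\Mat_w(A_1)$ collect the labels of edges from layer $k-1$ to layer $k$. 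Each path decomposes uniquely into its constant segments and crossing edges, so by distributivity
\[
F = L^{(1)}C^{(1)}L^{(2)}C^{(2)}\cdots C^{(d-1)}L^{(d)}.
\]
The entries of $L^{(1)}$ and $L^{(d)}$ are still in $A_1$, since they are $R$-combinations of labels.

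Finally, set $M^{(k)}:=L^{(k)}C^{(k)}$ for $k<d$ and $M^{(d)}:=L^{(d)}$. Each $M^{(k)}\in\Mat_w(A_1)$, because its entries are $R$-linear combinations of elements of $A_1$. Define $T\in\End$ by $T(x^{(k)}_{i,j}) = [M^{(k)}]_{i,j}$, with $T$ the identity on all other variables. Then
\[
T(\Xi_{w,d})=M^{(1)}\cdots M^{(d)}=F,
\]
where we use that the action of $\End$ commutes with products in $A$, as in \S\ref{subsec:restrictions}. Hence $F\le\Xi_{w,d}$ and $\w(F)\le w$.

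If the number of sources or sinks is not equal to $w$, the matrix $F$ is rectangular. In that case we pad it with zero rows and columns to size $w\times w$, using the embedding $\Mat_{n-1}\subseteq\Mat_n$ and Claim~\ref{cla:GammvslargerGamma}.

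The main obstacle is the bookkeeping in the second direction: making sure that constant edges in layers $0$ and $d$ are absorbed without counting non-source and non-sink vertices toward the width, and that the matrix product identity holds over a semiring. That identity needs only distributivity and finiteness of the path sums, both of which are available.
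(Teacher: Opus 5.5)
Your proof is correct and follows essentially the same route as the paper. One direction applies $T$ to the edge labels of the standard width-$n$ ABP for $\Xi_{n,d}$. The other absorbs the constant edges into the variable-labeled edges and reads off a $T$ with $T(\Xi_{w,d})=F$.

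The only difference is how the absorption is done. You do it in one step via the factorization $F=L^{(1)}C^{(1)}\cdots C^{(d-1)}L^{(d)}$ with constant-path closure matrices, whereas the paper removes constant edges one vertex at a time. Your version makes the handling of non-source and non-sink vertices in layers $0$ and $d$, and the use of semiring distributivity, more explicit.
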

\begin{proof}
If $\w(F)\leq r$, then $F\leq\Xi_{n,d}$, so there is $T\in\End$ with $T(\Xi_{n,d})=F$.
We take the width $n$ ABP that computes $\Xi_{n,d}$ and apply $T$ to each edge label to obtain a width $n$ ABP that computes $F$.
For the other direction, we are given a width $r$ ABP that computes $F$.
Figure~\ref{fig:compresseduncompressed} gives an illustration.
\begin{figure}
\centering
\begin{subfigure}[b]{0.47\textwidth}
\centering
\begin{tikzpicture}[xscale=1.8,yscale=1.17]
\node[draw,circle,inner sep=0,minimum width=5mm] (s) at (0,0) {$s$};
\node[draw,circle,inner sep=0,minimum width=5mm] (v1a) at (1,-1) {};
\node[draw,circle,inner sep=0,minimum width=5mm] (v1b) at (1,0) {};
\node[draw,circle,inner sep=0,minimum width=5mm] (v1c) at (1,1) {};
\node[draw,circle,inner sep=0,minimum width=5mm] (v2ab) at (2,-0.5) {};
\node[draw,circle,inner sep=0,minimum width=5mm] (v2bc) at (2,0.5) {};
\node[draw,circle,inner sep=0,minimum width=5mm] (v3a) at (3,-1) {};
\node[draw,circle,inner sep=0,minimum width=5mm] (v3b) at (3,0) {};
\node[draw,circle,inner sep=0,minimum width=5mm] (v3c) at (3,1) {};
\node[draw,circle,inner sep=0,minimum width=5mm] (t) at (4,0) {$t$};
\draw[-Triangle] (s) to node[pos=0.5,fill=white,inner sep=0,minimum size=0.5cm] {$x_1$} (v1a);
\draw[-Triangle] (s) to node[pos=0.5,fill=white,inner sep=0,minimum size=0.5cm] {$x_3$} (v1b);
\draw[-Triangle] (s) to node[pos=0.5,fill=white,inner sep=0,minimum size=0.5cm] {$x_5$} (v1c);
\draw[-Triangle] (v1a) to node[pos=0.5,fill=white,inner sep=0,minimum size=0.5cm] {$x_2$} (v2ab);
\draw[-Triangle] (v1b) to node[pos=0.5,fill=white,inner sep=0,minimum size=0.5cm] {$x_4$} (v2ab);
\draw[-Triangle] (v1c) to node[pos=0.5,fill=white,inner sep=0,minimum size=0.5cm] {$x_6$} (v2bc);
\draw[-Triangle] (v2ab) to node[left,pos=0.5,inner sep=0,minimum size=0.5cm] {$\ \ 1$} (v2bc);
\draw[-Triangle] (v2ab) to node[pos=0.5,fill=white,inner sep=0,minimum size=0.5cm] {$x_7$} (v3a);
\draw[-Triangle] (v2bc) to node[pos=0.5,fill=white,inner sep=0,minimum size=0.5cm] {$x_9$} (v3b);
\draw[-Triangle] (v2bc) to node[pos=0.5,fill=white,inner sep=0,minimum size=0.5cm] {$x_{11}$} (v3c);
\draw[-Triangle] (v3a) to node[pos=0.5,fill=white,inner sep=0,minimum size=0.5cm] {$x_{12}$} (t);
\draw[-Triangle] (v3b) to node[pos=0.5,fill=white,inner sep=0,minimum size=0.5cm] {$x_{10}$} (t);
\draw[-Triangle] (v3c) to node[pos=0.5,fill=white,inner sep=0,minimum size=0.5cm] {$x_{8}$} (t);
\end{tikzpicture}
\caption{An ABP with a constant edge.}
\label{subfig:compressed}
\end{subfigure}
\hfill
\begin{subfigure}[b]{0.47\textwidth}
\centering
\begin{tikzpicture}[xscale=1.8,yscale=1.17]
\node[draw,circle,inner sep=0,minimum width=5mm] (s) at (0,0) {$s$};
\node[draw,circle,inner sep=0,minimum width=5mm] (v1a) at (1,-1) {};
\node[draw,circle,inner sep=0,minimum width=5mm] (v1b) at (1,0) {};
\node[draw,circle,inner sep=0,minimum width=5mm] (v1c) at (1,1) {};
\node[draw,circle,inner sep=0,minimum width=5mm] (v2ab) at (2,-0.5) {};
\node[draw,circle,inner sep=0,minimum width=5mm] (v2bc) at (2,0.5) {};
\node[draw,circle,inner sep=0,minimum width=5mm] (v3a) at (3,-1) {};
\node[draw,circle,inner sep=0,minimum width=5mm] (v3b) at (3,0) {};
\node[draw,circle,inner sep=0,minimum width=5mm] (v3c) at (3,1) {};
\node[draw,circle,inner sep=0,minimum width=5mm] (t) at (4,0) {$t$};
\draw[-Triangle] (s) to node[pos=0.5,fill=white,inner sep=0,minimum size=0.5cm] {$x_1$} (v1a);
\draw[-Triangle] (s) to node[pos=0.5,fill=white,inner sep=0,minimum size=0.5cm] {$x_3$} (v1b);
\draw[-Triangle] (s) to node[pos=0.5,fill=white,inner sep=0,minimum size=0.5cm] {$x_5$} (v1c);
\draw[-Triangle] (v1a) to node[pos=0.3,fill=white,inner sep=0.5mm] {$x_2$} (v2ab);
\draw[-Triangle] (v1b) to node[pos=0.25,fill=white,inner sep=0.5mm] {$x_4$} (v2ab);
\draw[-Triangle] (v1a) to node[pos=0.3,fill=white,inner sep=0.5mm] {$x_2$} (v2bc);
\draw[-Triangle] (v1b) to node[pos=0.5,fill=white,inner sep=0,minimum size=0.5cm] {$x_4$} (v2bc);
\draw[-Triangle] (v1c) to node[pos=0.5,fill=white,inner sep=0,minimum size=0.5cm] {$x_6$} (v2bc);
\draw[-Triangle] (v2ab) to node[pos=0.5,fill=white,inner sep=0,minimum size=0.5cm] {$x_7$} (v3a);
\draw[-Triangle] (v2bc) to node[pos=0.5,fill=white,inner sep=0,minimum size=0.5cm] {$x_9$} (v3b);
\draw[-Triangle] (v2bc) to node[pos=0.5,fill=white,inner sep=0,minimum size=0.5cm] {$x_{11}$} (v3c);
\draw[-Triangle] (v3a) to node[pos=0.5,fill=white,inner sep=0,minimum size=0.5cm] {$x_{12}$} (t);
\draw[-Triangle] (v3b) to node[pos=0.5,fill=white,inner sep=0,minimum size=0.5cm] {$x_{10}$} (t);
\draw[-Triangle] (v3c) to node[pos=0.5,fill=white,inner sep=0,minimum size=0.5cm] {$x_{8}$} (t);
\end{tikzpicture}
\caption{An ABP with the constant edge removed.}
\label{subfig:uncompressed}
\end{subfigure}
\caption{Removal of constant edges from ABPs.}
\label{fig:compresseduncompressed}
\end{figure}
If we have edges in a program \emph{within a vertex layer},
then these edges must be constant edges, i.e., edges labeled by elements from~$R$,
and the set of these edges is not allowed to contain a directed cycle.
We now use the following process.
If there are any constant edges, then
we pick a vertex $v$ in some vertex layer $k<d$ that has no incoming constant edges but outgoing constant edges,
and then for each $u$ in vertex layer $k+1$ we compute $\ell_{v,u} := \sum_{p \in\mathcal P(v,u)} \prod_{e \in p} p \in A_1$.
We then remove all constant edges that are going out of $v$ and add edges from $v$ to all $u$ with label $\ell_{v,u}$.
The ABP still computes the same, but has less constant edges.
A dual approach is also possible, so that all constant edges can be removed:
we pick a vertex $v$ in some vertex layer $k>1$ that has no outgoing constant edges but incoming constant edges,
and then for each $u$ in vertex layer $k-1$ we compute $\ell_{u,v} := \sum_{p \in\mathcal P(u,v)} \prod_{e \in p} p \in A_1$.
We then remove all constant edges that are going into $v$ and add edges from all $u$ to $v$ with label $\ell_{u,v}$.
From the resulting ABP without constant edges we can readily read off a $T$ with $T(\Xi_{n,d})=F$.
\end{proof}

\begin{figure}[htbp]
\centering
\begin{tikzpicture}[xscale=1.7,yscale=1.75]
\node (slabel) at (0.3,1) {\textup{source} $s$};
\node[fill=white,draw,circle,inner sep=0,minimum size=0.6cm] (v11) at (1,1) {$1$};
\node[fill=white,draw,circle,inner sep=0,minimum size=0.6cm] (v21) at (1,2) {$1$};
\node[fill=white,draw,circle,inner sep=0,minimum size=0.6cm] (v31) at (1,3) {$1$};
\node[fill=white,draw,circle,inner sep=0,minimum size=0.6cm] (v41) at (1,4) {$1$};
\node[fill=white,draw,circle,inner sep=0,minimum size=0.6cm] (v51) at (1,5) {$1$};
\node[fill=white,draw,circle,inner sep=0,minimum size=0.6cm] (v12) at (2,1) {\footnotesize $f_{1,1}$};
\node[fill=white,draw,circle,inner sep=0,minimum size=0.6cm] (v22) at (2,2) {\footnotesize $f_{2,1}$};
\node[fill=white,draw,circle,inner sep=0,minimum size=0.6cm] (v32) at (2,3) {\footnotesize $f_{3,1}$};
\node[fill=white,draw,circle,inner sep=0,minimum size=0.6cm] (v42) at (2,4) {\footnotesize $f_{4,1}$};
\node[fill=white,draw,circle,inner sep=0,minimum size=0.6cm] (v52) at (2,5) {\footnotesize $f_{5,1}$};
\node[fill=white,draw,circle,inner sep=0,minimum size=0.6cm] (v13) at (3,1) {\footnotesize $f_{2,2}$};
\node[fill=white,draw,circle,inner sep=0,minimum size=0.6cm] (v23) at (3,2) {\footnotesize $f_{3,2}$};
\node[fill=white,draw,circle,inner sep=0,minimum size=0.6cm] (v33) at (3,3) {\footnotesize $f_{4,2}$};
\node[fill=white,draw,circle,inner sep=0,minimum size=0.6cm] (v43) at (3,4) {\footnotesize $f_{5,2}$};
\node[fill=white,draw,circle,inner sep=0,minimum size=0.6cm] (v53) at (3,5) {\footnotesize $f_{6,2}$};
\node[fill=white,draw,circle,inner sep=0,minimum size=0.6cm] (v14) at (4,1) {\footnotesize $f_{3,3}$};
\node[fill=white,draw,circle,inner sep=0,minimum size=0.6cm] (v24) at (4,2) {\footnotesize $f_{4,3}$};
\node[fill=white,draw,circle,inner sep=0,minimum size=0.6cm] (v34) at (4,3) {\footnotesize $f_{5,3}$};
\node[fill=white,draw,circle,inner sep=0,minimum size=0.6cm] (v44) at (4,4) {\footnotesize $f_{6,3}$};
\node[fill=white,draw,circle,inner sep=0,minimum size=0.6cm] (v54) at (4,5) {\footnotesize $f_{7,3}$};
\node (tlabel) at (4.6,5) {\textup{sink} $t$};
\draw[-Triangle] (v11) to node[left,pos=0.5,inner sep=0,minimum size=0.5cm] {\footnotesize$\ \ 1$} (v21);
\draw[-Triangle] (v21) to node[left,pos=0.5,inner sep=0,minimum size=0.5cm] {\footnotesize$\ \ 1$} (v31);
\draw[-Triangle] (v31) to node[left,pos=0.5,inner sep=0,minimum size=0.5cm] {\footnotesize$\ \ 1$} (v41);
\draw[-Triangle] (v41) to node[left,pos=0.5,inner sep=0,minimum size=0.5cm] {\footnotesize$\ \ 1$} (v51);
\draw[-Triangle] (v12) to node[left,pos=0.5,inner sep=0,minimum size=0.5cm] {\footnotesize$\ \ 1$} (v22);
\draw[-Triangle] (v22) to node[left,pos=0.5,inner sep=0,minimum size=0.5cm] {\footnotesize$\ \ 1$} (v32);
\draw[-Triangle] (v32) to node[left,pos=0.5,inner sep=0,minimum size=0.5cm] {\footnotesize$\ \ 1$} (v42);
\draw[-Triangle] (v42) to node[left,pos=0.5,inner sep=0,minimum size=0.5cm] {\footnotesize$\ \ 1$} (v52);
\draw[-Triangle] (v13) to node[left,pos=0.5,inner sep=0,minimum size=0.5cm] {\footnotesize$\ \ 1$} (v23);
\draw[-Triangle] (v23) to node[left,pos=0.5,inner sep=0,minimum size=0.5cm] {\footnotesize$\ \ 1$} (v33);
\draw[-Triangle] (v33) to node[left,pos=0.5,inner sep=0,minimum size=0.5cm] {\footnotesize$\ \ 1$} (v43);
\draw[-Triangle] (v43) to node[left,pos=0.5,inner sep=0,minimum size=0.5cm] {\footnotesize$\ \ 1$} (v53);
\draw[-Triangle] (v14) to node[left,pos=0.5,inner sep=0,minimum size=0.5cm] {\footnotesize$\ \ 1$} (v24);
\draw[-Triangle] (v24) to node[left,pos=0.5,inner sep=0,minimum size=0.5cm] {\footnotesize$\ \ 1$} (v34);
\draw[-Triangle] (v34) to node[left,pos=0.5,inner sep=0,minimum size=0.5cm] {\footnotesize$\ \ 1$} (v44);
\draw[-Triangle] (v44) to node[left,pos=0.5,inner sep=0,minimum size=0.5cm] {\footnotesize$\ \ 1$} (v54);
\draw[-Triangle] (v11) to node[midway,fill=white,inner sep=1pt] {\footnotesize$x_1^{(1)}$} (v12);
\draw[-Triangle] (v12) to node[midway,fill=white,inner sep=1pt] {\footnotesize$x_2^{(2)}$} (v13);
\draw[-Triangle] (v13) to node[midway,fill=white,inner sep=1pt] {\footnotesize$x_3^{(3)}$} (v14);
\draw[-Triangle] (v21) to node[midway,fill=white,inner sep=1pt] {\footnotesize$x_2^{(1)}$} (v22);
\draw[-Triangle] (v22) to node[midway,fill=white,inner sep=1pt] {\footnotesize$x_3^{(2)}$} (v23);
\draw[-Triangle] (v23) to node[midway,fill=white,inner sep=1pt] {\footnotesize$x_4^{(3)}$} (v24);
\draw[-Triangle] (v31) to node[midway,fill=white,inner sep=1pt] {\footnotesize$x_3^{(1)}$} (v32);
\draw[-Triangle] (v32) to node[midway,fill=white,inner sep=1pt] {\footnotesize$x_4^{(2)}$} (v33);
\draw[-Triangle] (v33) to node[midway,fill=white,inner sep=1pt] {\footnotesize$x_5^{(3)}$} (v34);
\draw[-Triangle] (v41) to node[midway,fill=white,inner sep=1pt] {\footnotesize$x_4^{(1)}$} (v42);
\draw[-Triangle] (v42) to node[midway,fill=white,inner sep=1pt] {\footnotesize$x_5^{(2)}$} (v43);
\draw[-Triangle] (v43) to node[midway,fill=white,inner sep=1pt] {\footnotesize$x_6^{(3)}$} (v44);
\draw[-Triangle] (v51) to node[midway,fill=white,inner sep=1pt] {\footnotesize$x_5^{(1)}$} (v52);
\draw[-Triangle] (v52) to node[midway,fill=white,inner sep=1pt] {\footnotesize$x_6^{(2)}$} (v53);
\draw[-Triangle] (v53) to node[midway,fill=white,inner sep=1pt] {\footnotesize$x_7^{(3)}$} (v54);
\end{tikzpicture}
\caption{An ABP that computes $f_{7,3}$.
The vertex label at each vertex $v$ is the weighted sum of all $s$-$v$-paths.}
\label{fig:cprogh}
\end{figure}
\begin{example}\label{exa:elemsym}
The elementary symmetric polynomial is defined as
\[
e_{n,d} = \sum_{1 \leq i_1<i_2<\cdots<i_d\leq n} x_{i_1} \cdots x_{i_d}.
\]
Its layered variant is defined as
\[
f_{n,d} = \sum_{1 \leq i_1<i_2<\cdots<i_d\leq n} x^{(1)}_{i_1} \cdots x^{(d)}_{i_d}.
\]
Clearly $e_{n,d}\leq f_{n,d}$ by mapping $x_{i}^{(k)}$ to $x_i$.
We have
\[
f_{n,d}= f_{n-1,d-1} \cdot x_n^{(d)} + f_{n-1,d}
\]
and this recursion can be used directly to construct a width $n$ ABP that computes $f_{n,d}$,
see Figure~\ref{fig:cprogh}, thus $\w(f_{n,d})\leq n$.
\end{example}
We will use the technique in Example~\ref{exa:elemsym} several times in this paper to construct ABPs.
Another important example is the following.
\begin{proposition}\label{pro:wdetpoly}
Let $R$ be a commutative ring and let $A=\sS$.
We have $\w(\det_d) \in\poly(d)$.
\end{proposition}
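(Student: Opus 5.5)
Over a commutative ring the determinant can be computed by a division-free algebraic branching program of polynomial size. The plan is to take such a program and convert it into an ABP of the layered form used here. The natural choice is Mahajan and Vinay's clow-sequence construction. It expresses $\det_d$ as a signed sum over clow sequences, that is, closed ordered walks with nondecreasing heads whose total length is $d$.

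The construction gives a layered directed acyclic graph with $d+1$ layers, indexed by the number of edges used so far. Its states are triples (head $h$, current vertex $v$, parity bit), so each layer has $O(d^2)$ vertices. There are two kinds of edges.
\begin{itemize}
\item An edge advancing the current clow from $v$ to $w$, with $w>h$, is labeled by the variable $x_{v,w}$ and goes from layer $k$ to layer $k+1$.
\item An edge closing the clow at $h$ and opening a new one with head $h'>h$ is labeled $x_{v,h}$ and flips the parity bit.
\end{itemize}
Every edge therefore consumes exactly one variable and moves up one layer. The sign $(-1)^{d-\#\text{clows}}$ is handled by putting a coefficient $-1\in R$ on the appropriate edges; labels may be arbitrary elements of $A_1$ over a ring, so $-x_{v,h}$ is allowed. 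I will use a single source in layer $0$ and a single sink in layer $d$, after merging the two parity copies with signs $\pm 1$ as needed. This gives an ABP computing $\det_d$ of width $O(d^2)$, possibly $O(d^3)$ depending on bookkeeping, which is in $\poly(d)$.

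The width bound then follows from the preceding Claim, which says the smallest ABP width equals $\w(F)$ for $F\in\Mat_1(A_d)$, with $F$ here the $1\times1$ matrix $\det_d$. It also needs Claim~\ref{cla:GammvslargerGamma}, so that padding to a uniform width $n$ is harmless.

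The main obstacle is correctness over a general commutative ring. The Mahajan--Vinay proof shows that the non-cycle-cover clow sequences cancel in pairs through a sign-reversing involution, and this cancellation only uses $a+(-a)=0$, so it holds over any commutative ring. I would cite their theorem, note that the involution argument is purely combinatorial and needs no division, and make sure the layering exactly matches this paper's ABP definition. In particular, constant edges are confined within layers, or can be avoided altogether as above.

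An alternative, if that bookkeeping gets messy, is to take any division-free polynomial-size arithmetic circuit for $\det_d$, for instance Berkowitz's algorithm. Since $\det_d$ is homogeneous of degree $d$, its homogeneous parts can then be converted to an ABP of polynomial size by standard results, as the degree is polynomial. The clow route is more direct, so I would use it.
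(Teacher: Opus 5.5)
Your main construction is correct, but it takes a different route from the paper. The paper does not use clow sequences. It starts from the recursion \eqref{eq:cpcrecurse},
$\cpc_{n,d}=\cpc_{n-1,d}+\sum_{i=1}^{d}(-1)^{i+1}\cpc_{n,d-i}\,\pow_{n,i}$ with $\pow_{n,i}=[X_n^i]_{n,n}$. This is a Cramer-rule identity, equivalent to $\det(I-tX_{n-1})=\det(I-tX_n)\,[(I-tX_n)^{-1}]_{n,n}$. The paper places $\cpc_{n,j}$ at a vertex in layer $j$, and replaces each nonlinear edge label $\pow_{n,i}$ by a width-$n$ ABP, i.e.\ by a restriction of iterated matrix multiplication. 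This stays inside the paper's own framework and needs only an elementary algebraic identity, but it gives only $\w(\det_d)\in O(d^4)$.

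Your Mahajan--Vinay program has single-variable edge labels from the start, so no sub-programs have to be substituted. It gives the sharper bound $\w(\det_d)\le d(d+1)$, from states $(h,v,p)$ with $h\le v$. You can even get $\binom{d+1}{2}$ by dropping the parity bit: label every closing edge $-x_{v,h}$ and multiply the source edges by $(-1)^d$. The price is importing the sign-reversing involution. Both arguments are identities in $\mathbb{Z}[x_{i,j}]$, so, as you say, they hold over every commutative ring.

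There are two things to fix. First, the heads of a clow sequence must be strictly increasing, not nondecreasing. If repeated heads were allowed, already for $d=2$ you would pick up the spurious terms $x_{1,1}^2+x_{2,2}^2$. Your edge rule $h'>h$ is the correct one, so only your description is off.

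Second, and more seriously, your fallback is not valid as stated. There is no standard result that turns an arbitrary polynomial-size division-free circuit of polynomial degree into a polynomial-size ABP; that would prove $\mathrm{VP}=\mathrm{VBP}$. Depth reduction only gives size $s^{O(\log d)}$, exactly as in the proof of Proposition~\ref{pro:impliesBE}. Berkowitz's algorithm does work, but only because it is itself an iterated product of Toeplitz matrices. That makes it already an affine ABP, to which Proposition~\ref{pro:homogenization} applies, and this route is close in spirit to the paper's recursion.
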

\begin{proof}
Let $X_n = (x_{i,j})_{\substack{1\leq i \leq n\\1 \leq j \leq n}} \in \Mat_n(\sS_1)$.
Define
\begin{equation}\label{eq:cpcS}
\cpc_{n,d} \ = \ \sum_{S\in\binom{\{1,\ldots,n\}}{d}}\det([X_n]_{S,S}) \in \sS_d,
\end{equation}
where $[X_n]_{S,S}$ is the square submatrix of $X_n$ with row indices $S$ and column indices $S$.
Let $\pow_{n,i} := [X_n^i]_{n,n}$, and note $\w(\pow_{n,i}) \leq n$.
We use \cite[eq.~(2.8)]{Ike25}:
\begin{equation}
\label{eq:S}
\cpc_{n-1,d} \ = \ \sum_{i=0}^d (-1)^{i} \, \cpc_{n,d-i} \, \pow_{n,i}.
\end{equation}
We use that $\pow_{n,0}=1$ to rewrite equation~\eqref{eq:S} in the form
\begin{equation}
\label{eq:cpcrecurse}
\cpc_{n,d} \ = \ \cpc_{n-1,d} + \sum_{i=1}^{d} (-1)^{i+1} \, \cpc_{n,d-i} \, \pow_{n,i}.
\end{equation}
This recursion immediately gives an ABP that computes $\det_d$, see Figure~\ref{fig:charpolyprogram}.
\begin{figure}
\centering
\begin{tikzpicture}[xscale=2.5,yscale=2]
\node[fill=white,draw,circle,inner sep=0,minimum size=0.6cm] (v40) at (0,-1) {$s$};
\node[fill=white,draw,circle,inner sep=0,minimum size=0.6cm] (v30) at (0,0) {};
\draw[-Triangle] (v40) to node[left,pos=0.7,inner sep=0,minimum size=0.5cm] {\footnotesize$\ \ 1$} (v30);
\node[fill=white,draw,circle,inner sep=0,minimum size=0.6cm] (v20) at (0,1) {};
\node[fill=white,draw,circle,inner sep=0,minimum size=0.6cm] (v10) at (0,2) {};
\draw[-Triangle] (v30) to node[left,pos=0.7,inner sep=0,minimum size=0.5cm] {\footnotesize$\ \ 1$} (v20);
\draw[-Triangle] (v20) to node[left,pos=0.7,inner sep=0,minimum size=0.5cm] {\footnotesize$\ \ 1$} (v10);
\node[fill=white,draw,circle,inner sep=0,minimum size=0.5cm] (v11) at (1,0) {$\cpc_{1,1}$};
\draw[-Triangle] (v30) to node[midway,fill=white,inner sep=1pt] {\footnotesize$\pow_{1,1}$} (v11);
\node[fill=white,draw,circle,inner sep=0,minimum size=0.5cm] (v21) at (1,1) {$\cpc_{2,1}$};
\draw[-Triangle] (v20) to node[pos=0.5,fill=white,inner sep=0] {\footnotesize$\pow_{2,1}$} (v21);
\draw[-Triangle] (v11) to node[left,pos=0.7,inner sep=0,minimum size=0.5cm] {\footnotesize$\ \ 1$} (v21);
\node[fill=white,draw,circle,inner sep=0,minimum size=0.5cm] (v31) at (1,2) {$\cpc_{3,1}$};
\draw[-Triangle] (v10) to node[pos=0.5,fill=white,inner sep=0] {\footnotesize$\pow_{3,1}$} (v31);
\draw[-Triangle] (v21) to node[left,pos=0.7,inner sep=0,minimum size=0.5cm] {\footnotesize$\ \ 1$} (v31);
\node[fill=white,draw,circle,inner sep=0,minimum size=0.6cm] (v22) at (2,1) {$\cpc_{2,2}$};
\draw[-Triangle] (v20) to[bend left] node[pos=0.5,fill=white,inner sep=0] {\footnotesize$-\pow_{2,2}$} (v22);
\draw[-Triangle] (v21) to node[pos=0.5,fill=white,inner sep=1pt] {\footnotesize$\pow_{2,1}$} (v22);
\node[fill=white,draw,circle,inner sep=0,minimum size=0.6cm] (v32) at (2,2) {$\cpc_{3,2}$};
\draw[-Triangle] (v31) to node[pos=0.5,fill=white,inner sep=1pt] {\footnotesize$\pow_{3,1}$} (v32);
\node[fill=white,draw,circle,inner sep=0,minimum size=0.6cm] (v33) at (3,2) {$\cpc_{3,3}$};
\draw[-Triangle] (v32) to node[pos=0.5,fill=white,inner sep=1] {\footnotesize$\pow_{3,1}$} (v33);
\draw[-Triangle] (v22) to node[left,pos=0.7,inner sep=0,minimum size=0.5cm] {\footnotesize$\ \ 1$} (v32);
\draw[-Triangle] (v10) to[bend left,looseness=1] node[pos=0.5,fill=white,inner sep=0] {\footnotesize$-\pow_{3,2}$} (v32);
\draw[-Triangle] (v31) to[bend left,looseness=1] node[pos=0.5,fill=white,inner sep=0] {\footnotesize$-\pow_{3,2}$} (v33);
\draw[-Triangle] (v10) to[bend left=40] node[pos=0.5,fill=white,inner sep=0] {\footnotesize$\pow_{3,3}$} (v33);
\end{tikzpicture}
\caption{An ABP that computes $\det_3 = \cpc_{3,3}$.
}
\label{fig:charpolyprogram}
\end{figure}
This shows $\w(\det_d)\in O(d^4)$. There are many other proofs of $\w(\det_d)\in\poly(d)$.
\end{proof}

We have
\begin{equation}\label{eq:monleqdet}
\mon_d \leq \det_d
\end{equation}
via setting all off-diagonal variables to zero and renaming variables.

There is the following duality between the degree and the width with respect to subadditivity and submultiplicativity.
\begin{claim}
\label{cla:duality}
Let $R$ be a commutative semiring and $A\in\{\sT,sS,\sW\}$.
Let $F,G\in A_d$. We have
\begin{align*}
\deg(F+G) &\leq \max(\deg(F),\deg(G))
\\
\deg(FG) &\leq \deg(F)+\deg(G)
\\
\w(F+G) &\leq \w(F)+\w(G)
\\
\w(FG) &\leq \max(\w(F),\w(G)),
\end{align*}
where we use the conventions $\deg(0)=0$ and $\w(0)=0$.
\end{claim}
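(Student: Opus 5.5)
The plan is to prove the two degree inequalities directly from the grading of $A$, and the two width inequalities through the preceding Claim that identifies $\w(F)$ with the width of a smallest ABP computing~$F$.

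For the degree statements I use that $A=\bigoplus_d A_d$ is graded and that multiplication maps $A_{d}\times A_{d'}$ into $A_{d+d'}$. This holds because concatenation of sequences of variables adds lengths in $\sT$. The congruences $\sim$ and $\backsim$ are generated by homogeneous pairs, so the grading descends to $\sS$ and $\sW$. A sum of elements of degree at most $\max(\deg F,\deg G)$ still has degree at most that maximum. A product of homogeneous components of degrees $d_1$ and $d_2$ lands in degree $d_1+d_2$. The only care needed is the convention $\deg(0)=0$: if the sum or product collapses to $0$, the left-hand side is $0$ and the inequality is trivial. Since $R$ may lack additive inverses, I will phrase everything as ``is an element of $\bigoplus_{e\le D}A_e$'' rather than arguing about cancellation of leading terms.

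For $\w(F+G)\le\w(F)+\w(G)$ (with $F,G$ of the same degree $d$), I take ABPs $P_F$ and $P_G$ of widths $a=\w(F)$ and $b=\w(G)$, each with a single source and a single sink since $F,G$ are $1\times1$. I form the disjoint union and identify the two sources and the two sinks. Every $s$-$t$ path lies entirely in one of the two programs, so the union computes $F+G$. Each inner layer has at most $a+b$ vertices, while layers $0$ and $d$ count only the single source and sink. If either $F$ or $G$ is $0$, the inequality is immediate from $\w(0)=0$. Alternatively, I could write an explicit block-diagonal $T\in\End$ with $T(\Xi_{a+b,d})$ having $F+G$ in an entry, and then use Claim~\ref{cla:widthfviaGamma}; I will use whichever is cleaner.

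For $\w(FG)\le\max(\w(F),\w(G))$, with $F\in A_{d_1}$ and $G\in A_{d_2}$, I compose in series. I shift all layers of $P_G$ by $d_1$ and identify the sink of $P_F$ with the source of $P_G$. Each $s$-$t$ path is then a concatenation of a path in $P_F$ and a path in $P_G$, and the product of edge labels along it is the product in the correct order. This matters because $\sT$ and $\sW$ are noncommutative, but concatenation respects left-to-right order, so the program computes $FG$. Each layer lies in only one of the two programs, except the glued layer $d_1$, which has $\w(F)$ (respectively $\w(G)$) vertices on one side and one shared vertex. Hence the width is $\max(a,b)$, and if $F$ or $G$ is zero then $FG=0$.

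The main obstacle is bookkeeping rather than ideas. The glued layer is an interior layer of the new ABP, so all of its vertices count, not just sources and sinks. That layer contains the sink-layer vertices of $P_F$ and the source-layer vertices of $P_G$, which were previously uncounted. I will remove them first: in an ABP for a scalar one may delete vertices of layers $0$ and $d$ other than the source and sink, since no source-sink path uses them (edges go only within a layer or one layer up, and constant edges could feed in). If constant edges within those end layers do matter, I will first eliminate constant edges as in the proof of the ABP Claim, after which end-layer non-source and non-sink vertices lie on no $s$-$t$ path and can be discarded.
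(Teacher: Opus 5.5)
Your proposal is correct and follows the paper's proof. The degree bounds come from the grading. For the width bounds you glue ABPs in parallel for $F+G$, identifying sources and identifying sinks, and in series for $FG$, identifying the sink of the $F$-program with the source of the $G$-program. Before the series gluing you eliminate constant edges in the end layers and discard the non-source and non-sink end-layer vertices. The paper only gestures at this step, and it is exactly what keeps the glued layer within $\max(\w(F),\w(G))$.
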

\begin{proof}
The statements for the degree are obvious.
The statements for the width are classical and can be proved by explicit constructions that appear for example in~\cite{Val79}.
The width-of-sum statement is obtained by taking an ABP for $F$ and an ABP $G$ and obtain an ABP for the sum $F+G$ by identifying the respective sources with each other and identifying the respective sinks with each other.
The width-of-product statement is obtained by taking an ABP $F$ and an ABP $G$ and obtain an ABP for the product $FG$ by identifying the sinks in $F$ with the respective sources in $G$, where one adds dummy sources of sinks to make the numbers match.
\end{proof}

It is sometimes useful to convert from multiple sources and sinks to single source and single sink by
adding a new source $s$ in vertex layer $-1$ and a new sink in vertex layer $d+1$
and add edges $(s,s_i)$ with label $e_i$
and edges $(t_j,t)$ with label $e_j^*$.

\section{Higher-order computation}\label{sec:multilin}
In this section we define the Kronecker product~$\boxtimes$.
We start with the definition for $\sS$ in characteristic zero in \eqref{eq:boxtimescharzero}, and then turn to arbitrary commutative semirings in \S\ref{subsec:boxtimes}.
In \S\ref{subsec:kronaction} we then generalize the construction to $\sT$ and $\sW$.
We define the hypercomputant in \S\ref{subsec:higherordercomp}, which in characteristic 0 is $\hGamma\!_{n,d} = \Gamma\!_{n,d}^{\,\boxtimes 2}$.
We phrase Valiant's conjecture via the width $\w(\hGamma\!_{n,d})$ in \S\ref{subsec:Valiantsconj}.
We show that $\HC_d\leq \hGamma\!_{d,d}$ for the Hamiltonian cycle polynomial $\HC_d$,
and $\Clique_{n,k} \leq \hGamma\!_{n,\binom{k+1}{2}}$ for the Clique polynomial in \S\ref{sec:higherorderwidth}.

\subsection{Polarization and Restitution}
Let $R$ be a commutative semiring, and $A=\sS$.
Let
\[
\Symt_d \ := \ \big(\sT_d\big)^{\aS_d} = \{s \in \sT_d \mid \forall \pi \in \aS_d: \pi s = s\}
\]
denote the set of symmetric tensors.
The set $\Symt_d$ is an $R$-semimodule, i.e., it is closed under addition and closed under multiplication with elements of~$R$.
The \emph{polarization} map
\[
\pol: \sS_d \to \Symt_d
\]
is defined via
\[
\pol(x_{i_1}\cdots x_{i_d}) = \sum_{\pi\in\aS_d} x_{i_{\pi(1)}}\otimes\cdots\otimes x_{i_{\pi(d)}}
\]
and $R$-linear extension.
The restitution map
$\res: \sT_d \to \sS_d$
is the canonical map to the quotient,
i.e., $\res(x_{i_1}\otimes\cdots \otimes x_{i_d}) = x_{i_1}\cdots x_{i_d}$.
One can see that $\res\circ\pol = d! \cdot \textup{id}_{\sS_d}$,
which makes $\res$ difficult to use in finite characteristic.
We will treat these concerns in \S\ref{subsec:boxtimes}.
The Kronecker product $\boxtimes$ is a bilinear map
\[
\sT_d \times \sT_d \to \sT_d
\]
defined via
\begin{equation}
\label{eq:boxtimestensors}
(x_{i_1}\otimes\cdots \otimes x_{i_d}) \boxtimes (x_{j_1}\otimes\cdots \otimes x_{j_d})
 \ = \ 
x_{i_1,j_1}\otimes\cdots \otimes x_{i_d,j_d},
\end{equation}
and extended $R$-bilinearly.
\begin{remark}
The $\boxtimes$ operation is well-known in the setting of fast matrix multiplication, see e.g.\ \cite[p.\ 23]{Bla13} (where the symbol $\otimes$ is used instead of $\boxtimes$).
If $M_n = \sum_{i,j,k=1}^n x_{i,j}\otimes x_{j,k} \otimes x_{k,i} \in \sT_3$ denotes the $n\times n$ matrix multiplication tensor,
then $M_n\boxtimes M_m = M_{nm}$.
\end{remark}
For $s,s'\in\Symt_d$, we have $s\boxtimes s' \in \Symt_d$.
Hence, for two polynomials $h,f\in \sS_d$, if $\tfrac{1}{d!}\in R$, then we can construct a new polynomial
\begin{equation}
\label{eq:boxtimescharzero}
h \boxtimes f \ := \ \tfrac{1}{d!}\,\res(\pol(h)\boxtimes\pol(f)).
\end{equation}

Several examples of polynomials of this form exist in the literature, see Example~\ref{exa:boxtimes} below,
but before discussing those,
in \S\ref{subsec:boxtimes}
we generalize \eqref{eq:boxtimescharzero}, so that it works over arbitrary semirings.

\subsection{The Kronecker product of polynomials}
\label{subsec:boxtimes}
Let $R$ be a commutative semiring, and let $A=\sS$.
For $f,h \in A_d$ we define $f\boxtimes h \in A_d$ via
\begin{equation}
\label{eq:defboxtimes}
(x_{i_1} \cdots x_{i_d}) \boxtimes (x_{j_1} \cdots x_{j_d}) := \sum_{\pi\in\aS_d} x_{i_1,j_{\pi(1)}} \cdots x_{i_d,j_{\pi(d)}}
\end{equation}
and bilinear extension.
First note that this is a well-defined map
$A_d \times A_d \to A_d$,
because for any $\sigma,\tau\in\aS_d$ we have
\begin{eqnarray*}
(x_{i_{\sigma(1)}} \cdots x_{i_{\sigma(d)}}) \boxtimes (x_{j_{\tau(1)}} \cdots x_{j_{\tau(d)}})
& = &
\sum_{\pi\in\aS_d} x_{i_{\sigma(1)},j_{\tau(\pi(1))}} \cdots x_{i_{\sigma(d)},j_{\tau(\pi(d))}}
\\
& = &
\sum_{\pi\in\aS_d} x_{i_{1},j_{\sigma^{-1}(\tau(\pi(1)))}} \cdots x_{i_{d},j_{\sigma^{-1}(\tau(\pi(d)))}}
\\
& \stackrel{\varrho:=\sigma^{-1}\tau\pi}{=} &
\sum_{\varrho\in\aS_d} x_{i_{1},j_{\varrho(1)}} \cdots x_{i_{d},j_{\varrho(d)}}
 \ = \ (x_{i_1} \cdots x_{i_d}) \boxtimes (x_{j_1} \cdots x_{j_d}).
\end{eqnarray*}
In fact, note that the same reshuffling argument shows that
\eqref{eq:defboxtimes}
is the same as
\begin{equation}
\label{eq:defboxtimesII}
\sum_{\pi\in\aS_d} x_{i_{\pi(1)},j_1} \cdots x_{i_{\pi(d)},j_d}.
\end{equation}
This readily shows the associativity of the Kronecker product:
\begin{eqnarray}
\big((x_{i_1} \cdots x_{i_d}) \boxtimes (x_{j_1} \cdots x_{j_d})\big) \boxtimes (x_{k_1} \cdots x_{k_d})
&=&\sum_{\substack{\sigma\in\aS_d, \, \pi\in\aS_d}} x_{i_{\sigma(1)},j_1,k_{\pi(1)}} \cdots x_{i_{\sigma(d)},j_d,k_{\pi(d)}}\nonumber
\\
&=&(x_{i_1} \cdots x_{i_d}) \boxtimes \big((x_{j_1} \cdots x_{j_d}) \boxtimes (x_{k_1} \cdots x_{k_d})\big).
\label{eq:kronprodassoc}
\end{eqnarray}
The equality of \eqref{eq:defboxtimes} and \eqref{eq:defboxtimesII} also imply that
\begin{equation}\label{eq:boxtimescommutative}
\textup{both $f\boxtimes h \leq h\boxtimes f$ and $h\boxtimes f \leq f\boxtimes h$}
\end{equation}
by renaming each $x_{i,j}$ to $x_{j,i}$.

\begin{proposition}
\label{pro:boxtimesequivariance}
If $f \leq h$ and $\widetilde f \leq \widetilde h$, then
$f\boxtimes \widetilde f \leq h\boxtimes \widetilde h$.
\end{proposition}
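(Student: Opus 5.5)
Since $f,h\in A_d=\sS_d$ are polynomials and not matrices, the relation $f\leq h$ means there is $T\in\End$ with $T(h)=f$, up to a scalar coming from $T_{\textup{left}},T_{\textup{right}}\in\Mat_1(R)=R$. That scalar is absorbed into $T$ by rescaling the images of the variables in one tensor factor, or we simply carry it along as a factor $\alpha\beta$ that passes through the bilinear $\boxtimes$. So suppose $T(h)=f$ and $\widetilde T(\widetilde h)=\widetilde f$. The plan is to construct from $T$ and $\widetilde T$ an endomorphism $T\boxtimes\widetilde T\in\End$ with $(T\boxtimes \widetilde T)(h\boxtimes\widetilde h)=f\boxtimes\widetilde f$.

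\textbf{Definition of $T\boxtimes\widetilde T$.} For a variable $x_{i,j}=x_i\boxtimes x_j$, write $T(x_i)=\sum_a \alpha_{a} x_a$ and $\widetilde T(x_j)=\sum_b\beta_b x_b$ (finite sums), and set
\[
(T\boxtimes\widetilde T)(x_{i,j}) := \sum_{a,b}\alpha_a\beta_b\, x_{a,b}.
\]
In words, this is the $R$-bilinear extension of $\boxtimes$ on $\sS_1$, applied to $T(x_i)$ and $\widetilde T(x_j)$. Two points need care. First, variables are arbitrary index lists, so a variable $x_{\mathbf k}$ can be split as $x_{\mathbf i}\boxtimes x_{\mathbf j}$ in several ways. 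I would fix the convention that $T\boxtimes\widetilde T$ is defined on the variables actually occurring in $h\boxtimes\widetilde h$, split according to how they arise from $h$ and $\widetilde h$. The cleanest way is to first rename the variables of $\widetilde h$ by a fresh symbol (say $y$) so that every pair $x_{i}\boxtimes y_{j}$ is an unambiguous new variable. This renaming is harmless because renaming is a restriction in both directions. Second, the resulting map must send every variable to a finite combination, and it does.

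\textbf{Verification on monomials.} By bilinearity of $\boxtimes$ and linearity of $T\boxtimes\widetilde T$, it suffices to check on monomials $m=x_{i_1}\cdots x_{i_d}$ and $\widetilde m=x_{j_1}\cdots x_{j_d}$ that
\[
(T\boxtimes\widetilde T)(m\boxtimes\widetilde m)=T(m)\boxtimes\widetilde T(\widetilde m).
\]
The left side is $\sum_\pi\prod_k (T\boxtimes\widetilde T)(x_{i_k,j_{\pi(k)}})$, which expands to $\sum_\pi\sum_{a_\bullet,b_\bullet}\prod_k\alpha^{(k)}_{a_k}\beta^{(\pi(k))}_{b_{\pi(k)}}\,x_{a_k,b_{\pi(k)}}$. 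The right side is obtained by expanding $T(m)=\sum_{a_\bullet}\prod_k\alpha^{(k)}_{a_k}\,x_{a_1}\cdots x_{a_d}$ and $\widetilde T(\widetilde m)$ similarly, then applying definition \eqref{eq:defboxtimes} termwise. Both sides are the same sum after reindexing $b$ by $\pi$, using commutativity of $R$ to reorder the scalar products. This is the step I expect to be the main obstacle, though only bookkeeping. The subtlety is that \eqref{eq:defboxtimes} is stated for monomials written with a chosen ordering, whereas $T(m)$ is a sum of monomials whose representation is not unique. The well-definedness computation preceding the proposition (invariance under $\sigma,\tau$) guarantees that applying $\boxtimes$ termwise to any representation gives the same result, so the expansion is legitimate.

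Finally, summing over the monomials of $h$ and $\widetilde h$ with their coefficients gives $(T\boxtimes\widetilde T)(h\boxtimes\widetilde h)=T(h)\boxtimes\widetilde T(\widetilde h)=f\boxtimes\widetilde f$, which proves $f\boxtimes\widetilde f\leq h\boxtimes\widetilde h$.
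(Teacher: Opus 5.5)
Your construction and verification are the paper's proof. The paper's map $P(x_{i,j})=\sum_{i',j'}\alpha_{i',i}\alpha'_{j',j}x_{i',j'}$ is your $T\boxtimes\widetilde T$, and it checks $T(f)\boxtimes T'(h)=P(f\boxtimes h)$ on monomials by the same reindexing of the second factor by $\pi$. Carrying the $1\times 1$ matrices $T_{\textup{left}},T_{\textup{right}}$ along as a scalar is correct, and it is a point the paper skips. Your first option, absorbing that scalar into $T$, does not work in general: over $\mathbb{Q}$ no $T\in\End$ satisfies $T(x_1^2)=2x_1^2$. Keep only the second option.

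Your claim that renaming $\widetilde h$ to a fresh symbol is harmless is wrong. After renaming you get $f\boxtimes\rho(\widetilde f)\leq h\boxtimes\rho(\widetilde h)$. To return to the original you would need $h\boxtimes\rho(\widetilde h)\leq h\boxtimes\widetilde h$, which is itself an instance of the proposition, and it fails as soon as two splittings of an index list collide.

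For example, take $h=x_1^2+x_{1,2}^2$ and $\widetilde h=x_{2,3}^2+x_3^2$. Since $x_1\boxtimes x_{2,3}=x_{1,2,3}=x_{1,2}\boxtimes x_3$, we get $h\boxtimes\widetilde h=4x_{1,2,3}^2+2x_{1,3}^2+2x_{1,2,2,3}^2$, which involves only three variables. Now let $\widetilde f:=x_{2,3}^2+x_4^2$, so $\widetilde f\leq\widetilde h$ via $x_3\mapsto x_4$. Then $h\boxtimes\widetilde f=2(x_{1,2,3}^2+x_{1,4}^2+x_{1,2,2,3}^2+x_{1,2,4}^2)$ has four essential variables over $\IC$, so it is not a restriction of $h\boxtimes\widetilde h$. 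The same happens with your $\rho(\widetilde h)$.

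So, with the concatenation convention for $\boxtimes$ on variables, the proposition is false as literally stated, and no convention chosen inside the proof can rescue it. What both you and the paper actually prove needs a tacit hypothesis: $(x_{\mathbf a},x_{\mathbf b})\mapsto x_{\mathbf a}\boxtimes x_{\mathbf b}$ must be injective on pairs of variables occurring in $h$ and $\widetilde h$. The paper's proof secures this by working with single indices $x_i$, $x_j$; it also suffices that all variables of $h$ carry the same number of indices.

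Under that hypothesis your $T\boxtimes\widetilde T$ is well defined on every variable of $h\boxtimes\widetilde h$, and the rest of your argument is complete. Collisions among the variables of $T(h)$ and $\widetilde T(\widetilde h)$ are harmless, since both sides of your identity produce the same concatenated variables.
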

\begin{proof}
For $T,T'\in\cE$ we define $P \in \cE$ as
$P(x_{i,j}) := \sum_{i',j'} \alpha_{i',i}\alpha'_{j',j} x_{i',j'}$,
where
$\alpha_{i',i}\in R$ and $\alpha'_{i',i}\in R$
are defined via
$T(x_i) = \sum_{i'} \alpha_{i',i} x_{i'}$ and $T'(x_j) = \sum_{j'} \alpha'_{j',j} x_{j'}$.
\begin{eqnarray*}
T(f)\boxtimes T(h)
&=&
T(\sum_a \alpha_a x_{a,1}\cdots x_{a,d})
\boxtimes
T'(\sum_b \alpha_b x_{b,1}\cdots x_{b,d})
\\
&=&
\sum_{a,b} \alpha_a \alpha_b
\Big(
T(x_{a,1}\cdots x_{a,d})
\boxtimes
T'(x_{b,1}\cdots x_{b,d})
\Big)
\\
&\stackrel{(\ast)}{=}&
\sum_{a,b} \alpha_a \alpha_b
\Big(
P((x_{a,1}\cdots x_{a,d}) \boxtimes (x_{b,1}\cdots x_{b,d}))
\Big)
\\
&=&
\Big(
P((\sum_a \alpha_a x_{a,1}\cdots x_{a,d}) \boxtimes \sum_b \alpha_b(x_{b,1}\cdots x_{b,d}))
\Big)
\\
&=&
P(f\boxtimes h)
\end{eqnarray*}
We verify $(\ast)$ in the following calculation, which finishes the proof.
\begin{align*}
T(f)\boxtimes T'(h)
&=
T(x_{i_1}\cdots x_{i_d})\boxtimes T'(x_{j_1}\cdots x_{j_d})
\\
&=
\Big(\sum_{i'_1} \alpha_{i'_1,i_1} x_{i'_1}\Big) \cdots \Big(\sum_{i'_d} \alpha_{i'_d,i_d} x_{i'_d}\Big)
\boxtimes
\Big(\sum_{j'_1} \alpha_{j'_1,j_1} x_{j'_1}\Big) \cdots \Big(\sum_{j'_d} \alpha_{j'_d,j_d} x_{j'_d}\Big)
\\
&=
\Big(\sum_{i'_1} \alpha_{i'_1,i_1}\Big)\cdots\Big(\sum_{i'_d} \alpha_{i'_d,i_d}\Big)
\Big(\sum_{j'_1} \alpha_{j'_1,j_1}\Big)\cdots\Big(\sum_{j'_d} \alpha_{j'_d,j_d}\Big)
 x_{i'_1} \cdots  x_{i'_d}
\boxtimes
 x_{j'_1} \cdots  x_{j'_d}
\\
&=
\sum_{\pi\in\aS_d}
\Big(\sum_{\substack{i'_1,\ldots,i'_d\\j'_1,\ldots,j'_d}} \alpha_{i'_1,i_1}\cdots \alpha_{i'_d,i_d} \cdot \alpha_{j'_1,j_1}\cdots\alpha_{j'_d,j_d}\Big)
x_{i'_1,j'_{\pi(1)}} \cdots  x_{i'_d,j'_{\pi(d)}}
\\
&=
\sum_{\pi\in\aS_d}
\Big(\sum_{\substack{i'_1,\ldots,i'_d\\j'_1,\ldots,j'_d}} \alpha_{i'_1,i_1}\cdots \alpha_{i'_d,i_d} \cdot \alpha_{j'_1,\pi^{-1}(j_1)}\cdots\alpha_{j'_d,\pi^{-1}(j_d)}\Big)
x_{i'_1,j'_1} \cdots  x_{i'_d,j'_d}
\\
&=
\sum_{\sigma\in\aS_d}
\Big(\sum_{\substack{i'_1,\ldots,i'_d\\j'_1,\ldots,j'_d}} \alpha_{i'_1,i_1}\cdots \alpha_{i'_d,i_d} \cdot \alpha_{j'_1,\sigma(j_1)}\cdots\alpha_{j'_d,\sigma(j_d)}\Big)
x_{i'_1,j'_1} \cdots  x_{i'_d,j'_d}
\\
&=
\sum_{\sigma\in\aS_d}
\Big(\sum_{i'_1,j'_1} \alpha_{i'_1,i_1}\alpha_{j'_1,j_{\sigma(1)}}  x_{i'_1,j'_1}\Big)
\cdots
\Big(\sum_{i'_d,j'_d} \alpha_{i'_d,i_d}\alpha_{j'_d,j_{\sigma(d)}}  x_{i'_d,j'_d}\Big)
\\
&=
P\Big( \sum_{\sigma\in\aS_d} x_{i_1,j_{\sigma(1)}}\cdots x_{i_d,j_{\sigma(d)}} \Big)
\\
&=
P((x_{i_1}\cdots x_{i_d})\boxtimes (x_{j_1}\cdots x_{j_d}))
\\
&=
P(f\boxtimes h).\qedhere
\end{align*}
\end{proof}

For every commutative ring $R$ we note that
\begin{equation}\label{eq:detdet}
\det_d \leq \det_d\boxtimes \det_d
\end{equation}
as follows.
We first recall $\mon_d \leq \det_d$ from \eqref{eq:monleqdet}.
Using Proposition~\ref{pro:boxtimesequivariance} we obtain $\mon_d \boxtimes \det_d \leq \det_d\boxtimes \det_d$,
which is a polynomial in triple-indexed variables.
We then map every $x_{i,j,k}$ to zero for which $j\neq k$, and we map $x_{i,j,k}$ to $x_{i,j}$ otherwise.
The result is $\det_d$, which proves \eqref{eq:detdet}.
In characteristic 0, we can alternatively prove \eqref{eq:detdet} by restricting $\det_d$ to $\mon_d$
and further to $x^d$ and then observe that $x^d \boxtimes \det_d = d!\,\det_d$.

\begin{example}
\label{exa:boxtimes}
There are several examples of polynomials in the literature that are of the form $f\boxtimes h$,
for example the following (whenever the determinant appears in the subsequent examples, we assume we are working over a ring~$R$, not just a semiring).
\begin{enumerate}
\item The permanent polynomial: $\per_d := \mon_d\boxtimes \mon_d$, see e.g., \cite{Val79}, \cite[eq.~(3.1)]{Min84}.
\item The hyperpermanent (also called the multidimensional permanent): $\mon_d^{\boxtimes m}$, see e.g., \cite{DG87,Tic15,CP16}.
\item The rectangular permanent: $\mon_d\boxtimes e_{n,d}$, which is just called the permanent in \cite[eq.~(1.1)]{Min84}, see also \cite[Exa~4.6]{BE19}.
\item The mixed discriminant: $\det_d\boxtimes \mon_d$, see e.g., \cite{LC83,Pan85,Bap89,Gur05,CP16}.
\item Cayley's first hyperdeterminant: $\det_d^{\boxtimes m}$, \cite{Cay46, Bar95, HL13, CP16, BI17, AY23}, also called the combinatorial hyperdeterminant, also called the Pascal determinant in \cite[\S8.3]{Lan11} and \cite{Lan15}, or the quantum permanent in \cite[\S3]{Gur04}.
\end{enumerate}
\begin{remark}
Recall $\mon_d\leq \det_d$ from \eqref{eq:monleqdet}.
Using Proposition~\ref{pro:boxtimesequivariance} it follows $\mon_d\boxtimes \mon_d \leq \mon_d\boxtimes\det_d \leq \det_d\boxtimes\det_d$.
These two restrictions are separate proofs in the literature, see \cite{Gur04,Gur05}.
Also note that this implies $\w(\mon_d\boxtimes \mon_d)\leq \w(\mon_d\boxtimes\det_d) \leq \w(\det_d\boxtimes\det_d)$.
\end{remark}
\end{example}

\subsection{The Kronecker action}
\label{subsec:kronaction}
In this section we generalize the $\boxtimes$ construction from \S\ref{subsec:boxtimes} from $\sS$ to $\sT$ and $\sW$.
It is unclear if there is any reasonable Kronecker product construction for two elements in $\sW$ with output in $\sW$, so we take a different approach.
The Kronecker product of two polynomials is replaced by a product of a symmetric tensor with an algebra element, i.e., an action of the algebra of order $d$ symmetric tensors on $\sT_d$ or $\sW_d$ or $\sS_d$.
On $\sS_d$ this gives the same construction as in \S\ref{subsec:boxtimes}, see Claim~\ref{cla:boxtimesisboxtimes}.
We partially follow the exposition from \cite{IOT25}.

\begin{claim}\label{cla:coinvfunct}
Let $R$ be a commutative semiring.
Let $\varphi:\sT_d\to \sT_d$ be an $R$-linear $\aS_d$-equivariant map.
Then $\varphi$ induces an $R$-linear map $\psi_{S}:\sS_d\to \sS_d$, $\psi([f]_{\sS})=[\varphi(f)]_{\sS}$.
Moreover, $\varphi$ induces an $R$-linear map $\psi_{\sW}:\sW_d\to \sW_d$, $\psi([f]_{\sW})=[\varphi(f)]_{\sW}$.
\end{claim}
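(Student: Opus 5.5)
The plan is to show that the kernel congruences of the quotient maps $\sT_d\to\sS_d$ and $\sT_d\to\sW_d$ are preserved by $\varphi$. Since $R$ need not have additive inverses, "kernel" is not available, so we work with the congruences $\sim$ and $\backsim$ directly. In each case it suffices to prove that $f\sim f'$ implies $\varphi(f)\sim\varphi(f')$, respectively $f\backsim f'$ implies $\varphi(f)\backsim\varphi(f')$. Well-definedness of $\psi$ follows from that, and $R$-linearity of $\psi$ is then inherited from $R$-linearity of $\varphi$ and of the quotient maps: $\psi(\alpha[f]+[g])=[\varphi(\alpha f+g)]=\alpha[\varphi(f)]+[\varphi(g)]$.

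For $\sS_d$, I use the description of $\sim$ from \S\ref{sec:prelim}. The congruence $\sim$ is the $R$-linear span of pairs $(f,\pi f)$ with $\pi\in\aS_d$. So if $f\sim f'$, then $(f,f')=\sum_k \alpha_k (g_k,\pi_k g_k)$ for monomials $g_k$. Applying $\varphi$ componentwise and using linearity together with $\aS_d$-equivariance, I get $(\varphi(f),\varphi(f'))=\sum_k\alpha_k(\varphi(g_k),\pi_k\varphi(g_k))$. Each pair $(\varphi(g_k),\pi_k\varphi(g_k))$ lies in $\sim$, because by linearity of $\pi_k$ it is a sum of generator pairs $(\beta m,\beta\pi_k m)$ over the monomials $m$ of $\varphi(g_k)$. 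Since $\sim$ is closed under sums and scalar multiples, $\varphi(f)\sim\varphi(f')$.

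For $\sW_d$, I use the characterization from \S\ref{sec:prelim}: $f\backsim f'$ iff $f+h+\sum_i h_i\otimes\ell_i\otimes\ell_i\otimes m_i = f'+h+\sum_j h'_j\otimes\ell'_j\otimes\ell'_j\otimes m'_j$, where I may take all terms to be of degree $d$ by homogeneity. Applying $\varphi$ to both sides, it suffices to show that $\varphi(a\otimes\ell\otimes\ell\otimes b)$ is $\backsim 0$ for $a\in\sT_p$, $b\in\sT_q$, $\ell\in\sT_1$. The equality then becomes $\varphi(f)+\varphi(h)+Z=\varphi(f')+\varphi(h)+Z'$ with $Z\backsim0\backsim Z'$, and $\varphi(f)\backsim\varphi(f')$ follows by the same adding-on trick used for transitivity.

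This is the main obstacle: $\varphi$ is only $\aS_d$-equivariant, not $\End$-equivariant, so I cannot move $\ell$ around freely. I would first reduce to the case where $\ell=\sum_k c_k x_k$ is expanded. Then $\ell\otimes\ell=\sum_k c_k^2\, x_k\otimes x_k+\sum_{k<k'}c_kc_{k'}(x_k\otimes x_{k'}+x_{k'}\otimes x_k)$. So the needed statement is that $\varphi(u)\backsim0$ whenever $u$ is a tensor with two equal factors, or $u=v+\tau v$ where $v$ is a monomial and $\tau$ is the transposition of two positions. The idea is to use $\tau\varphi(v)=\varphi(\tau v)$. In $\sW$, for any $w\in\sT_d$ the element $w+\tau w$ is $\backsim 0$, since $\ell\wedge\ell'+\ell'\wedge\ell=0$ was derived in \S\ref{sec:prelim} and extends to arbitrary positions. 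Hence $\varphi(v+\tau v)=\varphi(v)+\tau\varphi(v)\backsim0$. For the equal-factor case $u=\tau u$, I would try to show that $\varphi(u)$ is $\tau$-invariant and that a $\tau$-invariant tensor is $\backsim0$. Writing $\varphi(u)$ in the monomial basis, invariance pairs each monomial $m\neq\tau m$ with $\tau m$ at equal coefficients, giving sums $\beta(m+\tau m)\backsim0$. Monomials fixed by $\tau$ have two equal variables in the swapped positions and hence vanish in $\sW$. This last step is where I expect the real checking to be needed.
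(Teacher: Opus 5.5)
Your proof is correct. For $\sS_d$ it is the paper's argument; the paper notes directly that $(\varphi(f'),\pi\varphi(f'))$ is itself a generator of $\sim$, so your monomial decomposition is optional. For $\sW_d$ you take a genuinely different route.

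The paper applies $\varphi$ to $f+h+n\otimes\ell\otimes\ell\otimes m=f'+h+n'\otimes\ell'\otimes\ell'\otimes m'$ and writes the image of the square term as $\varphi(n)\otimes\varphi(\ell)\otimes\varphi(\ell)\otimes\varphi(m)$. That is, it lets $\varphi$ act factor by factor, like an element of $\End$. This gives a one-line proof, but the expression is not even defined for a map $\varphi$ that only acts on $\sT_d$. Already for $\varphi=s\boxtimes(\cdot)$ with $s\in\Symt_d$, the case used immediately afterwards, the image of $n\otimes\ell\otimes\ell\otimes m$ becomes $\backsim0$ only after pairing $\tau$-related terms of $s$.

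Your argument uses nothing beyond linearity and equivariance under the single transposition $\tau$ of the two $\ell$-slots. So it proves the claim as stated, and the obstacle you name is exactly what the paper's argument passes over.

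The step you flagged does go through. Since $\sT_d$ is free on monomials, $\tau\varphi(u)=\varphi(u)$ forces equal coefficients on $m$ and $\tau m$. The $\tau$-fixed monomials lie in the two-sided ideal generated by squares. The relation $m+\tau m\backsim0$ follows by tensoring $(y+y')^{\otimes 2}=y\otimes y+y'\otimes y'+y\otimes y'+y'\otimes y$ with the outer factors; only this adjacent case is needed, not arbitrary positions.

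The final cancellation is also fine. Suppose $Z+h_Z+i_Z=h_Z+i'_Z$ and $Z'+h_{Z'}+i_{Z'}=h_{Z'}+i'_{Z'}$, with all of $i_Z,i'_Z,i_{Z'},i'_{Z'}$ in that ideal. Add $h_Z+i_Z+h_{Z'}+i_{Z'}$ to both sides of $\varphi(f)+\varphi(h)+Z=\varphi(f')+\varphi(h)+Z'$. The result is precisely the paper's description of $\varphi(f)\backsim\varphi(f')$, with common summand $\varphi(h)+h_Z+h_{Z'}$.
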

\begin{proof}
We verify the well-definedness of~$\psi_{\sS}$.
Let $f,f'\in \sT_d$ with $[f]_{\sS}=[f']_{\sS}$ be arbitrary, i.e., $f\sim f'$.
The task is to show $\varphi(f)\sim\varphi(f')$.
We first assume that $(f,f')$ is a generator of $\sim$, i.e., we have
$f = \pi f'$.
Applying $\varphi$ gives $\varphi(f) = \varphi(\pi f') = \pi \varphi(f')$,
which is a generator of $\sim$ on $W$, thus, $\varphi(f)\sim \varphi(f')$.
More generally, consider the case where $f$ and $f'$ are not necessarily generators, but
$f\sim f'$, $(f,f') = \sum_i\alpha_i (h_i,h'_i)$, where $\alpha_i\in R$ and $(h_i,h'_i)$ are generators of $\sim$.
Then $(\varphi(f),\varphi(f')) = \sum_i\alpha_i \underbrace{(\varphi(h_i),\varphi(h'_i))}_{\in \sim} \in \sim$. In other words, $\varphi(f)\sim\varphi(f')$.

We now verify the well-definedness of~$\psi_{\sW}$ in a similar way.
Let $f,f'\in \sT_d$ with $[f]_{\sW}=[f']_{\sW}$ be arbitrary, i.e., $f\backsim f'$.
The task is to show $\varphi(f)\backsim\varphi(f')$.
We have
\[
f + h + n \otimes \ell \otimes \ell \otimes m = f' + h + n' \otimes \ell' \otimes \ell' \otimes m'.
\]
Thus
\[
\varphi(f) + \varphi(h) + \varphi(n) \otimes \varphi(\ell) \otimes \varphi(\ell) \otimes \varphi(m) = \varphi(f') + \varphi(h) + \varphi(n') \otimes \varphi(\ell') \otimes \varphi(\ell') \otimes \varphi(m'),
\]
which shows $\varphi(f)\backsim\varphi(f')$.
\end{proof}

We have an $R$-bilinear map $\sT_1\times \sT_1\to\sT_1$, $(v_1,v_2)\mapsto v_1\boxtimes v_2$, where $v_1,v_2\in\Var$, see \S\ref{subsec:variables}.
This lifts to an $R$-bilinear map $\boxtimes:\sT_d\times\sT_d\to\sT_d$.
Note that
\begin{equation}\label{eq:pifhpifpih}
\pi(f\boxtimes h) = (\pi f)\boxtimes (\pi h).
\end{equation}
Let $\psi:\Symt_d \times \sT_d \to \sT_d$ be the restriction of this map to $\Symt_d \times \sT_d$.
Let $A\in\{\sT,\sS,\sW\}$.
Let $s\in \Symt_d$. Then there exists a linear map $\psi'_s:\sT_d\to \sT_d, \ f \mapsto \psi(s,f)$
that is $\aS_d$-equivariant: $\psi'_s(\pi f) = \psi(s,\pi f) = \psi(\pi s,\pi f) \stackrel{\eqref{eq:pifhpifpih}}{=} \pi \psi(s,f) = \pi \psi'_s(f)$.
We write $[f]$ for the image of $f\in\sT_d$ in $A_d$.
We apply Claim~\ref{cla:coinvfunct} (in the case $A_d=\sT_d$ there is no need to invoke Claim~\ref{cla:coinvfunct}) to obtain a map $\psi_s : A_d\to A_d$, $[f]\mapsto [\psi(s,f)]$.
The well-defined map
\[
\kappa: \Symt_t \to \Hom(A_d, A_d), \qquad s \mapsto \psi_s
\]
is $R$-linear, because
$(\kappa(s+\alpha t))([f])
=
\psi(s+\alpha t,f)
=
\psi(s,f) + \alpha \psi(t,f)
=
\kappa(s)([f])+\alpha\kappa(t)([f])
=
(\kappa(s)+\alpha\kappa(t))([f]).
$
Altogether, this gives an $R$-bilinear map
\begin{equation}\label{eq:ASBSCS}
\boxtimes : \Symt_d \times A_d \to A_d, \qquad (s,[f]) \mapsto \psi_s([f]) = [\psi(s,f)].
\end{equation}

We now interpret this over the standard basis. We treat this for variables $x_1,x_2,\ldots$, as the more general case is analogous, but notationally unpleasant.
For a tensor $t\in\sT_d$, define the stabilizer $\stab(t)\subseteq\aS_d = \{\pi\in\aS_d\mid \pi t = t\}$.
Let $\aS_d/\stab(t) = \{\tau \pi \mid \tau\in\aS_d, \, \pi \in \stab(t)\}$ denote the set of cosets.
Then $\sigma t := \tau t$ is well-defined for $\sigma\in\aS_d/\stab(t)$ with coset representative $\tau$.
Let $t_{\textup{sym}}:=\sum_{\sigma \in \aS_d/\stab(t)} \sigma t$,
which is an element in $\Symt_d$.
The $R$-semimodule $\Symt_d$ of symmetric tensors is freely generated by the tensors of the form
\[
(x_1^{\otimes \la_1} \otimes x_2^{\otimes \la_2} \otimes \cdots)_{\textup{sym}},
\]
where $\sum_i\la_i = d$.
For a list $\la$ with $\sum_i\la_i = d$ let $\aS_d(1^{\la_1}2^{\la_2}\cdots)$ denote the orbit
of the length $d$ list 
\begin{equation}
\label{eq:list}
(\underbrace{1,1,\ldots,1}_{\la_1 \textup{ many}},\underbrace{2,2,\ldots,2}_{\la_2 \textup{ many}},\ldots),
\end{equation}
i.e., the set of all length $d$ lists that arise form \eqref{eq:list} by permuting the positions of the numbers.
For example, $\aS_4(1^2 2^2) = \{(1,1,2,2),(1,2,1,2),(1,2,2,1),(2,1,1,2),(2,1,2,1),(2,2,1,1)\}$.
Let $A \in \{\sT,\sS,\sW\}$.
For a tensor $t \in \sT_d$, let $[t]$ denote its equivalence class in $A_d$.
For a symmetric tensor $s\in \Symt_d$ and a $f\in A_d$,
\eqref{eq:ASBSCS} gives an interpretation on the standard basis as
\begin{equation}
\label{eq:boxtimesaction}
(x_1^{\otimes \la_1} \otimes x_2^{\otimes \la_2} \otimes \cdots)_{\textup{sym}} \boxtimes
[x_{i_1}\otimes \cdots \otimes x_{i_d}] \ := \ 
\sum_{(j_1,\ldots,j_d)\in\aS_d(1^{\la_1}2^{\la_2}\cdots)} [x_{j_1,i_1} \otimes \cdots \otimes x_{j_d,i_d}],
\end{equation}

\begin{claim}\label{cla:boxtimesisboxtimes}
For two polynomials $h,f\in \sS_d$, we have
$h \boxtimes f = \pol(h) \boxtimes f$,
where $h \boxtimes f$ is defined in \eqref{eq:defboxtimes}.
\end{claim}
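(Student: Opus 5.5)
Both sides of Claim~\ref{cla:boxtimesisboxtimes} are $R$-bilinear in $(h,f)$. For the left-hand side this holds by the definition \eqref{eq:defboxtimes}. For the right-hand side it holds because $\pol$ is $R$-linear and the action \eqref{eq:ASBSCS} is $R$-bilinear. It therefore suffices to check the identity on monomials $h = x_{i_1}\cdots x_{i_d}$ and $f = x_{j_1}\cdots x_{j_d}$. By reordering the factors of $h$ (allowed, since $h\in\sS_d$) we may assume the list $(i_1,\ldots,i_d)$ is sorted. It then has the form \eqref{eq:list} with multiplicities $\la_1,\la_2,\ldots$, after renaming the variable indices, which is the notationally convenient case the paper already adopts.

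First I compute $\pol(h)$ in terms of the standard generators of $\Symt_d$. By definition $\pol(h)=\sum_{\pi\in\aS_d} x_{i_{\pi(1)}}\otimes\cdots\otimes x_{i_{\pi(d)}}$. Each distinct rearrangement $(k_1,\ldots,k_d)\in\aS_d(1^{\la_1}2^{\la_2}\cdots)$ occurs exactly $|\stab| = \la_1!\la_2!\cdots$ times, namely once for each permutation that fixes the sorted list. Hence $\pol(h) = (\la_1!\la_2!\cdots)\,(x_1^{\otimes\la_1}\otimes x_2^{\otimes\la_2}\otimes\cdots)_{\textup{sym}}$, where this equality is an identity of sums in $\sT_d$ and so involves no division. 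Applying \eqref{eq:boxtimesaction} with $f=[x_{j_1}\otimes\cdots\otimes x_{j_d}]$ gives
\[
\pol(h)\boxtimes f = (\la_1!\la_2!\cdots)\sum_{(k_1,\ldots,k_d)\in\aS_d(1^{\la_1}2^{\la_2}\cdots)} x_{k_1,j_1}\cdots x_{k_d,j_d}.
\]
Equivalently, and more cleanly, the right-hand side equals $\sum_{\pi\in\aS_d}[x_{i_{\pi(1)},j_1}\otimes\cdots\otimes x_{i_{\pi(d)},j_d}]$. This follows directly from the definition $\psi(s,f)=[s\boxtimes \tilde f]$ for a tensor lift $\tilde f$, together with the expansion of $\pol(h)$ as a sum over $\pi$, with no need to pass through the orbit description.

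Second, I compare this with the left-hand side. By \eqref{eq:defboxtimesII}, $h\boxtimes f=\sum_{\pi\in\aS_d} x_{i_{\pi(1)},j_1}\cdots x_{i_{\pi(d)},j_d}$, which is exactly the expression just obtained, now read in $\sS_d$. The remaining point is that the action in \eqref{eq:ASBSCS} applied to $\pol(h)$ is computed by choosing any tensor lift of $f$. This is legitimate by the well-definedness in Claim~\ref{cla:coinvfunct}, using the $\aS_d$-equivariance from \eqref{eq:pifhpifpih}. Bilinear extension then finishes the proof.

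The main obstacle is bookkeeping: the left-hand side treats $h$ as a polynomial and $f$ as the second argument, while the action treats $\pol(h)$ as a tensor acting on $f$. I will therefore carefully match the index order $x_{i,j}$ (first index from $h$, second from $f$) between \eqref{eq:defboxtimes}, \eqref{eq:defboxtimesII} and \eqref{eq:boxtimesaction}. I will also avoid dividing by $\la_1!\la_2!\cdots$, which is unavailable over general semirings, by working with the sum over all of $\aS_d$ rather than over the orbit.
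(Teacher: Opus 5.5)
Your proposal is correct and follows the paper's route: reduce to monomials by bilinearity, write $\pol(h)$ as $\la_1!\,\la_2!\cdots$ times the symmetric generator, apply \eqref{eq:boxtimesaction}, and compare with \eqref{eq:defboxtimesII}. Your variant of expanding $\pol(h)$ directly as a sum over $\aS_d$ and applying the tensor Kronecker product to a lift of $f$ is slightly cleaner, since it avoids the multiplicity factor entirely; the paper misprints that factor as $(\la_1!\,\la_2!\cdots)!$, and your count $\la_1!\,\la_2!\cdots$ is the correct one.
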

\begin{proof}
It suffices to prove this for $h=[x_{i_1} \otimes\cdots\otimes x_{i_d}]_S$ and $f=[x_{j_1} \otimes\cdots\otimes x_{j_d}]_S$,
because the statement then follows from bilinearity.
Define $\la$ via $\la_j = |\{i_k = j\}|$.
We have $h = [x_1^{\otimes \la_1} \otimes x_2^{\otimes \la_2}\otimes \cdots]_S$.
We have
\[
\pol(h) = \sum_{\pi\in\aS_d} x_{i_{\pi(1)}} \otimes\cdots\otimes x_{i_{\pi(d)}}
= (\la_1!\,\la_2!\cdots)! \, (x_1^{\otimes\la_1}\otimes x_2^{\otimes\la_2}\otimes\cdots)_\sym
\]
\eqref{eq:boxtimesaction} gives
\[
\pol(h)\boxtimes f = 
(\la_1!\,\la_2!\cdots)! \sum_{(j_1,\ldots,j_d)\in\aS_d(1^{\la_1}2^{\la_2}\cdots)} [x_{j_1,i_1} \otimes \cdots \otimes x_{j_d,i_d}],
\]
which according to \eqref{eq:defboxtimesII} coincides with~$h\boxtimes f$.
\end{proof}

Note that there are interesting polynomials $s\boxtimes f$ for which $s\in\Symt_d$ is \emph{not} a polarization.
For example, the polynomial
$(x^{\otimes i}y^{\otimes d-i})_\sym \boxtimes \det_d$ appears in \cite{IOT25}.

The following claim is the generalization of \eqref{eq:kronprodassoc}.
It states that the semialgebra $(\Symt_d,+,\boxtimes)$
acts on $A_d$.
\begin{claim}
\label{cla:boxtimesaction}
Let $R$ be a commutative semiring, and let $A\in\{\sT,\sS,\sW\}$.
Let 
$s',s\in\Symt_d$ and $f\in A_d$. Then
$(s'\boxtimes s) \boxtimes f = s' \boxtimes (s \boxtimes f)$.
Here, $s'\boxtimes s$ is defined as in \eqref{eq:boxtimestensors}.
\end{claim}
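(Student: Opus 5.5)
The plan is to reduce the statement to an identity in $\sT_d$ and then pass to the quotient. By \eqref{eq:ASBSCS}, for $s\in\Symt_d$ and $f=[t]\in A_d$ with $t\in\sT_d$, we have $s\boxtimes[t]=[s\boxtimes t]$. Here $s\boxtimes t$ is the Kronecker product of tensors from \eqref{eq:boxtimestensors}, extended bilinearly via the variable-level product $v_1\boxtimes v_2$ of \S\ref{subsec:variables}. Two facts then suffice.

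First, $s'\boxtimes s\in\Symt_d$. This holds because \eqref{eq:pifhpifpih} gives $\pi(s'\boxtimes s)=(\pi s')\boxtimes(\pi s)=s'\boxtimes s$, so the left-hand side $(s'\boxtimes s)\boxtimes f$ is defined at all.

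Second, the tensor-level Kronecker product is associative on $\sT_d$:
\[
(s'\boxtimes s)\boxtimes t = s'\boxtimes(s\boxtimes t).
\]
By trilinearity it suffices to check this on basis tensors $x_{a_1}\otimes\cdots\otimes x_{a_d}$, $x_{b_1}\otimes\cdots\otimes x_{b_d}$, $x_{c_1}\otimes\cdots\otimes x_{c_d}$. Both sides equal $x_{a_1,b_1,c_1}\otimes\cdots\otimes x_{a_d,b_d,c_d}$, because the product on variables concatenates index lists and concatenation is associative. The same holds for mixed variable symbols, since $\boxtimes$ on $\Var$ is just formal concatenation of labelled index lists, hence associative.

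Given these, the computation is
\[
(s'\boxtimes s)\boxtimes[t]=[(s'\boxtimes s)\boxtimes t]=[s'\boxtimes(s\boxtimes t)]=s'\boxtimes[s\boxtimes t]=s'\boxtimes(s\boxtimes[t]).
\]
The third equality uses the definition \eqref{eq:ASBSCS} applied to the representative $s\boxtimes t$ of the class $s\boxtimes[t]$. Well-definedness of \eqref{eq:ASBSCS}, i.e.\ independence of the representative, is already guaranteed by Claim~\ref{cla:coinvfunct}. That claim was applied to the $\aS_d$-equivariant map $\psi'_{s'}$, so no further check is needed; in the case $A=\sT$ the classes are just the tensors themselves.

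The only point requiring care is this last step. The action on $A_d$ must be computed through an arbitrary representative $s\boxtimes t$ of $s\boxtimes[t]$, and this is allowed exactly because $\psi_{s'}$ is well defined on classes. A second check is that bilinear extension over a commutative semiring causes no issue. Every map involved is $R$-linear in each argument, and the basis computation involves no subtraction, so the argument is characteristic-free and valid over any commutative semiring.
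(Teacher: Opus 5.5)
Your proof is correct and is essentially the paper's argument: both reduce to checking on pure tensors that the two sides give the same triple-indexed variables $x_{a_k,b_k,c_k}$ (associativity of concatenating index lists), and then passing to classes in $A_d$. The only difference is presentation. The paper verifies the identity on the basis elements $(x_1^{\otimes\la_1}\otimes x_2^{\otimes\la_2}\otimes\cdots)_\sym$ of $\Symt_d$ by expanding them into orbit sums via \eqref{eq:boxtimesaction}. You instead work with arbitrary representatives in $\sT_d$, and you make explicit two facts the paper uses only tacitly: $s'\boxtimes s\in\Symt_d$, and the independence of \eqref{eq:ASBSCS} from the choice of representative.
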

\begin{proof}
Let $[f]$ denote the coset of $f\in\sT_d$ in $A_d$.
We prove the statement for
$s'=(x_1^{\otimes\la_1}\otimes x_2^{\otimes\la_2}\otimes\cdots)_\sym$,
$s=(x_1^{\otimes\mu_1}\otimes x_2^{\otimes\mu_2}\otimes\cdots)_\sym$,
$f=[x_{i_1} \otimes \cdots \otimes x_{i_d}]$.
We observe
\[
s'\boxtimes s = \sum_{\substack{(j'_1,\ldots,j'_d)\in\aS_d(1^{\la_1}2^{\la_2}\cdots)\\(j_1,\ldots,j_d)\in\aS_d(1^{\mu_1}2^{\mu_2}\cdots)}} x_{j'_1,j_1} \otimes \cdots \otimes x_{j'_d,j_d}.
\]
According to\eqref{eq:boxtimesaction} we have
\[
(s'\boxtimes s) \boxtimes f = \sum_{\substack{(j'_1,\ldots,j'_d)\in\aS_d(1^{\la_1}2^{\la_2}\cdots)\\(j_1,\ldots,j_d)\in\aS_d(1^{\mu_1}2^{\mu_2}\cdots)}} [x_{j'_1,j_1,i_1}\otimes\cdots\otimes x_{j'_d,j_d,i_d}].
\]
On the other hand,
\[
s\boxtimes f = \sum_{(j_1,\ldots,j_d)\in\aS_d(1^{\mu_1}2^{\mu_2}\cdots)} [x_{j_1,i_1}\otimes\cdots\otimes x_{j_d,i_d}].
\]
Moreover,
\[
s'\boxtimes(s\boxtimes f) \ = \ 
\sum_{\substack{(j'_1,\ldots,j'_d)\in\aS_d(1^{\la_1}2^{\la_2}\cdots)\\(j_1,\ldots,j_d)\in\aS_d(1^{\mu_1}2^{\mu_2}\cdots)}}
[x_{j'_1,j_1,i_1}\otimes\cdots\otimes x_{j'_d,j_d,i_d}].
\]
Hence, $(s'\boxtimes s) \boxtimes f=s'\boxtimes(s\boxtimes f)$.
\end{proof}

Note that for $s\in \Symt_d\subseteq\sT_d$ and $T\in\End$ we have $T(s)\in \Symt_d$,
in other words, $\End$ acts linearly on $\Symt_d$.
We write $s\leq t$ if there exists $T\in\End$ such that $T(t)=s$.
The next proposition is the generalized variant of Proposition~\ref{pro:boxtimesequivariance}.
\begin{proposition}
\label{pro:boxtimesequivarianceaction}
Let $s,t\in \Symt_d$ and $f,g \in A_d$.
If $s\leq t$ and $f \leq h$, then
$s\boxtimes f \leq t\boxtimes h$.
\end{proposition}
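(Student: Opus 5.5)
The plan is to mimic the proof of Proposition~\ref{pro:boxtimesequivariance}, working on the tensor level $\sT_d$ and then passing to the quotient $A_d$. Let $T,T'\in\End$ with $T(t)=s$ and $T'(h)=f$. As in Proposition~\ref{pro:boxtimesequivariance}, write $T(x_i)=\sum_{i'}\alpha_{i',i}x_{i'}$ and $T'(x_j)=\sum_{j'}\alpha'_{j',j}x_{j'}$. Then define $P\in\End$ on the doubly indexed variables by $P(x_{i,j}):=\sum_{i',j'}\alpha_{i',i}\alpha'_{j',j}x_{i',j'}$, which is the ``tensor product'' $T\boxtimes T'$ acting on $\sT_1$. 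The goal is the identity
\[
T(t)\boxtimes T'(h) \ = \ P(t\boxtimes h),
\]
which gives $s\boxtimes f\leq t\boxtimes h$ directly.

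The first step is to prove the identity on $\sT_d$ for the lifted bilinear map $\boxtimes:\sT_d\times\sT_d\to\sT_d$ from \eqref{eq:boxtimestensors}. By bilinearity it suffices to check it on pure tensors $x_{i_1}\otimes\cdots\otimes x_{i_d}$ and $x_{j_1}\otimes\cdots\otimes x_{j_d}$. There both sides expand to
\[
\bigotimes_{k=1}^d\Big(\sum_{i'_k,j'_k}\alpha_{i'_k,i_k}\alpha'_{j'_k,j_k}x_{i'_k,j'_k}\Big),
\]
because the action of $\End$ on $\sT_d$ is factorwise and $\boxtimes$ is factorwise. This step is easier than the polynomial case, since no sum over $\aS_d$ appears.

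The second step is to descend to $A_d$. For $f=[h_0]$ with $h_0\in\sT_d$ a representative of $h$, we have $T'(h)=[T'(h_0)]$, since the action of $\End$ commutes with the quotient map, as noted in \S\ref{subsec:restrictions}. By \eqref{eq:ASBSCS},
\[
s\boxtimes f=[\psi(T(t),T'(h_0))]=[P(\psi(t,h_0))]=P([\psi(t,h_0)])=P(t\boxtimes h).
\]
Here we use the first step, which holds for all tensors and so in particular for $t\in\Symt_d$, and again that $P$ commutes with the quotient map. This yields $s\boxtimes f\leq t\boxtimes h$.

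The main obstacle is bookkeeping rather than substance. First, one must make sure that $P$ is a legitimate element of $\End$, i.e.\ that it sends each variable to a finite linear combination; this holds because $T$ and $T'$ do. Second, one must handle variables that are not of the plain form $x_i$, for instance variables already carrying several indices or mixed symbols. I would handle this as the paper does, by treating the variables $\Var$ abstractly as $v\boxtimes w$ and defining $P(v\boxtimes w):=T(v)\boxtimes T'(w)$, extended bilinearly. This requires that $(v,w)\mapsto v\boxtimes w$ be injective on pairs, or at least that $P$ be well defined on the variables actually occurring. If there is ambiguity in decomposing a variable, I would fix a decomposition for each variable appearing in $t$ and $h_0$ and send all other variables to themselves.
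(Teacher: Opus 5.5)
Your proposal is correct and follows the paper's proof. The paper also forms the product endomorphism ($E_3$, your $P$) on double-indexed variables and verifies $E_1(t)\boxtimes E_2(h)=E_3(t\boxtimes h)$ by expanding $t$ in basis tensors and $h$ in classes of basis tensors; your split into a tensor-level identity plus an explicit descent to $A_d$ just spells out that one computation.

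One caveat concerns your bookkeeping remark. Fixing one decomposition per colliding variable (e.g.\ $x_{1,2}\boxtimes x_3=x_1\boxtimes x_{2,3}$) cannot work, because in that situation the statement itself fails. Take $A=\sS$, $t=x_1\otimes x_{1,2}+x_{1,2}\otimes x_1$, $h=x_{2,3}x_3$ and $f=x_5x_6\leq h$. Then $t\boxtimes h=x_{1,2,3}^2+x_{1,2,2,3}x_{1,3}$ has only $3$ variables, while $t\boxtimes f=x_{1,5}x_{1,2,6}+x_{1,2,5}x_{1,6}$ has $4$ essential variables, so $t\boxtimes f\not\leq t\boxtimes h$. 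You should therefore simply assume, as the paper does, that $t$ and $h$ are written in single-indexed variables $x_1,x_2,\ldots$.
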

\begin{proof}
Without loss of generality, we consider $s,t,f,h$ in variables $x_1,x_2,\ldots$, so that $s\boxtimes f$ and $t\boxtimes h$ are double-indexed.
Given $E_1\in\End$ with $E_1(t)=s$ and
given $E_2\in\End$ with $E_2(h)=t$,
we create $E_3\in\End$ via $E_3(x_{j,i})=\sum_{j',i'} \alpha_{j',j}\,\beta_{i',i}\,x_{j',i'}$,
where $E_1(x_j)=\sum_{j'}\alpha_{j',j} \, x_{j'}$ and $E_2(x_i)=\sum_{i'}\beta_{i',i} \, x_{i'}$.
Let
$t = \sum_a \iota_a \, x_{j_{1,a}}\otimes \cdots \otimes x_{j_{d,a}} \in \Symt_d$ be arbitrary.
Then
$E_1(t) = \sum_a \iota_a(\sum_{j'_1}\alpha_{j'_1,j_{1,a}} x_{j'_1})  \otimes \cdots \otimes (\sum_{j'_d}\alpha_{j'_d,j_{d,a}} x_{j'_d})
= \sum_a \iota_a\sum_{j'_1,\ldots,j'_d} (\alpha_{j'_1,j_{1,a}}\cdots\alpha_{j'_d,j_{d,a}}) (x_{j'_1}\otimes\cdots\otimes x_{j'_d})
$.
Let
$h = \sum_b \kappa_b [x_{i_{1,b}}\otimes \cdots \otimes x_{i_{d,b}}]$ be arbitrary.
Then
$E_2(h)
=
\sum_b \kappa_b [(\sum_{i'_1}\beta_{i'_1,i_{1,b}} \, x_{i'_1})\otimes \cdots \otimes (\sum_{i'_d}\beta_{i'_d,i_{d,b}} \, x_{i'_d})]
=
\sum_b \kappa_b\sum_{i'_1,\ldots,i'_d}(\beta_{i'_1,i_{1,b}}\cdots \beta_{i'_d,i_{d,b}}) [x_{i'_1}\otimes \cdots \otimes x_{i'_d}]
$.
Moreover, $t\boxtimes h = \sum_{a,b} \iota_a \kappa_b [x_{j_{1,a},i_{1,b}}\otimes\cdots\otimes x_{j_{d,a},i_{d,b}}]$.
We now calculate

$E_1(t)\boxtimes E_2(h) = \sum_{\substack{j'_1,\ldots,j'_d\\i'_1,\ldots,i'_a}}
\sum_{a,b} \iota_a \kappa_b (\alpha_{j'_1,j_{1,a}}\cdots\alpha_{j'_d,j_{d,a}})
(\beta_{i'_1,i_{1,b}}\cdots \beta_{i'_d,i_{d,b}})
[x_{j'_1,i'_1}\otimes \cdots \otimes x_{j'_d,i'_d}]
=
\sum_{a,b} \iota_a \kappa_b [
(\sum_{j'_1,i'_1} \alpha_{j'_1,j_{1,a}}\,\beta_{i'_1,i_{1,b}}\,x_{j'_1,i'_1})
\otimes\cdots\otimes
(\sum_{j'_d,i'_d} \alpha_{j'_d,j_{d,a}}\,\beta_{i'_d,i_{d,b}}\,x_{j'_d,i'_d})
]
=
E_3(t\boxtimes h)$.
\end{proof}

\subsection{Higher order computation}
\label{subsec:higherordercomp}

We use 0-based indexing of the rows and columns for the following discussion.
Let $R$ be a commutative semiring, let $U,V,W$ be $R$-semimodules, and let $\varphi:U\times V\to W$ be $R$-bilinear.
Let $E\in\Mat_{\ell}(U)$ and $F\in\Mat_{m}(V)$.
The Kronecker product $E\boxtimes F\in\Mat_{\ell m}(W)$ with respect to $\varphi$ is defined as
\begin{equation}\label{eq:kronactionmatrix}
[E\boxtimes F]_{i m + i',j m + j'} := \varphi([E]_{i,j},[F]_{i',j'}).
\end{equation}
In the following Claim~\ref{cla:boxtimeskronprod}, we lift Claim~\ref{cla:boxtimesaction} to matrices.
\begin{claim}\label{cla:boxtimeskronprod}
Let $R$ be a commutative semiring, and let $A\in\{\sT,\sS,\sW\}$.
Let $S'\in\Mat_m(\Symt_d)$, $S\in\Mat_\ell(\Symt_d)$, $F\in\Mat_k(A_d)$.
Then
$(S'\boxtimes S) \boxtimes F = S' \boxtimes (S \boxtimes F)$,
where the bilinear maps used for the two Kronecker products of matrices are defined in \eqref{eq:boxtimestensors} and \eqref{eq:defboxtimes}.
\end{claim}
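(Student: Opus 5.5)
The plan is to reduce the matrix statement entrywise to Claim~\ref{cla:boxtimesaction}, using only the index bookkeeping of \eqref{eq:kronactionmatrix}. With 0-based indexing, $S'\boxtimes S\in\Mat_{m\ell}(\Symt_d)$ has entries
\[
[S'\boxtimes S]_{a\ell+b,\,c\ell+e}=[S']_{a,c}\boxtimes[S]_{b,e},
\]
where the bilinear map is the tensor Kronecker product \eqref{eq:boxtimestensors}. Here $0\le a,c<m$ and $0\le b,e<\ell$. Since $\Symt_d$ is closed under $\boxtimes$, these entries are again symmetric tensors. So $(S'\boxtimes S)\boxtimes F$ is a well-defined element of $\Mat_{m\ell k}(A_d)$, with the action \eqref{eq:ASBSCS} as the bilinear map. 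Likewise, $S\boxtimes F\in\Mat_{\ell k}(A_d)$ and $S'\boxtimes(S\boxtimes F)\in\Mat_{m\ell k}(A_d)$.

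Next I will compute a generic entry of each side, writing row and column indices in mixed radix. A row index $r\in\{0,\ldots,m\ell k-1\}$ has a unique decomposition $r=(a\ell+b)k+i=a(\ell k)+(bk+i)$ with $0\le a<m$, $0\le b<\ell$, $0\le i<k$; columns are written $c(\ell k)+(ek+j)$. Using the first form of $r$, the left side has entry
\[
\big([S']_{a,c}\boxtimes[S]_{b,e}\big)\boxtimes[F]_{i,j}.
\]
Using the second form, the right side has entry $[S']_{a,c}\boxtimes[S\boxtimes F]_{bk+i,\,ek+j}$, which unfolds to
\[
[S']_{a,c}\boxtimes\big([S]_{b,e}\boxtimes[F]_{i,j}\big).
\]
The key observation is that the two decompositions are the same index. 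The two sides therefore agree entrywise by Claim~\ref{cla:boxtimesaction} applied with $s'=[S']_{a,c}$, $s=[S]_{b,e}$ and $f=[F]_{i,j}$.

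The only real care point is this mixed-radix identity $(a\ell+b)k+i=a\ell k+bk+i$, together with the uniqueness of such decompositions. Uniqueness is what guarantees that both matrices are indexed by the same triples $(a,b,i)$ and $(c,e,j)$. Claim~\ref{cla:boxtimesaction} is stated for arbitrary $s',s\in\Symt_d$ and $f\in A_d$ (its proof on basis elements extends by bilinearity), so no further obstacle remains.
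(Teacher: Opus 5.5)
Your proof is correct and follows the paper's argument: both compare a generic entry of the two sides using the index convention \eqref{eq:kronactionmatrix}, and then apply Claim~\ref{cla:boxtimesaction} entrywise. Your bookkeeping $(a\ell+b)k+i=a(\ell k)+(bk+i)$ is the correct one, whereas the paper's displayed computation writes the inner index of $S\boxtimes F$ as $\ell i'+i''$ where it should be $k i'+i''$.
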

\begin{proof}\mbox{~}

\vspace{-1.4cm}

\begin{eqnarray*}
[(S'\boxtimes S) \boxtimes F]_{k\ell i + k i' + i'',k\ell j + k j' + j''}
&=&
[(S'\boxtimes S) \boxtimes F]_{k(\ell i + i') + i'',k(\ell j + j') + j''}
\\
&=&
[S'\boxtimes S]_{\ell i + i',\ell j + j'} \boxtimes [F]_{i'',j''}
\\
&=&
([S']_{i,j}\boxtimes [S]_{i',j'}) \boxtimes [F]_{i'',j''}
\\
&\stackrel{\textup{Cla.}~\ref{cla:boxtimesaction}}{=}&
[S']_{i,j}\boxtimes ([S]_{i',j'} \boxtimes [F]_{i'',j''})
\\
&=&
[S']_{i,j}\boxtimes ([S\boxtimes F]_{\ell i'+i'',\ell j'+j''})
\\
&=&
[S'\boxtimes(S\boxtimes F)]_{k\ell i+(\ell i'+i''),k\ell j + (\ell j'+j'')}
\\
&=&
[S'\boxtimes(S\boxtimes F)]_{k\ell i+\ell i'+i'',k\ell j + \ell j'+j''}
\end{eqnarray*}

\vspace{-0.7cm}

\end{proof}

Let $R$ be a commutative semiring, $A=\sS$.
We define $\pol:\Mat(\sS_d)\to\Mat(\Symt_d)$ entrywise, i.e., $[\pol(F)]_{i,j}:=\pol(F_{i,j}).$
We define the matrix of symmetric tensors
\begin{equation}\label{eq:defsfHnd}
\textsf{H}_{n,d} := \pol(\Xi_{n,d}) \ \in \Mat_n(\Symt_d)
\end{equation}
and its bottom-right entry
\begin{equation}\label{eq:defnonsfHnd}
H_{n,d} \ := \ \pol(\Gamma\!_{n,d}) \ \in \Symt_d.
\end{equation}
We keep this specific matrix of symmetric tensors also when studying $A_d\in\{\sT_d,\sW_d\}$.

Now, let $A\in\{\sS,\sT,\sW\}$.
We define the \emph{hypercomputant matrix} as
\[
\hXi_{n,d} \ := \ \textsf{H}_{n,d} \boxtimes \Xi_{n,d}
\]
and its bottom-right entry $\hGamma_{n,d}\in A_d$ as
\[
\hGamma\!_{n,d} \ := \ H_{n,d} \boxtimes \Gamma\!_{n,d}.
\]
\begin{remark}
We have
\begin{equation}\label{eq:GammaXi}
\w(\hGamma\!_{n,d}) \leq \w(\hXi_{n,d}) \leq n^4\w(\hGamma\!_{n,d}),
\end{equation}
which can be seen by constructing an ABP for each of the $n^4$ entries of $\hXi_{n,d}$ separately,
because all entries of the matrix $\hXi_{n,d}$ are the same up to renaming variables.
\end{remark}

\begin{remark}
Note that if $A=\sS$, then $\hGamma\!_{n,d} = \Gamma\!_{n,d}^{\,\boxtimes 2}$.
In $\sW_d$ we cannot write $\Gamma\!_{n,d}^{\,\boxtimes 2}$, because there is no such multiplication. In $A_d=\sT_d$ we have $\hGamma\!_{n,d}\neq \Gamma\!_{n,d}^{\,\boxtimes 2} = \Gamma\!_{n^2,d}$, which is a less interesting tensor for our purposes.
\end{remark}

Combining \eqref{eq:Gammaij}, \eqref{eq:boxtimesaction}, and \eqref{eq:kronactionmatrix}, we see that
\[
[\hXi_{n,d}]_{an+a',bn+b'} \ = \ \sum_{\pi\in\aS_d}\sum_{\substack{
1 \leq i_1,\ldots,i_{d-1} \leq n
\\
1 \leq j_1,\ldots,j_{d-1} \leq n
}}
(x^{(1)}_{a,i_1}\boxtimes x^{(\pi(1))}_{a',j_1})
\cdot
(x^{(2)}_{i_1,i_2}\boxtimes x^{(\pi(2))}_{j_1,j_2})
\cdots
(x^{(d)}_{i_d,b}\boxtimes x^{(\pi(d))}_{j_d,b'})
\]

We end this subsection with some observations.

\begin{claim}
$\Xi_{n^2,d}\leq \hXi_{n,d}$ and $\Gamma\!_{n^2,d}\leq \hGamma\!_{n,d}$.
\end{claim}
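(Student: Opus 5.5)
The plan is to exhibit a single $T\in\End$ that only renames variables and sets some of them to zero, with $T(\hXi_{n,d})=\Xi_{n^2,d}$ up to renaming. Since such a $T$ uses only the coefficients $0_R$ and $1_R$, the argument works over every commutative semiring. The second statement will then follow by reading off the bottom-right entry.

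First I would unfold the explicit formula for $[\hXi_{n,d}]_{an+a',bn+b'}$ displayed just before the claim. Write $\mathbf x^{(k)}_{i,j}:=x^{(k)}_{i,j}\in\Var$. The variables appearing in $\hXi_{n,d}$ are the tensor products $x^{(k)}_{i,j}\boxtimes x^{(l)}_{i',j'}$, where $k$ is the layer coming from the symmetric tensor $\textsf H_{n,d}=\pol(\Xi_{n,d})$ and $l$ is the layer coming from $\Xi_{n,d}$. The factor $\pi\in\aS_d$ in that formula records how the positions of the polarized monomial are matched to the positions of the monomial of $\Xi_{n,d}$. One point needs checking. Each monomial of $[\Xi_{n,d}]_{a,b}$ consists of $d$ variables with pairwise distinct layer superscripts. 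So its polarization is the sum of all $d!$ orderings, each with coefficient exactly $1_R$, and \eqref{eq:boxtimesaction} therefore produces each matching $\pi$ exactly once with coefficient $1_R$.

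Next I would define $T$. It sends $x^{(k)}_{i,j}\boxtimes x^{(l)}_{i',j'}$ to $0$ whenever $k\neq l$. When $k=l$, it sends this variable to the variable $y^{(k)}_{in+i',\,jn+j'}$, where $y$ is the entry variable of the $n^2\times n^2$ generic matrices $Y^{(k)}$. In each summand, the $k$-th factor pairs layer $k$ with layer $\pi(k)$. Hence every summand with $\pi\neq\id$ contains a factor with $k\neq\pi(k)$ and is killed. The surviving $\pi=\id$ part equals
\[
\sum_{i_1,\ldots,i_{d-1},j_1,\ldots,j_{d-1}} y^{(1)}_{an+a',\,i_1n+j_1}\,y^{(2)}_{i_1n+j_1,\,i_2n+j_2}\cdots y^{(d)}_{i_{d-1}n+j_{d-1},\,bn+b'},
\]
which by \eqref{eq:Gammaij} is exactly $[Y^{(1)}\cdots Y^{(d)}]_{an+a',bn+b'}$. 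This holds because the pairs $(i_k,j_k)$ range bijectively over $\{0,\ldots,n-1\}^2\cong\{0,\ldots,n^2-1\}$ via $i n+j$. Hence $T(\hXi_{n,d})=\Xi_{n^2,d}$ after renaming, and so $\Xi_{n^2,d}\leq\hXi_{n,d}$.

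For $\Gamma\!_{n^2,d}\leq\hGamma\!_{n,d}$, I would apply the same $T$ to the bottom-right entry. The index pair $(a,a')=(b,b')=(n-1,n-1)$ in $0$-based indexing gives the index $n^2-1$, so $T(\hGamma\!_{n,d})=[\Xi_{n^2,d}]_{n^2-1,n^2-1}=\Gamma\!_{n^2,d}$. Alternatively, one could invoke \eqref{eq:Gammaindeppos}.

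The main obstacle I expect is the bookkeeping in the first step. One must make sure that the coefficient of the $\pi=\id$ term is exactly $1_R$, which uses distinctness of the layers in each monomial (no division by stabilizer sizes is needed). One must also make sure that the $\boxtimes$ variable ordering (symmetric-tensor index first) is consistent with the Kronecker indexing $an+a'$ of \eqref{eq:kronactionmatrix}.
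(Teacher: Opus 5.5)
Your proposal is correct and follows the paper's own proof: the variables $x^{(k)}_{i,j}\boxtimes x^{(l)}_{i',j'}$ with $k\neq l$ are set to zero, and each $x^{(k)}_{i,j}\boxtimes x^{(k)}_{i',j'}$ is renamed to the $(in+i',\,jn+j')$ entry of a generic $n^2\times n^2$ matrix in layer $k$. The only differences are in presentation. You do the matrix $\Xi_{n^2,d}\leq\hXi_{n,d}$ first and read off the bottom-right entry, whereas the paper does $\Gamma$ and calls $\Xi$ analogous. You also explicitly check that the surviving $\pi=\mathrm{id}$ term has coefficient exactly $1_R$, which the paper leaves implicit.
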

\begin{proof}
We prove the claim for $\Gamma$, the proof for $\Xi$ being completely analogous.
For this proof, we let the variables $x_{i,j}^{(k)}$ in $\Gamma\!_{n,d}$ be indexed at zero, i.e., $0\leq i<n$ and $0\leq j<n$.
Consider $\hGamma\!_{n,d}$ and map the variables $x_{i,j}^{(k)}\boxtimes x_{i',j'}^{(k')}$ to zero if $k\neq k'$.
Then rename the variables $x_{i,j}^{(k)}\boxtimes x_{i',j'}^{(k)}$
to $x_{in+i',jn+j'}^{(k)}$ to obtain $\Gamma\!_{n^2,d}$.
\end{proof}

\begin{claim}
\label{cla:whGammand}
For fixed $d\geq 2$ we have $\w(\hXi_{n,d})=\Omega(n^2)$
and $\w(\hGamma_{n,d})=\Omega(n^2)$,
with the exception of $A_2=\sW_2$.
\end{claim}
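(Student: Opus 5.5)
The plan is to combine the preceding claim with an essential-variables count, in the same way as Example~\ref{exa:Xiw}. The preceding claim gives $\Xi_{n^2,d}\leq \hXi_{n,d}$ and $\Gamma\!_{n^2,d}\leq \hGamma\!_{n,d}$. Since $\leq$ is transitive, any $T$ with $T(\Xi_{w,d})=\hXi_{n,d}$ composes with the restriction to give $\Xi_{n^2,d}\leq\Xi_{w,d}$, where $w=\w(\hXi_{n,d})$. This reduces the claim for $\hXi$ to a lower bound on $\w(\Xi_{n^2,d})$ that is preserved under restriction.

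By Example~\ref{exa:Xiw}, $\w(\Xi_{n^2,d})=n^2$. The argument there shows $\rk(J(\Xi_{n^2,d}))=n^4d$, and for any $F\leq \Xi_{w,d}$ the bound \eqref{eq:Jmatrixformatrices} gives $\rk(J(F))\leq \rk(J(\Xi_{w,d}))\leq w^2d$. Hence $n^4d\leq w^2d$, that is, $w\geq n^2$. So $\w(\hXi_{n,d})\geq n^2$.

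For $\hGamma_{n,d}$ I would use Claim~\ref{cla:widthfviaGamma}, by which $\w(\hGamma_{n,d})$ is the least $w$ with $\hGamma_{n,d}\leq\Gamma\!_{w,d}$. Then $\Gamma\!_{n^2,d}\leq\Gamma\!_{w,d}$, and \eqref{eq:Jmatrix} gives $\rk(J(\Gamma\!_{n^2,d}))\leq\rk(J(\Gamma\!_{w,d}))\leq 2w+(d-2)w^2$, the number of variables of $\Gamma\!_{w,d}$ (a single row in layer 1 and a single column in layer $d$). It remains to show $\rk(J(\Gamma\!_{N,d}))=2N+(d-2)N^2$ for $N=n^2$. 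This follows as in Example~\ref{exa:Xiw}. For each variable $x_{\mathbf i}$ occurring in $\Gamma\!_{N,d}$, every monomial of $\partial_{\mathbf i}\Gamma\!_{N,d}$ determines $\mathbf i$, because the missing layer and its two endpoint indices can be read off the neighbours. So $J(\Gamma\!_{N,d})$ contains a permutation submatrix, and Lemma~\ref{lem:rankofid} gives full rank. For fixed $d\geq2$ this yields $(d-2)w^2+2w\geq (d-2)n^4+2n^2$. If $d\geq3$, this gives $w=\Omega(n^2)$. If $d=2$, it gives $2w\geq 2n^2$, so again $w\geq n^2$.

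The main obstacle is the identification of $J$ and its rank in the three algebras. In $\sT$ and $\sS$ the reconstruction argument is clean. In $\sW$ the derivative \eqref{eq:partialiw} introduces transpositions, and one has to check that distinct variables yield independent elements $\partial_{\mathbf i}(\cdot)$, with no cancellation over a semiring without negatives. Monomials in the layered polynomial involve distinct variables from distinct layers, so wedge products of them are nonzero standard basis elements. The permutation-submatrix argument should therefore go through once the correct standard basis of $\sW_{d-1}$ is fixed.

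For $d=2$ in $\sW$, $J$ is undefined in general, which is exactly the excluded case $A_2=\sW_2$. So I would first verify the rank statement carefully for $\sW_d$ with $d\geq3$ and then apply it uniformly.
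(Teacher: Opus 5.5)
Your proof is correct, but it is organized differently from the paper's. The paper computes the number of essential variables of the hypercomputant objects directly, $\rk(J(\hXi_{n,d}))=n^4d^2$ and $\rk(J(\hGamma\!_{n,d}))=(2n+(d-2)n^2)^2$. It uses the same ``reconstruct the index from any monomial of the derivative'' argument as in Example~\ref{exa:Xiw}, and compares these counts with those of $\Xi_{w,d}$ and $\Gamma\!_{w,d}$. This gives $w\geq n^2\sqrt{d}$ and $(2n+(d-2)n^2)^2\leq 2w+(d-2)w^2$.

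You instead first pass to the $\pi=\id$ part via the preceding claim, $\Xi_{n^2,d}\leq\hXi_{n,d}$ and $\Gamma\!_{n^2,d}\leq\hGamma\!_{n,d}$, and then use that width is monotone under restriction. For $\hXi$ this is simply $\w(\hXi_{n,d})\geq\w(\Xi_{n^2,d})=n^2$ by Example~\ref{exa:Xiw}. For $\hGamma$ it is the same argument with Claim~\ref{cla:widthfviaGamma} and the count $2N+(d-2)N^2$ for $\Gamma\!_{N,d}$.

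Your route has two advantages. The rank computation is only needed for the plain computant, which the paper already supplies, rather than for the Kronecker objects. It also gives the $d$-uniform bound $w\geq n^2$, the same bound as Claim~\ref{cla:nsquaredlowerbound}. The paper's route has a better dependence on $d$, roughly an extra factor $\sqrt{d}$, which is irrelevant for fixed $d$.

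On the $\sW$ issue you flag: your expectation that wedge monomials in distinct variables are nonzero basis elements is not true over every commutative semiring. Expanding $(x+y)\wedge(x+y)=0$ and cancelling $x\wedge x=y\wedge y=0$ gives $x\wedge y+y\wedge x=0$. So if $1+1=1$ in $R$ (e.g.\ the Boolean semiring), then
$x\wedge y=x\wedge y+(x\wedge y+y\wedge x)=(1+1)\,x\wedge y+y\wedge x=x\wedge y+y\wedge x=0$.
Hence $\sW_d=0$ for all $d\geq 2$, and the statement fails for $A=\sW$ over such $R$.

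The paper's proof only says ``analogous'', and Example~\ref{exa:Xiw} does the same, so it shares this blind spot. This is therefore a limitation of the statement in that generality, not a weakness of your route relative to the paper's. Over commutative rings, using the basis of increasing wedge monomials (the derivative then produces signs $\pm1$), your permutation-submatrix argument goes through.
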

\begin{proof}
This is analogous to the proof of Example~\ref{exa:Xiw}.
We start with the observation that for $\Xi_{n,d}$, $\hXi_{n,d}$, $\Gamma\!_{n,d}$, and $\hGamma\!_{n,d}$ the number of essential variables equals the number of variables used to define them,
which is seen as follows.
For every one of the $n^4 d^2$ variables $x_{\mathbf{i}}$
we can reconstruct $\mathbf i$
from any monomial in $\partial_{\mathbf i} \Xi_{n,d}$.
Hence $J(\hXi_{n,d})$ contains a $(n^4 d^2)\times(n^4 d^2)$ permutation matrix as a submatrix,
so the claim follows with Lemma~\ref{lem:rankofid}.
The proofs for $\Xi_{n,d}$, $\Gamma\!_{n,d}$, and $\hGamma\!_{n,d}$ are analogous.

The number of essential variables of $\Xi_{w,d}$ is $w^2 d$,
and the number of essential variables of $\hXi_{n,d}$ is $n^4 d^2$.
If $\w(\hXi_{n,d})\leq w$, then $\hXi_{n,d}\leq \Xi_{w,d}$,
and hence by \eqref{eq:Jmatrixformatrices} we have
$n^4 d^2\leq w^2 d$, thus $w\geq n^2 d^{1/2}$.

For the second part we use Claim~\ref{cla:widthfviaGamma}.
The number of essential variables of $\Gamma\!_{w,d}$ is $2w+(d-2)w^2$,
and the number of essential variables of $\hGamma\!_{n,d}$ is $(2n+(d-2)n^2)^2$.
If $\w(\hGamma\!_{n,d})\leq w$, then by Claim~\ref{cla:widthfviaGamma}
we have $\hGamma\!_{n,d}\leq \Gamma\!_{w,d}$,
hence by \eqref{eq:Jmatrix} we have
$(2n+(d-2)n^2)^2 \leq 2w+(d-2)w^2$.
Therefore, for fixed $d$ we have $w\in\Omega(n^2)$.
\end{proof}

\begin{example}
\label{exa:smallhyperwidth}
Let $A=\sS$.
We use Proposition~\ref{pro:boxtimesequivariance} to show that the examples in Example~\ref{exa:boxtimes} are restrictions of $\hGamma_{n,d}$ as follows.
\begin{itemize}
\item The permanent: $\mon_d\boxtimes\mon_d = \hGamma\!_{1,d}$
\item The hyperpermanent: $\mon_d^{\boxtimes m} = \Gamma\!_{1,d}^{\,\boxtimes m}$
\item The rectangular permanent: $\mon_d\boxtimes e_{n,d} \leq \hGamma\!_{n,d}$, because $e_{n,d} \leq \Gamma\!_{n,d}$, see Example~\ref{exa:elemsym}
\item The mixed discriminant: $\det_d\boxtimes\mon_d \leq \hGamma\!_{\w(\det_d),d}$.
\item Cayley's first hyperdeterminant: $\det_d^{\boxtimes m} \leq \Gamma\!_{\w(\det_d),d}^{\,\boxtimes m}$.
\end{itemize}
We will see more examples in~\S\ref{sec:higherorderwidth} and in~\S\ref{sec:boolean}.
\end{example}

\subsection{Valiant's conjecture}
\label{subsec:Valiantsconj}

Valiant's original conjecture ``$\VBP\neq\VNP$'' (over characteristic different from 2, this is also the \emph{determinant vs permanent} conjecture) is equivalent to the following conjecture.
\begin{conjecture}[Valiant's conjecture]\label{conj:flagship}
Let $R$ be a field and let $A=\sS$.
Valiant's conjecture for $R$ is
\[
\w(\hXi_{n,d})\notin\poly(n,d).\qedhere
\]
\end{conjecture}
By \eqref{eq:GammaXi}, Conjecture~\ref{conj:flagship} is equivalent to $\w(\hGamma\!_{n,d})\notin\poly(n,d)$.
Note that Conjecture~\ref{conj:flagship} depends on~$R$ and $A$ and the question can also be asked when $R$R is not a field and/or when $A\neq\sS$.
The case $R=\IC$ is discussed in \S\ref{sec:PNP}.
The statement $\w(\hGamma\!_{n,d})\notin\poly(n,d)$ is true for $A=\sT$ when $R$ is an arbitrary commutative semiring, see \S\ref{subsec:noncomm}.
The statement $\w(\hGamma\!_{n,d})\notin\poly(n,d)$ is also true for $A=\sS$ when $R$ is a zerosumfree commutative semiring, for example $R=\IR_{\geq 0}$, see \S\ref{subsec:monotone}.
So far there are no results involving $A=\sW$.
The following proposition shows that it is sufficient to study second-order computation only and not also higher orders.

\begin{proposition}\label{pro:collapse}
$\w(\hXi_{n,d})\in\poly(n,d)$ if and only if for every $m$ we have $\w(\textup{\textsf{H}}_{n,d}^{\boxtimes m}\boxtimes \Xi_{n,d})\in\poly(n,d)$.
\end{proposition}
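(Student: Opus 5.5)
The ``if'' direction is immediate: take $m=1$, since $\textsf{H}_{n,d}^{\boxtimes 1}\boxtimes\Xi_{n,d}=\hXi_{n,d}$. For the ``only if'' direction, I would argue by induction on $m$, reading the width of $\textsf{H}_{n,d}^{\boxtimes m}\boxtimes \Xi_{n,d}$ off a restriction of a single hypercomputant matrix with a larger width parameter. Concretely, suppose $\w(\hXi_{n,d})\le p(n,d)$ for a polynomial $p$. This means $\hXi_{n,d}\le \Xi_{p(n,d),d}$. By Lemma~\ref{lem:Xirestrict}, this restriction is witnessed by some $T\in\End$ with $T(\Xi_{p(n,d),d})=\hXi_{n,d}$. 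In the other direction, $\Xi_{n,d}\le \Xi_{p,d}$ by Claim~\ref{cla:GammvslargerGamma}, padded with zero rows and columns.

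The key step is a monotonicity statement for the Kronecker action on matrices, analogous to Proposition~\ref{pro:boxtimesequivarianceaction}. If $S\le S'$ entrywise via some $E_1\in\End$ (possibly also with left and right matrix multiplications), and $F\le F'$ via $(T_{\rm left},T,T_{\rm right})$, then $S\boxtimes F\le S'\boxtimes F'$. The witness is the map $E_3$ built in the proof of Proposition~\ref{pro:boxtimesequivarianceaction}, together with Kronecker products of the left and right matrices. The entry formula \eqref{eq:kronactionmatrix} makes this follow entry by entry, using bilinearity.

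Using Claim~\ref{cla:boxtimeskronprod}, I write
\[
\textsf{H}_{n,d}^{\boxtimes m}\boxtimes \Xi_{n,d}
=\textsf{H}_{n,d}^{\boxtimes (m-1)}\boxtimes(\textsf{H}_{n,d}\boxtimes\Xi_{n,d})
=\textsf{H}_{n,d}^{\boxtimes (m-1)}\boxtimes\hXi_{n,d}.
\]
Applying monotonicity in the second argument gives
\[
\textsf{H}_{n,d}^{\boxtimes (m-1)}\boxtimes\hXi_{n,d}\le \textsf{H}_{n,d}^{\boxtimes(m-1)}\boxtimes \Xi_{p,d}.
\]
Next I need $\textsf{H}_{n,d}\le \textsf{H}_{p,d}$ as matrices of symmetric tensors. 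This holds because $\pol$ commutes with $\End$, so the zero-padding restriction of Claim~\ref{cla:GammvslargerGamma} transfers through $\pol$. Monotonicity in the first argument then gives
\[
\textsf{H}_{n,d}^{\boxtimes(m-1)}\boxtimes \Xi_{p,d}\le \textsf{H}_{p,d}^{\boxtimes(m-1)}\boxtimes\Xi_{p,d}.
\]
By the induction hypothesis applied at size $p(n,d)$, the width of the right-hand side is at most $q_{m-1}(p(n,d),d)$. This is polynomial for each fixed $m$, since a composition of polynomials is a polynomial. Width is monotone under $\le$ by transitivity of restriction, which closes the induction.

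The main obstacle is a bookkeeping mismatch between matrix sizes. $\hXi_{n,d}$ has size $n^2$, while $\Xi_{p,d}$ has size $p$, so the statement $\hXi_{n,d}\le\Xi_{p,d}$ must be read with rectangular padding via $T_{\rm left}$ and $T_{\rm right}$. Also, the Kronecker action of $\textsf{H}^{\boxtimes(m-1)}$ on the transformed matrix must genuinely commute with the left and right matrix actions. I would first check this commutation directly from \eqref{eq:kronactionmatrix}: left multiplication by $T_{\rm left}$ on the second factor becomes left multiplication by $I\boxtimes T_{\rm left}$, and $\boxtimes$ is bilinear in each entry. Alternatively, Lemma~\ref{lem:Xirestrict} lets me reduce to pure $\End$ restrictions whenever the sizes agree.
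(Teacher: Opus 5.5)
Your argument is correct and is essentially the paper's proof: the paper likewise splits off one factor via Claim~\ref{cla:boxtimeskronprod}, replaces $\hXi$ by a larger $\Xi$ via Proposition~\ref{pro:boxtimesequivarianceaction}, pads $\textsf{H}_{n,d}$ up to the larger size, and iterates $m$ times with $p=n^kd^k$, which is your induction on $m$ written out explicitly. Your treatment of how the left and right matrix actions pass through $\boxtimes$ (via $I\otimes T_{\textup{left}}$) spells out a step the paper leaves implicit. Also note that you may assume $p(n,d)\ge n$, so that the padding step $\textsf{H}_{n,d}\le\textsf{H}_{p,d}$ makes sense.
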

\begin{proof}
One direction is clear.
For the other direction, fix any $m$ and consider $\textsf{H}_{n,d}^{\boxtimes m}\boxtimes \Xi_{n,d}$.
Let $k\in\IN$ such that $\forall d,n: \w(\hXi_{n,d})\leq n^k d^k$,
in other words $\hXi_{n,d}\leq \Xi_{n^k d^k,d}$.
We will mark with $(\ast)$ when we use $\textsf{H}_{n,d}\leq \textsf{H}_{n',d}$ for $n'\geq n$, which is obtained analogously to Claim~\ref{cla:GammvslargerGamma}.
\begin{eqnarray*}
\textsf{H}_{n,d}^{\boxtimes m}\boxtimes \Xi_{n,d}
&\stackrel{\textup{Cla.}~\ref{cla:boxtimesaction}}{=}&
\textsf{H}_{n,d}^{\boxtimes m-1}\boxtimes (\textsf{H}_{n,d}\boxtimes \Xi_{n,d})
\ \stackrel{\textup{Pro.}~\ref{pro:boxtimesequivarianceaction}}{\leq} \ 
\textsf{H}_{n,d}^{\boxtimes m-1}\boxtimes \Xi_{n^k d^k,d}
\\
& \stackrel{\textup{Cla.}~\ref{cla:boxtimesaction}}{=} & 
\textsf{H}_{n,d}^{\boxtimes m-2}\boxtimes (\textsf{H}_{n,d}\boxtimes \Xi_{n^k d^k,d})
\ \stackrel{(\ast)}{\leq} \ 
\textsf{H}_{n,d}^{\boxtimes m-2}\boxtimes (\textsf{H}_{n^k d^k,d}\boxtimes \Xi_{n^k d^k,d})
\\
&\stackrel{\textup{Pro.}~\ref{pro:boxtimesequivarianceaction}}{\leq}&
\textsf{H}_{n,d}^{\boxtimes m-2}\boxtimes \Xi_{n^{k^2} d^{(k+1)^2-1},d}
\ \leq \ 
\textsf{H}_{n,d}^{\boxtimes m-3}\boxtimes \Xi_{n^{k^3} d^{(k+1)^3-1},d}
\\
&\leq& \cdots \ \ \leq \ \Xi_{n^{k^m} d^{(k+1)^m-1},d} \ .
\end{eqnarray*}
Since $k$ and $m$ are constant, this means $\w(\textsf{H}_{n,d}^{\boxtimes m}\boxtimes \Xi_{n,d})\in\poly(n,d)$.
We remark that for $A=\sS$ a better construction would split $\Xi_{n,d}^{\boxtimes m}=\Xi_{n,d}^{\boxtimes m/2}\boxtimes \Xi_{n,d}^{\boxtimes m/2}$ recursively, but this does not work for $A\in\{\sT,\sW\}$.
\end{proof}

If Conjecture~\ref{conj:flagship} is false,
then from Example~\ref{exa:smallhyperwidth}
it follows that all examples in Example~\ref{exa:boxtimes}
have small width.
Moreover, all examples in the following \S\ref{sec:higherorderwidth} then also have small width.
More examples are given in \S\ref{sec:PNP}.

\subsection{Hamiltonian cycles and cliques}\label{sec:higherorderwidth}
In this section we prove that the Hamiltonian cycle polynomial and the Clique polynomial are restrictions of $\hGamma\!_{n,d}$.
A general construction that gives the polynomials only up to a factor of low degree is presented in \S\ref{sec:boolean}.

\subsubsection*{Hamiltonian cycles}
Let $R$ be a semiring, and let $A\in\{\sT,\sS,\sW\}$.
We define the Hamiltonian cycle algebra element, see \cite{Val79} where the variant without superscripts is discussed:
\begin{eqnarray*}
\HC_d &:=& \sum_{\substack{\phi\in\aS_{d-1}}} [x^{(1)}_{d,\phi(1)} \otimes x^{(2)}_{\phi(1),\phi(2)} \otimes\cdots\otimes x^{(d)}_{\phi(d-1),d}]
\ \in A_d.
\end{eqnarray*}

Note $H_{1,d}=\pol(\mon_d)$ up to renaming of variables.
\begin{proposition}
\label{pro:HCleqGamma}
$\HC_d \ \leq \ H_{1,d}\boxtimes\Gamma\!_{d,d} \ \leq \ \hGamma\!_{d,d}$.
\end{proposition}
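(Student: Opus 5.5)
The plan is to prove the two inequalities separately, starting with the easier right-hand one. For $H_{1,d}\boxtimes\Gamma\!_{d,d}\leq\hGamma\!_{d,d}=H_{d,d}\boxtimes\Gamma\!_{d,d}$, I will invoke Proposition~\ref{pro:boxtimesequivarianceaction} with $f=h=\Gamma\!_{d,d}$. It then suffices to show $H_{1,d}\leq H_{d,d}$ as symmetric tensors. This follows exactly as in Claim~\ref{cla:GammvslargerGamma}: the map $T\in\End$ that sends all variables $x^{(k)}_{i,j}$ with $i\neq n$ or $j\neq n$ to zero commutes with $\pol$, since $\pol$ is $\End$-equivariant by construction. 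Hence it sends $\pol(\Gamma\!_{d,d})$ to $\pol(\Gamma\!_{1,d})=H_{1,d}$, up to renaming variables.

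For the left inequality, I first compute $H_{1,d}\boxtimes\Gamma\!_{d,d}$ explicitly. We have $\Gamma\!_{1,d}=x^{(1)}\cdots x^{(d)}$ with $d$ distinct variables, so
\[
H_{1,d}=\sum_{\pi\in\aS_d} x^{(\pi(1))}\otimes\cdots\otimes x^{(\pi(d))}.
\]
By \eqref{eq:boxtimesaction},
\[
H_{1,d}\boxtimes\Gamma\!_{d,d}=\sum_{\pi\in\aS_d}\sum_{i_1,\ldots,i_{d-1}}\big[(x^{(\pi(1))}\boxtimes x^{(1)}_{d,i_1})\otimes(x^{(\pi(2))}\boxtimes x^{(2)}_{i_1,i_2})\otimes\cdots\otimes(x^{(\pi(d))}\boxtimes x^{(d)}_{i_{d-1},d})\big].
\]
The new variable $x^{(p)}\boxtimes x^{(k)}_{i,j}$ records the step $k$, the edge $(i,j)$, and a ``timestamp'' $p$. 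The idea, as in Valiant's permanent-to-Hamiltonian-cycle style arguments, is to use the timestamp to force the walk to visit distinct vertices. Concretely, I define $T\in\End$ by
\[
T\big(x^{(p)}\boxtimes x^{(k)}_{i,j}\big)=x^{(k)}_{i,j}\ \text{ if } j=p,\quad\text{and } 0 \text{ otherwise},
\]
where vertex $d$ is identified with timestamp $d$. Under $T$, a term survives only if $i_k=\pi(k)$ for $k<d$ and $\pi(d)=d$. So $\pi$ restricted to $\{1,\ldots,d-1\}$ is a permutation $\phi\in\aS_{d-1}$ with $i_k=\phi(k)$, and each pair $(\pi,\mathbf i)$ with nonzero image corresponds bijectively to exactly one $\phi$. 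The surviving sum is therefore exactly $\sum_{\phi}[x^{(1)}_{d,\phi(1)}\otimes\cdots\otimes x^{(d)}_{\phi(d-1),d}]=\HC_d$. There are no coefficients or cancellations, so the argument works over any semiring and for every $A\in\{\sT,\sS,\sW\}$, since $T$ acts factorwise and commutes with the quotients.

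The main obstacle I expect is bookkeeping rather than ideas. First, I must match the index conventions of \eqref{eq:boxtimesaction}, which puts the symmetric-tensor index first and makes the orbit sum run over distinct lists. Because the $d$ variables $x^{(1)},\ldots,x^{(d)}$ are distinct, the orbit is indeed all of $\aS_d$ with multiplicity one. Second, I must check that the bottom-right entry uses endpoints $n=d$ on both ends, matching $\HC_d$'s start and end vertex $d$. If the indexing convention instead placed the timestamp on $i$, I would key $T$ on the source index, with the special case $k=1$ handled by $\pi(1)$ and vertex $d$ treated symmetrically.
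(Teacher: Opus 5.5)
Your proof is correct and follows the paper's argument: it uses the same endomorphism, keeping $x^{(p)}\boxtimes x^{(k)}_{i,j}$ exactly when the timestamp $p$ equals the target vertex $j$ (the paper's $x^{(k)}_{a,b,c}\mapsto x^{(k)}_{b,c}$ iff $a=c$), so that the survivors are indexed by $\pi\in\aS_d$ with $\pi(d)=d$. For the second inequality, which the paper calls obvious, you supply the justification via $H_{1,d}\leq H_{d,d}$ and Proposition~\ref{pro:boxtimesequivarianceaction}.
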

\begin{proof}
The second restriction is obvious. We treat the first restriction.
We start with writing $\Gamma\!_{d,d} \in A_d$ as
\[
\Gamma\!_{d,d} \ = \ \sum_{\phi:[d-1]\to[d]} [x^{(1)}_{d,\phi(1)} \otimes x^{(2)}_{\phi(1),\phi(2)} \otimes x^{(3)}_{\phi(2),\phi(3)} \otimes\cdots\otimes x^{(d)}_{\phi(d-1),d}].
\]
Hence,
\[
\pol(\mon_d)\boxtimes \Gamma\!_{d,d} = \sum_{\substack{\phi:[d-1]\to[d]\\\pi\in\aS_d}} [x_{\pi(1),d,\phi(1)}^{(1)} \otimes x_{\pi(2),\phi(1),\phi(2)}^{(2)} \otimes\cdots\otimes x_{\pi(d),\phi(d-1),d}^{(d)}].
\]
Let $T\in\End$ as follows:
$T(x^{(k)}_{a,b,c})=x^{(k)}_{b,c}$ if $a=c$, and $T(x^{(k)}_{a,b,c})=0$ otherwise.
We calculate
\[
T(\pol(\mon_d)\boxtimes \Gamma\!_{d,d})
=
\sum_{\substack{\phi\in\aS_d\\\phi(d)=d}} [x_{d,\phi(1)}^{(1)} \otimes x_{\phi(1),\phi(2)}^{(2)} \otimes\cdots\otimes x_{\phi(d-1),d}^{(d)}]
\ = \ \HC_d.
\qedhere
\]
\end{proof}

The statement ``$\w(\HC_d) \notin \poly(d)$ over $\sS$ for a commutative semiring $R$'' is equivalent to Conjecture~\ref{conj:flagship},
which was proved in~\cite{Val79}, where it is stated only for fields~$R$.
The corresponding Hamiltonian cycle decision is problem 9 of Karp's original 21 NP-complete problems~\cite{Karp72}.

\subsubsection*{Cliques}
Let $R$ be a commutative semiring, and let $A=\sS$.
One can find different variants of the so-called Clique polynomial in the literature,
for example
the variant in \cite{Sch76} is uncolored without loops,
the variant in \cite{BE19} is uncolored with loops,
the variant in \cite{KPR23} is colored without loops.
We use a colored variant with loops:
\begin{equation}
\label{eq:defclique}
{}_k\Clique_{n}
\ := \ \sum_{1 \leq v_1 < v_2 < \cdots < v_k \leq n} \, \prod_{\substack{1\leq i\leq j\leq k}} x_{v_i,i,j,v_j}
\ \in \ \sS_{\binom{k+1}{2}}.
\end{equation}
For convenience, we also use the following notation.
\[
\Cliqued_{n,d} := \begin{cases}
{}_k\Clique_{n} & \textup{if } d = \binom{k+1}{2},
\\
0 & \textup{otherwise}.
\end{cases}
\]

\begin{proposition}\label{pro:wclique}
$\Cliqued_{n,d} \leq \hGamma\!_{n,d}$.
\end{proposition}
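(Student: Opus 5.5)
For $d$ not of the form $\binom{k+1}{2}$ we have $\Cliqued_{n,d}=0$, and $0\leq\hGamma\!_{n,d}$ via the zero map. So assume $d=\binom{k+1}{2}$. The plan is to write ${}_k\Clique_n$ as a restriction of a Kronecker product $f\boxtimes h$ with $f\leq\Gamma\!_{n,d}$ and $h\leq\Gamma\!_{n,d}$. Proposition~\ref{pro:boxtimesequivariance} then gives $f\boxtimes h\leq\Gamma\!_{n,d}\boxtimes\Gamma\!_{n,d}=\hGamma\!_{n,d}$ (recall $A=\sS$ here), and transitivity of $\leq$ finishes the proof.

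The point of the split is to separate the two endpoints of each edge factor $x_{v_i,i,j,v_j}$. Index the $d$ factors by the pairs $(i,j)$ with $1\leq i\leq j\leq k$. The first factor records the first endpoint together with the pair label:
\[
f \ := \ \sum_{1\leq v_1<\cdots<v_k\leq n}\ \prod_{i=1}^{k}\prod_{j=i}^{k} y_{i,j,v_i}.
\]
The second factor records the second endpoint:
\[
h \ := \ \sum_{1\leq w_1<\cdots<w_k\leq n}\ \prod_{j=1}^{k}\prod_{i=1}^{j} z_{i,j,w_j}.
\]
Both are layered analogues of the elementary symmetric polynomial, and I would show $\w(f)\leq n$ and $\w(h)\leq n$ in the style of Example~\ref{exa:elemsym}.

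For $f$, order the $d$ layers lexicographically by $(i,j)$. Keep $n$ vertices per layer, the vertex index being the currently chosen $v_i$. Within a block of fixed $i$, edges go straight across with label $y_{i,j,v}$. At the transition from block $i$ to block $i+1$, edges must enforce $v_{i+1}>v_i$. This can be done with the constant upward edges of Figure~\ref{fig:cprogh}, which are then removed as in the ABP claim. The same construction works for $h$, ordering layers by $j$ and then $i$. By the ABP characterization of width and Claim~\ref{cla:widthfviaGamma}, this gives $f,h\leq\Gamma\!_{n,d}$. Working out the exact bookkeeping of this layered ABP, in particular the strict-increase transitions, is the step I expect to need the most care, but it is routine.

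Next I compute $f\boxtimes h$ from \eqref{eq:defboxtimes}. It is a sum over $v$, $w$ and over bijections $\pi$ between the factors of the two monomials, with variables $y_{i,j,v}\boxtimes z_{i',j',w}$. Apply $T\in\End$ that sends such a variable to $0$ unless $(i,j)=(i',j')$. If $(i,j)=(i',j')$ with $i<j$, it is sent to $x_{v,i,j,w}$. If $i=j$, it is sent to $x_{v,i,i,w}$ when $v=w$ and to $0$ otherwise. Since the pair labels are all distinct, only the matching $\pi$ that pairs each $(i,j)$ with itself survives, with coefficient $1$. The diagonal (loop) factors force $v_i=w_i$ for all $i$, which collapses the double sum to a single sum over $v_1<\cdots<v_k$. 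The result is $\prod_{i\leq j}x_{v_i,i,j,v_j}$ summed over $v_1<\cdots<v_k$, which is exactly ${}_k\Clique_n$. This is precisely where the loops in the colored variant \eqref{eq:defclique} are used.
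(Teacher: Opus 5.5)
Your proposal is correct and is essentially the paper's proof. You split each $x_{v_i,i,j,v_j}$ into a first-endpoint half and a second-endpoint half, bound both factors' widths by $n$, and apply Proposition~\ref{pro:boxtimesequivariance}. Then you kill mismatched pair labels and use the loop variables $(i,i)$ to force the two vertex sequences to agree.

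The only inessential difference is that the paper lets the first-endpoint indices $a_1,\ldots,a_k$ range freely. Its first factor is then $\prod_{i=1}^k\sum_{a_i=1}^n\prod_{j\geq i}x_{a_i,(i,j)}$, with width at most $n$ immediately by Claim~\ref{cla:duality}. You impose $v_1<\cdots<v_k$ on both factors and so need the layered-ABP argument twice, but the loop restriction makes both versions yield exactly ${}_k\Clique_n$.
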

\begin{proof}
Let $d := \binom{k+1}{2}$, because otherwise there is nothing to prove.
For additional clarity, we use the auxiliary variable names
$x_{a,(i,j)} := x_{a,i,j}$ and $x_{(i,j),b} := x_{i,j,b}$,
and we write $x_{a,(i,j),(i',j'),b} := x_{a,(i,j)}\boxtimes x_{(i,j),b}$.
The variable $x_{a,(i,j),(i,j),b}$ will later be mapped to the variable $x_{a,i,j,b}$ in the clique polynomial.
We show that
\begin{equation}\label{eq:cliquehGamma}
{}_k\Clique_{n} \ \leq \ 
\underbrace{\left(\sum_{1 \leq a_1, \ldots, a_k \leq n}\,\prod_{1 \leq i \leq j \leq k} x_{a_i,(i,j)} \right)}_{=:\,h_{n,d}}
\boxtimes
\underbrace{\left(\sum_{1 \leq b_1 < \cdots < b_k \leq n}\,\prod_{1 \leq i \leq j \leq k} x_{(i,j),b_j} \right)}_{=:\,f_{n,d} \ = \ {}_k f_n}
\ \leq \ \hGamma\!_{n,d}
\end{equation}
in a two-step process (note the strict inequality between the $b_j$, whereas the $a_i$ can be in any order).
We first observe that
\[
h_{n,d} \ = \
\prod_{i=1}^k \, \sum_{a_i=1}^n \, \prod_{i \leq j\leq k} x_{a_i,(i,j)},
\]
which immediately shows $\w(h_{n,d})\leq n$, see Claim~\ref{cla:duality}.
Moreover,
\[
{}_k f_{n} = \left(\prod_{i=1}^{k} x_{(i,k),n}\right) \cdot {}_{k-1}f_{n-1} \, + \, {}_k f_{n-1},
\]
which shows that $\w(f_{n,d})\leq n$.
These two width bounds together prove the right inequality of \eqref{eq:cliquehGamma}, see Proposition~\ref{pro:boxtimesequivariance}.
We now prove the left inequality of \eqref{eq:cliquehGamma}.
Let $T\in\End$ with $T(x_{a,(i,j),(i',j'),b})=0$ if $i\neq j'$ or $j\neq j'$,
and $T(x_{a,(i,j),(i,j),b})=x_{a,(i,j),(i,j),b}$.
We expand
\[
T\left(\left(\sum_{1 \leq a_1 ,\ldots, a_k \leq n}\,\prod_{1 \leq i \leq j \leq k} x_{a_i,(i,j)} \right)
\boxtimes
\left(\sum_{1 \leq b_1 < \cdots < b_k \leq n}\,\prod_{1 \leq i \leq j \leq k} x_{(i,j),b_j} \right)\right)
\]
\[
\mbox{~}\qquad =
\sum_{\substack{1 \leq a_1 , \ldots , a_k \leq n\\1 \leq b_1 < \cdots < b_k \leq n}}
\,
\prod_{1 \leq i \leq j \leq k} x_{a_i,(i,j),(i,j),b_j}
\]
Note that every monomial involves a variable $x_{a_i,(i,i),(i,i),b_i}$ for each $i$.
Therefore, if we map every variable $x_{b,(i,i),(i,i),b'}$ to zero for which $b\neq b'$,
then for each monomial that is not mapped to zero we have $\forall i:a_i=b_i$.
so that all variables are of the form $x_{b_i,(i,j),(i,j),b_j}$.
The resulting polynomial is
\[
\sum_{1 \leq b_1 < \cdots < b_k \leq n}
\,
\prod_{1 \leq i \leq j \leq k} x_{b_i,(i,j),(i,j),b_j}.
\]
Renaming $x_{b,(a,c),(a,c),b'}$ to $x_{b,a,c,b'}$ gives the desired ${}_k\Clique_{n}$.
\end{proof}

The clique polynomial was the first polynomial family that was proved $\VW[1]$-complete, see \cite{BE19}.
The corresponding clique problem is problem 3 of Karp's original 21 NP-complete problems~\cite{Karp72},
and its parameterized version is $\W[1]$-complete, see e.g.\ \cite[Thm~21.2.5]{DF13}.

\section{Exponents}\label{sec:exponents}

In this section we analyze the growth of $\w(\hXi_{n,d})$ and $\w(\hGamma\!_{n,d})$ for fixed~$d$ and growing~$n$.

\begin{definition}%
\label{def:hypercompexpo}
We define
\begin{align*}
\xi_d &:= \limsup_{n\to\infty}(\log_n(\w(\hXi_{n,d}))) = \inf\{r\mid \w(\hXi_{n,d}\in O(n^r)\},
\\
\gamma_d &:= \limsup_{n\to\infty}(\log_n(\w(\hGamma\!_{n,d}))) = \inf\{r\mid \w(\hGamma\!_{n,d}\in O(n^r)\}.
\end{align*}
These definitions depend on $R$ and $A$.
\end{definition}
By \eqref{eq:GammaXi} we have
\begin{equation}\label{eq:gammaxifourplusgamma}
\gamma_d\leq\xi_d\leq 4+\gamma_d.
\end{equation}

For algebra elements in $A_d$ whose variables have two superscripts,
for each monomial the superscript can be listed as a matrix of nonnegative integers called the superscript matrix,
whose entry in position $(i,j)$ is the number of times a variable with superscript $(i),(j)$ appears in the monomial.
Note that all superscript matrices for $\hXi_{n,d}$ are permutation matrices corresponding to $\pi\in\aS_d$.
For a fixed $\pi\in\aS_d$, the result of setting all monomial with superscript matrix $\neq \pi$ to zero
is denote by $\hXi_{n,d,\pi}$.
By construction, we have
\begin{equation}\label{eq:hXisumhXipi}
\hXi_{n,d}=\sum_{\pi\in\aS_d}\hXi_{n,d,\pi}
\end{equation}
Therefore,
\begin{equation}\label{eq:xidasmax}
\xi_d = \max\{ \inf\{\log_n(\w(\hXi_{n,d,\pi}))\} \mid \pi\in\aS_d\}.
\end{equation}
As usual, the lower right entries are denoted by $\hGamma\!_{n,d,\pi}$,
$\hGamma\!_{n,d}=\sum_{\pi\in\aS_d}\hGamma\!_{n,d,\pi}$,
$\gamma_d = \max\{ \inf\{\log_n(\w(\hGamma\!_{n,d,\pi}))\} \mid \pi\in\aS_d\}$.

\begin{remark}
Over $R=\IC$, the torus $T = (\IC\setminus\{0\})^{d\times d}$ acts on $\IC^{n^4 d^2}$ via
$\diag(\alpha_{1,1},\ldots,\alpha_{d,d}) x_{\mathbf i}^{(k)}\boxtimes x_{\mathbf j}^{(\ell)} = \alpha_{k,\ell} \, (x_{\mathbf i}^{(k)}\boxtimes x_{\mathbf j}^{(\ell)})$.
The irreducible polynomial representations of $T$ are indexed by $d\times d$ matrices of nonnegative integers.
The linear action of $T$ lifts to $A_d$ and $\Mat_n(A_d)$ in the usual way: $t[v_1\otimes\cdots\otimes v_d]=[t v_1\otimes\cdots\otimes t v_d]$
and entrywise for matrices.
We have that
$\hXi_{n,d,\pi}$ is the image of the projection of $\hXi_{n,d}$ to the $\pi$-isotypic component.
\end{remark}

Note that Claim~\ref{cla:whGammand} implies that
\begin{equation}\label{eq:atleasttwo}
\textup{$\xi_{d}\geq 2$
and $\gamma_{d}\geq 2$.}
\end{equation}
\begin{claim}\label{cla:nsquaredlowerbound}
$\w(\hXi_{n,d,\pi})\geq n^2$.
\end{claim}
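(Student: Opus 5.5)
We will follow the essential-variables argument from Example~\ref{exa:Xiw} and Claim~\ref{cla:whGammand}, but without any $d$-dependent slack. Fix $\pi\in\aS_d$. The matrix $\hXi_{n,d,\pi}\in\Mat_{n^2}(A_d)$ uses the variables $x^{(k)}_{i,j}\boxtimes x^{(\pi(k))}_{i',j'}$ for $k=1,\ldots,d$ and $1\le i,j,i',j'\le n$. That is exactly $d\cdot n^4$ variables, since the superscript matrix is the permutation matrix of $\pi$, so each layer $k$ pairs with exactly one layer $\pi(k)$.

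The first step is to show that $\rk(J(\hXi_{n,d,\pi}))=n^4 d$. As in Example~\ref{exa:Xiw}, for each such variable $x_{\mathbf i}$ we take a monomial of $\partial_{\mathbf i}$ of some entry of $\hXi_{n,d,\pi}$. From that monomial together with the matrix position $(an+a',bn+b')$ we want to recover $\mathbf i$ uniquely.

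The key observation is that the layers are labelled by superscripts. The derivative removes the factor in layer $k$, the remaining factors determine the surrounding indices in every other layer, and the matrix position supplies the boundary indices $a,a',b,b'$. So the deleted variable is determined by its neighbours in the two iterated products: the $\boxtimes$-left part sits in layer $k$, and the right part sits in layer $\pi(k)$ of the second copy.

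For $\sW$ and $\sT$ the derivative carries position and sign information. For $\sW$ this is the rotation by $(1\ 2)$, and we ignore $\sW_2$ as in Claim~\ref{cla:whGammand}. This position information does not hurt the recovery, since we only need that distinct variables give $R$-independent, disjointly supported contributions. We then get an $(n^4d)\times(n^4d)$ permutation submatrix of $J(\hXi_{n,d,\pi})$, suitably arranged, and Lemma~\ref{lem:rankofid} gives the rank.

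The second step is the comparison. If $\w(\hXi_{n,d,\pi})\le w$, then $\hXi_{n,d,\pi}\le\Xi_{w,d}$, and by \eqref{eq:Jmatrixformatrices} we get $n^4 d\le \rk(J(\Xi_{w,d}))\le w^2 d$. Hence $w\ge n^2$, which is the claim.

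The main obstacle is making the recovery argument rigorous in the semiring setting, without subtraction. We need to isolate an identity-matrix minor of $J(\hXi_{n,d,\pi})$ rather than merely argue about linear independence. We would choose, for each variable $x_{\mathbf i}$, a specific pair (monomial of $\partial_{\mathbf i}$, entry position) that occurs for no other variable. Restricting $J$ to these coordinates then yields exactly an identity matrix, and any rank-$r$ decomposition of $J$ projects to a rank-$r$ decomposition of that identity.
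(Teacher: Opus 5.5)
Your proposal is correct and follows essentially the paper's proof. The paper likewise asserts that $\hXi_{n,d,\pi}$ has $n^4 d$ essential variables, by the permutation-minor argument of Example~\ref{exa:Xiw} together with Lemma~\ref{lem:rankofid}, and compares this with $\rk(J(\Xi_{w,d}))\le w^2 d$ via \eqref{eq:Jmatrixformatrices} to get $w\ge n^2$; your identity-minor formulation is just a more careful write-up of that step. One caveat, shared with the paper, concerns $A=\sW$ over semirings without $-1$: your appeal to ``$R$-independent, disjointly supported contributions'' presupposes that $\sW_{d-1}$ is free on wedge monomials, which fails in general (over an idempotent semiring $x\wedge y=x\wedge y+(x\wedge y+y\wedge x)=(x\wedge y+x\wedge y)+y\wedge x=x\wedge y+y\wedge x=0$, so $\sW_d=0$ for $d\ge 2$), so your argument is justified for $\sT$ and $\sS$, and for $\sW$ only over rings, where the minor is a signed identity and still has full rank.
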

\begin{proof}
The number of essential variables of $\hXi_{n,d,\pi}$ is $n^4 d$,
thus If $\w(\hXi_{n,d})\leq w$, then
$\hXi_{n,d}\leq \Xi_{w,d}$,
and hence by \eqref{eq:Jmatrixformatrices} we have
$n^4 d\leq w^2 d$, thus $w\geq n^2$.
\end{proof}
Since $\hXi_{n,d,\id} = \Xi_{n^2,d}$ up to renaming variables,
Claim~\ref{cla:nsquaredlowerbound} is the best lower bound we can get for $\w(\hXi_{n,d,\pi})$
if we do not take any information about $\pi$ into account.

\subsection{Tensors and flattenings}
\label{subsec:noncomm}

Given $d,j\in\IN$, $0\leq j \leq d$, there is a canonical linear isomorphism $\sT_d \to \big(\sT_j\big) \otimes \big(\sT_{d-j}\big)$.
Since we have a chosen basis and dual basis, we obtain a map
$\mathcal F_j: \sT_d \to \sT_j \otimes \big(\sT_{d-j} \big)^*$,
which is a map into a space of matrices.
For a matrix $F\in \Mat_m(\sT_d)$ we write $F = \sum_{a,b} e_a \otimes [F]_{a,b} \otimes e^*_b$ with standard basis vectors $e_a\in R^m$, $e_b\in R^m$.
Every linear map $\mathcal F:\sT_d\to\Mat_m(\sT_d)$
lifts to a linear map $\mathcal F:\Mat_\ell(\sT_d)\to\Mat_{\ell m}(\sT_d)$
via
$\mathcal F(F) = \sum_{a,b} e_a \otimes \mathcal F([F]_{a,b}) \otimes e^*_b$,
i.e., matrix entries are replaced by matrices, forming a block matrix.
Let $\rk$ denote the decomposition rank function,
i.e., for $M\in U\otimes U'$ we have
$\rk(M)=\min\{r\mid \exists l_i\in U,\ell_i\in U' : M=\sum_{i=1}^r l_i\otimes \otimes\ell_i\}$.

\begin{theorem}[\cite{Nis91}]
\label{thm:Nisanmatrix}
Let $F\in\Mat(\sT_d)$.
Then $\w(F)\geq \max\{\rk\mathcal F_j(F)\mid 0\leq j\leq d\}$.
If $R$ is a field, then we have equality.
\end{theorem}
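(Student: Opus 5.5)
We prove the lower bound first and then, over a field, the matching upper bound. Throughout we use the ABP characterization of width and Lemma~\ref{lem:Xirestrict}, which lets us write $F=T(\Xi_{w,d})$ for some $T\in\End$ with $w=\w(F)$.

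\emph{Lower bound.} Let $w=\w(F)$ and fix $T\in\End$ with $T(\Xi_{w,d})=F$. Put $L_k:=T(X^{(k)}_w)\in\Mat_w(\sT_1)$. Since the action of $T$ on $\sT_d$ is multiplicative, $F=L_1L_2\cdots L_d$ as a product in $\Mat_w(\sT_\bullet)$. Fix $j$ and split the product as $F=P\cdot Q$ with $P=L_1\cdots L_j\in\Mat_w(\sT_j)$ and $Q=L_{j+1}\cdots L_d\in\Mat_w(\sT_{d-j})$. This gives
\[
[F]_{a,b}=\sum_{c=1}^w [P]_{a,c}\otimes[Q]_{c,b}.
\]
The flattening $\mathcal F_j$ sends a tensor $p\otimes q$ with $p\in\sT_j$, $q\in\sT_{d-j}$ to the rank-one matrix $p\otimes q^*$. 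Therefore the block matrix $\mathcal F_j(F)$ equals $\sum_{c=1}^w u_c\otimes v_c$, where $u_c=\sum_a e_a\otimes[P]_{a,c}$ is the $c$-th block column of $P$ and $v_c=\sum_b [Q]_{c,b}^*\otimes e_b^*$ is the $c$-th block row of $Q$. Hence $\rk\mathcal F_j(F)\le w$.

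This argument uses only the semiring axioms. The one point needing care is that decomposition rank for block matrices is taken in $(R^m\otimes\sT_j)\otimes(\sT_{d-j}^*\otimes(R^m)^*)$. This is the natural reading of the lifted $\mathcal F$, and the sum above is a decomposition of exactly that form. The cases $j=0$ and $j=d$ reduce to the rank of the matrix over $R$, which is bounded by $w$ in the same way.

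\emph{Upper bound over a field.} We follow Nisan's construction. For each $j$, set $r_j=\rk\mathcal F_j(F)$; these are honest linear-algebra ranks of the block matrices. Let $V_j\subseteq R^m\otimes\sT_j$ be the column span of $\mathcal F_j(F)$, so $\dim V_j=r_j$, and choose a basis $g^{(j)}_1,\ldots,g^{(j)}_{r_j}$ of $V_j$. These basis elements become the vertices of layer $j$.

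The key step is to show that each $g^{(j+1)}_c$ can be written as $\sum_{c'} g^{(j)}_{c'}\otimes\ell_{c',c}$ with $\ell_{c',c}\in\sT_1$. To see this, note that every element of $V_{j+1}$ is a linear combination of columns of $\mathcal F_{j+1}(F)$, that is, of partial derivatives of the entries of $F$ from the right by monomials of degree $d-j-1$. Peeling off the last tensor factor of such an element expresses it as $\sum_i h_i\otimes x_i$. Each $h_i$ is a right partial derivative by a degree-$(d-j)$ monomial, so it is a column of $\mathcal F_j(F)$ and lies in $V_j$. Expanding each $h_i$ in the basis $g^{(j)}$ gives the required linear labels $\ell_{c',c}$.

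At layer $0$ we have $V_0\subseteq R^m$, and the basis is expressed through the sources. At layer $d$, $F$ itself is recovered from $V_d$ by a linear map into the sinks. These connections are absorbed by the left/right matrices of the restriction. The resulting ABP has layer widths $r_j$, so $\w(F)\le\max_j r_j$.

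The main obstacle is the layer-to-layer step: checking that the "peeling off the last factor" map sends $V_{j+1}$ into $V_j\otimes\sT_1$, while keeping track correctly of the matrix-block indices. Some bookkeeping is also needed at the two boundary layers, but it is routine.
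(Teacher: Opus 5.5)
Your proof is correct. The lower bound is the paper's argument in matrix language. Splitting $F=PQ$ after the $j$-th factor is the paper's decomposition $F=\sum_{v\in V_j(G)}(\rightsquigarrow v\rightsquigarrow)$, with your index $c$ playing the role of the layer-$j$ vertex $v$. One small bookkeeping point: Lemma~\ref{lem:Xirestrict} yields $F=T(\Xi_{w,d})$ only when $F$ fits into $\Mat_w$. If $F\in\Mat_m(\sT_d)$ with $m>w$, keep $F=T_{\textup{left}}\,L_1\cdots L_d\,T_{\textup{right}}^{\transpose}$ and absorb the outer matrices into $P$ and $Q$. This does not change the count of $w$ rank-one terms.

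For equality over a field you take a genuinely different route. The paper argues by contradiction on a width-optimal ABP, deleting a vertex from a layer $j$ with $\#V_j(G)>\rk\mathcal F_j(F)$. You instead build Nisan's canonical program directly: a basis of the column span $V_j$ of $\mathcal F_j(F)$ forms layer $j$, and stripping the last tensor factor maps $V_{j+1}$ into $V_j\otimes\sT_1$. Your right-derivative verification of that step, which you flagged as the main obstacle, is complete and is exactly what is needed.

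Your construction buys more than the paper's. It gives a single program whose $j$-th layer has exactly $r_j$ vertices for every $j$ simultaneously, with no extremal choice. It also avoids a delicate step in the paper's surgery. The paper infers from $\#V_j(G)>\rk\mathcal F_j(F)$ a linear relation among the rank-one summands $\mathcal F_j((\rightsquigarrow v\rightsquigarrow))$ themselves. That does not follow from rank deficiency alone: $x_1\otimes x_1^*+x_1\otimes x_2^*$ has rank one, yet its two summands are linearly independent. What does follow is a relation among the left factors $\sum_a e_a\otimes(s_a\rightsquigarrow v)$ or among the right factors. The removed vertex's outgoing (respectively incoming) edges must then be redirected, suitably scaled, to the remaining layer-$j$ vertices, rather than the other vertices merely being rescaled. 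Once repaired this way, the paper's approach has the merit of being a local improvement step on any given program, and it reuses the lower-bound decomposition verbatim.
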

\begin{proof}
For a sequence of variables $p=(p_1,\ldots,p_k)$ we write $p_{\otimes} := p_1 \otimes p_2 \otimes \cdots \otimes p_k$.
Let $V_j(G)$ denote the set of vertices of $G$ in vertex layer $j$.

For every vertex pair $v,w\in G$ we define $(\nsp{v,\rightsquigarrow,w}) := \sum_{p\in P(G,v,w)}p_\otimes$.
Note that
\[
[F]_{a,b} = \sum_{v \in V_j(G)} (\nsp{s_a,\rightsquigarrow,v})\otimes(\nsp{v,\rightsquigarrow,t_b}) \ .
\]
In other words,
\begin{equation}\label{eq:brakfmatrix}
F = \sum_{a,b} \sum_{v \in V_j(G)} e_a \otimes (\nsp{s_a,\rightsquigarrow,v})\otimes (\nsp{v,\rightsquigarrow,t_b}) \otimes e^*_b
= \sum_{v \in V_j(G)} \underbrace{\sum_{a,b} e_a \otimes (\nsp{s_a,\rightsquigarrow,v})\otimes (\nsp{v,\rightsquigarrow,t_b}) \otimes e^*_b}_{=: (\rightsquigarrow v \rightsquigarrow)} \ .
\end{equation}

We apply $\mathcal F_j$ and analyze the rank:

\begin{eqnarray*}
\mathcal F_j((\nsp{\rightsquigarrow,v,\rightsquigarrow})) &=& \sum_{a,b} e_a \otimes \mathcal F_j((\nsp{s_a,\rightsquigarrow,v})\otimes (\nsp{v,\rightsquigarrow,t_b})) \otimes e^*_b
= \sum_{a,b} (e_a \otimes (\nsp{s_a,\rightsquigarrow,v}) \otimes (\nsp{v,\rightsquigarrow,t_b})^* \otimes e^*_b)
\\
&=& \big(\sum_{a} e_a \otimes (\nsp{s_a,\rightsquigarrow,v}) \big) \otimes \big(\sum_{b} (\nsp{v,\rightsquigarrow,t_b})^* \otimes e^*_b\big),
\end{eqnarray*}
which is a rank $\leq 1$ matrix.
By \eqref{eq:brakfmatrix} and the definition of rank as the decomposition rank
$\rk(\mathcal F_j(F))\leq \#V_j(G)$,
which implies the first part of the theorem.

For the second part of the theorem, let $R=\IF$ be a field.
For the sake of contradiction, assume that $\w(F)>\max\{\rk\mathcal F_j(F)\mid 0 \leq j \leq d\}$.
Among all ABPs of width $\w(F)$ that compute $F$, take one that has the smallest index $j$ for which $\#V_j(G) = \w(F)$.
In particular $\#V_j(G) > \rk\mathcal F_j(F)$.
We will now see that there exists an ABP $G'$ of width $\leq\w(F)$ that computes $F$ with $\#V_j(G')<\#V_j(G)$,
which is a contradiction to the minimal choice of $j$.

We have
\[
F = \sum_{v \in V_j(G)} (\nsp{\rightsquigarrow,v,\rightsquigarrow}),
\]
but for one vertex $\tilde v\in V_j(G)$ we have a linear combination
\[
\mathcal F_j((\nsp{\rightsquigarrow,\tilde v,\rightsquigarrow})) = \sum_{v \in V_j(G)\setminus\{\tilde v\}} \alpha_v \mathcal F_j((\nsp{\rightsquigarrow,v,\rightsquigarrow})),
\]
with $\alpha_v \in \IF$.
We use that $\mathcal F_j$ is an isomorphism to conclude
\begin{equation}
\label{eq:expressvtildematrix}
(\nsp{\rightsquigarrow,\tilde v,\rightsquigarrow}) = \sum_{v \in V_j(G)\setminus\{\tilde v\}} \alpha_v \, (\nsp{\rightsquigarrow,v,\rightsquigarrow}).
\end{equation}
We construct $G'$ as follows: We delete the vertex $\tilde v$ from $G$, and we rescale all incoming
(or all outgoing if $j=0$)
edges of all $v\in V_j(G)$ with $(1+\alpha_v)$.
Note that $G'$ computes the matrix
\[
\sum_{v \in V_j(G)\setminus\{\tilde v\}} (1+\alpha_v) (\nsp{\rightsquigarrow,v,\rightsquigarrow}) \stackrel{\eqref{eq:brakfmatrix}+\eqref{eq:expressvtildematrix}}{=} F.
\]
Since the number of vertices in $G$ and $G'$ differ only in vertex layer $j$, this gives the desired contradiction to the choice of~$j$.
\end{proof}

For $\pi\in\aS_d$, $k\in\{0,\ldots,d\}$, let
\begin{eqnarray}
\nonumber\beta(\pi,k) &:=& \{0,\pi(1)-1,\pi(1),\pi(2)-1,\pi(2),\ldots,\pi(k)-1,\pi(k)\}
\\
\label{eq:betapi}
&& \cap \, \{\pi(k+1)-1,\pi(k+1),\pi(k+2)-1,\pi(k+2),\ldots,\pi(d)-1,\pi(d),d\}.
\end{eqnarray}
Let $\beta(\pi):=\max\{\#\beta(\pi,k)\mid 0\leq k\leq d\}$.

\begin{proposition}
\label{pro:Nisanoptimal}
Let $R$ be a commutative semiring, and let $A=\sT$.
We have $\w(\Xi_{n,d,\pi}) = n^{\beta(\pi)+1}$.
\end{proposition}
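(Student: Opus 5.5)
The plan is to prove matching upper and lower bounds. The upper bound comes from an explicit ABP and the lower bound from the flattening bound of Theorem~\ref{thm:Nisanmatrix}, made valid over arbitrary commutative semirings by exhibiting an identity submatrix and applying Lemma~\ref{lem:rankofid}. Throughout I read the statement as being about $\hXi_{n,d,\pi}$, the $\pi$-component from \eqref{eq:hXisumhXipi}, with $A=\sT$.

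\textbf{Index bookkeeping.} First I rewrite an entry of $\hXi_{n,d,\pi}$ explicitly. Its $(an+a',bn+b')$ entry is the sum over $i_1,\ldots,i_{d-1}$ and $j_1,\ldots,j_{d-1}$, with $i_0=a$, $i_d=b$, $j_0=a'$, $j_d=b'$, of
\[
(x^{(1)}_{i_0,i_1}\boxtimes x^{(\pi(1))}_{j_{\pi(1)-1},j_{\pi(1)}})\otimes\cdots\otimes(x^{(d)}_{i_{d-1},i_d}\boxtimes x^{(\pi(d))}_{j_{\pi(d)-1},j_{\pi(d)}}).
\]
Now fix a cut after tensor position $k$. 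The indices occurring both among the first $k$ factors and among the last $d-k$ factors are the following.
\begin{itemize}
\item The first-factor index $i_k$.
\item The second-factor indices $j_m$ with $m\in\{\pi(t)-1,\pi(t)\}$ for some $t\le k$ and also for some $t>k$. Here the boundary indices $j_0$ and $j_d$ are treated as "living on" the row side and the column side respectively, which is why $0$ and $d$ are adjoined in \eqref{eq:betapi}.
\end{itemize}
So the set of shared second-factor indices is exactly $\beta(\pi,k)$, and the number of shared index values at the cut is $n^{1+\#\beta(\pi,k)}$.

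\textbf{Upper bound.} I build a layered ABP in which the vertices of layer $k$ are indexed by tuples $(i_k,(j_m)_{m\in\beta(\pi,k)})\in[n]^{1+\#\beta(\pi,k)}$. The edge from a layer-$(k-1)$ vertex to a layer-$k$ vertex carries the single variable $x^{(k)}_{i_{k-1},i_k}\boxtimes x^{(\pi(k))}_{j_{\pi(k)-1},j_{\pi(k)}}$, provided the two tuples agree on the indices that persist across both cuts. If the factor at position $k$ involves an index $j_m$ that appears at neither cut, then $j_m$ occurs only at position $k$; in that case I instead put the sum over $j_m$ of the variables as the edge label, which is an element of $A_1$.

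Sources are the pairs $(a,a')$ and sinks the pairs $(b,b')$; this is consistent because $\beta(\pi,0)=\{0\}$ and $\beta(\pi,d)=\{d\}$. A path from a source to a sink then corresponds bijectively to a consistent assignment of all the indices, so the ABP computes $\hXi_{n,d,\pi}$. Its width is $\max_k n^{1+\#\beta(\pi,k)}=n^{\beta(\pi)+1}$, and by the ABP characterization of width this gives $\w\le n^{\beta(\pi)+1}$.

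\textbf{Lower bound.} Choose $k$ attaining $\beta(\pi)$. By Theorem~\ref{thm:Nisanmatrix}, $\w\ge\rk\mathcal F_k(\hXi_{n,d,\pi})$, so it suffices to show this flattening has rank at least $n^{1+\#\beta(\pi,k)}$.

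For every assignment $\sigma$ of the shared indices, I pick one row and one column of the flattening.
\begin{itemize}
\item The row is given by the block row index (which fixes $a$ and $a'$) together with the left tensor word of length $k$ obtained from $\sigma$, with all left-only indices set to $1$.
\item The column is chosen symmetrically: the block column index together with the right tensor word of length $d-k$.
\end{itemize}
Here $a$ and $a'$ come from $\sigma$ whenever $0$ is shared, and are fixed otherwise; likewise for $b,b'$ and $d$.

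The claim is that the resulting square submatrix is the identity. The coefficients of $\hXi_{n,d,\pi}$ are $0/1$. Moreover, in the noncommutative setting every tensor word determines its indices, so the row word fixes $\sigma$ on the left and the column word fixes it on the right. A nonzero entry therefore forces the two assignments to agree.

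To pass from this submatrix to the whole flattening, note that a submatrix is obtained by multiplying on both sides by $0/1$ selection matrices, so a decomposition of length $r$ of the flattening restricts to one of the submatrix. Hence the rank of the flattening is at least the rank of the submatrix. Lemma~\ref{lem:rankofid} then gives rank $n^{1+\#\beta(\pi,k)}$ over any commutative semiring.

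\textbf{Main obstacle.} I expect the main difficulty to be the bookkeeping for the boundary indices $j_0,j_d$ and $i_0,i_d$. These sit in the block indices of the matrix rather than in the tensor words, so the flattening must be set up with rows indexed by (block row, left word) and columns by (block column, right word) for the identity-submatrix argument to count exactly $\beta(\pi,k)$, including the elements $0$ and $d$. I would verify this first on small cases such as $\pi=\id$, where $\beta=1$ and the bound should reproduce $n^2$, and on a transposition.
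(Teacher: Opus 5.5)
Your proposal is correct and follows essentially the same route as the paper. For the lower bound, both you and the paper apply Theorem~\ref{thm:Nisanmatrix} at a cut $k$ maximizing $\#\beta(\pi,k)$, restrict to a permutation (identity) submatrix by fixing all one-sided indices to $1$, and invoke Lemma~\ref{lem:rankofid}; for the upper bound, both use the Grenet-style ABP whose layer-$k$ vertices are indexed by the values of the $1+\#\beta(\pi,k)$ shared indices. The differences are cosmetic: relative to the paper's proof you put the permuted indices on the second Kronecker factor instead of the first, which is a renaming of variables plus a reindexing of blocks. Also, your extra summation case for an index occurring at neither cut never arises, since every index occurs in exactly two places once the block row and column indices are counted.
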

\begin{proof}
To prove the lower bound, we use Theorem~\ref{thm:Nisanmatrix}.
We have
\[
\hXi_{n,d,\pi} =
\sum_{\substack{i_0,i_1,\ldots,i_d\\j_0,j_1,\ldots,j_d}}
e_{i_0,j_0}
\otimes (x^{(\pi(1))}_{i_{\pi(1)-1},i_{\pi(1)}} \boxtimes x^{(1)}_{j_0,j_1})
\otimes
\cdots
\otimes (x^{(\pi(d))}_{i_{\pi(d)-1},i_{\pi(d)}} \boxtimes x^{(d)}_{j_{d-1},j_d})
\otimes e_{i_d,j_d}
\]
The monomials on the left factor of $\mathcal F_k(\hXi_{n,d,\pi})$ are
\[
e_{i_0,j_0}
\otimes (x^{(\pi(1))}_{i_{\pi(1)-1},i_{\pi(1)}} \boxtimes x^{(1)}_{j_0,j_1})
\otimes
\cdots
\otimes (x^{(\pi(k))}_{i_{\pi(k)-1},i_{\pi(k)}} \boxtimes x^{(k)}_{j_{k-1},j_k})
\]
The monomials on the right factor of $\mathcal F_k(\hXi_{n,d,\pi})$ are
\[
(x^{(\pi(k+1))}_{i_{\pi(k+1)-1},i_{\pi(k+1)}} \boxtimes x^{(k+1)}_{j_k,j_{k+1}})
\otimes
\cdots
\otimes (x^{(\pi(d))}_{i_{\pi(d)-1},i_{\pi(d)}} \boxtimes x^{(d)}_{j_{d-1},j_d})
\otimes e_{i_d,j_d}
\]
The set of $o$ for which $i_o$ appears on the left-hand side of the tensor is
$\{0,\pi(1)-1,\pi(1),\pi(2)-1,\pi(2),\ldots,\pi(k)-1,\pi(k)\}$.
The set of $o$ for which $i_o$ appears on the right-hand side of the tensor is
$\{\pi(k+1)-1,\pi(k+1),\pi(k+2)-1,\pi(k+2),\ldots,\pi(d)-1,\pi(d),d\}$.
We say that $o\in\{0,\ldots,d\}$ is a \emph{boundary point} if $i_o$ appears on both sides of the tensor,
otherwise $o$ is called an \emph{inner point}.
We denote by $\beta(\pi,k)$ the set of boundary points.
One monomial from the left-hand side tensored with a monomial from the right-hand side
gives a monomial of $\hXi_{n,d,\pi}$ if and only if they agree on $j_k$ and on every $i_o$ for boundary points $o$.
Now we consider the set $M$ of those monomials for which $i_o=1$ for inner points $o$, and $j_o=1$ for all $o\neq k$.
For each monomial in $M$ from the left-hand side there is exactly one monomial in $M$ from the right-hand side
so that their tensor product gives a monomial in $\hXi_{n,d,\pi}$.
This means that if we restrict the matrix $\mathcal F_k(\hXi_{n,d,\pi})$ to the rows and columns of those monomials, we get a permutation matrix.
The rank of this matrix is full, see Lemma~\ref{lem:rankofid}.
The number of such monomials from the left-hand side equals the number of monomials from the right-hand side,
and it is $n^{\beta(n,k)+1}$,
which proves the lower bound.
We now prove the upper bound by adapting a construction of Grenet~\cite{Gre11}.
Let $B_k$ denote the set of boundary points and let $I_k$ be the set of inner points of the left-hand side.
For every map $C_k\to\{1,\ldots,n\}$ and a number $c_k\in\{1,\ldots,n\}$ we define the vector
\[
F_{k,C_k,c_k} :=
\sum_{\substack{\forall o\in I_k: i_o \in \{1,\ldots,n\}\\j_0,j_1,\ldots,j_{k-1}\in\{1,\ldots,n\}\\\forall o\in B_k: i_o=C(o)\\j_k=c_k}}
e_{i_0,j_0}
\otimes (x^{(\pi(1))}_{i_{\pi(1)-1},i_{\pi(1)}} \boxtimes x^{(1)}_{j_0,j_1})
\otimes
\cdots
\otimes (x^{(\pi(k))}_{i_{\pi(k)-1},i_{\pi(k)}} \boxtimes x^{(k)}_{j_{k-1},j_k})
\]
which is the sum of the subset of monomials of the left-hand side of $\mathcal F_k(\hXi_{n,d,\pi})$ that satisfy
$i_o=C_k(o)$ for all boundary points $o$,
and $j_k=c_k$.
The ABP we construct has one vertex for each of these $n^{\beta(n,k)+1}$ many tensors in degree $k$
at which the ABP computes the vector $F_{k,C_k,c_k}$.
Edges with appropriate labels can be chosen to make this ABP compute $\hXi_{n,d,\pi}$, because when running $k$ from $0$ to $d$,
every position $o$ transitions at most once (exactly once for all $o\neq d$) from boundary to inner, and never in the other direction. A transition from boundary to inner is handled by summation,
for example, if $\pi(k+1)=o$ and $o-1$ transitions from boundary to inner and $o$ is a new boundary point, then for any $C_{k+1}$ and $c_{k+1}$ we have
\[
F_{k+1,C_{k+1},c_{k+1}} = \sum_{i,j=1}^n F_{k,C_{k+1}\setminus\{o\}\cup \{o-1\mapsto i\},j} \otimes \left( x^{\pi(k+1)}_{i,C_{k+1}(o)} \boxtimes x^{(k+1)}_{j,c_{k+1}}\right),
\]
where $C_{k+1}\setminus\{o\}\cup \{o-1\mapsto i\}$ arises from the map $C_{k+1}$ by making it undefined at $o$ and setting $C_{k+1}(o-1)=i$.
\end{proof}

Let $\beta(d) := \max\{\beta(\pi)\mid \pi\in\aS_d\}$.
By \eqref{eq:xidasmax} we have that for $A_d=\sT_d$ it holds $\xi_d=\beta(d)+1$.
Note that the upper bound $\w(\hXi_{n,d,\pi})\leq n^{\beta(\pi)+1}$ also holds for $A_d\in\{\sS_d,\sW_d\}$, because these are quotients of $\sT_d$.

Let $\beta'(\pi,k)=\beta(\pi,k)\setminus\{0,d\}$
and $\beta'(\pi)=\max\{\#\beta(\pi,k)\mid 0\leq k \leq d\}$.
The same arguments as for $\hXi_{n,d,\pi}$ show that for $A_d=\sT_d$ we have
$\w(\hGamma\!_{n,d,\pi})=n^{\beta'(\pi)}$
and for $A_d\in\{\sS_d,\sW_d\}$ we have
$\w(\hGamma\!_{n,d,\pi})\leq n^{\beta'(\pi)}$.

and $\beta'(d) := \max\{\beta'(\pi)\mid \pi\in\aS_d\}$.
\begin{lemma}
We have $\beta(d) = \begin{cases}
d & \textup{ if $d$ is odd}, \\
d+1 & \textup{ if $d$ is even}.
\end{cases}$
\\And we have $\beta'(d) = d-1$.
\end{lemma}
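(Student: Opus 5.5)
The plan is to translate $\beta(\pi,k)$ into a purely combinatorial condition on a two-colouring of the ``edges'' $1,\ldots,d$. After that, both the upper bounds and the matching constructions become parity arguments. Fix $\pi\in\aS_d$ and $k$, and call an edge $e\in\{1,\ldots,d\}$ \emph{left} if $e\in\{\pi(1),\ldots,\pi(k)\}$ and \emph{right} otherwise. Edge $e$ contributes the two points $e-1$ and $e$ in \eqref{eq:betapi}. Hence a point $o$ with $1\leq o\leq d-1$ lies in $\beta(\pi,k)$ if and only if the two edges $o$ and $o+1$ touching it have different colours. The endpoint $0$ always lies in the left set; it lies in $\beta(\pi,k)$ if and only if edge $1$ is right. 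Symmetrically, $d$ always lies in the right set, and $d\in\beta(\pi,k)$ if and only if edge $d$ is left. Since $\pi$ and $k$ are arbitrary, every subset of the edges arises as the left set: list the chosen edges first in $\pi$ and let $k$ be their number. So $\beta(d)$ and $\beta'(d)$ are the maxima of these counts over all two-colourings of a path with $d$ edges.

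For $\beta'(d)$ the upper bound $d-1$ is immediate, because $\beta'(\pi,k)\subseteq\{1,\ldots,d-1\}$. (I read the definition of $\beta'(\pi)$ as the maximum of $\#\beta'(\pi,k)$, which is evidently intended.) For the lower bound, colour the edges alternately, say the odd edges left. Then every interior point separates two differently coloured edges, so $\#\beta'(\pi,k)=d-1$.

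For $\beta(d)$, the alternating colourings are again the natural candidates, and the only remaining question is the two endpoints. Colour the even edges left and the odd edges right. Then edge $1$ is right, so $0\in\beta$. Edge $d$ is left exactly when $d$ is even. For even $d$ this gives all $d+1$ points of $\{0,\ldots,d\}$, and $d+1$ is the trivial upper bound. For odd $d$ this colouring gives $d$ points (all but $d$), so $\beta(d)\geq d$.

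The main step, though still short, is the upper bound $\beta(d)\leq d$ for odd $d$. Suppose all $d+1$ points lie in $\beta(\pi,k)$. All interior points being present forces the colouring to alternate. The point $0$ forces edge $1$ to be right, so edge $e$ is right exactly when $e$ is odd. The point $d$ forces edge $d$ to be left, which says $d$ is even, a contradiction. I would also check the degenerate choices $k=0$ and $k=d$ separately. There the left set is $\{0\}$, respectively the right set is $\{d\}$, so the intersection has at most one element, and these cases do not affect the maxima for $d\geq 1$.
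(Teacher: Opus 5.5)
Your proof is correct, and its framework matches the paper's. The paper also encodes $\beta(\pi,k)$ as a red/blue colouring of the path $v_0,\ldots,v_d$, with a red half-edge at $v_0$ and a blue one at $v_d$, and it uses the same alternating colourings for the lower bounds.

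The routes differ in the upper bounds.
\begin{itemize}
\item The paper argues dynamically, about the order in which edges are recoloured as $k$ grows. It uses an exchange argument: recolouring an edge adjacent to a red edge is never beneficial. For $\beta'(d)$ it then runs a case distinction on the colours of the two end edges.
\item You observe that $\beta(\pi,k)$ depends only on the set $\{\pi(1),\ldots,\pi(k)\}$, and that every subset occurs. So the order is irrelevant and the maximum is over static colourings. The odd case then becomes a two-line parity contradiction.
\item You get $\beta'(d)\le d-1$ immediately from $\beta'(\pi,k)\subseteq\{1,\ldots,d-1\}$. This makes the paper's case analysis unnecessary.
\end{itemize}

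Your version is shorter and avoids the informal ``never beneficial'' step. The paper's process viewpoint mirrors how the layers are traversed in the construction of Proposition~\ref{pro:Nisanoptimal}, but it gains nothing extra for this lemma. You were also right to read the definition of $\beta'(\pi)$ as the maximum of $\#\beta'(\pi,k)$.
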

\begin{proof}
We consider $\beta(d)$ first.
For any $\pi$ with $\pi(1)=2, \pi(2)=4, \pi(3)=6, \ldots, \pi_{\lfloor d/2\rfloor}=2\lfloor d/2\rfloor$
we see that $\beta(\pi) \geq 1+2\lfloor d/2\rfloor$, which is $d+1$ for even $d$, and $d$ for odd $d$.
Clearly $d+1$ is an upper bound for $\beta(d)$.
We phrase $\beta(d)$ as an edge coloring process of the graph on $d+1$ vertices $v_i$,
where $v_i$ is adjacent to $v_{i-1}$ and $v_{v+1}$, and $v_0$ has an additional half-edge,
and $v_d$ has an additional half-edge. Initially, all edges are colored blue,
the half-edge at $v_0$ is colored red and the half-edge at $v_d$ is colored blue.
We now color edges (not half-edges) from blue to red step by step.
$\beta(\pi,k)$ is the number of vertices that are adjacent to a blue and a red (half-)edge after $k$ recolorings.
To choose $\pi$ so that $\beta(\pi)$ is maximized, it is not beneficial to ever choose an edge that has a red neighbor edge,
because the number of boundary vertices between red and blue does not increase,
and in any future step, coloring this edge back blue can only increase the number of boundary points.
The same idea also works for coloring the edge $(v_0,v_1)$.
Hence, it is optimal to color the $(v_1,v_2)$, $(v_3,v_4)$, and so on.
Since $d$ is odd, we can only do this $\frac{d-1}{2}$ many times, after which the number of boundary points will go down again.
Starting with $1$ boundary point $v_0$,
each such recoloring increases the number of boundary points by $2$, hence $\beta(d)=1+2\frac{d-1}{2}=d$.

For $\beta'$, when counting boundary points, we do not count $v_0$ and $v_d$, and we can ignore the half-edges.
For $d$ odd, coloring $(v_1,v_2)$, $(v_3,v_4)$, and so on, gives
$\beta'(d)\geq 2\frac{d-1}{2} = d-1$.
For $d$ even, coloring $(v_1,v_2)$, $(v_3,v_4)$, and so on, gives
$\beta'(d)\geq 2(\frac{d}{2}-1)+1 = d-1$, since $v_d$ is not being counted.
The upper bound comes from a similar coloring consideration.
Coloring $(v_0,v_1)$ or $(v_{d-1},v_d)$ can increase the number of boundary points for subsequent rounds by 1,
but it comes at the cost of only increasing the number of boundary points only by 1 instead of 2.
Hence, effectively, the number of boundary points can only be increased by 2 each round.
We make a case distionction based on the color of $(v_0,v_1)$ and $(v_{d-1},v_d)$ in the coloring with maximized number of boundary points that count.
If $(v_0,v_1)$ is red and $(v_{d-1},v_d)$ is red, then we get at most $2+2\lfloor\frac{d-3}{2}\rfloor=\leq d-1$ many boundary points that count.
If $(v_0,v_1)$ is red and $(v_{d-1},v_d)$ is blue, then we get at most $1+2\lfloor\frac{d-2}{2}\rfloor=\leq d-1$ many boundary points that count.
Analogously for $(v_0,v_1)$ blue and $(v_{d-1},v_d)$ red.
If $(v_0,v_1)$ is blue and $(v_{d-1},v_d)$ is blue, then we get at most $2\lceil\frac{d-2}{2}\rceil=\leq d-1$ many boundary points that count.
Hence, in all cases we see $\beta'(d)\leq d-1$.
\end{proof}

\subsection{Optimal speedups for polynomials if $R$ is zerosumfree}\label{subsec:monotone}
A commutative semiring is called \emph{zerosumfree} if
for all $r_1,r_2 \in R$ we have
$r_1+r_2=0$ implies $r_1=r_2=0$.
For $\pi\in\aS_d$, $\sigma\in\aS_d$, $k\in\{0,\ldots,d\}$, we define
\begin{eqnarray}
\nonumber\beta(\pi,\sigma,k) &:=&
\{1\}\times\Big(
\{0,\pi(\sigma(1))-1,\pi(\sigma(1)),\pi(\sigma(2))-1,\pi(\sigma(2)),\ldots,\pi(\sigma(k))-1,\pi(\sigma(k))\} 
\\
\nonumber
&& \hspace{1.5cm}\cap \, \{\pi(\sigma(k+1))-1,\pi(\sigma(k+1)),\pi(\sigma(k+2))-1,\pi(\sigma(k+2)),\ldots,\pi(\sigma(d))-1,\pi(\sigma(d)),d\} 
\Big)
\\
\nonumber
&\cup&
\{2\}\times\Big(
\{0,\sigma(1)-1,\sigma(1),\sigma(2)-1,\sigma(2),\ldots,\sigma(k)-1,\sigma(k)\} 
\\
\label{eq:betapisigmak}
&& \hspace{1.5cm}\cap \, \{\sigma(k+1)-1,\sigma(k+1),\sigma(k+2)-1,\sigma(k+2),\ldots,\sigma(d)-1,\sigma(d),d\} 
\Big)
\end{eqnarray}
As a remark, if one compares to \eqref{eq:betapi}, one observes that $\beta(\pi,\id,k) = \beta(\pi,k)+1$.
Now, let $\beta(\pi,\sigma):=\max\{\#\beta(\pi,\sigma,k)\mid 0\leq k\leq d\}$.
Let $\beta_\textup{opt}(\pi):=\min\{\#\beta(\pi,\sigma)\mid \sigma\in\aS_d\}$.
Let $\beta_\textup{opt}(d) := \max\{\beta_\textup{opt}(\pi)\mid \pi\in\aS_d\}$.

\begin{proposition}\label{pro:monotoneupperbound}
For a commutative semiring $R$ and $A=\sS$ we have $\w(\hXi_{n,d\,pi})\leq n^{\beta_{\textup{opt}}(\pi)}$.
\end{proposition}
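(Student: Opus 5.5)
The idea is that in $\sS$ the order of the factors in each monomial is irrelevant. So for every $\sigma\in\aS_d$ we may read the monomials of $\hXi_{n,d,\pi}$ in the order dictated by $\sigma$ and then apply the Grenet-style construction from the proof of Proposition~\ref{pro:Nisanoptimal}. Minimizing over $\sigma$ gives the bound $n^{\beta_{\textup{opt}}(\pi)}$.

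First, fix $\sigma$ and write each monomial of $\hXi_{n,d,\pi}$ with its factors permuted so that the $k$-th factor is
\[
x^{(\pi(\sigma(k)))}_{i_{\pi(\sigma(k))-1},i_{\pi(\sigma(k))}}\boxtimes x^{(\sigma(k))}_{j_{\sigma(k)-1},j_{\sigma(k)}}.
\]
This is the relabeling $\ell=\sigma(k)$ of the factors indexed by $\ell$, which pair superscript $\pi(\ell)$ with superscript $\ell$. In $\sS_d$ the reordered element equals $\hXi_{n,d,\pi}$, since the congruence $\sim$ identifies tensors that differ by a permutation of positions, monomial by monomial.

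Second, mimic the proof of Proposition~\ref{pro:Nisanoptimal}. The only change is that now both index chains $i_0,\dots,i_d$ and $j_0,\dots,j_d$ get consumed in a scrambled order: the $i$-chain according to $\pi\circ\sigma$ and the $j$-chain according to $\sigma$. After $k$ steps, the $i$-indices still shared between the processed and unprocessed factors are exactly the second coordinates in $\{1\}\times(\cdots)$ of \eqref{eq:betapisigmak}. Here the endpoints $0$ and $d$ play the role of the row index $a$ and the column index $b$ of the matrix entry, as the $e_{i_0,j_0}$ and $e_{i_d,j_d}$ did before. Likewise, the shared $j$-indices form the $\{2\}\times(\cdots)$ part.

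In layer $k$ we place one vertex for each assignment $C$ of values in $\{1,\dots,n\}$ to these $\#\beta(\pi,\sigma,k)$ boundary indices. The vertex computes the partial sum $F_{k,C}$ over all inner indices of the processed factors, with the source-side $e_{i_0,j_0}$ attached. When factor $k+1$ is processed, some indices change from unprocessed to boundary and some from boundary to inner, and never back. So $F_{k+1,C'}$ is a sum, over the values of the indices that become inner, of $F_{k,C}$ times the single variable $x\boxtimes x$ of factor $k+1$. These are the edge labels, exactly as in the displayed recursion in the proof of Proposition~\ref{pro:Nisanoptimal}. The vertices of layer $d$ correspond to the sinks $(i_d,j_d)$, and the sources to $(i_0,j_0)$. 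The width is therefore $\max_k n^{\#\beta(\pi,\sigma,k)}=n^{\beta(\pi,\sigma)}$, and choosing $\sigma$ optimal gives $n^{\beta_{\textup{opt}}(\pi)}$. The construction uses no subtraction, so it works over any commutative semiring.

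The main obstacle is the bookkeeping in the second step. One has to check that the boundary set after $k$ steps is exactly the set in \eqref{eq:betapisigmak}, including the treatment of $0$ and $d$ as permanent endpoints on the left and right sides. One also has to check that each index is summed exactly once, so that no monomial is double counted and none is missing. Monotonicity of the transitions (unprocessed, then boundary, then inner) makes this a direct generalization of the case $\sigma=\id$; the $\{2\}\times$ component is the analogue of the $j_k$ carried there.
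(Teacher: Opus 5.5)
Your proposal is correct and follows essentially the same route as the paper. The paper's proof is exactly this: use commutativity in $\sS$ to reorder each monomial according to a chosen $\sigma$, as in \eqref{eq:hXindpifixedsigma}, run the Grenet-style layered construction from Proposition~\ref{pro:Nisanoptimal} with $\beta(\pi,\sigma,k)$ as the boundary sets, and then minimize over $\sigma$. Your bookkeeping of the transitions unprocessed $\to$ boundary $\to$ inner, and of the endpoints $0$ and $d$, is more explicit than the paper's sketch.
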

\begin{proof}
The upper bound uses essentially the same construction as in the proof of Proposition~\ref{pro:Nisanoptimal}.
The only difference is that we are free to choose $\sigma\in\aS_d$ instead of having $\sigma=\id$,
and after fixing $\sigma$ we can write
\begin{equation}\label{eq:hXindpifixedsigma}
\hXi_{n,d,\pi} =
\sum_{\substack{i_0,i_1,\ldots,i_d\\j_0,j_1,\ldots,j_d}}
e_{i_0,j_0}
\otimes (x^{(\pi(\sigma(1)))}_{i_{\pi(\sigma(1))-1},i_{\pi(\sigma(1))}} \boxtimes x^{(\sigma(1))}_{j_{\sigma(1)-1},j_{\sigma(1)}})
\cdot \cdots \cdot
(x^{(\pi(\sigma(d)))}_{i_{\pi(\sigma(d))-1},i_{\pi(\sigma(d))}} \boxtimes x^{(\sigma(d))}_{j_{\sigma(d)-1},j_{\sigma(d)}})
\otimes e_{i_d,j_d}.
\end{equation}
Now $\beta(\pi,\sigma,k)$ is the set of all boundary points,
so we have $n^{\#\beta(\pi,\sigma,k)}$ many vertices in the ABP in vertex layer~$k$.
\end{proof}
\begin{remark}
Proposition~\ref{pro:monotoneupperbound} also works for $A=\sW$ if $R$ is a commutative ring.
If $R$ is a commutative semiring, then we can only allow even permutations $\sigma$
in the definition of $\beta_{\textup{opt}}(\pi)$,
which increases its value by at most 4, because it is easy to see that
$\forall \pi,\sigma: \beta(\pi,(1 \ 2)\sigma) \leq 4+\beta(\pi,\sigma)$.
\end{remark}
Note that better constructions are possible for $A=\sS$
by not treating every $\pi$ individually,
but by \eqref{eq:xidasmax} this cannot give a better exponent.

\begin{proposition}\label{pro:monotonelowerbound}
For a zerosumfree commutative semiring $R$ and $A=\sS$ we have $n^{\beta_{\textup{opt}}(\pi)}/d \leq \w(\hXi_{n,d,\pi})$.
\end{proposition}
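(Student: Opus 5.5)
Over a zerosumfree semiring, an ABP of width $w$ for $\hXi_{n,d,\pi}$ cannot rely on cancellations. The plan is to reduce to a noncommutative (ordered) computation and apply the rank lower bound of Theorem~\ref{thm:Nisanmatrix} along a well-chosen ordering $\sigma$.

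\textbf{Step 1: monotone ABP.} Take an ABP $G$ of width $w$ computing $\hXi_{n,d,\pi}$ in $\sS$. Because $R$ is zerosumfree, no two path weights can cancel. We may therefore delete every edge whose label contains a variable absent from $\hXi_{n,d,\pi}$, and every vertex not lying on a source--sink path whose product is nonzero. After this, each source--sink path $p$ contributes a sum of monomials that all occur in $\hXi_{n,d,\pi}$. Each such monomial is a product of $d$ variables $x^{(\pi(k))}_{\cdot}\boxtimes x^{(k)}_{\cdot}$, one for each second superscript $k$. Expanding the edge labels path by path, the product along a path picks one variable per layer. This yields a word, and the layer at which superscript $k$ is read defines a permutation $\sigma_p\in\aS_d$ (layer $\ell$ reads the factor with second superscript $\sigma_p(\ell)$).

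\textbf{Step 2: partition by ordering.} The map from monomial--path expansions to $\sigma$ partitions the computation into $d!$ ordered pieces. Directly bounding each piece would lose a factor $d!$, but the statement only loses a factor $d$. I would therefore follow the standard argument for monotone set-multilinear ABPs, as in Nisan's noncommutative lower bound adapted to commutative monotone computation. Fix a layer $k$ and a vertex $v$ in it. Every partial product from a source to $v$ uses a set $S_v\subseteq[d]$ of size $k$ of second superscripts. By zerosumfreeness and the structure of the monomials, this set is the same for all nonzero paths through $v$: if two prefixes with different sets shared a suffix, the product would contain a monomial with a repeated superscript, which does not occur and cannot be cancelled. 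Each vertex thus carries a label $S_v$, and $S_u\subset S_v$ along edges, with sizes increasing by one.

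\textbf{Step 3: counting.} For each choice of chain, the vertices with prescribed labels form an ABP computing the $\sigma$-ordered piece restricted to that chain. Instead of summing over chains, I would argue that for every $k$ the vertices in layer $k$ split by label $S$. The sub-ABP through vertices labelled $S$ computes $\hXi_{n,d,\pi}$ restricted to monomials whose first-$k$ part uses superscripts $S$. Its rank in the corresponding flattening (left part $S$, right part complement), computed as in the lower bound of Proposition~\ref{pro:Nisanoptimal}, is $n^{\#\beta(\pi,\sigma,k)}$ for any $\sigma$ with $\sigma(\{1..k\})=S$; this uses Lemma~\ref{lem:rankofid} via a permutation submatrix. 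The main obstacle is combining layers: a single $\sigma$ must be consistent with the labels across all $k$ where the layer is large. I would choose a single path of labels greedily. Starting from the sink's full label, go backwards picking at each layer a label $S$ present among the vertices. Any maximal chain $S_0\subset\cdots\subset S_d$ of labels occurring in $G$ defines $\sigma$, and then layer $k$ has at least $n^{\#\beta(\pi,\sigma,k)}$ vertices for every $k$. Hence $w\ge n^{\beta(\pi,\sigma)}\ge n^{\beta_{\textup{opt}}(\pi)}$, possibly with slack from the boundary sets at $k=0,d$, where the $e_{i_0,j_0}$ and $e_{i_d,j_d}$ entries count only sources and sinks. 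The loss of $1/d$ comes from this accounting: the maximizing layer's count might be split among at most $d$ label classes when the chain is not unique. I expect the consistency of a single $\sigma$ to be the delicate point.
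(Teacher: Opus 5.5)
Your Steps~1 and~2 are sound and match the paper. After pruning, the set $S_v$ of second superscripts read before a vertex $v$ is well defined, because you can splice a prefix of one path with a suffix of another; this is \eqref{eq:sigmapsigmaq} in the paper.

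The gap is in Step~3. In $\sS$ a monomial of $\hXi_{n,d,\pi}$ has no ``first-$k$ part'': the ordering $\sigma_p$ belongs to a path, not to a monomial. The same monomial can be produced by paths with different orderings. Over $\IR_{\geq 0}$, for example, its coefficient can even be split as $\tfrac12+\tfrac12$. Hence the polynomial $f_S$ computed through the label-$S$ vertices of layer $k$ is only some sub-sum of $\hXi_{n,d,\pi}$, with a support you do not control. Its flattening therefore need not contain the $n^{\#\beta(\pi,\sigma,k)}\times n^{\#\beta(\pi,\sigma,k)}$ permutation submatrix that Lemma~\ref{lem:rankofid} requires.

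For instance, a program may send all monomials with $i_o\leq n/2$ through one ordering $\sigma$ and the rest through other orderings, where $o$ is a boundary point of $\sigma$ at its critical layer $k$. Then the vertices labelled $\sigma(\{1,\ldots,k\})$ only carry about half of the rows of the flattening you analyze. So for the chain you select, nothing yields $w\geq n^{\beta(\pi,\sigma)}$. Your account of the factor $1/d$ (``split among at most $d$ label classes'') is not derived from anything either. A rank bound along a single fixed ordering is exactly the $\sT$ argument of Proposition~\ref{pro:Nisanoptimal}; the whole difficulty in $\sS$ is that the ordering can vary from path to path.

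What is missing is an argument that handles all orderings at once. The paper does this with a covering count instead of ranks:
\begin{itemize}
\item The same splicing argument shows more than that $S_v$, and hence the boundary set $\beta(v)$, is well defined. It also shows that the \emph{values} of the indices at the boundary positions agree for all paths through $v$.
\item This ``bag'' $\beta(v)\to\{1,\ldots,n\}$ means $v$ lies on paths producing at most $n^{2d+2-\#\beta(v)}$ of the $n^{2d+2}$ monomials.
\item Each monomial is produced by at least one path $p$, and $p$ contains a vertex with $\#\beta(v)=\beta(\pi,\sigma_p)\geq\beta_{\textup{opt}}(\pi)$.
\item Therefore $G$ has at least $n^{\beta_{\textup{opt}}(\pi)}$ vertices in total, and the factor $1/d$ is simply averaging over the layers.
\end{itemize}
This count does not care how the monomials are distributed among orderings or how their coefficients are split. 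That is precisely what your single-chain rank argument cannot handle.
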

\begin{proof}
The proof technique is standard, see \cite{KPR23} and references therein.
Let $G$ be an ABP that computes $\hXi$. Edges with label zero are removed.
We call the pair $(k,k')$ the \emph{edge layer pair} of the variable $x_{i,j}^{(k)}\boxtimes x_{i',j'}^{(k')}$.
Let $\sigma_p := \big(\sigma(1),\ldots,\sigma(d)\big)$ be the list of the second indices of the edge layer pairs of variables on $p$ read from source to sink.
If $p$ and $q$ have a common vertex $v$ in some vertex layer $k$, then
\begin{equation}\label{eq:sigmapsigmaq}
\{\sigma_p(1),\ldots,\sigma_p(k)\}=\{\sigma_q(1),\ldots,\sigma_q(k)\}
 \ \textup{ and } \ 
\{\sigma_p(k+1),\ldots,\sigma_p(d)\}=\{\sigma_q(k+1),\ldots,\sigma_q(d)\},
\end{equation}
because otherwise the path that starts at a source, goes via $p$ until $v$,
and continues from $v$ via $q$ to a sink would contribute to the coefficient of a monomial that is not a monomial of $\hXi$.
We define analogously to \eqref{eq:betapisigmak}
\begin{eqnarray*}
\beta(v) &:=&
\{1\}\times\Big(
\{0,\pi(\sigma(1))-1,\pi(\sigma(1)),\pi(\sigma(2))-1,\pi(\sigma(2)),\ldots,\pi(\sigma(k))-1,\pi(\sigma(k))\} 
\\
&& \hspace{1.5cm}\cap \, \{\pi(\sigma(k+1))-1,\pi(\sigma(k+1)),\pi(\sigma(k+2))-1,\pi(\sigma(k+2)),\ldots,\pi(\sigma(d))-1,\pi(\sigma(d)),d\} 
\Big)
\\
&\cup&
\{2\}\times\Big(
\{0,\sigma(1)-1,\sigma(1),\sigma(2)-1,\sigma(2),\ldots,\sigma(k)-1,\sigma(k)\} 
\\
&& \hspace{1.5cm}\cap \, \{\sigma(k+1)-1,\sigma(k+1),\sigma(k+2)-1,\sigma(k+2),\ldots,\sigma(d)-1,\sigma(d),d\} 
\Big),
\end{eqnarray*}
where $\sigma$ can be chosen arbitrarily from $\{\sigma_p \mid v \in p\}$, and $\beta(v)$ is independent of this choice because of \eqref{eq:sigmapsigmaq}.
We call $\beta(v)$ the set of \emph{boundary points} of~$v$.
To every vertex $v$ we assign its \emph{bag},
which is a function $\textup{bag}(v):\beta(v)\to\{1,\ldots,n\}$ as follows.
We take a direct source-sink path $p$ that uses $v$ and
a variable $x_{i,j}^{(k)}\boxtimes x_{i',j'}^{(k')}$ on $p$
such that
$(1,k-1) \in \beta(v)$,
and we set $\textup{bag}(v)((1,k-1))=i$.
Analogously, for $(1,k) \in \beta(v)$ we set $\textup{bag}(v)((1,k))=j$,
and
for $(2,k'-1) \in \beta(v)$ we set $\textup{bag}(v)((2,k'-1))=i'$,
and
for $(2,k') \in \beta(v)$ we set $\textup{bag}(v)((2,k'))=j'$.
This setting is consistent, because of the structure of the monomials of the entries of $\hXi$.
Moreover, this completely defines $\textup{bag}(v)$.
The definition of $\textup{bag}(v)$ does not depend on the choice of~$p$,
because for $p$ and $q$ through $v$ with different bags for $v$ (but still, \eqref{eq:sigmapsigmaq} holds)
we see that there is also the path that starts at a source and goes via $p$ until it reaches $v$ and continues via $q$ to a sink,
and that path would not correspond to a monomial in $\hXi_{n,d,\pi}$.
The bag size of $v$ is defined as $\#\beta(v)$.
If $v$ is the vertex in vertex layer $k$ on a direct source-sink path $p$,
then $\#\beta(v)=\#\beta(\pi,\sigma_p,k)$.
By definition of $\beta(\pi,\sigma_p)$ we have that
at least one of the vertices has bag size $\beta(\pi,\sigma_p)$.
Note that $\beta_\textup{opt}(\pi)\leq \beta(\pi,\sigma_p)$,
therefore for every direct source-sink path $p$ 
at least one of the vertices has bag size at least $\beta_\textup{opt}(\pi)$.

We assign a \emph{position} in $\{1,2\}\times\{0,\ldots,d\}$ to every subscript of a variable
$x_{i,j}^{(k)}\boxtimes x_{i',j'}^{(k')}$ by saying that the position of $i$ is $(1,k-1)$,
the position of $j$ is $(1,k)$, the position of $i'$ is $(2,k'-1)$,
and the position of $j'$ is $(2,k')$.
There are $2d+2$ many positions, and the choice of a value from $\{1,\ldots,n\}$ for each position corresponds bijectively to a monomial in $\hXi_{n,d,\pi}$.
By definition, in every monomial of $\hXi_{n,d,\pi}$, all indices that have the same position have the same value.
The bag of a vertex $v$ fixes $\#\beta(v)$ many values of positions,
hence there can be at most $n^{2d+2-\#\beta(v)}$ many monomials
whose computation gets a contribution from direct source-sink paths that use $v$.

Every monomial has at least one vertex with $\#\beta(v)\geq\beta_{\textup{opt}}$,
therefore this vertex is only involved in the computation of at most $n^{2d+2-\beta_{\textup{opt}}}$ many monomials.
Since we have $n^{2d+2}$ many monomials, we have at least $n^{2d+2}/n^{2d+2-\beta_{\textup{opt}}}=n^{\beta_{\textup{opt}}}$ many vertices, thus $G$ has width at least $n^{\beta_{\textup{opt}}}/d$.
\end{proof}

For zerosumfree $R$ and $A=\sS$,
combining Proposition~\ref{pro:monotoneupperbound} and Proposition~\ref{pro:monotonelowerbound}
gives
$\beta_\textup{opt}(d) = \xi_d$.
In analogy to $\beta_\textup{opt}(d)$ we define
$\beta'(\pi,\sigma,k) := \beta(\pi,\sigma,k)\setminus\{(1,0),(1,d),(2,0),(2,d)\}$,
and $\beta'(\pi,\sigma):=\max\{\#\beta'(\pi,\sigma,k)\mid 0\leq k\leq d\}$,
and $\beta'_\textup{opt}(\pi):=\min\{\#\beta'(\pi,\sigma)\mid \sigma\in\aS_d\}$,
and
$\beta'_\textup{opt}(d) := \max\{\beta'_\textup{opt}(\pi)\mid \pi\in\aS_d\}$.
By construction, $\beta(\pi,\sigma,k) \leq 4+\beta'(\pi,\sigma,k) \leq 4+\beta(\pi,\sigma,k)$.
We see $\beta'_\textup{opt}(d) = \gamma_d$ in the same way as
$\beta_\textup{opt}(d) = \xi_d$.
By brute force enumeration of $(\pi,\sigma,k)$ we obtain the following values.
\begin{center}
\begin{tabular}{r|ccccccc}
$d$&2&3&4&5&6&7
\\
\hline
$\xi_d$ for $A=\sT$ & 
4&4&6&6&8&8
\\
$\xi_d$ for zerosumfree $R$ and $A=\sS$ &
4&4&6&6&6&8
\\
\hline
$\gamma_d$ for $A=\sT$ & 
2&3&4&5&6&7
\\
$\gamma_d$ for zerosumfree $R$ and $A=\sS$ &
2&3&4&4&5&6
\end{tabular}
\end{center}
Note the difference between $\xi_6$ for $A=\sT$
and $\xi_6$ for $R$ zerosumfree and $A=\sS$.
The next proposition shows that such differences cannot be large.
\begin{proposition}
For zerosumfree commutative semirings $R$ and $A=\sS$ we have
\[
\beta_\textup{opt}(d) = \Omega(d).\qedhere
\]
\end{proposition}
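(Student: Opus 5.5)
The plan is a probabilistic choice of $\pi$, with a union bound applied at the single middle step $k=\lfloor d/2\rfloor$. We must exhibit $\pi\in\aS_d$ such that for every $\sigma\in\aS_d$ some $k$ satisfies $\#\beta(\pi,\sigma,k)\geq cd$ for an absolute constant $c>0$. It suffices to look only at $k:=\lfloor d/2\rfloor$.

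\textbf{Combinatorial reformulation.} Think of position $m\in\{1,\ldots,d\}$ as the edge $(v_{m-1},v_m)$ of the path $v_0,\ldots,v_d$, as in the edge-colouring picture used above for $\beta(d)$. Put $S:=\sigma(\{1,\ldots,k\})$, a $k$-subset of edges. The $\{2\}$-part of $\beta(\pi,\sigma,k)$ contains every vertex $v_o$ with $0<o<d$ that is incident to one edge in $S$ and one edge not in $S$. Likewise, the $\{1\}$-part contains every such vertex for the edge set $\pi(S)$. For a set $E$ of edges let $b(E)$ be the number of these inner boundary vertices. Then $b(E)+2$ is at least the number of maximal runs (intervals) of $E$ and of its complement, and
\[
\#\beta(\pi,\sigma,k)\ \geq\ b(S)+b(\pi(S)).
\]
So it suffices to find $\pi$ such that no $k$-set $S$ has both $b(S)<\varepsilon d$ and $b(\pi(S))<\varepsilon d$.

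\textbf{Counting.} A $k$-subset $E$ with $b(E)<\varepsilon d$ is determined by the set of its boundary vertices, together with one bit saying whether the first edge lies in $E$. Hence the family $\mathcal L_\varepsilon$ of such sets satisfies
\[
\#\mathcal L_\varepsilon\ \leq\ 2\sum_{t<\varepsilon d}\binom{d}{t}\ \leq\ 2\cdot 2^{H(\varepsilon)d},
\]
where $H$ is the binary entropy function.

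\textbf{Random $\pi$.} Take $\pi$ uniformly at random. For fixed $k$-sets $A,B$ we have $\Pr[\pi(A)=B]=1/\binom{d}{k}\leq (d+1)2^{-d}$. A union bound over all pairs $(A,B)\in\mathcal L_\varepsilon^2$ bounds the failure probability by
\[
4(d+1)\,2^{(2H(\varepsilon)-1)d}.
\]
This is $<1$ for large $d$ once $\varepsilon$ is fixed with $H(\varepsilon)<1/2$, for example $\varepsilon=1/20$. So some $\pi$ exists with $\pi(A)\notin\mathcal L_\varepsilon$ for all $A\in\mathcal L_\varepsilon$.

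For this $\pi$ and any $\sigma$, either $b(S)\geq\varepsilon d$ or $b(\pi(S))\geq\varepsilon d$. Hence $\beta(\pi,\sigma)\geq\varepsilon d$, so $\beta_{\textup{opt}}(\pi)\geq\varepsilon d$ and $\beta_{\textup{opt}}(d)\geq\varepsilon d$ for large $d$. Small $d$ are absorbed into the $\Omega$ constant, for instance using $\beta_{\textup{opt}}(d)\geq 2$ from \eqref{eq:atleasttwo}.

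\textbf{Main obstacle.} The only delicate step is the bookkeeping linking $\#\beta(\pi,\sigma,k)$ to $b(S)$ and $b(\pi(S))$. One has to check from \eqref{eq:betapisigmak} that the second coordinate set really is the set of vertices adjacent to an edge of $\sigma(\{1,\ldots,k\})$ and to an edge of its complement, and similarly for $\pi\sigma$. It must also be confirmed that the endpoints $0$ and $d$ only add a constant, which is harmless. If a constructive $\pi$ were preferred, I would instead try an interleaving permutation that maps intervals to well-spread sets. The random argument avoids this and already suffices for an existence statement.
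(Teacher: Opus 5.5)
Your proof is correct, and it takes a different route from the paper. The paper's proof is a one-line reduction. It asserts, without proof, that $\beta'_{\textup{opt}}(\pi)$ coincides with a width parameter of a graph attached to $\pi$ ($C(\pi)$ in \cite[Def~2.6]{AR08}), and then imports the linear lower bound of \cite[Thm~3.4]{AR08}.

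You instead prove the required expansion property from scratch. The bookkeeping step you flagged is right. For $0<o<d$, the point $o$ lies in both sets of \eqref{eq:betapisigmak} exactly when precisely one of the positions $o,o+1$ lies in $S=\sigma(\{1,\ldots,k\})$ (respectively in $\pi(S)$ for the first coordinate). Hence $\#\beta'(\pi,\sigma,k)=b(S)+b(\pi(S))$ and $\#\beta(\pi,\sigma,k)\geq b(S)+b(\pi(S))$. Equivalently, $\#\beta'(\pi,\sigma,k)$ counts the edges crossing the cut $(S,S^c)$ in the union of the path on $\{1,\ldots,d\}$ (edges $\{o,o+1\}$) and its image under $\pi^{-1}$. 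Your entropy count of sets with few run changes, played against the $\binom{d}{\lfloor d/2\rfloor}\geq 2^d/(d+1)$ equally likely values of $\pi(A)$, then yields a good $\pi$ by the union bound.

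Your route has several advantages:
\begin{itemize}
\item It is elementary and self-contained.
\item It gives an explicit constant ($\varepsilon=1/20$ for all large $d$).
\item It shows that the single cut $k=\lfloor d/2\rfloor$ already works against every $\sigma$.
\item Since you count only inner boundary points, it proves $\beta'_{\textup{opt}}(d)=\Omega(d)$ directly. Via the paper's identification $\beta'_{\textup{opt}}(d)=\gamma_d$, this also gives $\gamma_d=\Omega(d)$, and it needs no appeal to the unproved identification with the graph parameter of \cite{AR08}.
\end{itemize}
The paper's route buys brevity and a link to a standard graph-theoretic width notion. Your $\pi$ is obtained non-constructively, although a uniformly random $\pi$ works with probability $1-o(1)$.

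A minor point: for small $d$ you need not go through \eqref{eq:atleasttwo} and the identity $\beta_{\textup{opt}}(d)=\xi_d$. For every $\pi,\sigma$ one has $\beta(\pi,\sigma,0)=\{(1,0),(2,0)\}$, so $\beta_{\textup{opt}}(d)\geq 2$ directly.
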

\begin{proof}
This follows from \cite[Thm 3.4]{AR08} and the fact that $\beta'_{\textup{opt}}(\pi)$
is the same as the pathwidth of a permutation graph for $\pi$, which is called $C(\pi)$ in \cite[Def 2.6]{AR08}.
\end{proof}

\subsection{Monotonicity of the exponents}
\label{subsec:affine}
In this subsection we prove that the sequences $(\xi_d)_{d\in\IN}$
and $(\gamma_d)_{d\in\IN}$ grow monotonously,
where for $A=\sW$ we require that $R$ is a commutative ring.

An affine algebraic branching program (aABP) is
defined in the same way as an ABP with the only difference
that we allow affine $R$-linear combinations of variables as edge labels.
We also do not require the all source-sink paths to have the same length.
The size of an aABP is the number of its vertices, not counting sources or sinks.
We write $\asize(F)$ for the smallest size of an aABP computing $F$.
For $F\in \Mat(A_d)$, we clearly have
\begin{equation}
\label{eq:asizeleqw}
\asize(F)\leq (d-1) \cdot \w(F).
\end{equation}
The following proposition is classical and shows the other direction.
\begin{proposition}[Homogenization]
\label{pro:homogenization}
Let $H$ be a matrix with entries in $\sT$
or with entries in $\sS$ or with entries in $\sW$.
and let $F$ be a homogeneous component of $H$.
Then $\w(F) \leq \asize(H)$.
In particular, $\w(F) \leq \asize(F)$.
\end{proposition}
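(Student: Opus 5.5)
Given an aABP $G$ of size $s=\asize(H)$ computing $H$, I will build a layered, homogeneous ABP of width at most $s$ computing the degree-$d$ component $F$. The approach is the classical layered homogenization: keep one copy of each vertex per degree, and route each edge by splitting its affine label into its constant part and its linear part.

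First, I split every edge label $\ell=c+\ell_1$ with $c\in R$ and $\ell_1\in A_1$. For every vertex $v$ and every $k\in\{0,\ldots,d\}$ I create a copy $(v,k)$ placed in vertex layer $k$. For each edge $u\to v$ of $G$ I add a constant edge $(u,k)\to(v,k)$ labeled $c$, and an edge $(u,k)\to(v,k+1)$ labeled $\ell_1$, for all $k$. Sources are $(s_a,0)$ and sinks are $(t_b,d)$. Since $G$ is acyclic, the constant edges within a layer form no directed cycle, so the result is a valid ABP in the sense of the paper.

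Next I verify correctness. A path in the new program from $(s_a,0)$ to $(t_b,d)$ corresponds bijectively to a path in $G$ from $s_a$ to $t_b$ together with a choice, on each edge, of either its constant part or its linear part, with exactly $d$ linear parts chosen. Expanding the product of affine labels along a path of $G$ by distributivity (which holds in $\sT$, and passes to $\sS,\sW$ as quotients) gives exactly the sum over such choices, keeping the order of the factors. The degree-$d$ part of $[H]_{a,b}$ is therefore precisely the sum over the new program's paths, i.e.\ $[F]_{a,b}$. This needs only the semiring axioms, so it works in all three algebras.

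The main obstacle is the width count. Layer $k$ contains $s$ copies of the non-source/sink vertices, plus copies of sources and sinks. Only layer $0$ is entitled to count the sources and only layer $d$ the sinks; intermediate copies $(s_a,k)$ and $(t_b,k)$ would count, and could push the width above $s$. I will handle this by first normalizing $G$ without increasing its size. Sources are given no incoming edges (incoming edges into a source can be dropped for computing source-to-sink paths, or the source duplicated). Then $(s_a,k)$ for $k>0$ is unreachable and can be deleted, and dually $(t_b,k)$ for $k<d$ is deleted.

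A subtle point is a direct edge $s_a\to t_b$, or a source that is also a sink. For these I would place the copies in the first or last layer, and use the constant-edge removal procedure from the proof of the ABP–width claim to collapse constant edges. This keeps layers $0$ and $d$ counting only sources and sinks, so the width is at most $s$. The "in particular" statement is the case $H=F$.
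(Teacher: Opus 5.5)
Your construction is the paper's: $d+1$ copies of every vertex, the constant part of a label kept inside a layer, the linear part moved up one layer, and only the copies $(s_a,0)$ and $(t_b,d)$ kept as sources and sinks. Your path-by-path correctness check is fine; the paper leaves it implicit. Direct edges $s_a\to t_b$ need no special treatment: they only contribute in degrees $\le 1$, and the construction already routes them correctly.

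The gap is in the width count, at the two points where the paper's proof is also silent.

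\emph{The normalization.} With several sources, a source $s_a$ can lie inside a path that starts at another source. Then its incoming edges cannot be dropped, since that changes $H$. Splitting it off instead turns the old vertex into an internal one, which increases the size. This matters. Take sources $s_1,s_2$, one internal vertex $u$, sinks $t_1,t_2$, and edges $s_1\to s_2\to u\to t_1\to t_2$, $s_1\to u$, $s_2\to t_1$, $u\to t_2$, each labeled by a distinct variable. Over $\sT$ the degree-$2$ part $F$ of the computed matrix satisfies $\rk\mathcal F_1(F)=3$, so $\w(F)\ge 3$ by Theorem~\ref{thm:Nisanmatrix}. Yet this aABP has size $1$ and only two sources and two sinks, so the copies $(s_2,1)$ and $(t_1,1)$ are genuinely needed. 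The paper's deletion of $s_{a,i}$ for $i>0$ and $t_{b,i}$ for $i<d$ tacitly assumes that sources have in-degree $0$ and sinks have out-degree $0$. You should impose this as a convention rather than try to derive it; it holds automatically for a single source and a single sink.

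\emph{The final width claim.} Even with that convention, your last sentence does not follow. Layers $0$ and $d$ count only sources and sinks, but nothing bounds how many there are by $s$. For $[H]_{i,j}=x_iy_j$ with $1\le i,j\le n$, one internal vertex suffices, so $\asize(H)\le 1$. But every ABP computing $H$ has $n$ sources in layer $0$, so $\w(H)=n$. Similarly, $\asize(x_1)=0<1=\w(x_1)$. So the statement is false as written, and the paper's ``in other words'' makes the same leap.

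What the construction does prove, for a $p\times q$ matrix $H$ and aABPs whose sources have no incoming edges and whose sinks have no outgoing edges, is
\[
\w(F)\le\max(\asize(H),p,q).
\]
For $1\times 1$ matrices with $d\ge 2$ this gives the stated bound $\w(F)\le\asize(H)$, because a nonzero $F$ forces at least one internal vertex. This corrected form is also what the paper's later applications actually use.
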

\begin{proof}
A vertex in an ABP computes a matrix, by considering only those source-sink paths that use that vertex.
Let $d:=\deg(F)$.
We take a minimal size aABP $G$ that computes $H$
and we replace every vertex $v$ with $d+1$ many vertices $v_0,\ldots,v_d$.
We will set the edges so that if a vertex $v$ computes some matrix $M$, then $v_i$ will compute the $i$-th homogeneous component of $M$, for $0\leq i \leq d$,
and we will not compute higher degree components.
If $\ell + \alpha$ is an edge label in $G$ from $v$ to $w$
with $\ell$ homogeneous linear and $\alpha$ constant,
then we add edges with label $\ell$ from each $v_i$ to $w_{i+1}$,
and we add edges with label $\alpha$ from each $v_i$ to $w_i$.
For the sources $s_a$ we delete all $s_{a,i}$ with $i>0$,
and for the sinks $t_b$ we delete all $t_{b,i}$ with $i<d$.
This finishes the construction.
The result is an ABP that has at most $\asize(H)$ many vertices
in each vertex layer but the first and last, in other words, the width of the ABP is at most $\asize(H)$.
Hence $\w(H)\leq\asize(H)$.
\end{proof}

\begin{corollary}\label{cor:xigammamonotone}
For $A\in\{\sT,\sS\}$ the sequences $(\xi_d)_{d\in\IN}$
and $(\gamma_d)_{d\in\IN}$ grow monotonously.
\end{corollary}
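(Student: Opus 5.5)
Monotonicity means $\xi_d\leq\xi_{d+1}$ and $\gamma_d\leq\gamma_{d+1}$. I plan to show that, up to factors polynomial in $d$ but independent of $n$, the width of $\hXi_{n,d}$ is bounded by the width of $\hXi_{n,d+1}$, and likewise for $\hGamma$. Taking $\log_n$ and $\limsup_{n\to\infty}$ then removes those factors. The tools are the affine ABPs just introduced together with Proposition~\ref{pro:homogenization}.

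\textbf{Step 1: an affine substitution.} Take a width-optimal ABP for $\hXi_{n,d+1}$ and substitute in the layer-$(d+1)$ matrix. Send $x^{(d+1)}_{i,j}\mapsto\delta_{i,j}$, so that $X^{(d+1)}_n$ becomes the identity. In the $\boxtimes$-variables this means $x^{(k)}_{i,j}\boxtimes x^{(k')}_{i',j'}$ is mapped to $\delta_{i,j}\delta_{i',j'}$ whenever $k=k'=d+1$. Whenever exactly one of $k,k'$ equals $d+1$, the variable is renamed to a product-free form; the cleanest choice there is to keep it as a variable for now. This substitution turns every edge label into an affine linear form, so the result is an aABP of size at most $d\cdot\w(\hXi_{n,d+1})$ by \eqref{eq:asizeleqw}. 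It computes a (non-homogeneous) matrix $H$.

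\textbf{Step 2: extract $\hXi_{n,d}$.} Expand $H$ using \eqref{eq:hXisumhXipi}. Every monomial of $\hXi_{n,d+1}$ is indexed by some $\pi\in\aS_{d+1}$ and contains exactly one factor of the form $x^{(k)}\boxtimes x^{(d+1)}$ and exactly one of the form $x^{(d+1)}\boxtimes x^{(k')}$. These coincide precisely when $\pi(d+1)=d+1$. To isolate that case, I additionally send all mixed variables (those with exactly one superscript equal to $d+1$) to $0$. Then only the terms with $\pi(d+1)=d+1$ survive, the pair $(d+1,d+1)$ becomes $\delta\delta$, and the sums over $i_{d+1}$ and $j_{d+1}$ collapse. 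The result is exactly $\hXi_{n,d}$, which is homogeneous of degree $d$.

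Pure-layer-$(d+1)$ variables become constants, so the degree drops. Proposition~\ref{pro:homogenization} then gives
\[
\w(\hXi_{n,d})\leq \asize(H)\leq d\,\w(\hXi_{n,d+1}).
\]
For $\hGamma$ I take the bottom-right entry and argue identically. The index $b$ now equals $n$, which the $\delta$-substitution preserves. Both $A=\sT$ and $A=\sS$ are handled, since only restriction by constants is used, and constants commute past everything in $\sT$ as scalars.

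\textbf{Main obstacle.} The delicate point is checking Step 2 carefully in the noncommutative case $A=\sT$. When a constant replaces the layer-$(d+1)$ factor in the middle of a tensor word, it must simply disappear, leaving a degree-$d$ word in the correct order. Over $\sT$ the substitution is an $R$-linear map sending a variable to a scalar times the empty word, which is legitimate inside an aABP. I also need to verify that the combinatorics of the $\pi\in\aS_{d+1}$ with $\pi(d+1)=d+1$ reproduce $\aS_d$ bijectively with the same coefficients. That check is the one I would write out explicitly. The final limit step is immediate, since a factor of $d$ does not affect the exponent in $n$.
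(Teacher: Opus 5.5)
Your proposal is correct and is essentially the paper's argument. You substitute constants for the pure last-layer variables to obtain an affine ABP of size at most $d$ times the width. You then homogenize via Proposition~\ref{pro:homogenization}, losing only a factor $d$, which vanishes in the exponent.

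The differences are minor. The paper treats each piece $\hXi_{n,d,\pi}$ with $\pi(d)=d$ separately and then invokes \eqref{eq:xidasmax}, whereas you kill the mixed variables in the full $\hXi_{n,d+1}$ directly. Your substitution $x^{(d+1)}_{i,j}\boxtimes x^{(d+1)}_{i',j'}\mapsto\delta_{i,j}\delta_{i',j'}$ is also the precise one. Setting these variables to $1$, as the paper literally writes, would produce the row sums of the smaller matrix rather than the matrix itself.
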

\begin{proof}
Let $\pi\in\aS_d$ with $\pi(d)=d$, and let $\kappa\in\aS_{d-1}$ be defined via $\forall 1\leq i \leq d-1: \kappa(i)=\pi(i)$.
Take an ABP of minimal size that computes $\hXi_{n,d,\pi}$.
Set all variables $x_{i,j}^{(k)}\boxtimes x_{i',j'}^{(\pi(k))}$ to 1 for which $k=d$ (hence also $\pi(k)=d$).
The resulting aABP computes $\hXi_{n,d-1,\pi}$.
Hence $\asize(\hXi_{n,d-1,\kappa}) \leq \asize(\hXi_{n,d,\pi})$.
By Proposition~\ref{eq:asizeleqw} and Proposition~\ref{pro:homogenization}
this implies
$\w(\hXi_{n,d-1,\kappa}) \leq d\cdot\w(\hXi_{n,d,\pi})$.
Hence, $\forall\kappa\in\aS_{d-1}:\inf\{\log_n(\w(\hXi_{n,d-1,\kappa}))\} \leq \xi_d$,
and therefore $\xi_{d-1}\leq\xi_d$ by \eqref{eq:xidasmax}.
The proof of $\gamma_{d-1}\leq\gamma_d$ is analogous.
\end{proof}

\begin{proposition}
Let $R$ be a commutative ring, and $A=\sW$.
Then the sequences $(\xi_d)_{d\in\IN}$
and $(\gamma_d)_{d\in\IN}$ grow monotonously.
\end{proposition}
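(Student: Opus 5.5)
We reuse the strategy of Corollary~\ref{cor:xigammamonotone}: for every $\kappa\in\aS_{d-1}$ we show that $\w(\hXi_{n,d-1,\kappa})$ is at most a polynomial in $d$ times $\w(\hXi_{n,d,\pi})$ for a suitable $\pi\in\aS_d$. Then \eqref{eq:xidasmax} gives $\xi_{d-1}\leq\xi_d$, and the analogous statement for the bottom-right entries gives $\gamma_{d-1}\leq\gamma_d$. We take $\pi$ to extend $\kappa$ by $\pi(d)=d$. A minimal ABP for $\hXi_{n,d,\pi}$ becomes an aABP once we substitute the constant $1$ for every variable $x^{(d)}_{i,j}\boxtimes x^{(d)}_{i',j'}$, and we restrict further so that only $i=i'$ and $j=j'$ survive. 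Then Proposition~\ref{pro:homogenization} yields a homogeneous ABP for the degree $d-1$ component, and \eqref{eq:asizeleqw} bounds its width by $d\cdot\w(\hXi_{n,d,\pi})$.

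The difficulty specific to $\sW$ is that this substitution is \emph{not} an element of $\End$, since it sends a variable to a constant. So we must check that it gives a well-defined map $\sW_d\to\sW_{\le d-1}$ that sends $[\hXi_{n,d,\pi}]$ to (a sign times) $[\hXi_{n,d-1,\kappa}]$. My plan is to avoid substitution inside $\sW$ and instead use the contraction $\partial_i$ from \eqref{eq:partialiw}. For $d\geq 3$ this is well defined on $\sW_d$, and over a commutative ring we have $-1$, so the case $d=2$ is covered as well. Because the $\sT$-representative of each monomial of $\hXi_{n,d,\pi}$ has exactly one variable with edge layer pair $(d,d)$, namely in the last tensor position, contraction sends it to $\pm$ the remaining $(d-1)$-fold wedge. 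Since $R$ is a ring, the sign $(1\ 2)^{d-1}$ in \eqref{eq:partialiw} becomes a global factor $(-1)^{d-1}$, which is invertible.

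The key step is to show that contraction can be carried out along an ABP without increasing the width more than polynomially. For an ABP computing $F\in\Mat(\sW_d)$, the contraction $\sum_{i}\partial_{i}$ over the $n^2$ variables of layer pair $(d,d)$ (using the restriction $i=i'$, $j=j'$) can be realized in ABP form by a ``derivative'' construction. We double each vertex layer into an \emph{undifferentiated} copy and a \emph{differentiated} copy. Each edge with label $\ell$ contributes a constant edge with label $\partial(\ell)$ from the undifferentiated to the differentiated copy. Over $\sW$ we also twist the signs: after a differentiation at tensor position $m$, the differentiated term carries $(-1)^{m-1}$, and we implement this sign by scaling the constant edge inserted at layer $m$ by $(-1)^{m-1}$. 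This is exactly where we need a ring. The result is an aABP of size $O(d\cdot\w)$ that computes $(-1)^{d-1}\hXi_{n,d-1,\kappa}$ in degree $d-1$. Homogenization (Proposition~\ref{pro:homogenization}) and \eqref{eq:asizeleqw} then finish the argument.

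I expect the main obstacle to be verifying that the signed derivative ABP computes the image of the well-defined $\partial$ on $\sW$ rather than on some $\sT$-representative. My plan for this is to compute everything in $\sT$ with the signed Leibniz rule. The signed rule is $\aS_d$-compatible in the sense used to prove well-definedness after \eqref{eq:partialiw}, so it descends to $\sW$. The argument for $\gamma$ is identical after restricting to the bottom-right entry.
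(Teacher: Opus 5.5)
Your proposal is correct and is essentially the paper's argument. The paper first restricts the layer-$(d,d)$ variables to a single new variable $y$, which yields $\hXi_{n,d-1}\wedge y$. It then splits every vertex of the aABP by $y$-degree $0$ or $1$ and replaces $y$ by $\pm1$ according to the parity of its edge layer. That is exactly your two-copy signed-derivative construction, and you additionally make explicit why the signed contraction descends from $\sT$ to $\sW$, which the paper leaves implicit.

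One index detail, shared with the paper's text: to obtain $\hXi_{n,d-1,\kappa}$ you must contract (or keep) the variables $x^{(d)}_{i,j}\boxtimes x^{(d)}_{i',j'}$ with $i=j$ and $i'=j'$, so that the last Kronecker factor becomes the identity matrix. The condition $i=i'$, $j=j'$ as written would instead produce $\hXi_{n,d-1,\kappa}$ multiplied by a rank-one matrix.
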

\begin{proof}
Let $\pi\in\aS_d$ with $\pi(d)=d$, and let $\kappa\in\aS_{d-1}$ be defined via $\forall 1\leq i \leq d-1: \kappa(i)=\pi(i)$.
Crucial property over commutative semirings even over wedges:
If in $\hXi_{n,d,\pi}$
we map each $x_{i,j}^{(d)}\boxtimes x_{i',j'}^{(d)}$ to $y$ if and only if $i=i'$ and $j=j'$,
otherwise we map it to zero,
we obtain $(\hXi_{n,d-1,\kappa})\wedge y$.
If in $\hXi_{n,d}$ we additionally map all $x_{i,j}^{(d)}\boxtimes x_{i',j'}^{(d')}$ to zero for which $d\neq d'$,
we obtain $\hXi_{n,d-1}\wedge y$.
Given an aABP that computes $\hXi_{n,d-1}\wedge y$,
we split every vertex into two vertices, one for $y$-degree 0 and one for $y$-degree 1 (higher $y$-degrees are deleted).
We now replace every $y$ that appears in an edge layer of the same parity as $d$ with $1$,
and we replace $y$ with $-1$ otherwise.
The resulting aABP computes $\hXi_{n,d-1}$,
hence $\asize(\hXi_{n,d-1})\leq \asize(2 \hXi_{n,d})$,
and thus $\xi_{d-1}\leq\xi_d$, see \eqref{eq:asizeleqw} and Proposition~\ref{pro:homogenization}.
The proof of $\gamma_{d-1}\leq\gamma_d$ is analogous.
\end{proof}

\section{Coefficients of polynomials given by Boolean circuits}\label{sec:boolean}

In this section we study elements of $A_d$ whose coefficients are computed by small Boolean circuits, defined below.
Let $R$ be a commutative semiring and $A\in\{\sT,\sS,\sW\}$.
We define the negation $\overline{0} := 1$ and $\overline{1} := 0$.
For a finite list $\mathbf b$ we define $\overline{\mathbf b}$ as the negation applied on all elements.
We also use the negation symbol on variables, for example $\overline{x_i}$, but this is a new variable instead of a negation.
We define
\[
(x,\overline{x})^{\mathbf{b},\overline{\mathbf{b}}} \ := \ x_1^{b_1} \overline{x}_1^{\overline{b}_1} x_2^{b_2} \overline{x}_2^{\overline{b}_2} \cdots x_{m}^{b_{m}} \overline{x}_{m}^{\overline{b}_m} \ \in A_{2m}.
\]

A Boolean circuit $B$ is a directed acyclic graph with a single outdegree 0 vertex,
whose indegree 0 vertices (called \emph{inputs}) are labeled with variables, 
and whose other vertices (called \emph{computation gates}) are each labeled with an operation $\AND,\OR,\NOT$,
where the vertices labeled with $\AND$ or $\OR$ have indegree 2,
and the vertices labeled with $\NOT$ have indegree 1.
We will always assume that $B$ is connected, i.e., there is at least one path from each input to the output,
which can be achieved by taking any gate $v$ that connects to the output and replace it with $v\AND(x\OR\NOT x)$ for an unused input $x$.
On input $\mathbf b\in\{0,1\}^m$, $B$ computes a Boolean value by induction over the circuit structure in the obvious way.
The size $|B|$ is defined as the number of the computation gates of~$B$.
An input $\textbf b$ with $B(\textbf b)=1$ is called a satisfying assignment of $B$.

In this section we prove the following theorem in which we use the dummy variables $\epsilon^{(k)}\in\Var$.
\begin{theorem}
\label{thm:booleancktsascoeffs}
Given a Boolean circuit $B$ on $m$ inputs. Then
\[
\sum_{\substack{B(\mathbf{b})=1}} (x,\overline{x})^{\mathbf b,\overline{\mathbf b}} \prod_{i=m+1}^d \epsilon^{(i)} \ \leq \ \hGamma\!_{n,d},
\]
with $n \leq (3|B|+m)^2$ and $d \leq 3|B|+m+1$.
\end{theorem}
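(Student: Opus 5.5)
We reduce the statement to the machinery of Proposition~\ref{pro:boxtimesequivarianceaction}: we write the target as $T(s\boxtimes f)$ with $s\le H_{n,d}$ and $f\le\Gamma\!_{n,d}$, each of width at most $n$, and $T\in\End$ a variable-identification map, in the spirit of Propositions~\ref{pro:HCleqGamma} and~\ref{pro:wclique}. The guiding idea is Valiant's reduction of circuit satisfiability to Hamiltonian cycles, redone at the level of the Kronecker product. We encode a satisfying assignment together with the values of all gates as a multilinear monomial in ``wire'' variables. The consistency constraints at each gate are local, so they can be checked by a product of small factors (a width-bounded ABP). The global requirement that every wire value is used consistently is enforced by the permutation sum hidden in the $\boxtimes$ product, exactly as the permutation in $\mon_d\boxtimes\Gamma\!_{d,d}$ produced $\HC_d$.

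\textbf{Step 1 (gate gadgets).} For each gate $g$ and each input variable we introduce positions $k$ (a total of $d\le 3|B|+m+1$ positions: three per gate for its in/out wires, one per input, one spare). Let $f$ be the layered polynomial
\[
f=\prod_{\text{gates } g}\Big(\sum_{(a,b,c)\,:\,c=g(a,b)} y^{(k_g)}_{a}y^{(k_g+1)}_{b}y^{(k_g+2)}_{c}\Big)\cdot\prod_{i\le m}\Big(\sum_{b_i} y^{(k_i)}_{b_i}\Big)\cdots,
\]
with the output wire forced to $1$. Each factor has width at most $2$, so $\w(f)$ is small by Claim~\ref{cla:duality}, and hence $f\le\Gamma\!_{n,d}$. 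This uses $\Gamma\!_{n,d}\le\Gamma\!_{n',d}$ (Claim~\ref{cla:GammvslargerGamma}) and Claim~\ref{cla:widthfviaGamma}.

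\textbf{Step 2 (fan-out gadget).} Let $s$ be a symmetric tensor $\le H_{n,d}$ whose monomials tie together all occurrences of the same wire. We take $s=\pol(h)$ with $h=\prod_{\text{wires }w}\big(\sum_{v\in\{0,1\}}\prod_{\text{occurrences } k\text{ of }w} z_{w,v,k}\big)$, a product of sums of monomials. Then $\w(h)\le 2$ by Claim~\ref{cla:duality}, so $h\le\Gamma\!_{n,d}$ and $\pol(h)\le H_{n,d}$. Since $\pol$ commutes with $\End$ and $H_{n,d}$ is monotone in $n$ by $(\ast)$ in Proposition~\ref{pro:collapse}, this gives the needed restriction. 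Proposition~\ref{pro:boxtimesequivarianceaction} then yields $s\boxtimes f\le\hGamma\!_{n,d}$. The budget $n\le(3|B|+m)^2$ leaves ample room, since each of these widths is actually $O(1)$ or $O(d)$ after padding.

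\textbf{Step 3 (projection).} We apply $T$ that sends $z_{w,v,k}\boxtimes y^{(k')}_{u}$ to zero unless $k=k'$ and $u=v$. On the surviving terms it sends the variable to $x_w$ or $\overline{x}_w$ according to $v$ when $w$ is an input wire and the occurrence is its designated one, and to $\epsilon^{(k)}$ otherwise. The permutation sum in $\boxtimes$, restricted by $k=k'$, collapses to the identity, so each surviving monomial corresponds to a consistent assignment of all wires that satisfies every gate and has output $1$. Such an assignment is uniquely determined by the input bits $\mathbf b$ with $B(\mathbf b)=1$. This yields $\sum_{B(\mathbf b)=1}(x,\overline x)^{\mathbf b,\overline{\mathbf b}}\prod\epsilon^{(i)}$, up to renaming the $\epsilon$'s and padding to exactly $d$ factors.

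\textbf{Main obstacle.} The hardest part is the bookkeeping in Step 3, and it comes in two pieces. First, the permutation sum in $\boxtimes$ must admit only the identity matching once we impose $k=k'$. This is straightforward when each position index is used by exactly one variable, and we design $s$ so that this holds. Second, the unwanted assignments must vanish exactly without any cancellation, so the argument works over arbitrary semirings and for $\sT,\sW$. For $\sW$ we must also check the ordering of factors, that is, that the monomials come out in the canonical order $x_1\overline x_1x_2\cdots$ followed by the $\epsilon$'s. We plan to achieve this by placing input positions first and letting the layer indices of $\Gamma$ dictate the tensor order.
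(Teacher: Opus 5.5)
Your proof is correct. It shares the paper's overall architecture: a consistency factor $\boxtimes$ a width-bounded local-check program, followed by a projection that matches positions and values. The decomposition, however, is genuinely different.

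The paper works as follows:
\begin{itemize}
\item It first applies the Tseytin transformation (Lemma~\ref{lem:tseytin}) to obtain 3-clauses in $r$ variables.
\item It turns each clause into a width-7 program that sums over the clause's satisfying literal patterns, and takes $\qu{x}_1\cdots\qu{x}_m\,y_{\textup{o}}\,q$ as the second factor.
\item Its first factor is the generic polynomial $h_{\textup{poly}}$, a sum over all consistent $d$-subsets of $Z$. This depends only on $(d,r)$ and has width $2dr$ (Lemma~\ref{lem:widthh}).
\item The output is forced indirectly, by pairing $y_{\textup{o}}$ with $z_{m+1,(m+1)}$ and invoking consistency, and Lemma~\ref{lem:inconcon} must treat inconsistent paths as a separate case.
\end{itemize}

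You instead:
\begin{itemize}
\item skip Tseytin and check each gate relation directly, using one position per wire occurrence;
\item force the output inside $f$;
\item tailor the first factor to the wiring, $\prod_w\big(\prod_k z_{w,0,k}+\prod_k z_{w,1,k}\big)$.
\end{itemize}
Since every monomial of either factor carries exactly one variable per position, the matching in $\boxtimes$ is forced. The bijection with $\{\mathbf b : B(\mathbf b)=1\}$ is then immediate, with no cancellation needed. So the argument works over any commutative semiring and, through the $\sT$-level definition, for $\sT$ and $\sW$ as well.

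Your route gives a markedly stronger bound. Both $h$ and $f$ have width $2$, so Proposition~\ref{pro:boxtimesequivarianceaction} yields the target $\leq \hGamma\!_{2,d}$ with $d\leq 3|B|+m$, far below $(3|B|+m)^2$. The ``spare'' position is unnecessary, since NOT gates use only two positions. The paper's route has the advantage that its first factor is circuit-independent, with all circuit-specific information placed in the second factor via standard 3-CNF arithmetization.

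One small fix: write $f$ with the input factor first, occupying positions $1,\dots,m$, as you already indicate at the end. Then the factor order in $\sT$ and $\sW$ matches $(x,\overline{x})^{\mathbf b,\overline{\mathbf b}}\prod_{i=m+1}^d\epsilon^{(i)}$.
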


The proof procedes in several steps, where the first steps are classical.
We first use the Tseytin transformation (Lemma~\ref{lem:tseytin}) to convert $B$ into a product of clauses.
These clauses can then be arithmetized in low width, which gives part of the second factor in the $\boxtimes$ construction.
The construction of the first factor $h$ is a variant of the elementary symmetric polynomial, see Lemma~\ref{lem:widthh}.

A \emph{literal} is a variable or its negation. A \emph{clause} is defined as the $\OR$ of 3 literals.

\begin{lemma}[Tseytin transformation \cite{Tse68,Tse83}]
\label{lem:tseytin}
For every Boolean circuit $B$ in $m$ variables $\mathbf b$ there exists a set $\mathcal C$ of clauses
in $r>m$ variables $(\mathbf{b},o,\mathbf{b'})$
with the following property:
for every $\mathbf b\in\{0,1\}^m$ there exists a unique $(o,\mathbf b') \in \{0,1\}^{r-m}$
that satisfies $\mathcal C(\mathbf b,o,\mathbf b')=1$.
Moreover, for this $(o,\mathbf b')$ we have $B(\mathbf b)=o$.
\end{lemma}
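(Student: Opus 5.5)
The plan is the standard gate-by-gate Tseytin encoding. Introduce one fresh Boolean variable for every computation gate of $B$; let $o$ be the variable of the output gate and $\mathbf b'$ the variables of the other gates. This gives $r=m+|B|$. For each gate $g$ with variable $y_g$ we write clauses that together say ``$y_g$ equals the gate operation applied to the values of its in-neighbours''. Here the value of an in-neighbour is the corresponding input variable $b_i$ or gate variable $y_{g'}$.

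\begin{itemize}
\item An $\AND$ gate with inputs $u,v$ gives $y_g\leftrightarrow(u\wedge v)$. As clauses this is $(\overline{y_g}\vee u)$, $(\overline{y_g}\vee v)$, $(y_g\vee\overline u\vee\overline v)$.
\item An $\OR$ gate gives $(y_g\vee\overline u)$, $(y_g\vee\overline v)$, $(\overline{y_g}\vee u\vee v)$.
\item A $\NOT$ gate gives $(y_g\vee u)$, $(\overline{y_g}\vee\overline u)$.
\end{itemize}

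Clauses with fewer than three literals are padded to exactly three by repeating a literal, which preserves the semantics, so every clause is an $\OR$ of three literals as the definition requires. This produces at most $3|B|$ clauses, consistent with the bounds in Theorem~\ref{thm:booleancktsascoeffs}.

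Correctness goes by induction along a topological order of the gates, which exists because $B$ is acyclic. Fix $\mathbf b\in\{0,1\}^m$. Suppose every in-neighbour of $g$ already has a determined value. The clauses of $g$ are satisfied if and only if $y_g$ equals the gate function of those values, so exactly one value of $y_g$ works. Inductively, the only assignment $(o,\mathbf b')$ satisfying all clauses is the one that gives each gate variable the value that gate computes on input $\mathbf b$. This proves both existence and uniqueness. In that unique assignment $o$ is the value of the output gate, namely $B(\mathbf b)$.

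There is one subtlety. Gate variables must be distinct from the inputs, and the output must really be a gate. If $B$ has no computation gates, first apply the connectivity replacement mentioned in the text, which adds gates; this also guarantees $r>m$. The only real work is checking the clause equivalences in the three cases, which are truth-table verifications.
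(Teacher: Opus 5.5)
Your proposal is correct and follows essentially the same route as the paper: you use the same gate-by-gate Tseytin clauses for \texttt{AND}, \texttt{OR}, and \texttt{NOT} gates, and the same induction over the circuit structure to get existence, uniqueness, and $o=B(\mathbf b)$. The differences are only bookkeeping. You attach fresh variables to gates rather than to wires, which handles fan-out greater than one without ambiguity. You also pad two-literal clauses to three literals, whereas the paper keeps them as two-literal clauses and treats them separately in the later arithmetization.
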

\begin{proof}
Given $B$ on $m$ variables $\mathbf b$, we first attach a new wire to the output gate, but this wire has \emph{no second endpoint}, and we call it the output wire.
We assign a variable $a_e$ to each wire $e$:
If the wire $e$ is connected to the $i$-th input, then $a_e = b_i$.
If the wire $e$ is the output wire, then $a_e = o$.
Otherwise, a new variable $b'_j$ is created and assigned to the wire.
This gives a variable vector $(\mathbf b,o,\mathbf b')$.
A Boolean assignment to these variables can be interpreted as a Boolean assignment to wires.
We now introduce a set of clauses that is true if and only if the Boolean values on the wires respect the gate operations.
For example,
for every $\NOT$ gate with input wire $e_1$ and output wire $e_2$ we observe that the logical expression $a_{e_2} = \NOT(a_{e_1})$
can be expressed as the $\AND$ of these clauses:
$\OR(\overline{a_{e_1}},\overline{a_{e_2}})$,
$\OR(a_{e_1},a_{e_2})$.
Moreover, for every $\AND$ gate with input wires $e_1,e_2$ and output wire $e_3$ we observe (for example via case distinction over all $2^3=8$ cases)
that the logical expression $a_{e_3} = \AND(a_{e_1},a_{e_2})$
can be expressed as the $\AND$ of these clauses:
$\OR(\overline{a_{e_1}},\overline{a_{e_2}},a_{e_3})$,
$\OR(a_{e_1},\overline{a_{e_3}})$,
$\OR(a_{e_2},\overline{a_{e_3}})$.
Analogously,
$a_{e_3} = \OR(a_{e_1},a_{e_2})$
can be expressed as the $\AND$ of these clauses:
$\OR(a_{e_1},a_{e_2},\overline{a_{e_3}})$,
$\OR(\overline{a_{e_1}},a_{e_3})$,
$\OR(\overline{a_{e_2}},a_{e_3})$.
Let $\mathcal C$ be the set of all these clauses.
By construction, for every $\mathbf b$ there is a unique $(o,\mathbf b')$ such that $\prod_{C \in \mathcal C} C(\mathbf{b},o,\mathbf{b'})=1$,
and this $(o,\mathbf b')$ is given by the Boolean values on the wires for input $\mathbf b$, defined recursively over the circuit structure.
Clearly, $B(\mathbf b)=o$.
\end{proof}

We now convert the set $\mathcal C$ of clauses into a width 7 ABP in $2r$ many variables $y_i$ and $\overline{y}_i$.
We arrange the clauses in any order, and each clause $(l_1 \OR l_2 \OR l_3)$ is converted into an ABP, and all those programs are contatenated by identifying the sink of the left ABP with the source of the right ABP, which results in an ABP that computes their product.
Let $l_1 \in\{b_{i_1},\overline{b}_{i_1}\}$, $l_2 \in\{b_{i_2},\overline{b}_{i_2}\}$, $l_3 \in\{b_{i_3},\overline{b}_{i_3}\}$,
for some $1\leq i_1,i_2,i_3\leq r$.
We now convert the clause $(l_1 \OR l_2 \OR l_3)$.
For $j\in\{1,2,3\}$ let $\ell_j = y_{i_j}$ if $l_j = b_{i_j}$, and let $\ell_j = \overline{y}_{i_j}$ otherwise.
Analogously, let $\overline{\ell}_j = \overline{y}_{i_j}$ if $l_j = b_{i_j}$, and let $\overline{\ell}_j = y_{i_j}$ otherwise.
The clause $(l_1 \OR l_2 \OR l_3)$ is converted into the ABP
\[
\ell_1 \cdot \ell_2 \cdot \ell_3
+
\ell_1 \cdot \overline{\ell_2} \cdot \ell_3
+
\ell_1 \cdot \ell_2 \cdot \overline{\ell_3}
+
\ell_1 \cdot \overline{\ell_2} \cdot \overline{\ell_3}
+
\overline{\ell_1} \cdot \ell_2 \cdot \ell_3
+
\overline{\ell_1} \cdot \overline{\ell_2} \cdot \ell_3
+
\overline{\ell_1} \cdot \ell_2 \cdot \overline{\ell_3}.
\]
The key property is that for every satisfying assignment to the clause there is exactly one $s$-$t$-path in this program, and vice versa.
Clauses with 2 variables are handled analogously with 3 summands instead of 7, and degree 2 instead of 3.
Concatenating these ABPs we obtain an ABP $G'$.
We call an $s$-$t$-path in $G$ \emph{consistent} if it does not use both $y_i$ and $\overline{y}_i$ for any $i$.
We see that there is are bijections between the following sets:
\begin{equation}\label{eq:bijectionsbcp}
\begin{minipage}{14cm}
\begin{itemize}
\item The set of vectors $\mathbf b\in\{0,1\}^m$
\item The set of satisfying assignments $(\mathbf b,o,\mathbf b')$ of the clause conjunction $\mathcal C$
\item The set of consistent $s$-$t$-paths in $G'$
\end{itemize}
\end{minipage}
\end{equation}

Since a satisfying assignment $(\mathbf b,o,\mathbf b')$ to $\mathcal C$ depends on $\mathbf b$, we write $p_{\mathbf b}$ instead of $p_{\mathbf b,o,\mathbf b'}$.
And in the other direction, we denote by $(\mathbf b,o,\mathbf b')_p$ the satisfying assignment of $\mathcal C$ to a consistent path $p$,
and we write $\mathbf b_p$ for the first $m$ bits of $(\mathbf b,o,\mathbf b')_p$.

We modify the ABP for $G'$ by replacing every edge label $y_i$ or $\overline{y}_i$ in layer $k$ by $y_i^{(m+1+k)}$ and $\overline{y}_i^{(m+1+k)}$, respectively.
We call the resulting ABP $G$.

Since every computation gate in $B$ resulted in at most 3 clauses, the degree of $G$ is at most $3|B|$.
We let $d := \deg(q)+m+1$, where $G$ computes $q$,
and observe $d \leq 3|B|+m+1$.
We let $r$ be the length of $(\mathbf b,o,\mathbf b')$, and we observe $r \leq m+|B|$.

Let $Z = \{z_{i,(k)},\overline{z}_{i,(k)} \mid 1 \leq k \leq d, \, 1 \leq i \leq r\}$.
Note that $|Z| = n := 2dr \leq 2(3|B|+m+1)(m+|B|) \leq (3|B|+m)^2$.
A subset $S\subseteq Z$ is called consistent if it does not contain any two variables $z_{i,(k)}$
and $\overline{z}_{i,(k')}$.
Let
\[
h_{\textup{poly}} \ := \ \sum_{\substack{S \subseteq Z\\|S|=d\\S \textup{ consistent}}} \prod_{z\in S} z \ \in \ \sS_d.
\]
\begin{lemma}\label{lem:widthh}
$\w(h_{\textup{poly}})\leq 2dr$.
\end{lemma}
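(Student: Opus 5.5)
We construct an ABP for $h_{\textup{poly}}$ directly, following the recursion technique of Example~\ref{exa:elemsym}, which gives width $n$ for $e_{n,d}$. Since $h_{\textup{poly}}$ lives in $\sS_d$, we may fix any order in which the variables are read. We read the elements of $Z$ grouped by index $i$: for $i=1,\ldots,r$ we first process all $z_{i,(k)}$, $1\le k\le d$, and then all $\overline{z}_{i,(k)}$, $1\le k\le d$. A consistent set $S$ is then one that, within each block $i$, draws from at most one of the two halves.

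The ABP has vertices indexed by a layer $j\in\{0,\ldots,d\}$ (the number of elements of $S$ chosen so far), a position in the linear order of $Z$ (the last element processed), and a constant amount of state per block recording whether the current block has so far used only $z$'s, only $\overline{z}$'s, or nothing. We define polynomials $g_{j,p,\tau}$ as sums over consistent $j$-subsets of the first $p$ elements with state $\tau$. They satisfy the obvious analogue of $f_{n,d}=f_{n-1,d-1}x_n^{(d)}+f_{n-1,d}$: either the $p$-th element is skipped, via a constant edge within the layer, or it is multiplied in, via an edge to layer $j+1$, which is allowed only if the state permits it. This is the staircase of Figure~\ref{fig:cprogh}.

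A cleaner alternative avoids the state entirely. For each block $i$, the contribution of block $i$ to $S$ is $\sum_{T\subseteq\{z_{i,(\cdot)}\}}\prod T+\sum_{\emptyset\neq T\subseteq\{\overline z_{i,(\cdot)}\}}\prod T$. Hence $h_{\textup{poly}}$ is the degree $d$ part of $\prod_i\big(\prod_k(1+z_{i,(k)})+\prod_k(1+\overline z_{i,(k)})-1\big)$. That subtraction is unavailable over a semiring, but the staircase handles it. Within block $i$ we split the within-layer constant-edge path into two parallel chains, one running through the $z_{i,\cdot}$ and one through the $\overline z_{i,\cdot}$. A path that skips everything is allowed through only one of the chains, for instance the $z$-chain with the $\overline z$-chain forced to take at least one element. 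The chains merge at the end of the block.

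Counting vertices per layer, each position in the order of $Z$ contributes at most one vertex per layer, and the two chains occupy disjoint positions. So each layer has at most $|Z|=2dr$ vertices, plus $O(1)$ merge vertices, which can be absorbed by identifying the merge vertex with the last vertex of the $\overline z$-chain. Constant edges are permitted within layers and are removed as in the proof that ABP width equals $\w$, so $\w(h_{\textup{poly}})\le 2dr$.

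The main obstacle is keeping the count at exactly $2dr$ while forbidding double counting of the empty choice in a block without subtraction. The plan is to route the empty choice only through the $z$-chain and to make the $\overline z$-chain's entry edge an element-taking edge, so that each layer genuinely uses one vertex per element of $Z$.
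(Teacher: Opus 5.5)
Your two-chain construction is correct and is essentially the paper's argument. The paper orders $Z$ block by block in the same way and runs an Example~\ref{exa:elemsym}-style recursion with an auxiliary family $h'$ (consistent sets avoiding the unbarred variables of the current block), multiplying barred variables only onto $h'$, which is exactly your $\overline z$-chain hanging off the block entry; your explicit identification of each merge vertex with the last vertex of the $\overline z$-chain actually justifies the exact count $2dr$ more cleanly than the paper, which only asserts the width (the state-based variant in your first paragraph would overcount, and you rightly abandon it).
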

\begin{proof}

We sort $Z$ as follows:
\begin{align*}
\mathbf z = (&z_{1,(1)},\ldots,z_{1,(d)},
\overline{z}_{1,(1)},\ldots,\overline{z}_{1,(d)},
\\
& z_{2,(1)},\ldots,z_{2,(d)},
\overline{z}_{2,(1)},\ldots,\overline{z}_{2,(d)},
\\
&\mbox{~}\mbox{~}\vdots
\\
&z_{r,(1)},\ldots,z_{r,(d)},
\overline{z}_{r,(1)},\ldots,\overline{z}_{r,(d)}),
\end{align*}
which also sorts the triples $(\iota,i,k)$, $\iota\in\{\texttt{barred},\texttt{unbarred}\}$,
$1\leq i \leq r$, $1 \leq k \leq d$.
We write $\textup{pred}(\iota,i,k)$ for the predecessor of $(\iota,i,k)$,
which is the successor when traversing $\mathbf z$ row-wise from right to left, bottom to top.
The property $\textup{jump}(\iota,i,k)$ is defined to be true if and only if $\textup{pred}(\iota,i,k)=(\texttt{barred},i-1,d)$.
This captures a jump in the row when traversing $\mathbf z$ row-wise from right to left, bottom to top.
\[
h_{\iota,i,k} = \sum_{\substack{S\subseteq Z_{\leq(\iota,i,k)}\\|S|=d\\S\textup{ consistent}}} \prod_{z\in S} z
\qquad\qquad\qquad\qquad
h'_{\iota,i,k} = \sum_{\substack{S\subseteq Z_{\leq(\iota,i,k)}\\|S|=d\\S\textup{ consistent}\\\forall j: z_{i,(j)}\notin S}} \prod_{z\in S} z
\]
We have $h=h_{\texttt{unbarred},r,d}$.
We now observe
\begin{eqnarray*}
h_{\iota,i,k} &=& \begin{cases}
h_{\textup{pred}(\iota,i,k),d} + z_{\iota,i,k}\cdot h'_{\textup{pred}(\iota,i,k),d-1} & \textup{ if $\iota=\texttt{barred}$}
\\
h_{\textup{pred}(\iota,i,k),d} + z_{\iota,i,k}\cdot h_{\textup{pred}(\iota,i,k),d-1} & \textup{ if $\iota=\texttt{unbarred}$}
\end{cases}
\\
h'_{\iota,i,k} &=&
\begin{cases}
h'_{\textup{pred}(\iota,i,k),d} + z_{\iota,i,k}\cdot h'_{\textup{pred}(\iota,i,k),d-1} & \textup{ if $\iota=\texttt{barred}$} \textup{ and not }\textup{jump}(\iota,i,k)
\\
h'_{\textup{pred}(\iota,i,k),d}  & \textup{ if $\iota=\texttt{unbarred}$} \textup{ and not }\textup{jump}(\iota,i,k)
\\
h_{\textup{pred}(\iota,i,k),d} + z_{\iota,i,k}\cdot h_{\textup{pred}(\iota,i,k),d-1} & \textup{ if $\iota=\texttt{barred}$} \textup{ and }\textup{jump}(\iota,i,k)
\\
h_{\textup{pred}(\iota,i,k),d}  & \textup{ if $\iota=\texttt{unbarred}$} \textup{ and }\textup{jump}(\iota,i,k)
\end{cases}
\end{eqnarray*}
This gives an ABP of the desired width.
A simpler but slightly larger variant is
\[
h_{r,d} \ = \ 
\sum_{k=0}^d\big(e_{d,k}(z_{r,(1)},\ldots,z_{r,(d)})
+e_{d,k}(\overline{z}_{r,(1)},\ldots,\overline{z}_{r,(d)})
\big)\cdot h_{r-1,d-k}.\qedhere
\]
\end{proof}
Let $h = \pol(h_{\textup{poly}})$.
Since $h_{\textup{poly}}\leq \Gamma_{2dr,d}$, it follows $h \leq \pol(\Gamma_{2dr,d}) = H_{2dr,d}$, see \eqref{eq:defnonsfHnd}.

We define $\mathcal T\in \cE$ via mapping every variable to zero with the following exceptions:
\[
\mathcal T(z_{i,(k)}\boxtimes y_i^{(k)}) = \epsilon^{(k)},
\mathcal T(\overline{z}_{i,(k)}\boxtimes \overline{y}_i^{(k)}) = \epsilon^{(k)},
\mathcal T(z_{i,(i)}\boxtimes\qu{x}_{i}) = x_i,
\mathcal T(\overline{z}_{i,(i)}\boxtimes\qu{x}_{i}) = \overline{x}_i,
\mathcal T(z_{m+1,(m+1)}\boxtimes y_{\textup{o}}) = \epsilon^{(m+1)},
\]
where the $\qu{x}_{i}\in\Var$.

\begin{lemma}
\label{lem:inconcon}
Fix $B$ and the arithmetization $G$.
If an $s$-$t$-path $p$ in $G$ is inconsistent, then 
$\mathcal T(h \boxtimes (\qu{x}_1\cdots\qu{x}_m \, y_{\textup{o}} \, p)) = 0$.
If $p$ in $G$ is consistent, then $\mathcal T(h \boxtimes (\qu{x}_1\cdots\qu{x}_m \, y_{\textup{o}} \, p)) = B(\mathbf b_p) (x,\overline{x})^{\mathbf b_p,\overline{\mathbf b}_p} \prod_{i=m+1}^{d} \epsilon^{(i)}$.
\end{lemma}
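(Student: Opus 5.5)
The plan is to expand $h\boxtimes(\qu{x}_1\cdots\qu{x}_m\,y_{\textup{o}}\,p)$ by \eqref{eq:boxtimesaction}. Then we determine which summands survive $\mathcal T$, and show that the survivors are governed exactly by consistency of $p$ and by the value $o$.

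\textbf{Expansion.} Write $f:=\qu{x}_1\cdots\qu{x}_m\,y_{\textup{o}}\,p$ as a product of $d$ distinct factors. The factors are the $\qu{x}_i$ ($i\leq m$), then $y_{\textup{o}}$, then the $d-m-1$ edge labels of $p$, each of the form $y_i^{(k)}$ or $\overline{y}_i^{(k)}$ with $m+2\leq k\leq d$ and pairwise distinct superscripts $k$. Since $h=\pol(h_{\textup{poly}})$ and every consistent $S$ has $d$ distinct elements, \eqref{eq:boxtimesaction} gives
\[
h\boxtimes f \;=\; \sum_{S}\ \sum_{\text{bijections } \phi:\text{factors}\to S}\ \prod_{\text{factors } g} \phi(g)\boxtimes g .
\]
Here $S$ ranges over consistent $d$-subsets of $Z$, and each pairing appears with coefficient $1$.

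\textbf{Constraints from $\mathcal T$.} A summand survives $\mathcal T$ only if every pair $\phi(g)\boxtimes g$ is one of the listed exceptions. This forces:
\begin{itemize}
\item $\phi(y_i^{(k)})=z_{i,(k)}$ and $\phi(\overline{y}_i^{(k)})=\overline{z}_{i,(k)}$;
\item $\phi(y_{\textup{o}})=z_{m+1,(m+1)}$, identifying $o$ with variable index $m+1$ of $(\mathbf b,o,\mathbf b')$;
\item $\phi(\qu{x}_i)\in\{z_{i,(i)},\overline{z}_{i,(i)}\}$.
\end{itemize}
The second indices $(k)$ of these partners are $i\leq m$, $m+1$, and $k\geq m+2$, all distinct. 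So $\phi$ is automatically injective, and the only freedom is the barred/unbarred choice for each $\qu{x}_i$. Each surviving summand maps to the product of $\epsilon^{(k)}$ for $k\geq m+1$ times one literal $x_i$ or $\overline{x}_i$ per $i\leq m$.

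\textbf{Inconsistent path.} Suppose $p$ uses both $y_i^{(k)}$ and $\overline{y}_i^{(k')}$. Then every candidate $S$ contains both $z_{i,(k)}$ and $\overline{z}_{i,(k')}$, so $S$ is inconsistent and does not occur in $h_{\textup{poly}}$. Hence no summand survives and the result is $0$; no cancellation argument, and so no zerosumfreeness, is needed.

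\textbf{Consistent path.} Every variable index $1,\ldots,r$ occurs in some clause of $\mathcal C$, because the circuit is assumed connected and every wire variable appears in the clauses of its gate. Each clause contributes a literal on each of its variables to $p$, so $p$ contains $y_i^{(\cdot)}$ or $\overline{y}_i^{(\cdot)}$ for every $i$, according to the bit $(\mathbf b,o,\mathbf b')_p$ at position $i$. Consistency of $S$ then forces the following.
\begin{itemize}
\item For $i\leq m$, $\phi(\qu{x}_i)=z_{i,(i)}$ if $b_i=1$ and $\overline{z}_{i,(i)}$ if $b_i=0$. This gives exactly one surviving summand, whose image is $(x,\overline{x})^{\mathbf b_p,\overline{\mathbf b}_p}\prod_{i=m+1}^d\epsilon^{(i)}$.
\item The forced partner $z_{m+1,(m+1)}$ of $y_{\textup{o}}$ is compatible only if $p$ uses $y_{m+1}$ rather than $\overline{y}_{m+1}$, i.e.\ only if $o=1$. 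When $o=0$ the forced $S$ is inconsistent and the result is $0$.
\end{itemize}
By Lemma~\ref{lem:tseytin}, $o=B(\mathbf b_p)$, which yields the factor $B(\mathbf b_p)$.

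\textbf{Main obstacle.} The main difficulty is the index bookkeeping rather than any conceptual step. One must check that the superscript shift $k\mapsto m+1+k$ in $G$ keeps the path's partners $z_{i,(k)}$ disjoint from the reserved positions $(1),\ldots,(m+1)$. One must also check that $|S|=d$ matches the number of factors, and that each input variable really does appear in $p$, so that the $\qu{x}_i$ choice is forced rather than doubled.
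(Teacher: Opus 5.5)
Your proposal is correct and follows essentially the same route as the paper. Both proofs expand $h$ over the monomials of $h_{\textup{poly}}$, use the exceptions of $\mathcal T$ to force the partners of the path variables, of $y_{\textup{o}}$, and of the $\qu{x}_i$, and then use consistency of $S$ twice: to kill inconsistent paths, and to force both the literal choices and $o=1$. You are slightly more careful than the paper in two places: you show the surviving pairing is unique, so its coefficient is $1$, and you make explicit that every input index and $o$ must actually occur on $p$, which is what the connectedness assumption on $B$ supplies and which the paper uses only implicitly.
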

\begin{proof}
If $p$ is inconsistent, then there is $i$ such that both $y_{i}^{(k)}$ and $\overline{y}_{i}^{(\widetilde k)}$ exist in $p$ for some $k,\widetilde k$.
Hence, to have nonzeroness, a monomial in $h_\textup{poly}$ must have both $z_{i,(k)}$ and $\overline{z}_{i,(\widetilde k)}$ as variables, but this is impossible by construction.
This proves the first part.

Let $p$ be consistent with corresponding Boolean assignment $(\mathbf b,o,\mathbf b')$ satisfying $\mathcal C(\mathbf b,o,\mathbf b')=1$.
If $B(\mathbf b)=1$, then it is straighforward by pairing up variables to find a monomial $c_\textup{poly}$ in $h_{\textup{poly}}$ that satisfies $\mathcal T(c\boxtimes(\qu{x}_1\cdots\qu{x}_m \, y_{\textup{o}} \, p)) = (x,\overline x)^{\mathbf b,\overline{\mathbf b}} \prod_{i=m+1}^d \epsilon^{(i)}$,
where $c=\pol(c_\textup{poly})$.

It remains to prove that for every monomial $c_{\textup{poly}}$ in $h_{\textup{poly}}$, $c:=\pol(c_{\textup{poly}})$, we have that
$\mathcal T(c  \boxtimes (\qu{x}_1\cdots\qu{x}_m \, y_{\textup{o}} \, p)) \neq 0$ implies both 
$B(\mathbf b_p)=1$ and $\mathcal T(c  \boxtimes (\qu{x}_1\cdots\qu{x}_m \, y_{\textup{o}} \, p)) = (x,\overline x)^{\mathbf b,\overline{\mathbf b}} \prod_{i=m+1}^d \epsilon^{(i)}$.
By construction of $\mathcal T$, the nonvanishing $\mathcal T(c \boxtimes(\qu{x}_1\cdots\qu{x}_m \, y_{\textup{o}} \, p)) \neq 0$ fixes $r-m-1$ many variables in $c$.
For the remaining variables, by the consistency of $h$,
$c$ contains $z_{i,(i)}$ if $b_i=1$ (because $\overline{z}_{i,(i)}$ is unavailable due to the consistency of $h$), and $c$ contains $\overline{z}_{i,(i)}$ otherwise (by the same argument).
The nonvanishing also implies that $c$ contains $z_{m+1,(m+1)}$ and by consistency of $h$ another $z_{m+1,(k)}$ for $k>m+1$.
This implies $o=1$, which by Lemma~\ref{lem:tseytin} implies $B(\mathbf b)=1$.
Now it is readily checked that $\mathcal T((\qu{x}_1\cdots\qu{x}_m \, y_{\textup{o}} \, p) \boxtimes c) = (x,\overline x)^{\mathbf b,\overline{\mathbf b}} \prod_{i=m+1}^d \epsilon^{(i)}$.
\end{proof}

\begin{proof}[Proof of Theorem~\ref{thm:booleancktsascoeffs}]
Given $B$ and the arithmetization ABP $G$ computing $q$. We have
\begin{eqnarray*}
\mathcal T(h \boxtimes(\qu{x}_1\cdots\qu{x}_m \, y_{\textup{o}} \, q))
&\stackrel{\textup{Lem.}~\ref{lem:inconcon}}{=}&
\sum_{\substack{p \textup{ consistent in } q \\ B(\mathbf b_p)=1}}
(x,\overline{x})^{\mathbf b_p,\overline{\mathbf b}_p} \prod_{i=m+1}^{d} \epsilon^{(i)}
\\
&\stackrel{\eqref{eq:bijectionsbcp}}{=}&
\sum_{\substack{\mathbf b,o,\mathbf b'\\\mathcal C(\mathbf b,o,\mathbf b')=1\\B(\mathbf b)=1}}
(x,\overline{x})^{\mathbf b,\overline{\mathbf b}} \prod_{i=m+1}^{d} \epsilon^{(i)}
\ \stackrel{\textup{Lem.}~\ref{lem:tseytin}}{=} \ 
\sum_{B(\mathbf b)=1}
(x,\overline{x})^{\mathbf b,\overline{\mathbf b}} \prod_{i=m+1}^{d} \epsilon^{(i)}.
\end{eqnarray*}

\vspace{-1.3cm}

\end{proof}

\section{P $\neq$ NP and related conjectures}
\label{sec:PNP}

In this section set set $R=\IC$ and $A=\sS$.
Figure~\ref{fig:implicationsconjectures} shows formal implications between several conjectures.
\begin{figure}[ht]
\centering
\begin{tikzpicture}[yscale=-1]
\node[draw,align=center,rounded corners=6pt] (sigma) at (2,-3) {$\textup{PH} \neq \Sigma^p_2$};
\node[draw,align=center,rounded corners=6pt] (AMMA) at (4,-3) {$\textup{AM} \neq \textup{MA}$};
\node[draw,align=center,rounded corners=6pt] (ppoly) at (3,0) {$\textup{NP}\not\subseteq\textup{P}/\textup{poly}$};
\node[draw,align=center,rounded corners=6pt] (PNP) at (3,3) {$\textup{P} \neq\textup{NP}$};
\node[draw,align=center,rounded corners=6pt] (wHC) at (10,3) {$\w(\hXi_{n,d})\notin\poly(n,d)$\\{\tiny $\VBP\neq\VNP$, \cite{Val79}, Cla.~\ref{cla:Valiant}}};
\node[draw,align=center,rounded corners=6pt] (unbdwexp) at (10,1.5) {$\{\xi_d\}$ unbounded\\{\tiny $\VBFPT\neq\VW[1]$, \cite{Ike25}, Cla.~\ref{cla:cktexp}}};
\node[draw,align=center,rounded corners=6pt] (unbdcktexp) at (10,-0.25) {$\{\xi^{\ckt}_d\}$ unbounded\\{\tiny $\VFPT\neq\VW[1]$, \cite{BE19}, Cla.~\ref{cla:cktexp}}};
\node[draw,align=center,rounded corners=6pt] (logwexp) at (12,-3) {$\xi_d\geq\omega(\log d)$};
\node[draw,align=center,rounded corners=6pt] (gammaIlog) at (8,-3) {$\{\suppdash\xi^\ckt_d\}$ unbounded};
\draw[double equal sign distance, -implies, shorten >=2pt, shorten <=2pt] (sigma) -- ($(ppoly.north)!0.5!(ppoly.north west)$) node [pos=0.5,fill=white,inner sep=1pt] {\tiny \cite{KL82}};
\draw[double equal sign distance, -implies, shorten >=2pt, shorten <=2pt] (AMMA) -- ($(ppoly.north)!0.5!(ppoly.north east)$) node [pos=0.5,fill=white,inner sep=1pt] {\tiny \cite{AKSS95}};
\draw[double equal sign distance, -implies, shorten >=2pt, shorten <=2pt] (ppoly) -- (PNP);
\draw[double equal sign distance, -implies, shorten >=2pt, shorten <=2pt] ($(ppoly.south east)!0.1!(ppoly.south)$) -- ($(wHC.north west)!0.2!(wHC.west)$) node [midway,fill=white,inner sep=1pt] {\tiny \cite{Bur00CvV}$^\star$};
\draw[double equal sign distance, -implies, shorten >=2pt, shorten <=2pt] (unbdwexp) -- (wHC);
\draw[double equal sign distance, -implies, shorten >=5pt, shorten <=5pt] (logwexp) -- (unbdcktexp) node[midway,fill=white,inner sep=1pt] {\tiny Pro.~\ref{pro:impliesBE}};
\draw[double equal sign distance, -implies, shorten >=2pt, shorten <=2pt] (unbdcktexp) -- (unbdwexp) node[midway,fill=white,inner sep=1pt] {\tiny Cla.~\ref{cla:impliesunbd}};
\draw[double equal sign distance, -implies, shorten >=5pt, shorten <=5pt] (gammaIlog) -- (unbdcktexp);
\draw[double equal sign distance, -implies, shorten >=5pt, shorten <=5pt] (gammaIlog) -- ($(ppoly.north east)!0.1!(ppoly.north)$) node[midway,fill=white,inner sep=1pt] {\tiny Cla.~\ref{cla:impliesPPoly}};
\end{tikzpicture}
\caption{Formal implications between conjectures. The setting is $R=\IC$, $A=\sS$. ${}^\star$The \cite{Bur00CvV} implication is only known to hold under the assumption of the Generalized Riemann Hypothesis.}
\label{fig:implicationsconjectures}
\end{figure}
We are very far from proving any of the conjectures in Figure~\ref{fig:implicationsconjectures},
as any of them either implies $\textup{P}\neq\textup{NP}$ or $\VBP\neq\VNP$.
No proof technique so far has even started scratching the surface:
At the moment, our best lower bounds are just $\xi_d\geq 2$,
proved via a simple argument \eqref{eq:atleasttwo}.

For an explanation of the conjectures $\textup{PH}\neq\Sigma^\textup{p}_2$ and $\textup{AM}\neq\textup{MA}$
we refer the reader to \cite{KL82} and \cite{AKSS95}, respectively.
We will discuss the other conjectures and their relations in the rest of this section,
as indicated by the edge labels in Figure~\ref{fig:implicationsconjectures}.
Implications without edge labels follow directly from the definitions.

An arithmetic circuit $C$ is a directed acyclic graph with a single outdegree 0 vertex,
whose indegree 0 vertices (called \emph{inputs}) are labeled with variables or constants,
and whose other vertices (called \emph{computation gates}) have indegree 2 and are each labeled with an operation $+,\times$.
An arithmetic circuit computes a polynomial by induction over the circuit structure.
The size of $C$ is defined as the number of the computation gates of~$C$.
We define $\csize(f)$ to be the size of a smallest arithmetic circuit computing $f$.
We define
\[
\gamma_d^\ckt \ := \ \inf\{k \mid \csize(\hGamma\!_{n,d}) \in O(n^k) \}.
\]

A \emph{weft 1} arithmetic circuit is allowed to have one so-called \emph{large gate} on each path from a leaf to the output,
where a large gate is a summation or multiplication gate with arbitrary indegree.

\subsection{Valiant's conjecture phrased in $n$ and $d$}
A sequence of multilinear homogeneous polynomials $(f_d)_{d\in\IN}$ is called \emph{Boolean-definable}
if there exists a sequence of Boolean circuits $(B_d)_{d\in\IN}$
with $|B_d|\in\poly(d)$ such that
\[
\forall d: \ f_d = \sum_{B_d(\mathbf b)=1} x^{\mathbf b}.
\]
Valiant's original conjecture (in slightly different phrasing) states that
\begin{equation}\label{eq:Valiantdefinable}
\textup{there exists a Boolean-definable sequence $(f_d)_{d\in\IN}$ with $\asize(f_d)\notin\poly(d)$.}
\end{equation}

\begin{claim}\label{cla:Valiant}
\eqref{eq:Valiantdefinable} holds
if and only if $\w(\hGamma\!_{n,d})\notin\poly(n,d)$.
\end{claim}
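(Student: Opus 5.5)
We prove the two directions separately. Throughout, $R=\IC$ and $A=\sS$. We use that, by Proposition~\ref{pro:homogenization} together with \eqref{eq:asizeleqw}, $\asize$ and $\w$ are polynomially equivalent on homogeneous polynomials of degree $d$. Hence \eqref{eq:Valiantdefinable} is the statement that some Boolean-definable family has superpolynomial width.

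\emph{Direction ``$\Leftarrow$'': contrapositive.} Assume every Boolean-definable family has $\asize\in\poly(d)$. We want to conclude $\w(\hGamma\!_{n,d})\in\poly(n,d)$.
\begin{itemize}
\item The hypercomputant is multilinear and homogeneous, and every coefficient is a nonnegative integer at most $d!$.
\item Whether a given monomial of $\hGamma\!_{n,d}$ appears is decided by a Boolean circuit of size $\poly(n,d)$. Such a circuit checks that the $0/1$ exponent vector is a valid choice of indices $i_1,\dots,i_{d-1},j_1,\dots,j_{d-1}$ and a permutation $\pi$, with each index encoded by the chosen variable.
\item The coefficient is the number of $\pi$ that yield the monomial. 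To handle it, add auxiliary Boolean variables for the witness $\pi$ and sum over them, i.e.\ substitute $1$ for those variables.
\item Combining monomials for different $(n,d)$ into one sequence indexed by a single parameter $N=n+d$ also works by substitution, since $\hGamma\!_{n,d}\leq\hGamma\!_{N,N}$ up to setting extra stuff to constants. I expect this padding and the indexing by one parameter to be the fiddly part, together with making the definable sequence homogeneous and multilinear.
\end{itemize}
Since substitution of constants and homogeneous-component extraction preserve polynomial aABP size (Proposition~\ref{pro:homogenization}), we get $\w(\hGamma\!_{n,d})\in\poly(n,d)$.

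\emph{Direction ``$\Rightarrow$'': the main step.} Assume $\w(\hGamma\!_{n,d})\le (nd)^k$. Take a Boolean-definable $f_d=\sum_{B_d(\mathbf b)=1}x^{\mathbf b}$ with $|B_d|\in\poly(d)$ on $m$ inputs. Apply Theorem~\ref{thm:booleancktsascoeffs}:
\[
g:=\sum_{B_d(\mathbf b)=1}(x,\overline x)^{\mathbf b,\overline{\mathbf b}}\prod_{i=m+1}^{d'}\epsilon^{(i)}\leq\hGamma\!_{n',d'},
\]
with $n',d'\in\poly(d)$. Hence $\w(g)\in\poly(d)$.

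Now substitute $\overline x_i\mapsto 1$ and $\epsilon^{(i)}\mapsto 1$. This is an affine substitution, so applied to a width-$\poly$ ABP for $g$ it yields an aABP for $f_d$ of size $\le d'\cdot\w(g)\in\poly(d)$, by \eqref{eq:asizeleqw}. In the multilinear monomial $x^{\mathbf b}$ the exponents of $x_i$ are $b_i$, so the result is exactly $f_d$, with no collisions between distinct $\mathbf b$. Thus $\asize(f_d)\in\poly(d)$ for every Boolean-definable family, which is the negation of \eqref{eq:Valiantdefinable}.

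The main obstacle is the first direction: expressing the integer coefficients of $\hGamma\!_{n,d}$ as exact counts of satisfying assignments, uniformly in two parameters. My plan is to encode $\pi$ as a permutation matrix of Boolean witness variables that are later set to $1$, and to verify injectivity of the encoding.
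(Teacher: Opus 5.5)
Your proposal is correct and follows the paper's proof in both directions. One direction uses Theorem~\ref{thm:booleancktsascoeffs} and substitutes $1$ for the $\epsilon^{(i)}$, and you rightly also substitute $1$ for the $\overline{x}_i$, which the paper leaves implicit; the other uses Boolean-definability of $\hGamma\!_{N,N}$, constant substitutions down to $\hGamma\!_{n,d}$, and Proposition~\ref{pro:homogenization}.

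The witness variables for $\pi$ are harmless but unnecessary: each monomial of $\hGamma\!_{n,d}$ determines $\pi$ and both index sequences, so every coefficient is $1$ and the support-checking circuit alone defines it.
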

\begin{proof}
The proof goes by contraposition.

If $\w(\hGamma\!_{n,d})\in\poly(n,d)$, then by Theorem~\ref{thm:booleancktsascoeffs}
and by replacing every $\epsilon^{(i)}$ with 1,
we obtain $\asize(f_d)\in\poly(d)$ for every Boolean-definable sequence $(f_d)_{d\in\IN}$,
therefore \eqref{eq:Valiantdefinable} is false.

For the other proof direction, assume that \eqref{eq:Valiantdefinable} is false.
Note that $\hGamma\!_{d,d}$ is Boolean-definable.
Hence, $\asize(\hGamma\!_{d,d})\in\poly(d)$.
Therefore, $\asize(\hGamma\!_{n+d,n+d})\in\poly(n,d)$.
By setting some variables in $\hGamma\!_{n+d,n+d}$ to $1$ and $0$,
we get
$\asize(\hGamma\!_{n,d})\in\poly(n,d)$.
Using Proposition~\ref{pro:homogenization}, we obtain $\w(\hGamma\!_{n,d})\in \poly(n,d)$.
\end{proof}

\subsection{Algebraic fixed-parameter tractability}
\label{subsec:algfpt}
We define
\[
\Gamma\!_{n\geq d}^{\,\boxtimes m} := \begin{cases}
\Gamma\!_{n,d}^{\,\boxtimes m} & \textup{ if } n\geq d,
\\
0 & \textup{ otherwise.}
\end{cases}
\]
Note that the symbol $\geq$ in $\Gamma\!_{n\geq d}^{\,\boxtimes m}$
is purely syntactic, exactly as the comma in $\Gamma\!_{n,d}^{\,\boxtimes m}$.
We set $\hGamma\!_{n\geq d} := \Gamma\!_{n\geq d}^{\,\boxtimes 2}$.

A doubly indexed family of (not necessarily homogeneous) polynomials $f_{n,k}$ is called a \emph{parameterized p-family} (see \cite[Def~4.1]{BE19}) if
the number of variables is $\poly(n)$ and the degree is $\poly(n,k)$.
The number of variables of $\Gamma\!_{n\geq d}^{\,\boxtimes m}$ is less than $n^{3m}$,
and the degree of $\Gamma\!_{n\geq d}^{\,\boxtimes m}$ is at most $d$, hence
for every fixed~$m$, $\Gamma\!_{n\geq d}^{\,\boxtimes m}$ is a parameterized p-family.
The class $\VFPT$ consists of those parameterized p-families $f_{n,k}$ for which
there exists a function $\phi:\IN\to\IN$ and a polynomial $p\in\poly(n)$
with $\forall n,k: \, \csize(f_{n,k})\leq \phi(k)\cdot p(n)$.
Analogously,
The class $\VBFPT$ consists of those parameterized p-families $f_{n,k}$ for which
there exists a function $\phi:\IN\to\IN$ and a polynomial $p\in\poly(n)$
with $\forall n,k: \, \asize(f_{n,k})\leq \phi(k)\cdot p(n)$, see \cite{Ike25}.
A parameterized p-family $f_{n,k}$ is called \emph{weft-$1$-definable} if there exists a constant $\delta\in\IN$ such that
$f_{n,k}$ can be written in the form (up to renaming variables)
\[
f_{n,k} \ = \ \sum_{\substack{\mathbf b\\|\mathbf b|=k}} h_n(x_1,\ldots,x_{p(n)},b_1,\ldots,b_{q(n)})
\]
for $p(n)\in\poly(n)$ and $q(n)\in\poly(n)$ and $h_n$ can be computed by a weft $1$ depth $\delta$ arithmetic circuit of size $\poly(n)$. 
The class $\VW[1]$ consists of the weft-$1$-definable parameterized $p$-families.
The $\VFPT\neq\VW[1]$ conjecture states that there exists a weft-$1$-definable parameterized p-family that is not in $\VFPT$.
This conjecture is equivalent to ${}_k\Clique_{n}\notin\VFPT$, which is the main result of \cite{BE19} (for a slightly different variant of ${}_k\Clique_{n}$).
The following lemma states that weft $1$ is sufficient for any $m$-th power $\Gamma\!_{n\geq d}^{\,\boxtimes m}$.
\begin{lemma}
\label{lem:hGammaweftdef}
For every fixed $m$: $\Gamma\!_{n\geq d}^{\,\boxtimes m}$ is weft-$1$-definable. 
\end{lemma}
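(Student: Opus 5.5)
We need to write $\Gamma\!_{n\geq d}^{\,\boxtimes m}$ as $\sum_{|\mathbf b|=k} h_n(x,\mathbf b)$, where $h_n$ is a constant-depth weft-1 circuit of size $\poly(n)$ and $k$ is the parameter. I take the parameter to be $k=d$ and encode the monomial choices with Boolean selector variables.

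\emph{Explicit form.} Expanding $\Gamma\!_{n,d}^{\,\boxtimes m}$ via \eqref{eq:defboxtimes} (iterated, using associativity \eqref{eq:kronprodassoc}), its monomials are determined by two pieces of data:
\begin{itemize}
\item $m$ paths $\phi_1,\dots,\phi_m:[d-1]\to[n]$ (with endpoints fixed to $n$);
\item $m-1$ permutations $\pi_2,\dots,\pi_m\in\aS_d$ matching the layers of copy $t$ to those of copy $1$.
\end{itemize}
The monomial is the product over $\ell=1,\dots,d$ of the variable
\[
x^{(\ell,\pi_2(\ell),\dots,\pi_m(\ell))}_{(\phi_1(\ell-1),\phi_1(\ell)),\dots,(\phi_m(\pi_m(\ell)-1),\phi_m(\pi_m(\ell)))}.
\]
The difficulty is that $d$ may be as large as $n$, so these data cannot simply be enumerated with $\poly(n)$ selector variables per choice unless the encoding has size polynomial in $n$. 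This is available: because of the $n\geq d$ convention, all index ranges are bounded by $n$.

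\emph{Encoding by selector variables of weight exactly $d$.} For each $\ell\in[d]$ I choose exactly one ``big variable'' index $\mathbf v=(\ell,k_2,\dots,k_m,\text{row/column indices})$, which ranges over $\poly(n)$ possibilities (at most $n^{3m}$, since $m$ is fixed). Introduce a Boolean variable $b_{\mathbf v}$ for each such index and sum over $\mathbf b$ of weight exactly $d$. Then define
\[
h_n \;=\; \Big(\prod_{\mathbf v}\big(1-b_{\mathbf v}+b_{\mathbf v}x_{\mathbf v}\big)\Big)\cdot \mathrm{Valid}(\mathbf b),
\]
where $\mathrm{Valid}(\mathbf b)$ is a product of local constraints, each of the form $(1-b_{\mathbf v}b_{\mathbf w})$ for an incompatible pair $(\mathbf v,\mathbf w)$. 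The forbidden pairs are:
\begin{itemize}
\item two selections with the same first layer $\ell$;
\item two selections with the same $k_t$ in copy $t$;
\item consecutive layers in some copy whose column and row indices disagree;
\item endpoint indices not equal to $n$.
\end{itemize}
With exactly $d$ selections and pairwise exclusion of equal $\ell$ (resp.\ equal $k_t$), every layer is hit exactly once in every copy, so the $k_t$ form a permutation. The consecutive-consistency constraints then force each $\phi_t$ to be a well-defined path.

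\emph{Depth and weft.} Each factor $(1-b_{\mathbf v}b_{\mathbf w})$ and $(1-b+bx)$ has bounded fan-in, so $h_n$ is a single large multiplication gate over $\poly(n)$ bounded-size subcircuits. Hence it has weft $1$ and constant depth. When $n<d$, we instead take $h_n=0$.

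\emph{Main obstacle.} The main obstacle is verifying that the bijection between valid weight-$d$ assignments and pairs $(\{\phi_t\},\{\pi_t\})$ gives every monomial with coefficient exactly $1$, matching the $\sum_\pi$ multiplicities of \eqref{eq:defboxtimes}. Two issues need checking. First, repeated variables must not cause collapsing; this holds because distinct $\ell$ give distinct superscripts. Second, the parameter $k=d$ must fit the definition as stated, where $k$ is the weight. I would check both carefully, and if necessary adjust the encoding by also recording $\pi$ explicitly through selector pairs.
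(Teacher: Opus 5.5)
Your proposal is correct and follows essentially the same route as the paper: a sum over weight-$d$ selector vectors indexed by the variables, the factor $\prod_{\mathbf v}(1-b_{\mathbf v}+b_{\mathbf v}x_{\mathbf v})$, and pairwise exclusion factors $(1-b_{\mathbf v}b_{\mathbf w})$, all under a single large product gate; the paper phrases your list of incompatible pairs as failure of ``1-extendability'' or ``2-extendability''. The differences are only presentational: you treat general $m$ directly and spell out why the pairwise constraints together with weight exactly $d$ force one selection per layer in every copy, whereas the paper writes out only $m=2$ and says higher $m$ is analogous.
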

\begin{proof}
We prove that $\hGamma\!_{n\geq d}$ is weft-$1$-definable.
The proof for $m=2$ or higher orders $m$ is completely analogous.
Let $E$ denote the set of variables of $\hGamma\!_{n\geq d}$, and note $E=\emptyset$ of $|E|=(2n+(d-2)n^2)^2\leq n^6$.
We say that the pair $(e_{i,1}\boxtimes e_{i,2},e_{j,1}\boxtimes e_{j,2})$ is \emph{$1$-extendable} if $e_{i,1}$ and $e_{j,1}$ can be extended to a direct source-sink path by adding $d-2$ more edges.
Analogously, define $(e_{i,1}\boxtimes e_{i,2},e_{j,1}\boxtimes e_{j,2})$ to be \emph{$2$-extendable} if $e_{i,2}$ and $e_{j,2}$ can be extended to a path.
The key property is that a size $d$ set $S\subseteq E$ forms a pair of direct source-sink paths (i.e., a monomial in $\hGamma\!_{n\geq d}$)
if and only if all elements of $S$ are pairwise both $1$-extendable and $2$-extendable:
This is clearly necessary, and it is also sufficient, because being a pair of direct source-sink paths in level $1$ is a local property given by the $1$-extendability, and analogously being a pair of direct source-sink paths in level $2$.
Note that $\mathbf x^{\mathbf b} = \prod_{j\in E}(x_j b_j +1 - b_j)$, which is a product of depth 3 circuits.
\[
\hGamma\!_{n\geq d} \ = \ \sum_{\substack{\mathbf b\\|\mathbf b|=d}} \mathbf x^{\mathbf b} \left(\prod_{\substack{i,j\in E\\(i,j) \textup{ not 1-extendable}}} (1-b_i b_j)\right) \left(\prod_{\substack{i,j\in E\\(i,j) \textup{ not 2-extendable}}} (1-b_i b_j)\right).
\]
This is a sum of one large product of depth 3 circuits, i.e., a weft 1 depth 4 circuit.
The circuit size of the inner circuit is $O(|E|^2)\in\poly(n)$.
Hence, $\hGamma\!_{n\geq d}$ is weft-$1$-definable.
\end{proof}

\begin{proposition}
\label{pro:hGammaVFPT}
$\VFPT\neq\VW[1]$ if and only if $\hGamma\!_{n\geq d}\notin\VFPT$.
Moreover, $\VBFPT\neq\VW[1]$ if and only if $\hGamma\!_{n\geq d}\notin\VBFPT$.
\end{proposition}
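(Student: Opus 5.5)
Both equivalences follow the same template: membership of $\hGamma\!_{n\geq d}$ in $\VW[1]$ gives one direction, and $\VW[1]$-hardness via the clique family gives the other.

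\emph{Easy direction.} Suppose $\hGamma\!_{n\geq d}\notin\VFPT$ (resp.\ $\notin\VBFPT$). By Lemma~\ref{lem:hGammaweftdef} (case $m=2$), $\hGamma\!_{n\geq d}$ is weft-$1$-definable, so it lies in $\VW[1]$. It is therefore a witness for $\VFPT\neq\VW[1]$ (resp.\ $\VBFPT\neq\VW[1]$).

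\emph{Hard direction, by contraposition.} Assume $\hGamma\!_{n\geq d}\in\VFPT$, so $\csize(\hGamma\!_{n\geq d})\le\phi(d)p(n)$. The goal is to show that every weft-$1$-definable family lies in $\VFPT$. I go through the clique family and the completeness result of \cite{BE19}: $\VFPT\neq\VW[1]$ iff the clique family is not in $\VFPT$.

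\emph{Transfer to cliques.} By Proposition~\ref{pro:wclique}, $\Cliqued_{n,d}\le\hGamma\!_{n,d}$. For $d=\binom{k+1}{2}$ and $n\ge k$, the proof only uses the range $n\ge d$ of $\hGamma\!_{n\geq d}$ if I first pad $n$ to $\max(n,d)$. Claim~\ref{cla:GammvslargerGamma} gives $\hGamma\!_{n,d}\le\hGamma\!_{n',d}$ for $n'\ge n$, via Proposition~\ref{pro:boxtimesequivariance}. So ${}_k\Clique_n\le\hGamma\!_{\max(n,d)\geq d}$.

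A restriction (a linear substitution of variables) does not increase $\csize$ by more than a polynomial factor in the number of variables, since each input leaf is replaced by a linear form. This yields $\csize({}_k\Clique_n)\le\phi(k^2)\,\poly(n+k^2)$, which is an FPT bound. Hence ${}_k\Clique_n\in\VFPT$.

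\emph{Main obstacle.} Our ${}_k\Clique_n$ is the colored-with-loops variant, while \cite{BE19} uses a slightly different one. I must reduce their variant to ours by a parameterized (fpt-)projection. My plan: keep the monomials with distinct $v_i$, set the loop variables $x_{v,i,i,v}\mapsto1$, and set $x_{u,i,j,v}\mapsto x_{\{u,v\}}$ for $i<j$. The ordering $v_1<\cdots<v_k$ then gives each uncolored $k$-clique exactly once. If instead their variant requires coefficients handled differently, I would invoke the general fpt-closure of $\VFPT$ under such substitutions.

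\emph{The $\VBFPT$ statement.} The argument is identical with $\asize$ in place of $\csize$. Restrictions preserve $\asize$ directly: apply $T$ to the edge labels of the aABP, as in the ABP-width claim. For hardness I use $\VBFPT$-completeness of the clique family for $\VW[1]$ under fpt-projections, which \cite{Ike25} supplies. Membership is again Lemma~\ref{lem:hGammaweftdef}.
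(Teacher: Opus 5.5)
Your proposal follows the paper's proof: one direction uses Lemma~\ref{lem:hGammaweftdef} to put $\hGamma\!_{n\geq d}$ in $\VW[1]$, and the other uses the restriction ${}_k\Clique_{n}\leq\hGamma\!_{n,d}$ from Proposition~\ref{pro:wclique} together with the clique completeness of \cite{BE19}, and your padding to $\max(n,d)$ and your accounting for the substitution cost fill in details the paper leaves implicit. Two small remarks: the paper describes the \cite{BE19} clique as uncolored \emph{with} loops, so your projection should send $x_{v,i,i,v}\mapsto x_{v,v}$ rather than to $1$; and for $\VBFPT$ the paper justifies hardness by noting that Lemma~4.9 of \cite{BE19} carries over to $\VBFPT$, not by citing \cite{Ike25}.
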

\begin{proof}
Using Lemma~\ref{lem:hGammaweftdef}, if $\hGamma_{n\geq d}\notin\VFPT$, then $\hGamma_{n\geq d}$ is weft-$1$-definable but not in $\VFPT$, which means $\VFPT\neq\VW[1]$ by definition.
Analogously, $\hGamma_{n\geq d}\notin\VBFPT$ implies $\VBFPT\neq\VW[1]$.
For the other direction, let $\hGamma_{n\geq d}\in\VFPT$
with $\phi:\IN\to\IN$ and $\csize(\hGamma_{n\geq d})\leq \phi(d)\cdot p(n)$, where $p(n)\in\poly(n)$.
Proposition~\ref{pro:wclique} implies that for
$d=\binom{k+1}{2}$
we have
$\csize({}_k\Clique_{n})\leq \phi(\binom{k+1}{2})\cdot p(n)$,
which can be written as
$\csize({}_k\Clique_{n})\leq \phi'(k)\cdot p(n)$ for $p(n)\in\poly(n)$.
Therefore, ${_k}\Clique_{n}\in\VFPT$, which implies $\VFPT=\VW[1]$ by \cite{BE19}.
For $\VBFPT$ the proof is analagous by noting that Lemma~4.9 in \cite{BE19} works also for~$\VBFPT$.
Alternatively, one can prove both cases without going via cliques in a similar way as Theorem~\ref{thm:booleancktsascoeffs}.
\end{proof}

\begin{claim}
\label{cla:cktexp}
$\VFPT\neq\VW[1]$ if and only if $\{\xi^{\ckt}_d\mid d \in \IN\}$ is unbounded.
Moreover,
$\VBFPT\neq\VW[1]$ if and only if $\{\xi_d\mid d \in \IN\}$ is unbounded.
\end{claim}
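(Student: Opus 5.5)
Both equivalences reduce to Proposition~\ref{pro:hGammaVFPT}, so the plan is to show that $\hGamma\!_{n\geq d}\in\VFPT$ (resp.\ $\in\VBFPT$) holds exactly when the exponents $\xi^{\ckt}_d$ (resp.\ $\xi_d$) stay bounded. I treat the $\VBFPT$/$\xi_d$ case in detail; the circuit case is identical with $\csize$ in place of $\asize$ and $\w$. Throughout I use $\gamma_d\leq\xi_d\leq 4+\gamma_d$ from \eqref{eq:gammaxifourplusgamma}, and its analogue for the circuit exponents, obtained by computing the $n^4$ entries of $\hXi$ separately. This lets me pass freely between $\hXi$ and $\hGamma$ at the cost of an additive constant, which does not affect boundedness.

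\textbf{Bounded exponents imply FPT.} Suppose $\xi_d\leq c$ for all $d$. For each fixed $d$, the definition as a $\limsup$ gives a constant $\phi_0(d)$ with $\w(\hGamma\!_{n,d})\leq \phi_0(d)\, n^{c+1}$ for all $n$. By \eqref{eq:asizeleqw} this becomes $\asize(\hGamma\!_{n\geq d})\leq (d-1)\phi_0(d)\, n^{c+1}$. That is a bound of the form $\phi(d)\cdot p(n)$ with $p(n)=n^{c+1}$ independent of $d$, so $\hGamma\!_{n\geq d}\in\VBFPT$, and hence $\VBFPT=\VW[1]$ by Proposition~\ref{pro:hGammaVFPT}.

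\textbf{FPT implies bounded exponents.} Conversely, suppose $\asize(\hGamma\!_{n\geq d})\leq\phi(d)\, n^{c}$. Proposition~\ref{pro:homogenization} gives $\w(\hGamma\!_{n,d})\leq\phi(d)\, n^{c}$ for $n\geq d$. For fixed $d$ this yields $\gamma_d\leq c$, hence $\xi_d\leq c+4$ for every $d$, so the exponents are bounded. Taking contrapositives of the two directions gives both equivalences of the claim.

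\textbf{Main obstacle.} The key subtlety is the uniformity of the exponent. Boundedness of $\{\xi_d\}$ only controls the exponent of $n$, while the hidden constant depends on $d$; this is exactly the freedom $\phi(d)$ permitted in the FPT definitions. One must also handle the passage from $\limsup$ to an $O(n^{c+\epsilon})$ bound with $d$-dependent constants, and the restriction to $n\geq d$, which only changes finitely many $n$ for each $d$.

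For the circuit version, the homogenization step is replaced by Strassen's homogenization for circuits. This costs a factor $\poly(d)$, which is absorbed into $\phi(d)$.
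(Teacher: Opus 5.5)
Your proposal is correct and follows essentially the paper's route. You reduce to Proposition~\ref{pro:hGammaVFPT}, convert an FPT bound $\phi(d)\,n^c$ into a uniform bound on the exponents and back, and move between $\hGamma$ and $\hXi$ via \eqref{eq:gammaxifourplusgamma}; your use of $n^{c+1}$ treats the infimum more carefully than the paper's $O(n^\alpha)$, and \eqref{eq:asizeleqw} together with Proposition~\ref{pro:homogenization} supplies the $\asize$/$\w$ translation that the paper only calls ``completely analogous''. The appeal to Strassen's homogenization in the circuit case is unnecessary, since $\VFPT$ and $\xi^{\ckt}_d$ are both measured by $\csize$ directly and no conversion is needed there.
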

\begin{proof}
We use Proposition~\ref{pro:hGammaVFPT}.
The proof goes by contrapositions.

Let $\hGamma\!_{n\geq d} \in \VFPT$, i.e.,
there exists $\phi$ and $\alpha$ with 
$\forall d,n: \csize(\hGamma\!_{n\geq d}) \leq \phi(d) \cdot n^\alpha$.
Hence, $\{\gamma^{\ckt}_d\}$ is bounded by $\alpha$ and thus $\{\xi^{\ckt}_d\}$ is bounded by $\alpha+4$, see \eqref{eq:gammaxifourplusgamma}.

Let $\{\xi^{\ckt}_d\}$ be bounded by $\alpha\in\IN$,
hence $\{\gamma^{\ckt}_d\}$ is also bounded by $\alpha$.
Then $\forall d: \csize(\hGamma\!_{n\geq d}) \in O(n^\alpha) \leq \phi(d)\cdot n^{\alpha}$ for some $\phi(d)$.
Hence, $\hGamma\!_{n\geq d}\in \VFPT$.

The proof for $\VBFPT$ is completely analogous.
\end{proof}

\begin{lemma}
\label{lem:cktfromprogram}
Then $\csize(f)\leq 2d \big(\asize(f)\big)^2$.
\end{lemma}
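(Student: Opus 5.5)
We simulate an aABP by an arithmetic circuit, evaluating the vertices in topological order. Throughout we work with $A=\sS$, so a circuit computes a polynomial. Take an aABP $G$ of size $s:=\asize(f)$ computing $f$, where $f$ has degree $d$.

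The first step is to homogenize $G$. Rather than simulating $G$ directly, we apply the construction from the proof of Proposition~\ref{pro:homogenization}. This gives a layered ABP $G'$ with vertex layers $0,\ldots,d$ and at most $s$ vertices in each inner layer. Its edges between consecutive layers carry homogeneous linear forms, and its constant edges within a layer carry elements of $R$. We then eliminate the constant edges as in the proof that ABP width equals $\w$, which does not increase the number of vertices per layer. The result is an ABP with at most $d+1$ layers of at most $s$ vertices each, every edge going from layer $k$ to layer $k+1$.

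The second step builds the circuit. For each vertex $v$ in layer $k+1$, we compute the value $[v]=\sum_{u}\ell_{u,v}\,[u]$, summing over the at most $s$ vertices $u$ of layer $k$. Each edge label $\ell_{u,v}$ is a linear form. To keep the count under control, we do not expand these forms into variables. In the original aABP each edge carries an affine form, and the merged labels are $R$-combinations of those original labels and of $1$. So we first compute each distinct affine label once, and then reuse it across all copies of the vertex; the total cost of these labels is charged to the edges of $G$, of which there are $O(s^2)$. At each vertex $v$ of $G'$, computing $[v]$ costs at most $s$ multiplications and $s-1$ additions, so at most $2s$ gates. With at most $ds$ such vertices, this gives $2ds^2$ gates. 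The constant entries of $T_{\textup{left}}$ and $T_{\textup{right}}$ become constant leaves.

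The step I expect to be the main obstacle is the accounting of the edge labels. If the labels are general linear forms in many variables, computing each one costs more than $O(1)$ gates, and the bound $2d\,\asize(f)^2$ would fail. I would avoid this in one of two ways. The first option is to work directly on the aABP $G$ without first homogenizing it. Split each vertex $v$ of $G$ into degree parts $v_0,\ldots,v_d$, and treat the label $\ell+\alpha$ as the single gate it is in the aABP, with affine labels counted as inputs to the circuit, that is, with inputs allowed to be variables or constants and a label built on the fly. Then each vertex part $v_i$ receives contributions from at most $s$ predecessors $u$, each contributing $\ell\,[u_{i-1}]+\alpha\,[u_i]$. The count is then at most $(d+1)\cdot s\cdot 2s$ gates, which up to the constant and the index shift matches the target $2d\,\asize(f)^2$. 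The second option, if the exact constant must be $2d$ rather than $2(d+1)$, is to drop the degree-$0$ and top parts that are not needed and tighten the counting from there. If neither works cleanly, I would accept $O(d\,\asize(f)^2)$, which is all that later uses, such as bounding $\xi^{\ckt}_d$, require.
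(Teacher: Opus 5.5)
Your main route is the same as the paper's. The paper simulates a layered ABP of width $n$ naively, with one product gate per edge not at the source and $n-1$ additions per vertex. That gives $(d-2)n^2+n$ product gates and $(n-1)((d-2)n+1)$ addition gates, hence at most $2n^2d$. You additionally make explicit the passage from $\asize(f)$ to the width via Proposition~\ref{pro:homogenization}, which the paper leaves implicit. However, the paper's count only works because it silently takes every edge label as an input of the circuit. That is exactly your option~1; compare the paper's assertion $\fsize(f)=0$ for linear $f$ in Lemma~\ref{lem:ABPfromformula}.

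The step in your write-up that fails is charging the label costs to the $O(s^2)$ edges of $G$. This bounds the total only if each label costs $O(1)$ gates, but a label may be a linear form in arbitrarily many variables. Moreover, after eliminating constant edges a merged label can combine up to $s$ original labels. Your suspicion is in fact decisive: with circuit inputs restricted to variables and constants, as in the paper's definition, the inequality is false.
\begin{itemize}
\item For $f=x_1+\cdots+x_N$, a single source--sink edge shows $\asize(f)=0$, while $\csize(f)=N-1$.
\item Likewise $(x_1+\cdots+x_N)^d$ has $\asize\le d-1$, but every circuit for it needs at least $N-1$ gates. The part of the circuit that reaches the output is connected, has $g$ gates, at least $N$ leaves and at most $2g$ edges, so $2g\ge g+N-1$.
\end{itemize}
Hence neither option~2 nor the fallback $O(d\,\asize(f)^2)$ can work. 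The defect is an additive dependence on the number $N$ of variables, not a constant factor.

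There are two honest repairs.
\begin{itemize}
\item Adopt explicitly the convention that linear-form labels are free inputs. Then your main route, without the charging sentence, already gives at most $2ds^2$ gates, and options~1 and~2 are unnecessary.
\item Alternatively, first compute each of the at most $ds^2$ labels of the layered ABP directly from the variables, using at most $2N-1$ gates each. This yields $\csize(f)=O(d\,\asize(f)^2N)$. This weaker bound is all that Claim~\ref{cla:impliesunbd} needs, because parameterized p-families have $\poly(n)$ variables.
\end{itemize}
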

\begin{proof}
The evaluation of an ABP of width $n$ and degree $d$ can naively be implemented as an arithmetic circuit.
Every edge that is not incident to the source gives rise to a product gate, and every vertex besides the source gives rise to $n-1$ summation gates.
This gives at most $(d-2)n^2+n$ many product gates  
and $(n-1)((d-2)n+1)$ many summation gates.         
In total, these are at most $2 n^2 d$ many gates.
\end{proof}
While an increase of $1$ in $\asize$ requires several arithmetic operations to realize in a circuit,
not all algorithmic tricks in the circuit model have known equivalences in ABPs,
for example an algebraic variant of the technique in \cite[\S2]{NP85} can be applied to get better upper bounds on $\csize(\hXi_{n,d})$,
but it is unclear how to apply it to $\asize$.

\begin{claim}
\label{cla:impliesunbd}
$\VBFPT\subseteq\VFPT$.
\end{claim}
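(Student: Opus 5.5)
Let $f_{n,k}\in\VBFPT$, so there are $\phi:\IN\to\IN$ and $p\in\poly(n)$ with $\asize(f_{n,k})\leq\phi(k)\,p(n)$ for all $n,k$. The plan is to turn the small aABP into a small arithmetic circuit via Lemma~\ref{lem:cktfromprogram}, and then absorb the extra factors into an FPT bound. Lemma~\ref{lem:cktfromprogram} is stated in terms of the degree $d$, which for a parameterized p-family is only bounded by $\poly(n,k)$. Since we allow $\phi$ to depend on $k$, this is harmless.

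First, I would make Lemma~\ref{lem:cktfromprogram} directly applicable to aABPs. The naive evaluation of an aABP in topological order uses one product gate per edge (label times the value at the tail; an affine label $\ell+\alpha$ is itself a linear combination of variables and one constant, and I would count its evaluation as part of the edge cost). It also uses summation gates to collect the incoming edges at each vertex. An aABP of size $s$ has at most $s+a+b$ vertices and $O((s+a+b)^2)$ edges. Evaluating the affine edge labels costs at most the number of distinct variables per label. In a parameterized p-family this is $\poly(n)$, so the total circuit size is at most $(s+a+b)^2\cdot q(n)$ for some polynomial $q$. Alternatively, I would simply invoke Lemma~\ref{lem:cktfromprogram} with $d$ the degree, giving $\csize(f_{n,k})\leq 2d\,(\phi(k)p(n))^2$.

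Next, I would bound $d$. By definition of a parameterized p-family, $d\leq r(n,k)$ for a polynomial $r$. Every polynomial in $n$ and $k$ is at most $\psi(k)\cdot n^{c}$ for suitable $\psi$ and $c$, for example $r(n,k)\leq C(k+1)^{c}(n+1)^{c}$. Hence $\csize(f_{n,k})\leq \big(2C(k+1)^c\phi(k)^2\big)\cdot\big((n+1)^c p(n)^2\big)$. This has the form $\phi'(k)\cdot p'(n)$ with $p'\in\poly(n)$, so $f_{n,k}\in\VFPT$.

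The only delicate step is the first one: the cost of affine edge labels, and the fact that Lemma~\ref{lem:cktfromprogram} is phrased for width rather than aABP size. If this becomes awkward, I would fall back on Proposition~\ref{pro:homogenization} to pass to homogeneous components of width at most $\asize$. I would then apply Lemma~\ref{lem:cktfromprogram} to each of the at most $d+1$ components and sum them. This costs another factor $d+1$, which again is polynomial in $n$ and $k$ and is absorbed in the same way.
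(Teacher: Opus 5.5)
Your proposal is correct and follows the paper's proof: convert the small aABP into an arithmetic circuit via Lemma~\ref{lem:cktfromprogram}, obtaining $\csize(f_{n,k})\leq 2d\,(\phi(k)p(n))^2$, and absorb the extra factors into an FPT bound of the form $\phi'(k)\cdot p'(n)$. Two points in your version are more careful than the paper's: you split the degree bound explicitly as $d\leq r(n,k)\leq C(k+1)^c(n+1)^c$, and you account for the cost of evaluating affine edge labels. The paper instead silently identifies the degree with the parameter when it writes $2d(n^\alpha\phi(d))^2\leq n^{2\alpha}\phi'(d)$.
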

\begin{proof}
For $f_{n,d}\in\VBFPT$ we have $\asize(f_{n,d})\leq n^\alpha\phi(d)$ for some $\phi:\IN\to\IN$ and some $\alpha\in\IN$.
Hence, by Lemma~\ref{lem:cktfromprogram}, $\csize(f_{n,d}) \leq 2d \big(n^\alpha\phi(d)\big)^2 \leq n^{2\alpha}\phi'(d)$ for some $\phi':\IN\to\IN$.
Therefore $f_{n,d}\in\VFPT$.
\end{proof}

\begin{lemma}
\label{lem:depthreduction}
If $f$ of degree $d\geq 2$ (not necessarily homogeneous) is computed by an arithmetic circuit of size $s\geq 2$, then
$f$ is computed by an arithmetic circuit of depth at most $12 \log(s)\log(d)$.
\end{lemma}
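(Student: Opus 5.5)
This is the classical depth reduction of Valiant, Skyum, Berkowitz and Rackoff, in the streamlined form of Allender, Jiao, Mahajan and Vinay. I would not prove it from scratch. Instead I would follow that argument and track constants carefully enough to reach the stated bound $12\log(s)\log(d)$. Since the circuit here need not be homogeneous, the first step is homogenization: replace every gate $g$ by $d+1$ gates $g_0,\ldots,g_d$, where $g_i$ computes the degree-$i$ homogeneous part of $g$. For a product gate we set $g_i=\sum_{a+b=i}u_a v_b$. Components of degree larger than $d$ are discarded; they cannot contribute to the output, since multiplication never lowers degree. This yields a homogeneous circuit of size $s'\le O(s d^2)$, which is at most $\poly(s)$ because we may assume $d\le 2^s$. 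We then take $f=\sum_{i\le d} f_i$ at the end. Note that $\log s'\le 3\log s+O(1)$ when $d\le s$. If $d>s$, the original circuit already has depth at most $s$, which is at most $\log s\log d$ times a constant, so that case is handled separately.

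The core step works with the homogeneous circuit. For each pair of gates $(g,h)$ we define the partial derivative polynomial $[g:h]$, where $h$ is treated as a fresh variable along a single occurrence-free derivation. For gates of degree $m$ we express $g$ and $[g:h]$ through gates of degree roughly $m/2$, using the standard frontier identities:
\[
g=\sum_{v} [g:v]\,v_1 v_2,\qquad [g:h]=\sum_{v}[g:v]\,[v_1:h]\,v_2 .
\]
Here $v$ ranges over product gates $v=v_1v_2$ with $\deg v> m/2$ and $\deg v_1,\deg v_2\le m/2$ (for the second identity, over those with $\deg v$ between the appropriate thresholds). Each such identity is realized by an unbounded fan-in sum of size at most $s'$ over products of three terms. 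Converting this to fan-in 2 costs depth $\log s'+2$ per level of the recursion. The degree halves at every level, so there are at most $\log d$ levels. The total depth is then roughly $(\log s'+2)\log d\le 12\log s\log d$, using $s\ge2$ and $d\ge2$ to absorb the additive constants and the final $O(\log d)$-depth sum over the $f_i$.

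I expect the main obstacle to be the constant bookkeeping, not the structure of the argument. The steps that need care are the blow-up from homogenization, the fan-in-2 conversion, the additive constants per level, and the case where $d$ is large relative to $s$. All of these must fit under the factor $12$. I would first try to bound each level's depth by $4\log s'$. I would then check that $\log s'\le 3\log s$ holds in the regime $d\le s$, which gives exactly the factor $12$.
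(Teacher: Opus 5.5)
\textbf{This is a genuinely different route from the paper's, and it has a gap for large $d$.}

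You homogenize and then run the Valiant--Skyum--Berkowitz--Rackoff recursion with the $[g:h]$ frontier identities. The paper never homogenizes. It cuts at the set $D$ of gates of degree in $(d/3,2d/3]$ and replaces them by fresh variables $y_v$. It then writes $f$ as a quadratic expression in the cut polynomials $C_v$, with coefficients given by partial evaluations $f'_S$ of the cut circuit. Finally it recurses on pieces of degree at most $2d/3$, each again computed by a circuit of size at most $s$. So each of the $\log_{3/2}d$ levels costs $2\log s+5$, and $d$ never enters the per-level cost. That is exactly what your argument lacks.

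The step that fails is your treatment of $d>s$. The claim that depth $s$ is at most a constant times $\log s\log d$ is false. For $s=2^{20}$ and $d=2^{1000}$ (allowed, since only $d\le 2^s$ is forced), $12\log s\log d=240{,}000<s$. Your main argument does not reach this regime either. Homogenization gives $s'$ of order $sd^2$, so each level costs about $\log s+2\log d$. The total is then roughly $(\log s+2\log d)\log d$, the classical $O(\log d\cdot\log(sd))$ bound, which exceeds $12\log s\log d$ once $d$ is larger than about $s^{5.5}$. So for $s^{6}\lesssim d\lesssim 2^{s/(12\log s)}$ neither of your two cases applies.

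Borrowing the paper's cut does not close this for free. Its claim that $f'$ has $y$-degree at most $2$, with coefficients of degree at most $2d/3$, is automatic for homogeneous circuits but can fail in general. For example, take an output $C_1C_2C_3-C'_1C'_2C'_3+g$, where $C_1C_2C_3=C'_1C'_2C'_3$ are two different factorizations into gates of degree in $(d/3,2d/3]$. Then $f'$ is cubic in the $y_v$.

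The clean fix is to prove the lemma only for $d\le s$. That is all Proposition~\ref{pro:impliesBE} uses: there you may pad the circuit, or replace $\phi(d)$ by $\max(\phi(d),d)$, to ensure $s\ge d$. In that regime your argument works with $\log s'\le 3\log s+O(1)$. This is not literally $\log s'\le 3\log s$, because of the constant in $s'=O(sd^2)$. The additive constants, the rounding of $\log d$, and the final sum over the $f_i$ are then absorbed, and for $s\le 64$ the trivial bound $s\le 12\log s\log d$ already suffices.
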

\begin{proof}
This is a weaker result than \cite{VSBR83}, but it suffices here and has a much shorter proof, as pointed out in \cite{Sap15}.
The degree $\deg(v)$ of a vertex $v$ in an arithmetic circuit is the degree of the (not necessarily homogeneous) polynomial computed at $v$.
Given a circuit $C$ computing $f$, let
\[
D=\{v\in C\mid \tfrac d 3 < \deg(v) \leq \tfrac{2d}{3}\}.
\]
We convert $C$ into $C'$ by replacing every subcircuit $C_v$ with $v\in D$ by a new input variable $y_v$,
and call the computed polynomial $f'$.
Collecting $y$-variables, we write
$
f' = \sum_{i,j} A_{i,j} y_i y_j \ + \ \sum_{i} B_{i} y_i \ + \ p
$
with each $A_{i,j},B_i,p$ not involving any $y$-variables, and
each $\deg(A_{i,j}),\deg(B_i)\leq 2d/3$, $\deg(p)\leq \frac d 3$.
For a set $S\subseteq\IN$, 
we write $f'_S$ for the partial evaluation in which we set $y_i$ to 1 if $i\in S$,
and $y_i$ to 0 otherwise.
Note that $B_i = f'_{\{i\}}$, and $A_{i,j}=f'_{\{i,j\}}-f'_{\{i\}}-f'_{\{j\}}$.
\begin{equation}
\label{eq:fpassum}
f \ = \ \sum_{i,j} (f'_{\{i,j\}}-f'_{\{i\}}-f'_{\{j\}}) C_i C_j \ + \ \sum_{i} f'_{\{i\}} C_i \ + \ p.
\end{equation}
The $s^2$ many summation gates in \eqref{eq:fpassum} can be implemented in depth $\lceil2\log(s)\rceil+3$.
Hence, \eqref{eq:fpassum} is a depth $\lceil 2\log(s)\rceil+4$
circuit stacked on top of polynomials of degree at most $2d/3$ that have size at most~$s$.
We apply the construction recursively to these polynomials.
The resulting depth satisfies
$\textup{Depth}(d) = \textup{Depth}(2d/3)+2\log(s)+5$,
hence $\textup{Depth}(d) \leq \log_{3/2}(d) (2\log(s)+5) \leq 12 \log(s) \log(d)$.
\end{proof}

An arithmetic circuit whose graph structure is a tree is called an arithmetic formula.
We define $\fsize(f)$ to be the size of a smallest arithmetic formula computing~$f$.
\begin{lemma}
\label{lem:ABPfromformula}
For every (not necessarily homogeneous) polynomial $f$, we have $\asize(f)\leq\fsize(f)$.
\end{lemma}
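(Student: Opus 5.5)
We argue by structural induction on the formula: for every formula $\Phi$ computing $f$ we build an aABP computing $f$ with at most $\fsize(\Phi)$ vertices, not counting sources and sinks. The induction hypothesis is strengthened to require an aABP with a \emph{single} source $s$ and a \emph{single} sink $t$. Since aABPs have no layer or path-length constraint and edge labels may be arbitrary affine linear forms, both requirements are easy to maintain.

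\textbf{Base case.} The formula is a leaf, i.e., a variable or a constant $\alpha$. The aABP is one edge from $s$ to $t$ labeled by that leaf. It has $0$ internal vertices, which matches size $0$ (leaves are not computation gates).

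\textbf{Sum gate} $\Phi=\Phi_1+\Phi_2$. Take the aABPs $G_1,G_2$ for the two subformulas and identify $s_1$ with $s_2$ and $t_1$ with $t_2$. The source-sink paths of the result are the disjoint union of the paths of $G_1$ and of $G_2$, so it computes $f_1+f_2$. The internal vertex count is $\fsize(\Phi_1)+\fsize(\Phi_2)\le\fsize(\Phi)$.

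\textbf{Product gate} $\Phi=\Phi_1\times\Phi_2$. Identify $t_1$ with $s_2$. Every source-sink path factors uniquely as a $G_1$-path followed by a $G_2$-path, so by distributivity the result computes $f_1f_2$. The identified vertex becomes internal, so the count is $\fsize(\Phi_1)+\fsize(\Phi_2)+1=\fsize(\Phi)$, with the $+1$ paid for by the product gate itself.

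The only point needing care is this bookkeeping. Sources and sinks are free, merged vertices must not be double counted, and the one new internal vertex created per product gate is charged to that gate. The tree structure of a formula is what makes the argument work: each subformula is used once, so the aABPs can be glued without duplication. No commutativity is needed beyond the order-preserving concatenation in the product step, so the argument works verbatim over $R$ and in $\sS$.
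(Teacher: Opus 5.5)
Your proof is correct and follows the same route as the paper: induction over the formula, with a sum gate handled by identifying the two sources and the two sinks, and a product gate handled by identifying the sink of the first aABP with the source of the second, which creates exactly one new counted vertex charged to that gate. Your version inducts over the formula tree, with leaves as the base case and a single source and sink maintained throughout; this is slightly cleaner than the paper's induction over minimal formulas with linear polynomials as the base case.
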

\begin{proof}
The proof is by induction.
For linear polynomials $f$, $\fsize(f) = 0 = \asize(f)$.
For a sum $f+g$ of minimal formula size,
we have $\fsize(f+g)= \fsize(f)+\fsize(g)+1$,
but $\asize(f+g)\leq\asize(f)+\asize(g)$ by
taking minimal size ABPs for $f$ and $g$ and identifying their sources
and identifying their sinks.
For a product $fg$ of minimal formula size,
we have $\fsize(fg)= \fsize(f)+\fsize(g)+1$
and $\asize(fg)\leq\asize(f)+\asize(g)+1$
by taking minimal size ABPs for $f$ and $g$ and identifying the sink for $f$
with the source for $g$ (note that $\asize$ does not count sources or sinks, and this creates a new vertex that is not a source or sink).
\end{proof}

\begin{proposition}
\label{pro:impliesBE}
If $\xi_d\geq\omega(\log d)$, then $\VFPT\neq\VW[1]$.
\end{proposition}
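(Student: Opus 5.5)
We argue by contraposition: assuming $\VFPT=\VW[1]$, we show that $\xi_d\in O(\log d)$. By Claim~\ref{cla:cktexp} this assumption means that $\{\xi^{\ckt}_d\}$ is bounded. Via Proposition~\ref{pro:hGammaVFPT} it also gives a constant $\alpha$ and a function $\phi$ with $\csize(\hGamma\!_{n\geq d})\leq\phi(d)\,n^\alpha$. The goal is to turn this circuit bound into a width bound $\w(\hGamma\!_{n,d})\leq n^{O(\log d)}$ for each fixed $d$. Then $\gamma_d\in O(\log d)$, and by \eqref{eq:gammaxifourplusgamma} also $\xi_d\leq 4+\gamma_d\in O(\log d)$, which contradicts $\xi_d\geq\omega(\log d)$.

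The conversion proceeds in three steps. Fix $d$, and let $s:=\csize(\hGamma\!_{n,d})\leq\phi(d)n^\alpha$.
\begin{itemize}
\item Lemma~\ref{lem:depthreduction} yields a circuit of depth $D\leq 12\log(s)\log(d)$.
\item Unfolding this circuit into a formula gives $\fsize(\hGamma\!_{n,d})\leq 2^{D+1}$, since a formula of depth $D$ with fan-in $2$ has at most $2^{D}$ gates.
\item Lemma~\ref{lem:ABPfromformula} gives $\asize(\hGamma\!_{n,d})\leq 2^{D+1}$. Since $\hGamma\!_{n,d}$ is homogeneous, Proposition~\ref{pro:homogenization} then gives $\w(\hGamma\!_{n,d})\leq 2^{D+1}$.
\end{itemize}

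The main obstacle is that this chain only gives
\[
2^{D}\leq s^{12\log d}=(\phi(d)n^\alpha)^{12\log d}=\phi(d)^{12\log d}\,n^{12\alpha\log d}.
\]
For fixed $d$ the factor $\phi(d)^{12\log d}$ is a constant, so $\log_n$ of the width is at most $12\alpha\log d+o(1)$ as $n\to\infty$. Hence $\gamma_d\leq 12\alpha\log d$, and the exponent of $\phi(d)$ does no harm, because $\xi_d$ is a $\limsup$ in $n$ for fixed $d$. One point needs checking: the logarithms in Lemma~\ref{lem:depthreduction} must be base $2$, or at least compatible with the $2^{D}$ formula blow-up. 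Any fixed base only changes the constant in $O(\log d)$, which suffices.

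A cleaner alternative skips Claim~\ref{cla:cktexp} and uses only boundedness of $\{\gamma^{\ckt}_d\}$ by some $\alpha$, i.e. $\csize(\hGamma\!_{n,d})\in O_d(n^{\alpha+\varepsilon})$. This gives the same conclusion $\gamma_d\leq 12(\alpha+\varepsilon)\log d+O(1)$. Either way, $\xi_d\leq 4+12\alpha'\log d$ for a constant $\alpha'$ independent of $d$, contradicting $\xi_d\geq\omega(\log d)$.
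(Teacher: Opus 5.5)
Your proposal is correct and follows the paper's proof essentially step for step. You argue by contraposition from $\hGamma\!_{n\geq d}\in\VFPT$, obtain $\csize\leq\phi(d)n^\alpha$, and then apply Lemma~\ref{lem:depthreduction}, unfold to a formula of size at most $(\phi(d)n^\alpha)^{12\log d}$, and use Lemma~\ref{lem:ABPfromformula} and Proposition~\ref{pro:homogenization} to get $\gamma_d\in O(\log d)$ and hence $\xi_d\leq 4+\gamma_d\in O(\log d)$; the opening appeal to Claim~\ref{cla:cktexp} and the alternative route via boundedness of $\gamma^{\ckt}_d$ are unnecessary but harmless.
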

\begin{proof}
The proof goes by contraposition.
Let $\VFPT=\VW[1]$.
Then $\hGamma\!_{n\geq d} \in \VFPT$.
Hence, there exists $\alpha\in\IN$ and a function $\phi$ such that $\forall n,d: \, \csize(\hGamma\!_{n\geq d}) \leq \phi(d)\cdot n^\alpha$.
We apply Lemma~\ref{lem:depthreduction} to obtain a circuit for $\hGamma\!_{n\geq d}$
of depth $12 \log(d)\log(\phi(d)n^\alpha)$.
We expand this arithmetic circuit into an arithmetic formula of size $2^{12 \log(d)\log(\phi(d)n^\alpha)} = (\phi(d)n^\alpha)^{12 \log d}$.
Using Lemma~\ref{lem:ABPfromformula}, we obtain $\asize(\hGamma\!_{n\geq d})\leq (\phi(d)n^\alpha)^{12 \log d}$.
We apply Proposition~\ref{pro:homogenization} to obtain
$\w(\hGamma\!_{n\geq d})\leq (\phi(d)n^\alpha)^{12 \log d}$.
For a fixed $d$, this is in $O(n^{12 \alpha \log d})$.
Hence, $\gamma_d\leq O(\log d)$,
and thus also $\xi_d\leq O(\log d)$.
\end{proof}

The following proposition is
analogous to Proposition~\ref{pro:collapse},
and it
shows that it is sufficient to study second-order computation only and not also higher orders.
\begin{proposition}\label{pro:collapseparam}
There exists $\phi:\IN\to\IN$, $\phi(d)\geq d$, with $\w(\hXi_{n,d})\leq \phi(d)\poly(n)$ if and only if for every $m$
there exists $\varphi:\IN\to\IN$, $\varphi(d)\geq d$, with
$\w(\textup{\textsf{H}}_{n,d}^{\boxtimes m}\boxtimes \Xi_{n,d})\leq \varphi(d)\poly(n)$.
\end{proposition}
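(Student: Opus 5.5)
The plan is to mirror the proof of Proposition~\ref{pro:collapse}, replacing the polynomial bound $n^kd^k$ by an FPT-style bound $\phi(d)\,n^k$. One direction is immediate: take $m=1$, since $\textsf{H}_{n,d}^{\boxtimes 1}\boxtimes\Xi_{n,d}=\hXi_{n,d}$.

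For the other direction, assume $\w(\hXi_{n,d})\leq \phi(d)\,n^k$ for all $n,d$, with $\phi(d)\geq d$ and some constant $k$. We may take $\phi$ monotone and $\phi(d)\geq 1$. Then $\hXi_{n,d}\leq \Xi_{\phi(d)n^k,d}$. Fix $m$ and peel off one factor $\textsf{H}_{n,d}$ at a time, as in Proposition~\ref{pro:collapse}. Suppose we have already shown
\[
\textsf{H}_{n,d}^{\boxtimes j}\boxtimes \Xi_{n,d}\leq \Xi_{N_j,d}.
\]
Claim~\ref{cla:boxtimesaction}, lifted to matrices via Claim~\ref{cla:boxtimeskronprod}, lets us rewrite $\textsf{H}_{n,d}^{\boxtimes j+1}\boxtimes\Xi_{n,d}$ as $\textsf{H}_{n,d}\boxtimes(\textsf{H}_{n,d}^{\boxtimes j}\boxtimes\Xi_{n,d})$. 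Proposition~\ref{pro:boxtimesequivarianceaction} then bounds this by $\textsf{H}_{n,d}\boxtimes\Xi_{N_j,d}$. Next, using $\textsf{H}_{n,d}\leq\textsf{H}_{N_j,d}$ (valid because $N_j\geq n$, and shown as in Claim~\ref{cla:GammvslargerGamma}) together with Proposition~\ref{pro:boxtimesequivarianceaction} again, we reach $\hXi_{N_j,d}\leq \Xi_{\phi(d)N_j^k,d}$. So the widths satisfy the recursion $N_0=n$ and $N_{j+1}=\phi(d)N_j^k$.

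Solving the recursion gives
\[
N_m=\phi(d)^{1+k+\cdots+k^{m-1}}\,n^{k^m}.
\]
For fixed $m$ this has the required form $\varphi(d)\cdot\poly(n)$, with
\[
\varphi(d):=\max\big(d,\ \phi(d)^{1+k+\cdots+k^{m-1}}\big)\geq d.
\]
The factor $\poly(n)=n^{k^m}$ is a fixed polynomial because $k$ and $m$ are constants.

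The main point needing care is how the matrix dimensions interact with restrictions. Each $\Xi_{N,d}$ appears as a matrix of size $N$, while the left side $\textsf{H}_{n,d}^{\boxtimes j}\boxtimes\Xi_{n,d}$ has size $n^{j+1}$. So we need $N_j\geq n^{j+1}$, or else we must rely on the padding convention $\Mat_{n-1}\subseteq\Mat_n$ and the three-monoid action, exactly as tacitly done in Proposition~\ref{pro:collapse}. This is harmless because $\w(\cdot)\geq$ the matrix dimension is not required by the definition of $\leq$. I would simply follow the same conventions as the proof of Proposition~\ref{pro:collapse} and note that only the bookkeeping of the $d$-dependence changes.
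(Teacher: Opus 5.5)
Your proposal is correct and is essentially the paper's proof. The paper also peels off one factor $\textsf{H}_{n,d}$ at a time via Claim~\ref{cla:boxtimesaction}, Proposition~\ref{pro:boxtimesequivarianceaction} and the padding $\textsf{H}_{n,d}\leq\textsf{H}_{N,d}$, reaching the same bound $\Xi_{n^{k^m}\phi(d)^{k^{m-1}+\cdots+1},d}$. Your closing concern about matrix dimensions is unnecessary, since the embedding $\Mat_{n-1}\subseteq\Mat_n$ handles it exactly as in Proposition~\ref{pro:collapse}.
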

\begin{proof}
One direction is clear.
For the other direction, fix any $m$.
Let $\varphi:\IN\to\IN$ and $k\in\IN$ such that $\forall d,n: \w(\hXi_{n,d})\leq n^k \varphi(d)$,
in other words $\hXi_{n,d}\leq \Xi_{n^k \varphi(d),d}$.
\begin{eqnarray*}
\textsf{H}_{n,d}^{\boxtimes m}\boxtimes \Xi_{n,d}
&=&
\textsf{H}_{n,d}^{\boxtimes m-1}\boxtimes (\textsf{H}_{n,d}\boxtimes \Xi_{n,d})
\ \leq \ 
\textsf{H}_{n,d}^{\boxtimes m-1}\boxtimes \Xi_{n^k \varphi(d),d}
\\
& = & 
\textsf{H}_{n,d}^{\boxtimes m-2}\boxtimes (\textsf{H}_{n,d}\boxtimes \Xi_{n^k \varphi(d),d})
\ \leq \ 
\textsf{H}_{n,d}^{\boxtimes m-2}\boxtimes (\textsf{H}_{n^k \varphi(d),d}\boxtimes \Xi_{n^k \varphi(d),d})
\\
& \leq &
\textsf{H}_{n,d}^{\boxtimes m-2}\boxtimes \Xi_{n^{k^2} \varphi(d)^{k+1},d}
\ \leq \ 
\textsf{H}_{n,d}^{\boxtimes m-3}\boxtimes \Xi_{n^{k^3} \varphi(d)^{k^2+k+1},d}
\\
&\leq& \cdots \ \ \leq \ \Xi_{n^{k^m} \varphi(d)^{k^{m-1}+k^{m-2}+\cdots+1},d} \ .
\end{eqnarray*}
Since $k$ and $m$ are constant, this means $\w(\textsf{H}_{n,d}^{\boxtimes m}\boxtimes \Xi_{n,d})\leq \poly(n)\phi(d)
$ with $\phi(d)=\varphi(d)^{k^{m-1}+k^{m-2}+\cdots+1}$.
\end{proof}

\subsection{The connection to Boolean circuits}
We define $\exists\hGamma\!_{n\geq d}: \{0,1\}^{n^4 d^2}\to\{0,1\}$ via
$\exists\hGamma\!_{n \geq d}(\mathbf b)=\min\{1,\hGamma\!_{n \geq d}(\mathbf b)\}$.
We phrase the conjectures $\textup{NP}\not\subseteq\textup{P/poly}$
and $\textup{P}\neq\textup{NP}$
via their negations as follows.
We have
$\textup{NP}\subseteq\textup{P/poly}$
if and only if there exists a sequence of Boolean circuits of size $\poly(d)$
that compute $\exists\hGamma\!_{d\geq d}$.
We have $\textup{P}=\textup{NP}$
if and only if there exists a sequence of Boolean circuits $C_d$ of size $\poly(d)$
that compute $\exists\hGamma\!_{d\geq d}$
with the additional requirement that there exists an algorithm $A$
that on input $d$ runs in time $\poly(d)$ and outputs the adjacency matrix and gate labels of $C_d$.
For $h\in \sS_d$ we define
\[
\suppdash\csize(h) := \min\{ \csize(f) \mid f\in \sS_d, \ f(\mathbf b)=0 \textup{ iff } h(\mathbf b)=0 \}.
\]
\[
\suppdash\gamma^\ckt_d := \inf\{ k \mid \suppdash\csize(\hGamma\!_{n\geq d}) \in O(n^k) \} = \limsup_{n\to\infty}\big(\log_n(\suppdash\csize(\hGamma\!_{n\geq d}))\big).
\]

\begin{claim}
\label{cla:impliesPPoly}
If $\suppdash\gamma^\ckt_d$ is unbounded, then $\textup{NP}\not\subseteq\textup{P/poly}$.
\end{claim}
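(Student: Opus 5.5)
We argue by contraposition: assuming $\textup{NP}\subseteq\textup{P/poly}$, we show that $\{\suppdash\gamma^\ckt_d\mid d\in\IN\}$ is bounded by an absolute constant. By the characterization stated just before the claim, $\textup{NP}\subseteq\textup{P/poly}$ gives a family of Boolean circuits $C_d$ of size $\poly(d)$ computing $\exists\hGamma\!_{d\geq d}$. We want Boolean circuits for $\exists\hGamma\!_{n\geq d}$ of size polynomial in $n$ with a \emph{fixed} exponent, uniformly in $d$. For $n\geq d$, Claim~\ref{cla:GammvslargerGamma} together with the equivariance (Proposition~\ref{pro:boxtimesequivariance}) gives $\hGamma\!_{n,d}\leq\hGamma\!_{n,n}$. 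The restriction used there only sets variables to $0$ and renames them, so $\exists\hGamma\!_{n\geq d}$ is obtained from $\exists\hGamma\!_{n\geq n}$ by hardwiring inputs. Hence the Boolean circuit size of $\exists\hGamma\!_{n\geq d}$ is at most $|C_n|\leq c\,n^{a}$, with $a$ and $c$ independent of $d$. For $n<d$ the polynomial is $0$, which is trivial.

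Next we arithmetize. Convert $C_n$ gate by gate: $\AND(u,v)\mapsto uv$, $\NOT u\mapsto 1-u$, and $\OR(u,v)\mapsto 1-(1-u)(1-v)$. This yields an arithmetic circuit $f$ of size $O(|C_n|)$ that agrees with $\exists\hGamma\!_{n\geq d}$ on $\{0,1\}$-inputs. Then $f(\mathbf b)=0$ if and only if $\hGamma\!_{n\geq d}(\mathbf b)=0$ for all $\mathbf b\in\{0,1\}^{n^4d^2}$. This uses that over $\IC$ the polynomial $\hGamma\!_{n\geq d}$ has nonnegative integer coefficients, so on Boolean points it vanishes exactly when no monomial is supported.

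The step I expect to be the main obstacle is matching the definition of $\suppdash\csize$, which requires $f\in\sS_d$, i.e.\ $f$ homogeneous of degree $d$. The arithmetized circuit is neither homogeneous nor of degree $d$, so it must be repaired without breaking the zero-pattern on Boolean points. My first attempt is the following. Multilinearize implicitly: on Boolean points $f$ is determined by its values. Then multiply the circuit's output by a suitable degree-adjusting factor and take the degree-$d$ homogeneous component via standard homogenization, at a cost of a factor $O(d^2)$. This cost only affects the constant in $O(n^k)$ for fixed $d$ and does not change the exponent. If homogenization destroys the Boolean zero-pattern, I would instead interpret the vanishing condition on Boolean points as the intended one, which is how the paper uses $\suppdash$ in connection with $\exists\hGamma$.

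Granting this, $\suppdash\csize(\hGamma\!_{n\geq d})\in O(n^{a})$ for every fixed $d$, with $a$ independent of $d$. Hence $\suppdash\gamma^\ckt_d\leq a$ for all $d$, contradicting unboundedness.
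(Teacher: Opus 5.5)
Your overall route matches the paper's proof: contraposition, Boolean circuits for $\exists\hGamma\!_{n\geq n}$ from $\textup{NP}\subseteq\textup{P/poly}$, transfer to $\exists\hGamma\!_{n\geq d}$ with an exponent independent of $d$, then arithmetization. Your gate $\OR(u,v)\mapsto 1-(1-u)(1-v)$ is the right one; the paper's printed $xy-x-y$ computes $-\OR$ on Boolean inputs and is a slip.

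\textbf{The transfer step.} Your justification of this step is wrong. Restrictions come from elements of $\End$, which send variables to linear combinations of variables and therefore preserve degree. So $\hGamma\!_{n,d}\leq\hGamma\!_{n,n}$ fails for $d<n$, and Claim~\ref{cla:GammvslargerGamma} only enlarges $n$ at fixed $d$. Your conclusion is still true, but it needs a substitution of constants that includes $1$'s, not just zeros and renaming. Set $x^{(k)}_{n,n}\boxtimes x^{(k)}_{n,n}$ to $1$ for $d<k\leq n$, and set every other variable carrying a layer index $>d$ to $0$. This forces the pairing permutation to fix every $k>d$ and both paths to stay at $n$ from layer $d$ on, so $\hGamma\!_{n,n}$ becomes exactly $\hGamma\!_{n,d}$. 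Hardwiring these inputs in the circuit for $\exists\hGamma\!_{n\geq n}$ then gives $\bsize(\exists\hGamma\!_{n\geq d})\leq n^{\alpha}$. Alternatively, apply $\textup{NP}\subseteq\textup{P/poly}$ directly to the $\textup{NP}$ language of triples $(n,d,\mathbf b)$. The paper asserts this step without comment.

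\textbf{The homogeneity requirement.} The problem you flag is genuine, and the paper's proof ignores it too: its arithmetized circuit is not in $\sS_d$ either. Your proposed repair fails, because taking a homogeneous component does not commute with evaluation on the cube. For example, arithmetizing $(x_1\AND x_2)\OR(x_1\AND\NOT x_1)$ gives a polynomial equal to $x_1x_2$ on $\{0,1\}^2$. Its degree-$2$ component is $x_1x_2-x_1^2$, which equals $-1$ at $(1,0)$.

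There is also no cheap fix. Any $f\in\sS_d$ with the required zero pattern vanishes at all Boolean points of weight $<d$. Möbius inversion over supports then shows that on the cube $f$ agrees with $\sum_S c_S x^S$, summed over the squarefree monomials $x^S$ of $\hGamma\!_{n\geq d}$, with all $c_S\neq 0$. So the literal statement asks for small arithmetic circuits for a reweighted hypercomputant, which $\textup{NP}\subseteq\textup{P/poly}$ does not evidently provide.

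Hence only your fallback works: let $\suppdash\csize$ minimize over arbitrary polynomials $f$ with the prescribed Boolean zero pattern. That is the reading under which the paper's proof is valid. With the transfer step corrected, your argument then coincides with the paper's.
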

\begin{proof}
The proof goes by contraposition.
For $f:\{0,1\}^N\to\{0,1\}$,
let $\bsize(f)$ denote the size of a smallest Boolean circuit computing~$f$.
Let $\textup{NP}\subseteq\textup{P/poly}$.
Since $\exists\hGamma\!_{d\geq d} \in \textup{NP}$, it follows that there exists $\alpha\in\IN$ with $\forall d: \bsize(\exists\hGamma\!_{d\geq d}) \leq d^\alpha$.
Hence, $\forall d,n: \bsize(\exists\hGamma\!_{n\geq d}) \leq n^\alpha$.
We arithmetize the Boolean circuit by replacing $\NOT$ with $1-x$, $\AND$ with $x\cdot y$,
and $\OR$ with $xy-x-y$.
Evaluated on the Boolean hypercube, this circuit has the same support as $\exists\hGamma\!_{n\geq d}$.
The size of the arithmetic circuit is at most 4 times larger than the size of the Boolean circuit.
Hence, $\suppdash\gamma^\ckt_d$ is bounded by~$\alpha$.
\end{proof}

\section{Stabilizers and other properties}\label{sec:geomrepr}
In this section we have $R=\IC$ and $A=\sS$ (unless stated otherwise).
\subsection{Different settings and their properties}
\label{subsec:differentsettings}
Given a group $G$ acting linearly on a finite dimensional complex vector space $V$,
the \emph{stabilizer} of $f\in V$ is defined as $\stab_G(f)=\{g\in G\mid gf=f\}$.
For any subgroup $H\leq G$, the invariant space $V^H$ is defined as $\{v\in V\mid \forall h\in H: hv=v\}$.
We say that $f$ is \emph{characterized by its symmetries} if $\dim V^{\stab_G(f)} = 1$.

Recall that $\Gamma\!_{n,d}$ is the bottom-right entry of $\Xi_{n,d}$.
Let $\trXi_{n,d} := \tr(\Xi_{n,d})$ be the trace.
In this section we analyze the objects $\trXi_{n,d}$,
$\Gamma\!_{n,d}$, 
and $\Xi_{n,d}$ by the following properties:
\begin{enumerate}
\item the object's endomorphism orbit is topologically closed for $A=\sT$, by which we mean
$\{T(\Xi_{n,d})T\in \Mat_n(\IC)\times\End\Mat_n(\IC)\}$ is closed, and analogous statements for the other objects,
\item the object has a reductive stabilizer,
\item the object is characterized by its stabilizer,
\item the object has a connected stabilizer,
\item on plugging in identity matrices the object is invertible in every characteristic
\end{enumerate}
\begin{itemize}
\item
It is clear that $\trXi_{n,d}$ lacks property~5, because $\trXi_{n,d}(\id)=n$.
Moreover, $\trXi_{n,d}$ also lacks property~1, see \cite{BIMPS20}.
The fact that $\trXi_{n,d}$ also lacks property~4 was proved in \cite{Ges16}.

\item
$\Gamma\!_{n,d}$ has property~1, which was proved in \cite{Nis91}, see also \cite{For16}.
Notably, $\Gamma\!_{n,d}$ has property~5.
However,
$\Gamma\!_{n,d}$ lacks property~2, which can be seen by calculating for $d=3$, $n=2$, the Lie algebra that is the kernel of the Lie algebra action of $\gl_{2n+(n-2)^2 d}$ on $\Gamma\!_{n,d}$, and then verifying non-reductivity, for example with GAP.

\item
For $\Xi_{n,d}$ we have property~1, see Theorem~\ref{thm:Nisanmatrix}.
Clearly, $\Xi_{n,d}$ has property~5.
Properties 2, 3, and 4 are concerned with the stabilizer, hence they depend on the group action.
Let $\mu_d\subset\GL_{n^2 d} = \{\diag(\alpha,\ldots,\alpha)\mid \alpha^d=1\}$, and let
$G_{n,d} = \GL_{n^2 d}/\mu_d$, which is a connected reductive linear algebraic Lie group with Lie algebra $\gl_{n^2 d}$.
Note that $\GL_{n,d}$ acts faithfully on $\Sym^d\IC^{n^2 d}$.
We distinguish 3 different group actions:
\begin{enumerate}
\item
The action of $G_{n,d}$ on $\Mat_n(\Sym^d \IC^{n^2 d})$
given by $g (e_1 \otimes f \otimes e_j^\ast) = e_1 \otimes g(f) \otimes e_j^\ast$,
\item
The action of $\GL_{n}\times G_{n,d}$ on $\Mat_n(\Sym^d \IC^{n^2 d})$
given by $(g_\textup{outer},g_{\textup{inner}}) (g_{\textup{outer}} e_1 \otimes f \otimes e_j^\ast) = (g_{\textup{outer}}\,e_1) \otimes g_{\textup{inner}}(f) \otimes (e_j^\ast \, g_{\textup{outer}}^{-1})$,
\item 
The action of $\GL_{n}\times G_{n,d}\times \GL_{n}$ on $\Mat_n(\Sym^d \IC^{n^2 d})$
given by $(g_\textup{left},g_{\textup{inner}},g_\textup{right}) (e_1 \otimes f \otimes e_j^\ast) =
 (g_{\textup{left}} e_1) \otimes g_{\textup{inner}}(f) \otimes (e_j^\ast g_{\textup{right}}^\transpose)$.
\end{enumerate}
Under the first action, $\Xi_{n,d}$ lacks property~3, because if $g(\Xi_{n,d})=\Xi_{n,d}$, then also
$g(\trXi_{n,d})=\trXi_{n,d}$, but $\Xi_{n,d}\neq \trXi_{n,d}$.
This does not contradict the fact that $\trXi_{n,d}$ has property~3, because the ambient space under consideration there is $\Sym^d\IC^{n^2 d}$, which is smaller than $\Mat_n(\Sym^d\IC^{n^2 d})$.

Under the third action, note that the actions of the three factors of $\mathcal G := \GL_{n}\times G_{n,d}\times \GL_{n}$ commute.
Moreover, for every $g_{\textup{left}}\in\GL_n$ there exists $h_{g_{\textup{left}}}\in G_{n,d}$
such that $(g_{\textup{left}},\id,\id)\Xi_{n,d} = (\id,h_{g_{\textup{left}}},\id)\Xi_{n,d}$,
which is proved in the same way as in Lemma~\ref{lem:Xirestrict}.
Analogously, for every $g_{\textup{right}}\in\GL_n$ there exists $h'_{g_{\textup{right}}}\in G_{n,d}$
such that $(\id,\id,g_{\textup{right}})\Xi_{n,d} = (\id,h'_{g_{\textup{right}}},\id)\Xi_{n,d}$.
In other words, for everh $g\in\GL_n\times \GL_n$ there exists $h_g\in G_{n,d}$ with
$g\Xi_{n,d} = h_g \Xi_{n,d}$.
By the construction in Lemma~\ref{lem:Xirestrict} we also see that $\GL_n\times\GL_n\to G_{n,d}$, $g\mapsto h_g$, is a group homomorphism,
in particular $h_{g^{-1}}=(h_g)^{-1}$.
\begin{eqnarray}
\stab_{\mathcal G}(\Xi_{n,d}) &=&
\nonumber
\{(g,h)\mid gh\Xi_{n,d}=\Xi_{n,d}\}
\\
\nonumber
&=& \{(g,h)\mid h\Xi_{n,d}=g^{-1}\Xi_{n,d}\}
\\
\nonumber
&=& \{(g,h)\mid h\Xi_{n,d}=h_{g^{-1}}\Xi_{n,d}\}
\\
\nonumber
&=& \{(g,h)\mid h_{g^{-1}}^{-1} h\Xi_{n,d}=\Xi_{n,d}\}
\\
\nonumber
&=& \{(g,h)\mid h_g h \Xi_{n,d}=\Xi_{n,d}\}
\\
\nonumber
&=&
\{(g,h)\mid h_g h \in \stab_{G_{n,d}}(\Xi_{n,d})\}
\\
\label{eq:outerinnergroupsstab}
&=&
\{(g,h_{g^{-1}} h)\mid h \in \stab_{G_{n,d}}(\Xi_{n,d})\}
\end{eqnarray}
Thus, in order to determine $\stab_{\mathcal G}(\Xi_{n,d})$,
it suffices to determine $\stab_{G_{n,d}}(\Xi_{n,d})$.

Every $h$ stabilizing the matrix $\Xi_{n,d}$ also stabilizes its trace, therefore
$\stab_{G_{n,d}}(\Xi_{n,d})\subseteq \stab_{G_{n,d}}(\trXi_{n,d})$,
which has been determined in \cite{Ges16} as the group that maps
\[
(X^{(1)},\ldots,X^{(d)})
\]
to
\begin{equation}\label{eq:matrixinnerbasechange}
(X^{(1)} h_1, h_1^{-1}X^{(2)}h_2, \ldots,h_{d-1} X^{(d)})
\end{equation}
takes in semidirect product with the dihedral group that is generated by the maps that send
\[
(X^{(1)},\ldots,X^{(d)})
\]
to
\begin{equation}\label{eq:matrixtransposesinstab}
({X^{(d)}}^\transpose,\ldots,{X^{(1)}}^\transpose)
\end{equation}
and 
\[
(X^{(1)},\ldots,X^{(d)})
\]
to
\begin{equation}\label{eq:matrixcyclicshiftinstab}
(X^{(2)},\ldots,X^{(d)},X^{(1)}).
\end{equation}
While \eqref{eq:matrixinnerbasechange} stabilizes $\Xi_{n,d}$,
none of the $2d$ many elements generated by \eqref{eq:matrixtransposesinstab} and \eqref{eq:matrixcyclicshiftinstab}
stabilize $\Xi_{n,d}$, which can be seen by mapping
everything but the topleft $2\times 2$ matrices to zero, and observing that
\[
\begin{pmatrix}
1&1\\
0&1
\end{pmatrix}^{i-1}
\cdot
\begin{pmatrix}
1&0\\
0&2
\end{pmatrix}
\cdot
\begin{pmatrix}
1&1\\
0&1
\end{pmatrix}^{d-i}
=
\begin{pmatrix}
1&d+i-2\\
0&2
\end{pmatrix}
\]
and
\[
\begin{pmatrix}
1&0\\
1&1
\end{pmatrix}^{i-1}
\cdot
\begin{pmatrix}
1&0\\
0&2
\end{pmatrix}
\cdot
\begin{pmatrix}
1&0\\
1&1
\end{pmatrix}^{d-i}
=
\begin{pmatrix}
1&0\\
2d-i-1&2
\end{pmatrix}
\]
which are $2d$ many different matrices.
Hence, from \eqref{eq:outerinnergroupsstab} we conclude that $\stab_{\mathcal G}(\hXi_{n,d})$ consists of those $g\in\mathcal G$ that send
\[
(X^{(1)},\ldots,X^{(d)})
\]
to
\[
(h_0^{-1} X^{(1)} h_1, h_1^{-1}X^{(2)}h_2, \ldots, h_{d-1} X^{(d)} h_d),
\]
while simultaneously multiplying the matrix with $h_0$ from the left and $h_d^{-1}$ from the right.
This is isomorphic to $\GL_n^{d+1}$, which is reductive and connected, hence satisfying properties 2 and~4.
To see that $\Xi_{n,d}$ is characterized by its stabilizer, we have to determine $(\IC^n\otimes \Sym^d\IC^{n^2 d}\otimes \IC^{n\ast})^{\stab \Xi_{n,d}}$, but we perform a more general analysis by determining the representation theoretic multiplicities in $\IC[\mathcal G\Xi_{n,d}] \simeq \IC[\mathcal G]^{\stab_{\mathcal G}\Xi_{n,d}}$, see \cite[\S4]{BLMW11}, \cite[A.1.9]{Hut17}.
Let $\{\la\}_n$ denote the irreducible $\GL_n$-representation of type $\la$. We write $\{\la\}$ is $n$ is clear from the context.
Let $N=n^2 d$. The irreducible polynomial representations of $G_{n,d}$ are the same as for $\GL_{n^2}d$ with the only difference that $d$ must divide $\la_1+\cdots+\la_{n^2 d}$. Let $H:=\stab_{\mathcal G}(\Xi_{n,d})$.
We write $\IC^{n^2 d} = (V_0\otimes V_1)\oplus\cdots\oplus (V_{d-1}\otimes V_d)$, where $V_j\simeq \IC^n$.
We first restrict
\[
\{\mu\}\!\!\downarrow^{\GL_N}_{\GL(V_0\otimes V_1^*)\times\cdots\times\GL(V_{d-1}\otimes V_d^*)}
=
\bigoplus_{\rho^{(1)},\ldots,\rho^{(d)}\vdash_{n^2}} c^\mu_{\rho^{(1)},\ldots,\rho^{(d)}}\{\rho^{(1)}\}\otimes \cdots \otimes \{\rho^{(d)}\},
\]
where $c^\mu_{\rho^{(1)},\ldots,\rho^{(d)}}$ is the multi-Littlewood-Richardson coefficient.
We restrict further:
\begin{equation}\label{eq:fullyrestricted}
\{\mu\}\!\!\downarrow^{\GL_N}_{\GL(V_0)\times\GL(V_1^*)\times\GL(V_1)\times\cdots\times\GL(V_{d-1}^*)\times\GL(V_{d-1})\times\GL(V_d^*)}
\hspace{6cm}
\end{equation}
\[
=
\bigoplus_{\substack{\rho^{(1)},\ldots,\rho^{(d)}\\\iota^{(0)},\ldots,\iota^{(d-1)}\\\kappa^{(1)},\ldots,\kappa^{(d)}}}
c^\mu_{\rho^{(1)},\ldots,\rho^{(d)}}k(\iota^{(0)},\rho^{(1)},\kappa^{(1)})\cdots k(\iota^{(d-1)},\rho^{(d)},\kappa^{(d)})  \{\iota^{(0)}\}\otimes\{\kappa^{(1)}\} \otimes \cdots \otimes \{\iota^{(d-1)}\}\otimes \{\kappa^{(d)}\},
\]
where $k(\la,\mu,\nu)$ is the Kronecker coefficient.
Rename: $\la=\kappa^{(0)}$ and $\nu=\iota^{(d)}$, and take $H$-invariants, which forces $\forall i: \kappa^{(i)}=\iota^{(i)}$:
\[
\big(\{\kappa^{(0)}\}\otimes\{\mu\}\otimes\{\kappa^{(d)}\}\big)^H \ = \ 
\bigoplus_{\substack{\rho^{(1)},\ldots,\rho^{(d)}\\\kappa^{(1)},\ldots,\kappa^{(d-1)}}}
c^\mu_{\rho^{(1)},\ldots,\rho^{(d)}} k(\kappa^{(0)},\rho^{(1)},\kappa^{(1)})\cdots k(\kappa^{(d-1)},\rho^{(d)},\kappa^{(d)}) \, \IC
\]
Hence, all $\kappa^{(i)}$ have the same number of boxes, which is also the same as all $\rho^{(j)}$, which is $|\mu|/d$.
Writing this in multi-Kronecker notation, we have
\begin{equation}\label{eq:orbitmultikron}
\big(\{\kappa^{(0)}\}\otimes\{\mu\}\otimes\{\kappa^{(d)}\}\big)^H \ = \ 
\bigoplus_{\substack{\rho^{(1)},\ldots,\rho^{(d)}}}
c^\mu_{\rho^{(1)},\ldots,\rho^{(d)}} k(\kappa^{(0)},\rho^{(1)},\rho^{(2)},\ldots,\rho^{(d-1)},\rho^{(d)},\kappa^{(d)}) \, \IC
\end{equation}
To see that $\Xi_{n,d}$ is char by stab, we determine
\[
(\IC^n \otimes \Sym^d \IC^{n^2 d} \otimes \IC^n)^H = (\{(1)\}\otimes\{(d)\}\otimes\{(1)\})^H
\]
for which the above calculation has only $1$ summand, and $|\mu|/d=1$, the multi-Littlewood-Richardson coefficient is 1,
and the multi-Kronecker coefficient is 1, thus
$\dim(\IC^n \otimes \Sym^d \IC^{n^2 d} \otimes \IC^n)^H = 1$.
In conclusion, $\Xi_{n,d}$ with the third action has all 5 properties.
We now have a look at the second action, i.e., the conjugation action.
In this case, the stabilizer $H$ is smaller and when taking $H$-invariants in \eqref{eq:fullyrestricted},
instead of getting a $1$-dimensional invariant space,
we have to take $\GL_n$-invariants of $\IC^n\otimes(\IC^{n})^*\otimes\IC^n\otimes(\IC^{n})^*$,
which is 2-dimensional, because $\IC^n\otimes\IC^n = \Sym^2\IC^n \oplus \Wedge^2\IC^n$.
Hence, for this action $\Xi_{n,d}$ does not have property~3.
\end{itemize}

In conclusion, only $\Xi_{n,d}$, and only with the third action has all 5 properties.
In fact, properties 1, 2, and 3 are sufficient for ruling out all other settings.
We will continue working in this setting throughout \S\ref{sec:geomrepr}.

\subsection{The ambient space}
Since we are working with matrices of polynomials, it is worth recording the representation theory of
the space $\Sym^\delta(\IC^a \otimes \Sym^d \IC^N \otimes \IC^b)$
with respect to the action of the group $\GL_a \times \GL_N \times \GL_b$.
\begin{eqnarray*}
\Sym^\delta(\IC^a \otimes \Sym^d \IC^N \otimes \IC^b)
&=& 
\big(\tensor^\delta(\IC^a \otimes \Sym^d \IC^N \otimes \IC^b)\big)^{\aS_\delta}
\\
&=& 
\big(\tensor^\delta\IC^a \otimes \tensor^\delta\Sym^d \IC^N \otimes \tensor^\delta\IC^b\big)^{\aS_\delta}
\\
&=& 
\big((\bigoplus_{\la\vdash_a \delta}\{\la\}_a\otimes[\la]) \otimes \tensor^\delta\Sym^d \IC^N \otimes (\bigoplus_{\nu\vdash_b \delta}\{\nu\}_b\otimes[\nu])\big)^{\aS_\delta}
\\
&\stackrel{\textup{Pieri}}{=}& 
\big((\bigoplus_{\la\vdash_a \delta}\{\la\}_a\otimes[\la]) \otimes (\bigoplus_{\mu\vdash_N d\delta}\{\mu\}_N\otimes K_{\mu,d^\delta}) \otimes (\bigoplus_{\nu\vdash_b \delta}\{\nu\}_b\otimes[\nu])\big)^{\aS_\delta}
\\
&=& 
\big((\bigoplus_{\la\vdash_a \delta}\{\la\}_a\otimes[\la]) \otimes (\bigoplus_{\mu\vdash_N d\delta}\{\mu\}_N\otimes ( \bigoplus_{\kappa\vdash\delta}[\kappa]^{\oplus a_\mu(\kappa[d])} )) \otimes (\bigoplus_{\nu\vdash_b \delta}\{\nu\}_b\otimes[\nu])\big)^{\aS_\delta}
\\
&=& 
\bigoplus_{\substack{
\la\vdash_a \delta\\
\mu\vdash_N d\delta\\
\kappa\vdash\delta\\
\nu\vdash_b \delta}}
\{\la\}_a\otimes\{\mu\}_N\otimes\{\nu\}_b
\otimes
\IC^{a_\mu(\kappa[d])}
\otimes
\big(
[\la]\otimes[\kappa]\otimes[\nu]
\big)^{\aS_\delta}
\\
&=& 
\bigoplus_{\substack{
\la\vdash_a \delta\\
\mu\vdash_N d\delta\\
\kappa\vdash\delta\\
\nu\vdash_b \delta}}
\big(\{\la\}_a\otimes\{\mu\}_N\otimes\{\nu\}_b\big)^{a_\mu(\kappa[d]) k(\la,\kappa,\nu)}
\\
&=& 
\bigoplus_{\substack{
\la\vdash_a \delta\\
\mu\vdash_N d\delta\\
\nu\vdash_b \delta}}
\big(\{\la\}_a\otimes\{\mu\}_N\otimes\{\nu\}_b\big)^{\sum_{\kappa\vdash\delta}a_\mu(\kappa[d]) k(\la,\kappa,\nu)}
\end{eqnarray*}
One can check with a computer that for $\delta\leq 6$ these multiplicities
never exceed the multiplicities in \eqref{eq:orbitmultikron}.

\subsection{$\hXi_{n,d}$ versus $\hXi_{n,d,\pi}$}
Let $H = \stab_{\GL_n\times G_{n,d}\times\GL_n}\Xi_{n,d}$.
By construction, we have $H\times H \subseteq \stab(\hXi_{n,d})$.
However, this does not characterize $\hXi_{n,d}$, since each $\hXi_{n,d,\pi}$ is also stabilized by this subgroup.
Indeed,
\begin{eqnarray*}
&&(\{\bar\la\}_{n^2} \otimes \{\bar\mu\}_{n^2 d \cdot n^2 d} \otimes \{\bar\nu\}_{n^2})^{H\times H}
\\
&=&
\bigoplus_{\la,\la',\mu,\mu',\nu,\nu'} k(\bar\la,\la,\la') k(\bar\mu,\mu,\mu') k(\bar\nu,\nu,\nu')
(\{\la\}_{n} \otimes \{\la'\}_{n}  \otimes \{\mu\}_{n^2 d}\otimes \{\mu'\}_{n^2 d} \otimes \{\nu\}_{n}\otimes \{\nu'\}_{n})^{H\times H}
\\
&=&
\bigoplus_{\la,\la',\mu,\mu',\nu,\nu'} k(\bar\la,\la,\la') k(\bar\mu,\mu,\mu') k(\bar\nu,\nu,\nu')
(\{\la\}_{n} \otimes  \{\mu\}_{n^2 d}\otimes \{\nu\}_{n})^H \otimes (\{\la'\}_{n} \otimes\{\mu'\}_{n^2 d} \otimes \{\nu'\}_{n})^{H}
\end{eqnarray*}
Now, for $\bar\la=\bar\nu=(1)$ and $\bar\mu=(d)$,
the nonzero Kronecker coefficients are all $1$ and correspond to $\la=\la'=\mu=\mu'=(1)$ and $\nu=\nu'$.
Therefore,
\begin{eqnarray*}
&&(\{\bar\la\}_{n^2} \otimes \{\bar\mu\}_{n^2 d \cdot n^2 d} \otimes \{\bar\nu\}_{n^2})^{H\times H}
\\
&=&\bigoplus_{\mu\vdash d} (\{(1)\}_{n} \otimes  \{\mu\}_{n^2 d}\otimes \{(1)\}_{n})^H \otimes (\{(1)\}_{n} \otimes\{\mu\}_{n^2 d} \otimes \{(1)\}_{n})^{H}
\\
&\stackrel{c^\mu_{((1),\ldots,(1))} = \dim[\mu]}{=}& \sum_{\mu\vdash d} \dim[\mu] \cdot \dim[\mu]
\\
&=&\dim\IC[\aS_d] = d!
\end{eqnarray*}
Thus, the $\hXi_{n,d,\pi}$ are a basis of the $H\times H$-invariant space.
Each $\hXi_{n,d,\pi}$ is of course invariant under rescaling any of the $n^4(d^2-d)$ many variables that it does not involve, and we call the corresponding torus $T_\pi$. The action of $H\times H$ commutes with the action of every $T_\pi$.
Therefore,
\[
\dim(\IC^{n^2}\otimes\Sym^d\IC^{n^4 d^2}\otimes \IC^{n^2})^{H\times H\times T_\pi} = 1,
\]
in particular $\hXi_{n,d,\pi}$ is characterized by its stabilizer.

\subsection{Polystability}
We follow \cite[Prop.~2.8]{BI17}.
Define $\det_{n,1,n}(A,B,C):=\det(A^n)\det(B)\det(C^n)$.
\begin{proposition}
\label{pro:polystability}
Let $G=\{(g_{\textup{left}},g_{\textup{middle}},g_{\textup{right}})\in
\GL_n\times\GL_{n^2 d}\times \GL_n
\mid 
\det_{n,1,n}(g_{\textup{left}},g_{\textup{middle}},g_{\textup{right}})= 1 \}$.
The orbit $G\Xi_{n,d}$ is closed.
\end{proposition}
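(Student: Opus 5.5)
We will use the Kempf--Ness criterion in the form of Matsushima's theorem, or more directly the Hilbert--Mumford-type criterion of \cite[Prop.~2.8]{BI17}. By the stabilizer analysis in \S\ref{subsec:differentsettings}, $\stab_{\mathcal G}(\Xi_{n,d})\simeq\GL_n^{d+1}$ is reductive. Reductivity of the stabilizer does not by itself imply closedness of the orbit, so we do not rely on it. Instead we follow \cite{BI17} and establish closedness via the Kempf--Ness theorem: it suffices to find a point $\Xi'$ in the orbit $G\Xi_{n,d}$ that is \emph{critical}, i.e., at which the moment map for a maximal compact subgroup vanishes. Since $G$ is the kernel of the character $\det_{n,1,n}$ of $\mathcal G$, the criterion we need is that the moment map of $\mathcal G$ at $\Xi'$ is a scalar multiple of the differential of $\det_{n,1,n}$, i.e., proportional to $(n\cdot\id_n,\id_{n^2d},n\cdot\id_n)$ in $\gl_n\oplus\gl_{n^2d}\oplus\gl_n$ after the usual identifications.

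We take $\Xi'=\Xi_{n,d}$ itself, with the standard Hermitian inner products, under which the monomials of $\Sym^d\IC^{n^2d}$ are orthogonal. We compute the three components of the moment map $\langle X\cdot\Xi,\Xi\rangle$.

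\textbf{Middle factor.} For the middle factor, the component is the Hermitian matrix $M_{x,x'}=\langle x\,\partial_{x'}\Xi,\Xi\rangle$, indexed by variables $x,x'$. We argue by symmetry. The off-diagonal entries vanish: replacing $x'$ by $x$ in a monomial changes its multidegree, so $x\,\partial_{x'}\Xi$ has no monomial in common with $\Xi$, and these inner products are zero. The diagonal entry for a variable $x_{a,b}^{(k)}$ equals $\sum_{\text{monomials containing it}}\|\text{monomial}\|^2$, summed over all entries of the matrix $\Xi$. Each monomial is a path $(i_0,\dots,i_d)$, and the variable $x^{(k)}_{a,b}$ occurs exactly in the paths with $i_{k-1}=a$, $i_k=b$. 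There are $n^{d-1}$ such paths, independently of $(a,b,k)$, because $i_0$ and $i_d$ also range freely once we sum over all matrix entries. This uniformity is exactly why we work with the full matrix $\Xi_{n,d}$ rather than with $\Gamma\!_{n,d}$. Each squared monomial norm is the same constant $c$, since all monomials are multilinear. Hence the middle component is $c\,n^{d-1}\id_{n^2d}$.

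\textbf{Outer factors.} The left $\GL_n$ acts on the row index $e_a$. Its component has $(a,a')$ entry $\sum_b\langle [\Xi]_{a',b},[\Xi]_{a,b}\rangle$. This is $0$ for $a\neq a'$, since distinct rows involve disjoint sets of first-layer variables, and it equals $c\,n^{d-1}\cdot n = c\,n^d$ for $a=a'$. The right factor is handled identically.

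\textbf{Matching the proportion.} The moment map at $\Xi_{n,d}$ is therefore $c\,n^{d-1}\bigl(n\,\id_n,\id_{n^2d},n\,\id_n\bigr)$, which is precisely proportional to the weight $(n,1,n)$ defining $\det_{n,1,n}$. Projecting onto the Lie algebra of $G$, which consists of the triples annihilated by $n\tr+\tr+n\tr$, the moment map vanishes. Kempf--Ness then gives that $G\Xi_{n,d}$ is closed.

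The main obstacle is the bookkeeping in the Kempf--Ness step. We must check that for the subgroup $G$ defined by a character, as opposed to a semisimple group, the correct criterion is indeed ``moment map proportional to the character's differential'', as in \cite[Prop.~2.8]{BI17}. We must also fix the normalizations of the monomial norms in $\Sym^d$, which affect only the constant $c$, and confirm that the trace pairings of the three factors combine with weights exactly $(n,1,n)$. We would first reproduce the BI17 statement verbatim and adapt it.
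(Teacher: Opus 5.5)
Your proof is correct, and it reaches closedness by a different mechanism than the paper.

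\textbf{The paper's route.} The paper argues by contradiction via Hilbert--Mumford. It builds a torus $R\subseteq G\cap\stab_G(\Xi_{n,d})$ from the vertex rescalings $h_{v,\alpha}$ of the layered digraph. It shows that the centralizer of $R$ is diagonal, so that Luna--Kempf allows a diagonal destabilizing one-parameter subgroup $\sigma$. It then uses that a uniform positive combination of the exponent vectors of $\supp(\Xi_{n,d})$ equals $(n,\ldots,n,1,\ldots,1,n,\ldots,n)$. Since $\det_{n,1,n}\circ\sigma=1$ and every support weight pairs nonnegatively with $\sigma$, all pairings vanish and $\sigma$ fixes $\Xi_{n,d}$.

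\textbf{Your route.} You verify criticality at the single point $\Xi_{n,d}$, and your value $c\,n^{d-1}(n\,\id_n,\id_{n^2d},n\,\id_n)$ for the moment map is right. Its diagonal entries are exactly the path counts behind the paper's averaging: $n^{d-1}$ paths through a fixed edge, $n^d$ through a fixed source or sink. Your vanishing of the off-diagonal entries is the infinitesimal counterpart of the paper's centralizer computation. Indeed, by $K$-equivariance the moment map at a point fixed by the compact part of $R$ commutes with $R$, hence is diagonal by the paper's centralizer argument.

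\textbf{What each buys.} Your route is shorter and needs neither a contradiction nor the Luna--Kempf reduction. It also shows more, namely that $\Xi_{n,d}$ is a minimal-norm point of its $G$-orbit. The paper's route is purely algebraic and uses only the support: any positive combination of support weights hitting the character weight suffices. So it does not depend on the actual coefficients or on a choice of Hermitian structure.

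\textbf{The step you flagged.} The ``main obstacle'' you anticipate is a one-line check and needs nothing from \cite{BI17}. $G$ is self-adjoint, since $\det_{n,1,n}(g^\ast)=\overline{\det_{n,1,n}(g)}$. The Hermitian elements of its Lie algebra are exactly the Hermitian triples $(A,B,C)$ with $n\tr A+\tr B+n\tr C=0$. On these your moment map pairs to $c\,n^{d-1}(n\tr A+\tr B+n\tr C)=0$, so Kempf--Ness applies to $G$ directly. Note that \cite[Prop.~2.8]{BI17}, as followed in the paper, is the Hilbert--Mumford/Luna--Kempf template, not a moment-map criterion. Matsushima's theorem plays no role.

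\textbf{One step to tighten.} In the off-diagonal argument, for the first and last layers it is not true that $x\,\partial_{x'}\Xi_{n,d}$ shares no monomial with $\Xi_{n,d}$. For example, replacing $x^{(d)}_{a,b}$ by $x^{(d)}_{a,b'}$ yields a monomial of column $b'$. What holds, and what you need, is that no basis tensor $e_i\otimes m\otimes e_j^\ast$ is shared, because the inner product is taken entrywise.
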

\begin{proof}
For the sake of contradiction, assume that $G\Xi_{n,d}$ is not closed.
Let $Y$ be a $G$-orbit of $\overline{G\Xi_{n,d}}\setminus G\Xi_{n,d}$ of minimal dimension.
Then $Y$ must be closed.
Since $G$ is reductive,
the Hilbert-Mumford criterion says that there exist a one-parameter subgroup $\sigma:\IC^\times \to G$ with $\lim_{t\to 0}\sigma(t)\Xi_{n,d}\in Y$.
For applying \cite[Cor.~1]{Lun75}, \cite[Cor~4.5]{Kem78}
we need to pick a reductive $R \subseteq G\cap\stab(\Xi_{n,d})$,
which in \cite{BI17} is always a torus.
For a variable $x_{i,j}^{(k)}$ let $h_{x_{i,j}^{(k)}}(\alpha) \in \GL_{n^2 d}$
be the identity matrix with a single $\alpha$ instead of a $1$ at position $(x_{i,j}^{(k)},x_{i,j}^{(k)})$.
For a basis vector $e_i$, $1\leq i \leq n$, let $h_{e_i}(\alpha)\in\GL_n$ be defined analogously,
and for a basis vector $e_j^*$, $1\leq j \leq n$, let $h_{e^*_j}(\alpha)\in\GL_n$ also be defined analogously.
All these elements of $\GL_n\times\GL_{n^2 d}\times\GL_n$ commute with each other.

Let $D=(V,E)$ be the digraph with vertex set $v_{i}^{(k)}$, $1\leq i\leq n$, $0\leq k \leq d$, where $\{v_{i}^{(k)}\mid 1\leq i \leq n\}$ for the $k$-th vertex layer,
and we have the complete bipartite edge set from vertex layer $k$ to vertex layer $k+1$, labeled with $x_{i,j}^{(k)}$.
Moreover, each vertex $v_{i}^{(0)}$ has a half-edge pointing at it, labeled with $e_i$,
and each vertex $v_{j}^{(d)}$ has a half-edge pointing away from it, labeled with $e_j^*$.
For every vertex $v\in V$, let $\textup{in}(v)$ denote the set of incoming (half-)edges,
and $\textup{out}(v)$ denote the set of outgoing (half-)edges.
For $v\in V$ and $\alpha\in\IC$ let
\[
h_{v,\alpha} := 
\prod_{e \in \textup{in}(v)} h_e(\alpha) \cdot \prod_{e \in \textup{out}(v)} h_e(\alpha^{-1})
\ \in G \cap \stab_G(\Xi_{n,d}).
\]
Let $R$ be generated by $\{h_{v,\alpha}\mid v\in V, \alpha\in\IC\}$.
Since $R$ is the image of a torus, it is a torus itself, and hence reductive.
We now determine
the centralizer of $R$ in $G$.
Let $g\in\GL_n\times\GL_{n^2 d}\times\GL_n$ commute with $h_{v,\alpha}$ for all $\alpha$, i.e., $g h_{v,\alpha} = h_{v,\alpha} g$.
We think of $\GL_n\times\GL_{n^2 d}\times\GL_n$ as embedded into $\GL_{n+n^2d+n}$ as block matrix.
Let $\textup{neu}(v) := E\setminus (\textup{in}(v) \cup\textup{out}(v))$.
The identity $\forall \alpha:g h_{v,\alpha} = h_{v,\alpha} g$
forces that $g_{e',e}$ can only by nonzero if
$\{e',e\}\subseteq\textup{in}(v)$
or
$\{e',e\}\subseteq\textup{out}(v)$
or
$\{e',e\}\subseteq\textup{neu}(v)$.
For every edge $e=(v,w)$
we have that $\textup{out}(v)\cap \textup{in}(w)$
contains only $e$,
thus $g$ has at most $1$ nonzero entry in row~$e$
and at most $1$ nonzero entry in column~$e$.
For a half-edge $e$ at a source~$s$
we have $\textup{in}(s)=\{e\}$
and hence $g$ has at most $1$ nonzero entry in row~$e$
and at most $1$ nonzero entry in column~$e$.
Analogously for half-edges at sinks.
Hence, the centralizer of $R$ contains only diagonal matrices.
The Luna-Kempf theorem says that we can assume that
\begin{equation}\label{eq:defsigmat}
\sigma(t) = \big(\diag(t^{\la_1},\ldots,t^{\la_n}), \ \diag(t^{\mu_1},\ldots,t^{\mu_N}), \ \diag(t^{\nu_1},\ldots,t^{\nu_n})\big).
\end{equation}
Interpret $\Xi_{n,d} \in \IC^n\otimes\Sym^d \IC^{n^2 d}\otimes \IC^{n\ast}$
and define $\supp(\Xi_{n,d})$ to be the set of basis tensors
of the form $e_i \otimes m \otimes e_j^*$, where $m$ is a monomial,
that have nonzero coefficient in $\Xi_{n,d}$.
Let $\textup{expvec}(e^{\alpha_{\textup{left}}} \otimes e^{\alpha_{\textup{middle}}} \otimes (e^{\alpha_{\textup{right}}})^*)=(\alpha_{\textup{left}},\alpha_{\textup{middle}},\alpha_{\textup{right}})$.
Since $\lim_{t\to 0} \sigma(t) \Xi_{n,d}$ converges, we have
\begin{equation}\label{eq:alphalamununonneg}
\forall\alpha\in\textup{expvec}(\supp(\Xi_{n,d})): \ 
\langle\alpha,(\la,\mu,\nu)\rangle\geq 0.
\end{equation}
Note that $\#\textup{expvec}(\supp(\Xi_{n,d})) = n^{d+1}$.
For each $\alpha\in\textup{expvec}(\supp(\Xi_{n,d}))$ let $c_\alpha = \frac{n^2}{n^{d+1}} = n^{-d-1}$.
We have
\begin{equation}\label{eq:sumcalphaa}
\sum_{\alpha\in\textup{expvec}(\supp(\Xi_{n,d}))} c_\alpha \alpha =
(n,\ldots,n,1,\ldots,1,n,\ldots,n)
\end{equation}
Since $\det_{n,1,n}(\sigma(t))=1$, by \eqref{eq:defsigmat} we have
\begin{equation}\label{eq:nonenlamunu}
\langle(n,\ldots,n,1,\ldots,1,n,\ldots,n),(\la,\mu,\nu)\rangle = 0.
\end{equation}
We now put the pieces together as follows.
\[
0 \stackrel{\eqref{eq:sumcalphaa},\eqref{eq:nonenlamunu}}{=} \langle\sum_{\alpha\in\textup{expvec}(\supp(\Xi_{n,d}))} c_\alpha \alpha,(\la,\mu,\nu)\rangle
  = \sum_{\alpha\in\textup{expvec}(\supp(\Xi_{n,d}))} c_{\alpha} \langle\alpha,(\la,\mu,\nu)\rangle\]
Since $c_\alpha>0$ for all $\alpha\in\textup{expvec}(\supp(\Xi_{n,d}))$,
by \eqref{eq:alphalamununonneg}
we have $\langle\alpha,(\la,\mu,\nu)\rangle=0$ for all $\alpha\in\textup{expvec}(\supp(\Xi_{n,d}))$.
This implies that $\forall t: \sigma(t)\Xi_{n,d} = \Xi_{n,d}$.
Therefore, $\lim_{t\to 0} \sigma(t)\Xi_{n,d} = \Xi_{n,d}$, which is a contradiction.
\end{proof}

\begin{proposition}
\label{pro:polystabilityhXindpi}
Let $G=\{(g_{\textup{left}},g_{\textup{middle}},g_{\textup{right}})\in
\GL_{n^2}\times\GL_{n^4 d}\times \GL_{n^2}
\mid 
\det_{n^2,1,n^2}(g_{\textup{left}},g_{\textup{middle}},g_{\textup{right}})= 1 \}$.
For all $\pi\in\aS_d$ the orbit $G\hXi_{n,d,\pi}$ is closed.
\end{proposition}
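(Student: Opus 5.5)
We follow the proof of Proposition~\ref{pro:polystability} step by step, replacing $\Xi_{n,d}$ by $\hXi_{n,d,\pi}$. The key observation is that $\hXi_{n,d,\pi}$ is again an iterated product: up to renaming variables it equals $e$-vectors times a product of $d$ matrices of size $n^2\times n^2$. Concretely, it is the entry-pattern of an ABP on the digraph $D_\pi$ whose vertex layer $k$ consists of pairs $(i_{\cdot},j_k)$. More intrinsically, its support is the set of pairs of paths: a first path $i_0,\dots,i_d$ read in the order $\pi$, and a second path $j_0,\dots,j_d$. The variable $x^{(\pi(k))}_{i_{\pi(k)-1},i_{\pi(k)}}\boxtimes x^{(k)}_{j_{k-1},j_k}$ corresponds to a pair of edges $(e,e')$, one in each copy of the layered complete digraph $D$ from the previous proof. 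The outer basis vectors are pairs of half-edges $e_{i_0}\otimes e_{j_0}$ and $e^*_{i_d}\otimes e^*_{j_d}$.

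\textbf{Step 1: a torus in the stabilizer.} Assume for contradiction that the orbit is not closed. By Hilbert--Mumford together with Luna--Kempf we need a reductive subgroup $R\subseteq G\cap\stab(\hXi_{n,d,\pi})$ whose centralizer in $G$ consists of diagonal matrices. For each vertex $v$ of either copy of $D$, define $h_{v,\alpha}$ to act on every variable $x_{e}\boxtimes x_{e'}$ (and on every outer basis vector) by $\alpha$ raised to the number of times $v$ is the head of $e$ or $e'$, minus the number of times $v$ is the tail. Since every monomial of $\hXi_{n,d,\pi}$ is a pair of full source--sink paths, each internal vertex is entered exactly as often as it is left, so these scalings cancel. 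At the sources and sinks the half-edge factors cancel the same way, once we restrict to elements with $\det_{n^2,1,n^2}=1$, which holds for the products $h_{v,\alpha}$ by the same balance. Let $R$ be the torus these elements generate.

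\textbf{Step 2: the centralizer of $R$ is diagonal.} A block matrix $g$ commuting with $R$ can only have $g_{a,b}\neq 0$ when the coordinates $a$ and $b$ have identical $R$-weights. The weight of the coordinate $x_e\boxtimes x_{e'}$ records both endpoints of $e$ and both endpoints of $e'$, and this determines the pair $(e,e')$. Likewise, the weight of an outer basis vector $e_i\otimes e_j$ records $(i,j)$. Hence all weights are distinct and $g$ is diagonal. This step is the main obstacle. One must check that no two variables, even with different superscript pairs $(k,k')$ and $(k'',k''')$, share a weight, and that the half-edge weights differ from the edge weights. Both follow because the layer indices are part of the vertex names, but this is where care is needed, especially for the $\pi$-twisting of the first copy. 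So Luna--Kempf lets us take $\sigma(t)$ diagonal with exponents $(\la,\mu,\nu)$.

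\textbf{Step 3: the averaging argument.} Convergence of $\lim_{t\to 0}\sigma(t)\hXi_{n,d,\pi}$ gives $\langle\alpha,(\la,\mu,\nu)\rangle\ge 0$ for every exponent vector $\alpha$ in the support. The support has $n^{2(d+1)}$ elements and is symmetric under relabeling within each layer. Averaging with uniform weights $c_\alpha=n^{4}/n^{2d+2}$ therefore gives the vector $(n^2,\dots,n^2,1,\dots,1,n^2,\dots,n^2)$. Here the middle part equals $1$ on the $n^4d$ variables appearing in $\hXi_{n,d,\pi}$ and $0$ on the others; if the middle space is all of $\IC^{n^4 d}$, as in the statement, it is exactly the relevant variables. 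The determinant condition $\det_{n^2,1,n^2}(\sigma(t))=1$ gives pairing $0$ with this vector. Positivity of the $c_\alpha$ then forces $\langle\alpha,(\la,\mu,\nu)\rangle=0$ for every $\alpha$ in the support. So $\sigma$ fixes $\hXi_{n,d,\pi}$, contradicting the choice of a limit lying in a strictly smaller orbit.
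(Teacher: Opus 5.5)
Your proof is correct and follows the paper's argument essentially step for step: the vertex-scaling torus on two copies of $D$ lies in $G\cap\stab(\hXi_{n,d,\pi})$ (with $\det_{n^2,1,n^2}=1$ balancing $n\cdot n^2$ half-edge coordinates against $n\cdot n^2$ full-edge coordinates), its centralizer is diagonal (your weight-distinctness check plays the role of the paper's in/out/neu argument), Luna--Kempf reduces to a diagonal one-parameter subgroup, and uniform averaging over the $n^{2d+2}$ support points gives $(n^2,\ldots,n^2,1,\ldots,1,n^2,\ldots,n^2)$. You should drop the opening remark that $\hXi_{n,d,\pi}$ is, up to renaming, a product of $d$ generic $n^2\times n^2$ matrices, because it is false for $\pi\neq\id$: already for $d=2$, $\pi=(1\ 2)$, the variable $x^{(2)}_{i_1,i_2}\boxtimes x^{(1)}_{j_0,j_1}$ occurs in $n$ rows and $n$ columns. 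Your argument never uses that remark, though; it rests only on the correct pairs-of-paths description.
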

\begin{proof}
We proceed analogously to Proposition~\ref{pro:polystability}.
Fix $\pi\in\aS_d$.
For a variable $x_{i',j'}^{(\pi(k)}\boxtimes x_{i,j}^{(k)}$ let
$h_{x_{i',j'}^{\pi(k)}\boxtimes x_{i,j}^{(k)}}(\alpha)\in\GL_{n^4 d}$
be the identity matrix with a single $\alpha$ instead of a $1$
at position $(x_{i',j'}^{\pi(k)}\boxtimes x_{i,j}^{(k)},x_{i',j'}^{\pi(k)}\boxtimes x_{i,j}^{(k)})$, and analogously for $h_{e_{i'}\boxtimes e_i}(\alpha)$
and $h_{e_{j'}^*\boxtimes e_j^*}(\alpha)$.
For fixed $\alpha$ we extend $h_{.,.}(\alpha)$ bilinearly.
Let $D^2=D_1\dot\cup D_2$ denote the digraph that consists of two disjoint copies of $D$ from the proof of
Proposition~\ref{pro:polystability}.
An edge that is not a half-edge is called a \emph{full-edge}.
The vertex set of the first copy is denoted by $V_1$,
and its set of full-edges is denoted by $E_1$,
its set of left half-edges by $H^{\lefthalf}_1$, and the set of right half-edges by $H^{\righthalf}_1$.
Analogously for $V_2$, $E_2$, $H^{\lefthalf}_2$, and $H^{\righthalf}_2$.
The subset of $E_1$ of edges in edge layer $k$ is denoted by $E_1^{(k)}$,
and analogously for $E_2^{(k)}$.
For $e\in E_1^{(\pi(k))}$ let $\overline{e}=\sum_{e\in E_2^{(k)}} e$,
and for $e\in E_2^{(k)}$ let $\overline{e}=\sum_{e\in E_1^{(\pi(k))}} e$.
For $e\in H^{\lefthalf}_1$ let $\overline{e}=\sum_{e\in H^{\lefthalf}_2} e$,
and for $e\in H^{\lefthalf}_2$ let $\overline{e}=\sum_{e\in H^{\lefthalf}_1} e$.
For $e\in H^{\righthalf}_1$ let $\overline{e}=\sum_{e\in H^{\righthalf}_2} e$,
and for $e\in H^{\righthalf}_2$ let $\overline{e}=\sum_{e\in H^{\righthalf}_1} e$.
For $v\in V_1\cup V_2$ and $\alpha\in\IC$ let
\[
h_{v,\alpha} := 
\prod_{e \in \textup{in}(v)} h_{e\boxtimes \overline{e}}(\alpha) \cdot \prod_{e \in \textup{out}(v)} h_{e\boxtimes\overline{e}}(\alpha^{-1})
\ \in G \cap \stab_G(\hXi_{n,d,\pi}).
\]
Let $R$ be generated by $\{h_{v,\alpha}\mid v\in V_1\cup V_2, \alpha\in\IC\}$.
We now determine the centralizer of $R$ in~$G$.
The identity $\forall \alpha:g h_{v,\alpha} = h_{v,\alpha} g$ forces that
$g_{e_1'\boxtimes e'_2,e_1\boxtimes e_2}$
can only be nonzero if
$\{e'_1,e_1\}\subseteq\textup{in}(v)$
or
$\{e'_1,e_1\}\subseteq\textup{out}(v)$
or
$\{e'_1,e_1\}\subseteq\textup{neu}(v)$.
Analogoulsy to the proof of Proposition~\ref{pro:polystability},
this means that $g$ can only be nonzero at entries $(e_1\boxtimes e_2',e_1\boxtimes e_2)$, and analogously for half-edges.
Together with the same argument for $(e'_2,e_2)$ this shows that $g$ can only be nonzero at diagonal entries.
We proceed as in the proof of Proposition~\ref{pro:polystability},
but note that $\#\textup{expvec}(\supp(\hXi_{n,d,\pi})) = n^{2d+2}$,
so we set $c_\alpha = \frac{n^4}{n^{2d+2}} = n^{-(2d-2)}$
and obtain
\[
\sum_{\alpha\in\textup{expvec}(\supp(\hXi_{n,d,\pi}))} c_\alpha \alpha =
(n^2,\ldots,n^2,1,\ldots,1,n^2,\ldots,n^2).
\]
We have
$
\langle(n^2,\ldots,n^2,1,\ldots,1,n^2,\ldots,n^2),(\la,\mu,\nu)\rangle = 0,
$
and hence
\[
0 = \sum_{\alpha\in\textup{expvec}(\supp(\hXi_{n,d,\pi}))} c_{\alpha} \langle\alpha,(\la,\mu,\nu)\rangle.
\]
Since $c_\alpha>0$ for all $\alpha\in\textup{expvec}(\supp(\hXi_{n,d,\pi}))$,
we have $\langle\alpha,(\la,\mu,\nu)\rangle=0$ for all $\alpha\in\textup{expvec}(\supp(\hXi_{n,d,\pi}))$.
This implies that $\forall t: \sigma(t)\hXi_{n,d,\pi} = \hXi_{n,d,\pi}$.
Therefore, $\lim_{t\to 0} \sigma(t)\hXi_{n,d,\pi} = \hXi_{n,d,\pi}$, which is a contradiction.
\end{proof}

\begin{proposition}
Let $G=\{(g_{\textup{left}},g_{\textup{middle}},g_{\textup{right}})\in
\GL_{n^2}\times\GL_{n^4 d^2}\times \GL_{n^2}
\mid 
\det_{n^2 d,1,n^2 d}(g_{\textup{left}},g_{\textup{middle}},g_{\textup{right}})= 1 \}$.
The orbit $G\hXi_{n,d}$ is closed.
\end{proposition}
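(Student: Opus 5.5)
We follow the same strategy as in Propositions~\ref{pro:polystability} and~\ref{pro:polystabilityhXindpi}. Suppose $G\hXi_{n,d}$ is not closed. Then there is a closed orbit $Y$ in the boundary of minimal dimension, and by the Hilbert--Mumford criterion a one-parameter subgroup $\sigma$ with $\lim_{t\to 0}\sigma(t)\hXi_{n,d}\in Y$. To reduce $\sigma$ to a diagonal one-parameter subgroup via the Luna--Kempf theorem (\cite[Cor.~1]{Lun75}, \cite[Cor~4.5]{Kem78}), we need a reductive subgroup $R\subseteq G\cap\stab(\hXi_{n,d})$ whose centralizer in $G$ consists only of diagonal matrices. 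The natural candidate is the torus generated by $h_{v,\alpha}\times h_{w,\beta}$, where $v,w$ range over vertices of the two copies $D_1,D_2$ of the digraph $D$. These come from the subgroup $H\times H\subseteq\stab(\hXi_{n,d})$ noted earlier, and they act on each variable $x^{(k)}_{i,j}\boxtimes x^{(k')}_{i',j'}$ by the product of the two scalings. They stabilize every $\hXi_{n,d,\pi}$, hence also the sum $\hXi_{n,d}$ by \eqref{eq:hXisumhXipi}. We still have to check the determinant normalization $\det_{n^2d,1,n^2d}=1$; if it fails, we pass to the subtorus where it holds, which is still a torus.

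The centralizer computation is the step we expect to be the main obstacle. Now all $n^4d^2$ variables $x^{(k)}\boxtimes x^{(k')}$ are present, not only those with $k'=\pi(k)$. For $g$ commuting with $R$, the entry $g_{e_1'\boxtimes e_2',e_1\boxtimes e_2}$ can be nonzero only if $e_1',e_1$ lie in the same class ($\mathrm{in}$, $\mathrm{out}$, or $\mathrm{neu}$) for every $v\in V_1$, and likewise for $e_2',e_2$ with respect to every $w\in V_2$. The separation argument of Proposition~\ref{pro:polystability} then forces $e_1'=e_1$ and $e_2'=e_2$, because each full edge is the unique element of $\mathrm{out}(v)\cap\mathrm{in}(v')$, and each half edge is the unique element of $\mathrm{in}(s)$ or $\mathrm{out}(t)$. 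This argument applies to the two factors independently, so it should go through once the torus is defined factorwise. The centralizer is then diagonal, and we may take $\sigma(t)$ of the diagonal form \eqref{eq:defsigmat}.

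Convergence of the limit gives $\langle\alpha,(\la,\mu,\nu)\rangle\ge 0$ for all $\alpha\in\mathrm{expvec}(\supp(\hXi_{n,d}))$. This support is the union over $\pi$ of the supports of $\hXi_{n,d,\pi}$, and it has $d!\,n^{2d+2}$ elements. We choose uniform positive weights $c_\alpha$ and check that $\sum c_\alpha\alpha$ is a positive multiple of the vector $(n^2d,\ldots,n^2d,1,\ldots,1,n^2d,\ldots,n^2d)$ defining the determinant character. Summed over $\pi$, each of the $n^4d^2$ middle variables appears equally often by symmetry. Each left basis vector $e_{i}\boxtimes e_{i'}$ appears in a $1/n^2$ fraction of the support, and the same holds on the right. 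Scaling so the middle coordinates equal $1$ gives outer coordinates $d\cdot n^4d^2/n^2=n^2d^2$ times some normalization; we adjust the constants so that the ratio matches $n^2d:1$. This ratio check is the exact reason for the exponents in $\det_{n^2d,1,n^2d}$, and we will verify it carefully. Then $\langle\sum c_\alpha\alpha,(\la,\mu,\nu)\rangle=0$, so every $\langle\alpha,(\la,\mu,\nu)\rangle=0$. Hence $\sigma$ fixes $\hXi_{n,d}$, so the limit is $\hXi_{n,d}$ itself, which contradicts the assumption that it lies outside the orbit.
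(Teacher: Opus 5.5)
Your proposal is the paper's argument. Your factorwise torus ($h_{v,\alpha}$ on $D_1$ times $h_{w,\beta}$ on $D_2$) is exactly the torus the paper builds from the $h_{e\boxtimes\overline e}$. The centralizer computation is the same, and the paper also uses uniform weights, namely $c_\alpha=\frac{n^4d}{d!\,n^{2d+2}}$. There are two points to fix.

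First, your ratio check is wrong as written, and the proposed remedy does not work. With uniform weights there is nothing left to adjust, because rescaling all $c_\alpha$ does not change the ratio between outer and middle coordinates. Here is the correct count. Every support element has one left coordinate equal to $1$, one right coordinate equal to $1$, and middle coordinates summing to $d$. Put $N=d!\,n^{2d+2}$.
\begin{itemize}
\item Each outer coordinate of $\sum_\alpha\alpha$ equals $N/n^2=d!\,n^{2d}$.
\item Each middle coordinate equals $dN/(n^4d^2)=(d-1)!\,n^{2d-2}$. Directly: a fixed $x^{(k)}_{p,q}\boxtimes x^{(k')}_{p',q'}$ occurs for the $(d-1)!$ permutations with $\pi(k)=k'$, times $n^{2d-2}$ free index choices.
\end{itemize}
So the ratio is $n^2d$, not $n^2d^2$; your factor $d$ belongs in the denominator. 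Had the ratio really been $n^2d^2$, the averaging step would fail, since $\sum_\alpha c_\alpha\alpha$ would not be proportional to the character vector.

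Second, the determinant normalization holds exactly, so you do not need to pass to a subtorus. That fallback would have been risky anyway, since a smaller torus can have a larger centralizer and break the diagonalization step. Take a source $v_i^{(0)}$ of $D_1$. Its generator scales the $n$ left basis vectors $e_i\boxtimes e_{a'}$ by $\alpha$. It scales the $n\cdot n^2d$ middle variables $x^{(1)}_{i,j}\boxtimes(\cdot)$ by $\alpha^{-1}$. This gives $\alpha^{n\cdot n^2d}\alpha^{-n^3d}=1$. Sinks are symmetric. An inner vertex scales $n^3d$ middle variables by $\alpha$ and another $n^3d$ by $\alpha^{-1}$. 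The vertices of $D_2$ behave the same way. Both checks produce the exponent $n^2d$, which is why it appears in $\det_{n^2d,1,n^2d}$.
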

\begin{proof}
We proceed analogously to Proposition~\ref{pro:polystabilityhXindpi}.
For a variable $x_{i,j}^{(k)}\boxtimes x_{i',j'}^{(k')}$ let
$h_{x_{i,j}^{(k)}\boxtimes x_{i',j'}^{(k')}}(\alpha)\in\GL_{n^4 d^2}$
be the identity matrix with a single $\alpha$ instead of a $1$
at position $(x_{i,j}^{(k)}\boxtimes x_{i',j'}^{(k')},x_{i,j}^{(k)}\boxtimes x_{i',j'}^{(k')})$, and analogously for $h_{e_{i'}\boxtimes e_i}(\alpha)$
and $h_{e_{j'}^*\boxtimes e_j^*}(\alpha)$.
For fixed $\alpha$ we extend $h_{.,.}(\alpha)$ bilinearly.
Let $D^2=D_1\dot\cup D_2$ denote the digraph that consists of two disjoint copies of $D$ from the proof of
Proposition~\ref{pro:polystability}.
An edge that is not a half-edge is called a \emph{full-edge}.
The vertex set of the first copy is denoted by $V_1$,
and its set of full-edges is denoted by $E_1$,
its set of left half-edges by $H^{\lefthalf}_1$, and the set of right half-edges by $H^{\righthalf}_1$.
Analogously for $V_2$, $E_2$, $H^{\lefthalf}_2$, and $H^{\righthalf}_2$.
For $e\in E_1$ let $\overline{e}=\sum_{e\in E_2} e$,
and for $e\in E_2$ let $\overline{e}=\sum_{e\in E_1} e$.
For $e\in H^{\lefthalf}_1$ let $\overline{e}=\sum_{e\in H^{\lefthalf}_2} e$,
and for $e\in H^{\lefthalf}_2$ let $\overline{e}=\sum_{e\in H^{\lefthalf}_1} e$.
For $e\in H^{\righthalf}_1$ let $\overline{e}=\sum_{e\in H^{\righthalf}_2} e$,
and for $e\in H^{\righthalf}_2$ let $\overline{e}=\sum_{e\in H^{\righthalf}_1} e$.
For $v\in V_1\cup V_2$ and $\alpha\in\IC$ let
\[
h_{v,\alpha} := 
\prod_{e \in \textup{in}(v)} h_{e\boxtimes \overline{e}}(\alpha) \cdot \prod_{e \in \textup{out}(v)} h_{e\boxtimes\overline{e}}(\alpha^{-1})
\ \in G \cap \stab_G(\hXi_{n,d}).
\]
Let $R$ be generated by $\{h_{v,\alpha}\mid v\in V_1\cup V_2, \alpha\in\IC\}$.
We now determine the centralizer of $R$ in~$G$.
The identity $\forall \alpha:g h_{v,\alpha} = h_{v,\alpha} g$ forces that
$g_{e_1'\boxtimes e'_2,e_1\boxtimes e_2}$
can only be nonzero if
$\{e'_1,e_1\}\subseteq\textup{in}(v)$
or
$\{e'_1,e_1\}\subseteq\textup{out}(v)$
or
$\{e'_1,e_1\}\subseteq\textup{neu}(v)$.
Analogoulsy to the proof of Proposition~\ref{pro:polystability},
this means that $g$ can only be nonzero at entries $(e_1\boxtimes e_2',e_1\boxtimes e_2)$, and analogously for half-edges.
Together with the same argument for $(e'_2,e_2)$ this shows that $g$ can only be nonzero at diagonal entries.
We proceed as in the proof of Proposition~\ref{pro:polystability},
but note that $\#\textup{expvec}(\supp(\hXi_{n,d})) = d!\,n^{2d+2}$,
so we set $c_\alpha = \frac{n^4 d}{d!\,n^{2d+2}} = \frac{1}{(d-1)!\,n^{2d-2}}$
and obtain
\[
\sum_{\alpha\in\textup{expvec}(\supp(\hXi_{n,d}))} c_\alpha \alpha =
(n^2 d,\ldots,n^2 d,1,\ldots,1,n^2 d,\ldots,n^2 d).
\]
We have
$
\langle(n^2 d,\ldots,n^2 d,1,\ldots,1,n^2 d,\ldots,n^2 d),(\la,\mu,\nu)\rangle = 0,
$
and hence
\[
0 = \sum_{\alpha\in\textup{expvec}(\supp(\hXi_{n,d}))} c_{\alpha} \langle\alpha,(\la,\mu,\nu)\rangle.
\]
Since $c_\alpha>0$ for all $\alpha\in\textup{expvec}(\supp(\hXi_{n,d}))$,
we have $\langle\alpha,(\la,\mu,\nu)\rangle=0$ for all $\alpha\in\textup{expvec}(\supp(\hXi_{n,d}))$.
This implies that $\forall t: \sigma(t)\hXi_{n,d} = \hXi_{n,d}$.
Therefore, $\lim_{t\to 0} \sigma(t)\hXi_{n,d} = \hXi_{n,d}$, which is a contradiction.
\end{proof}

\subsection*{Acknowledgments}
The author greatly benefitted from discussions with Maxim van den Berg, Fulvio Gesmundo, Darij Grinberg, Gorav Jindal, JM Landsberg, Vladimir Lysikov, Jakob Moosbauer, Harold Nieuwboer, and Dimitrios Tsintsilidas.
The author highlighted the idea of width as the main computational resource on 2022-Sep-19 at the ``Algebraic Geometry with Applications to TEnsors and Secants (AGATES) Kickoff workshop'' in Warsaw. The results of this paper were first presented on 2026-May-26 at the Shonan Meeting 246 on ``Tensor Isomorphism: Algorithms, Geometry, and Applications''. The author thanks both venues for their hospitality.
Generative AI was used for literature search.

{\footnotesize
\bibliographystyle{alpha}
\newcommand{\etalchar}[1]{$^{#1}$}

}

\end{document}